\documentclass[11pt]{article}

\usepackage[a4paper,margin=1in]{geometry}
\usepackage[T1]{fontenc}
\usepackage{lmodern}
\usepackage{microtype}
\usepackage{amsmath,amssymb,amsthm}
\usepackage{mathtools}
\usepackage{graphicx}
\usepackage{xcolor}
\usepackage[round,authoryear]{natbib}
\usepackage{url}
\usepackage[hidelinks]{hyperref}
\usepackage{authblk}

\theoremstyle{plain}
\newtheorem{theorem}{Theorem}
\newtheorem{lemma}[theorem]{Lemma}
\newtheorem{proposition}[theorem]{Proposition}
\newtheorem{corollary}[theorem]{Corollary}

\theoremstyle{definition}
\newtheorem{definition}[theorem]{Definition}

\theoremstyle{remark}

\newenvironment{proofof}[1]{\begin{proof}[Proof of #1]}{\end{proof}}

\newenvironment{keywords}{\medskip\noindent\textbf{Keywords:} }{\par\medskip}

\newcommand{\acks}[1]{\section*{Acknowledgments}#1}

\makeatletter
\newcommand*{\@jmlr@reflistsep}{, }
\newcommand*{\@jmlr@reflistlastsep}{ and }
\DeclareRobustCommand*{\objectref}[5]{%
  \let\@objectname\@empty
  \def\@objectref{}%
  \let\@prevsep\@empty
  \@for\@thislabel:=#1\do{%
    \toks@{\@prevsep}%
    \protected@edef\@objectref{\@objectref\the\toks@
      #4\ref{\@thislabel}#5}%
    \ifx\@objectname\@empty
      \let\@objectname#2%
    \else
      \let\@objectname#3%
      \let\@prevsep\@jmlr@reflistsep
    \fi
  }%
  \ifx\@objectname#3%
    \let\@prevsep\@jmlr@reflistlastsep
  \fi
  \@objectname~\@objectref
}
\makeatother

\newcommand*{\sectionrefname}{Section}
\newcommand*{\sectionsrefname}{Sections}
\newcommand*{\appendixrefname}{Appendix}
\newcommand*{\appendixsrefname}{Appendices}
\newcommand*{\equationrefname}{Equation}
\newcommand*{\equationsrefname}{Equations}
\newcommand*{\algorithmrefname}{Algorithm}
\newcommand*{\algorithmsrefname}{Algorithms}
\newcommand*{\theoremrefname}{Theorem}
\newcommand*{\theoremsrefname}{Theorems}
\newcommand*{\lemmarefname}{Lemma}
\newcommand*{\lemmasrefname}{Lemmas}
\newcommand*{\corollaryrefname}{Corollary}
\newcommand*{\corollarysrefname}{Corollaries}
\newcommand*{\definitionrefname}{Definition}
\newcommand*{\definitionsrefname}{Definitions}

\newcommand*{\sectionref}[1]{%
  \objectref{#1}{\sectionrefname}{\sectionsrefname}{}{}}
\newcommand*{\appendixref}[1]{%
  \objectref{#1}{\appendixrefname}{\appendixsrefname}{}{}}
\newcommand*{\equationref}[1]{%
  \objectref{#1}{\equationrefname}{\equationsrefname}()}
\newcommand*{\algorithmref}[1]{%
  \objectref{#1}{\algorithmrefname}{\algorithmsrefname}{}{}}
\newcommand*{\theoremref}[1]{%
  \objectref{#1}{\theoremrefname}{\theoremsrefname}{}{}}
\newcommand*{\lemmaref}[1]{%
  \objectref{#1}{\lemmarefname}{\lemmasrefname}{}{}}
\newcommand*{\corollaryref}[1]{%
  \objectref{#1}{\corollaryrefname}{\corollarysrefname}{}{}}
\newcommand*{\definitionref}[1]{%
  \objectref{#1}{\definitionrefname}{\definitionsrefname}{}{}}

\makeatletter
\newcounter{algorithm}
\renewcommand{\thealgorithm}{\@arabic\c@algorithm}
\newcommand{\algorithmname}{Algorithm}
\newcommand{\ext@algorithm}{loa}
\newcommand{\fps@algorithm}{htbp}
\newcommand{\ftype@algorithm}{8}
\newcommand{\fnum@algorithm}{\algorithmname~\thealgorithm}
\newenvironment{algorithm}[1][\fps@algorithm]%
{%
  \@float{algorithm}[#1]%
  \renewcommand{\@makecaption}[2]{%
    \parbox[t]{\linewidth}{{\bfseries ##1:} ##2}\par}%
}%
{\end@float}
\makeatother

\newcommand{\jmlralgorule}{\kern2pt\hrule height.8pt depth0pt\kern2pt}

\newcommand{\tableconts}[3]{%
  #2\label{#1}\vskip\baselineskip
  {\centering #3\par}%
}

\makeatletter
\newcommand{\floatconts}[3]{%
  \@ifundefined{\@captype conts}{\tableconts{#1}{#2}{#3}}%
  {\csname\@captype conts\endcsname{#1}{#2}{#3}}%
}
\makeatother

\newcommand{\abs}[1]{\left\lvert #1 \right\rvert}
\newcommand{\floor}[1]{\left\lfloor #1 \right\rfloor}
\newcommand{\ceil}[1]{\left\lceil #1 \right\rceil}

\newcommand{\val}{\mathrm{val}}
\newcommand{\ar}{\mathrm{ar}}
\newcommand{\zip}{\mathrm{zip}}
\newcommand{\ARC}{\mathrm{ARC}}
\newcommand{\sZLC}{\mathrm{sZLC}}
\newcommand{\LZLC}{\mathrm{LZLC}}
\newcommand{\AC}{\mathrm{AC}}
\newcommand{\poly}{\mathrm{poly}}
\newcommand{\Eval}{\mathrm{Eval}}
\newcommand{\PEval}{\mathrm{PEval}}
\newcommand{\vote}{\mathrm{vote}}
\newcommand{\Rt}{\mathrm{R}}
\newcommand{\Lt}{\mathrm{L}}

\DeclareMathOperator*{\argmax}{arg\,max}

\newcounter{alglineno}
\newlength{\algindentunit}
\makeatletter
\newenvironment{alglines}%
{%
  \setcounter{alglineno}{0}%
  \list{\stepcounter{alglineno}\thealglineno.}%
    {\setlength{\topsep}{0.3ex}%
     \setlength{\partopsep}{0pt}%
     \setlength{\itemsep}{0.2ex}%
     \setlength{\parsep}{0pt}%
     \setlength{\leftmargin}{2.2em}%
     \setlength{\labelwidth}{1.6em}%
     \setlength{\labelsep}{0.4em}%
     \setlength{\itemindent}{0pt}%
     \setlength{\listparindent}{0pt}%
     \renewcommand{\makelabel}[1]{\hbox to\labelwidth{\hfil##1}}}%
}%
{\endlist}
\makeatother
\newcommand{\algind}[1]{\hspace*{#1\algindentunit}\ignorespaces}

\newcommand*{\propositionrefname}{Proposition}
\newcommand*{\propositionsrefname}{Propositions}
\newcommand*{\propositionref}[1]{%
  \objectref{#1}{\propositionrefname}{\propositionsrefname}{}{}}

\newcommand{\eqv}{\mathrel{\approx}}
\newcommand{\neqv}{\mathrel{\not\approx}}

\title{Stringological sequence prediction III:\\
layered ziplines and a tradeoff between efficiency and expressivity}

\author[1,2]{Vanessa Kosoy\thanks{%
  \href{mailto:vanessa@alter.org.il}{\texttt{vanessa@alter.org.il}}}}
\affil[1]{Faculty of Mathematics, Technion, Haifa, Israel}
\affil[2]{Computational Rational Agents Laboratory, Delaware, USA}

\date{}

\begin{document}

\maketitle

\begin{abstract}%
  In previous papers \citep{Kosoy2026a,Kosoy2026b}, we began the study of
  sequence prediction algorithms adapted to stringological word complexity
  measures. In particular, we defined a complexity measure called Arithmetic
  Repetition Complexity (ARC) which admits a polynomial-time prediction
  algorithm with a mistake bound quasilinear in the complexity. Here, we show a
  weaker complexity measure related to ARC that admits an especially efficient
  prediction algorithm: an algorithm that runs in quasilinear time and polylog
  space for appropriate highly-structured sequences. The complexity measure is
  defined via a restricted class of ``zipline programs'' (a variant of
  straight-line programs we defined in \citep{Kosoy2026b}), which we call
  \emph{layered}. We thus get a less expressive measure with a more efficient
  algorithm (compared to our results for ARC), demonstrating a possible
  tradeoff.%
\end{abstract}

\begin{keywords}%
  time series, online learning, algorithms
\end{keywords}

\section{Introduction}
\label{sec:intro}

Since this paper is part of a series, we split the introduction in two. In \sectionref{sec:intro-series} we introduce the series as a whole, and in \sectionref{sec:intro-installment} we introduce the results of this specific installment.

\subsection{Introduction to Entire Series}
\label{sec:intro-series}

Sequence (or ``time series'') prediction is a classical field of study in
artificial intelligence, machine learning and statistics
\citep{Hutter2024,Cesa-Bianchi2006,Brockwell2002}. Despite the long history of
the field, there are few examples of prediction algorithms which are
\emph{simultaneously} (i) computationally efficient, (ii) satisfy strong
provable mistake bounds, and (iii) guarantee asymptotically near-perfect
prediction for natural rich classes of deterministic sequences.

For any \emph{two} of these 3 natural requirements, there is ample prior work
that satisfies them.

Solomonoff induction \citep{Hutter2024} is the gold standard for sequence
prediction (assuming no domain-specific knowledge). On any sequence, it
asymptotically performs as well as any computable predictor, answering
requirements (ii) and (iii). It has strong conceptual justification as a
formalization of Occam's razor. However, Solomonoff induction is uncomputable:
it fails requirement (i) catastrophically, making it completely unsuitable for
any practical implementation.

Much work on statistics has focused on series of continuous variables (see
e.g.\ \citep{Brockwell2002}). This led to algorithms requiring assumptions
such as particular form of probability distributions, e.g.\ normal. While some
such algorithms answer requirements (i) and (ii), most of them are inapplicable
in the discrete setting, and they don't yield interesting classes of
deterministic sequences, failing requirement (iii).

In practice, methods based on deep learning are incredibly successful in
next-token prediction \citep{Brown2020}, answering requirement (i) and
suggesting that requirement (iii) is at least empirically satisfied. However,
the theoretical understanding of generalization bounds for deep learning is in
its infancy \citep{Fan2020}. In particular, examples where strong rigorous
mistake bounds can be proved are lacking, and requirement (ii) fails.

There are computationally efficient algorithms with strong provable
generalization bounds for some interesting classes of \emph{stochastic}
processes, for example context-tree weighting methods
\citep{Kontoyiannis2020}. However, such classes often degenerate in the
deterministic case, e.g.\ admitting only periodic sequences. Hence,
requirements (i) and (ii) are satisfied, but requirement (iii) fails.

Notably, the perceptron algorithm for online linear classification
\citep{Cesa-Bianchi2006} applied to features computed from the sequence does
come close to meeting our desiderata. However, the corresponding class of
predictable deterministic sequences is still quite limited.

In this series of papers, we present a novel family of sequence prediction
algorithms that are computationally efficient and satisfy provable mistake
bounds. The mistake bounds imply an asymptotically vanishing frequency of
mistakes for many interesting classes of sequences from combinatorics on words.
Thus, our algorithms satisfy (a form of) all 3 of the requirements above.

Formally, we work in the following setting. There is a fixed finite alphabet
$\Sigma$ and we are interested in predictors of the form $P : \Sigma^* \to
\Sigma$ (here, $P(u)$ is the symbol predicted to follow the prefix $u$). The
assumptions about the data are expressed as a \emph{complexity measure} $C :
\Sigma^* \to \mathbb{N}$. We are then interested in bounding the number of mistakes
$P$ makes on a sequence $u \in \Sigma^*$ in terms of $m \coloneqq C(u)$ and $n
\coloneqq \abs{u}$. At the same time, we require that $P$ is computable in
polynomial time. Moreover, we wish to simultaneously bound the size of the
\emph{internal state} of the predictor in terms of $m$ and $n$ (this can be
interpreted as the predictor \emph{compressing} the sequence). The latter is
interesting because it leads to predictors that are space efficient and in some
cases run in quasilinear\footnote{I.e.\ linear up to logarithmic factors.}
time.

We study multiple natural complexity measures $C$ with strong mistake and
compression bounds. (Several such measures were addressed in
\citep{Kosoy2026a,Kosoy2026b}; more are introduced here.) Conceptually, such
complexity measures can be viewed as candidate tractable analogues of
Kolmogorov complexity. We strive to simultaneously get a mistake bound of the
form $O(m \poly(\log n))$ and a compression bound of the form $\poly(m, \log
n)$. In particular, when operating on an infinite sequence for which $m$ grows
polylogarithmically, such a predictor runs in quasilinear time and polylog
space while making only polylog mistakes (see \definitionref{def_comp_eff}).

\subsection{Introduction to Current Installment}
\label{sec:intro-installment}

In \citep{Kosoy2026b} we defined the Arithmetic Repetition Complexity (ARC) of
a word $u \in \Sigma^*$ as the minimal size of an ``arithmetic repetition
system'': a system of equations that uniquely determines $u$, where each
equation either specifies the symbol $u[i] \in \Sigma$ at some position $i <
\abs{u}$ or mandates the equality of two subwords $u[\alpha]$ and $u[\beta]$
for some integer arithmetic sequences $\alpha : [k] \to \mathbb{N}$ and $\beta
: [k] \to \mathbb{N}$ such that $\alpha(i) < \beta(i)$ for all $i < k$. We
showed that there is a polynomial-time predictor that makes $O(\ARC(u) \cdot
\log \abs{u})$ mistakes on all words, but we had no meaningful bound on the
size of its state.

We also introduced ``zipline programs''. Analogously to how a straight-line
program is an acyclic context-free grammar whose productions rules are
deterministic, and which therefore defines a unique word
\citep{Burgisser1997}, a zipline program is the same combinatorial object where
the production rules are interpreted as ``zipping'' (i.e.\ round-robin
interleaving) words rather than concatenating them. As opposed to straight-line
programs, there is no ``Chomsky normal form'' for zipline programs: we can only
assume that all production rules are of \emph{prime} arity, not just binary. We
then showed how zipline programs give rise to arithmetic repetition systems of
comparable size. In particular, for what we call ``synchronous'' zipline
programs, the arithmetic repetition system is no larger than the program.

Here we consider a restricted class of zipline programs that we call ``layered
zipline programs''. Given a set $R$ of possible arities, we define the
$R$-adic layered zipline complexity $\LZLC_R(u)$ of a word $u \in \Sigma^*$ to
be the minimal ``width'' of a layered zipline program that produces an extension of $u$ s.t.\
all of its arities are in $R$. Crucially, we show that for any fixed $R$, there
exists a predictor (we call it $P^3$) for which both the mistakes and the state
size have bounds that are quasilinear in $\LZLC_R$ (although the state size
depends exponentially on $\abs{R}$). Thus, we can improve the efficiency of
prediction at the cost of reducing expressivity, by restricting from
synchronous zipline programs to layered zipline programs with fixed allowed
arities. Whether this trade-off is unavoidable or merely a limitation of our
techniques is left as an open problem.

\section{Setting}

In this section, we recall the framework of stringological sequence prediction
from \citep{Kosoy2026a}: the online prediction protocol, formulated for
state-based algorithms, and the criteria of statistical and computational
efficiency relative to a word complexity measure.

\subsection{Preliminaries and Notation}

Let $\Sigma$ be a fixed finite\footnote{We assume $\Sigma$ is finite purely for
ease of presentation. We could instead assume that $\Sigma = \mathbb{N}$, in
which case the factors of $\log \abs{\Sigma}$ in the bounds would be replaced
by $\log M$, where $M$ is the highest number that actually appeared in the
sequence so far.} alphabet. We denote by $\Sigma^*$ the set of finite words
over $\Sigma$ and by $\Sigma^\omega$ the set of right-infinite sequences. For a
word $u \in \Sigma^*$, $\abs{u}$ is its length and $u[i]$ is its $i$-th symbol
(indexing is 0-based); $u[i:j]$ denotes the factor (subword) $u[i] u[i+1] \dots
u[j-1]$, and $u[:j] \coloneqq u[0:j]$. A word $v$ is a prefix of $u$, denoted
$v \sqsubseteq u$, if $u = v w$ for some $w \in \Sigma^*$. The set of natural 
numbers $\mathbb{N}$ includes $0$. Logarithms are taken to base 2 unless 
otherwise specified, and over the natural numbers we use the convention 
$\log 0 := 0$. For $m \in \mathbb{N}$, $[m] \coloneqq 
\{0, 1, \dots, m - 1\}$.

\subsection{The Prediction Protocol}
\label{sec:protocol}

We operate in the standard deterministic online prediction setting. To discuss
memory and time constraints rigorously, we model the predictor as a state-based
machine: a \emph{predictor} is a tuple $\Pi = (\mathcal{S}, s_{\mathrm{init}},
\mathcal{U}, \mathcal{P})$, where $\mathcal{S} \subseteq \{0, 1\}^*$ is the set
of possible internal states (represented as binary strings),
$s_{\mathrm{init}} \in \mathcal{S}$ is the initial state, $\mathcal{U} :
\mathcal{S} \times \Sigma \to \mathcal{S}$ is the \textbf{state-update
function} and $\mathcal{P} : \mathcal{S} \to \Sigma$ is the
\textbf{state-prediction function}. On a target sequence $x \in \Sigma^T$,
the predictor starts at $s_0 \coloneqq s_{\mathrm{init}}$ and, at each round $t
= 0, 1, \dots, T-1$, outputs the hypothesis $\hat{x}_t \coloneqq \mathcal{P}(s_t)$,
observes the true symbol $x[t]$ and updates its state to $s_{t+1} \coloneqq
\mathcal{U}(s_t, x[t])$. We denote the
total number of mistakes and the maximal state size during the processing of
$x$ by
\[
  M_\Pi(x) \coloneqq \abs{\{ t < T \mid \hat{x}_t \neq x[t] \}}, \qquad
  S_\Pi(x) \coloneqq \max_{0 \leq t < T} \abs{s_t}
\]

For $T=0$, we set $S_\Pi(x):=0$ by convention.

\subsection{Efficiency Criteria}

We evaluate performance against the inherent structural complexity of the
individual sequence, rather than a probabilistic prior.

\begin{definition}
\label{def:word_complexity}
  A \emph{word complexity measure} is a function $C : \Sigma^* \to \mathbb{N}$
  that satisfies the following conditions:
  \begin{itemize}
    \item \textbf{(Polynomial bound.)} There exists a polynomial $p \in
      \mathbb{N}[x]$ s.t.\ for all $u \in \Sigma^*$,
      \begin{equation}
        \label{eq_word_com_bound}
        C(u) \leq p(\abs{u})
      \end{equation}
    \item \textbf{(Approximate monotonicity.)} There exists a polynomial $q \in
      \mathbb{N}[x, y]$ s.t.\ for all $u \in \Sigma^*$ and $k \leq \abs{u}$,
      \begin{equation}
        \label{eq:mono}
        C(u[:k]) \leq q(\log \abs{u}, \log C(u)) \cdot (C(u) + 1)
      \end{equation}
  \end{itemize}
\end{definition}

\begin{definition}[\textbf{Statistical Efficiency.}]
\label{def_stat_eff}
The predictor $\Pi$ is \emph{statistically efficient} with respect to $C$ if
the number of mistakes is quasilinear in the complexity: $M_\Pi(x) \leq (C(x) +
1) \cdot \poly(\log \abs{x})$ for all $x \in \Sigma^*$.
\end{definition}

\begin{definition}[\textbf{Computational Efficiency.}]
\label{def_comp_eff}
The predictor $\Pi$ is \emph{computationally feasible} if $\hat{x}_t =
\mathcal{P}(s_t)$ and $s_{t+1} = \mathcal{U}(s_t, x[t])$ are computable in time
$\poly(\abs{s_t})$, and $S_\Pi(x)$ is bounded by $\poly(\abs{x})$. The
predictor is \emph{computationally efficient} with respect to $C$ if it's
computationally feasible and, for all $x \in \Sigma^*$:
\begin{equation}
  \label{eq:gen_com_bound}
  S_\Pi(x) \leq \poly(C(x), \log \abs{x})
\end{equation}
\end{definition}

We can thus think of $s_t$ as a compressed representation (not necessarily
lossless) of the history $x[:t]$, and we refer to inequalities of the form
\equationref{eq:gen_com_bound} as ``compression bounds''. In particular, if we 
consider an infinite sequence $x\in\Sigma^\omega$ s.t. $C(x[:t])$ grows 
polylogarithmically, then the space complexity and the per-round processing 
time of a computationally efficienct predictor are also 
polylogarithmic\footnote{Since per-round processing time is polynomial in $\abs{s_t}$ and 
$\abs{s_t}$ is polylogarithmic in $t$ due to inequality 
\equationref{eq:gen_com_bound}.}.

\section{Layered Ziplines}
\label{sec:ziplines}

In this section, we introduce a restricted class of the zipline programs of
\citep{Kosoy2026b} --- the \emph{layered} programs over a fixed finite set of
arities $R$ --- and the associated complexity measure $\LZLC_R$, which subsumes
RTL automaticity in every base generated by $R$ (\corollaryref{cor:multibase})
while being itself subsumed, up to logarithmic factors, by the synchronous
zipline complexity $\sZLC$ (\corollaryref{cor:szlc}). Its raison d'\^etre is
the associated predictor $P^3$, whose mistake bound \emph{and} state-size bound
are both quasilinear in the complexity (\theoremref{thm:p3}). Throughout this
section and the next, $\sigma \coloneqq \abs{\Sigma}$; the proofs omitted from
this section are given in \appendixref{app:proofs}.

\subsection{Layered Zipline Programs}

We start by recalling the relevant definitions of \citep{Kosoy2026b}. Given $r
\geq 2$ and $u_0, u_1 \dots u_{r-1} \in \Sigma^*$, the word $z \coloneqq
\zip(u_0, \dots, u_{r-1})$ is obtained by interleaving $u_0 \dots u_{r-1}$ in a
round-robin fashion, stopping just before the first requested symbol that does
not exist: if $\mu \coloneqq \min_{i < r} \abs{u_i}$ and $j < r$ is minimal
with $\abs{u_j} = \mu$, then
\begin{equation}
  \label{eq:zip}
  \abs{z} \coloneqq r \mu + j \quad \text{and} \quad z[i] \coloneqq u_{i \bmod
  r}[\floor{i / r}] \text{ for } i < \abs{z}
\end{equation}

\begin{definition}[recalled]
\label{def:zlp}
A \emph{zipline program} (ZLP) over $\Sigma$ is $P = (Q, q_0, \delta)$, where
$Q$ is a finite set, $q_0 \in Q$ and $\delta : Q \to \overline{Q}^*$, where
$\overline{Q} \coloneqq Q \sqcup \Sigma$. We require that:
\begin{itemize}
  \item For all $q \in Q$, $\ar(q) \geq 2$, where $\ar(q) \coloneqq
    \abs{\delta(q)}$ is called the \emph{arity} of $q$.
  \item $\delta$ is acyclic, i.e.\ there are no $p_0, p_1 \dots p_{l-1} \in Q$
    s.t.\ for all $i < l$, $p_{(i+1) \bmod l}$ appears in $\delta(p_i)$.
  \item $q_0$ is the unique element of $Q$ that doesn't appear in $\delta(q)$
    for any $q \in Q$.
\end{itemize}
We define $\val_P : \overline{Q} \to \Sigma^*$ recursively by
\begin{itemize}
  \item For $a \in \Sigma$, $\val_P(a) \coloneqq a$.
  \item For $q \in Q$, $\val_P(q) \coloneqq \zip(\val_P(\delta(q)[0]), \dots,
    \val_P(\delta(q)[\ar(q) - 1]))$.
\end{itemize}
The \emph{value} of $P$ is $\val(P) \coloneqq \val_P(q_0)$ and the \emph{size}
of $P$ is $\abs{P} \coloneqq \sum_{q \in Q} \ar(q)$.
\end{definition}

Recall also that a ZLP $P$ is called \emph{synchronous} when there exists $\phi
: Q \to \mathbb{N}$ s.t.\ $\phi(q_0) = 1$ and, whenever $q'$ appears in
$\delta(q)$, $\phi(q') = \ar(q) \cdot \phi(q)$; and that the synchronous
zipline complexity $\sZLC(x)$ is the minimal size of a synchronous ZLP $P$
s.t.\ $x \sqsubseteq \val(P)$.

\begin{definition}
\label{def:layered}
A ZLP $H = (Q, q_0, \delta)$ is called \emph{layered} when there exists $h : Q
\to \mathbb{N}$ (a \emph{layering}) satisfying:
\begin{enumerate}
  \item $h(q_0) = 0$;
  \item for all $q, q' \in Q$, if $q'$ appears in $\delta(q)$ then $h(q') =
    h(q) + 1$;
  \item for all $q, q' \in Q$, if $h(q) = h(q')$ then $\ar(q) = \ar(q')$.
\end{enumerate}
\end{definition}

Since $q_0$ is the unique source of the (acyclic) graph of $H$, every $q \in Q$
is reachable from $q_0$, and condition 2 forces $h(q)$ to equal the length of
every path from $q_0$ to $q$. In particular, the layering is unique, and the
\emph{layers} $Q_j \coloneqq h^{-1}(j)$ are non-empty exactly for $j < d(H)$,
where $d(H) \coloneqq 1 + \max_{q \in Q} h(q)$ is the \emph{depth} of $H$. By
condition 3, we may define the \emph{arity profile} $\ar_H(j) \coloneqq \ar(q)$
for any $q \in Q_j$, and the associated \emph{scales}
\[
  m_H(j) \coloneqq \prod_{i < j} \ar_H(i), \quad 0 \leq j \leq d(H)
\]

Finally, given a finite set $R \subseteq \{2, 3, 4, \dots\}$, we say that $H$
is \emph{$R$-adic} when $\ar_H(j) \in R$ for all $j < d(H)$.

\begin{proposition}
\label{prop:sync}
Every layered ZLP $H$ is synchronous, with $\phi(q) = m_H(h(q))$.
\end{proposition}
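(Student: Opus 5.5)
The plan is to check the two defining conditions of synchronicity directly for the function $\phi \coloneqq m_H \circ h$. This is well defined because the layering $h$ is unique and takes values in $\{0,\dots,d(H)-1\}$, as observed right after \definitionref{def:layered}. It is also well defined because the arity profile $\ar_H$ is well defined on these layers, by condition 3 of \definitionref{def:layered}. In particular $\phi : Q \to \mathbb{N}$, and in fact $\phi(q) \geq 1$.

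First, the normalization. By condition 1 of the layering, $h(q_0) = 0$. Hence $\phi(q_0) = m_H(0)$, which is the empty product $\prod_{i<0} \ar_H(i) = 1$.

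Second, the multiplicativity condition. Suppose $q'$ appears in $\delta(q)$. By condition 2 of the layering, $h(q') = h(q) + 1$. Writing $j \coloneqq h(q)$, we get
\[
  \phi(q') = m_H(j+1) = \prod_{i \le j} \ar_H(i) = \ar_H(j)\, m_H(j) = \ar(q)\,\phi(q).
\]
Here the last equality uses that $q \in Q_j$, so $\ar(q) = \ar_H(j)$ by the definition of the arity profile.

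There is no real obstacle. The only point needing care is that $m_H(j+1)$ is defined, i.e.\ that $j+1 \le d(H)$. This holds because $j = h(q) \le d(H) - 1$ for every $q \in Q$ by the definition of the depth $d(H)$.
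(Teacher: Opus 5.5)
Your proof is correct and is essentially the paper's own argument. You get $\phi(q_0) = m_H(0) = 1$ from condition 1 of the layering, and $\phi(q') = m_H(h(q)+1) = \ar_H(h(q))\, m_H(h(q)) = \ar(q)\,\phi(q)$ from conditions 2 and 3, with your check that $h(q)+1 \le d(H)$ being harmless bookkeeping the paper leaves implicit.
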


\begin{proof}
We have $\phi(q_0) = m_H(0) = 1$, and if $q'$ appears in $\delta(q)$ then
$h(q') = h(q) + 1$, whence $\phi(q') = m_H(h(q) + 1) = \ar_H(h(q)) \cdot
m_H(h(q)) = \ar(q) \cdot \phi(q)$.
\end{proof}

By \equationref{eq:zip}, a layered ZLP computes its value in the manner of an
automaton reading the digits of the position index: writing $t < m_H(d(H))$ in
the mixed-radix numeration system with radices $\ar_H(0), \ar_H(1), \dots$,
i.e.\ $t = \sum_{j < d(H)} t_j m_H(j)$ with $t_j < \ar_H(j)$, and descending
from $q_0$ by choosing, at layer $j$, the child indexed by the digit $t_j$, one
arrives at the terminal $\val(H)[t]$. This is made precise in
\lemmaref{lem:descent} (\appendixref{app:proofs}). The digits are read least
significant first, which is the reading order of RTL automaticity, cf.\ Section
3 of \citep{Kosoy2026b}; below we make this connection precise.

\subsection{Layered Zipline Complexity}

\begin{definition}
\label{def:lzlc}
The \emph{width} of a layered ZLP $H$ is $w(H) \coloneqq \max_{j < d(H)}
\abs{Q_j}$. Given a finite non-empty set $R \subseteq \{2, 3, 4, \dots\}$ and $x \in
\Sigma^*$, the \emph{$R$-adic layered zipline complexity} of $x$ is
\[
  \LZLC_R(x) \coloneqq \min\{w(H) \mid H \text{ is an } R\text{-adic layered
  ZLP s.t.\ } x \sqsubseteq \val(H)\}
\]
\end{definition}

\begin{proposition}
\label{prop:lzlc-wcm}
For every finite non-empty $R \subseteq \{2, 3, 4, \dots\}$, the function $\LZLC_R$ is a
word complexity measure (\definitionref{def:word_complexity}). Moreover, for
all $x, y \in \Sigma^*$:
\begin{enumerate}
  \item $1 \leq \LZLC_R(x) \leq \max(\abs{x} - 1, 1)$;
  \item if $x \sqsubseteq y$ then $\LZLC_R(x) \leq \LZLC_R(y)$.
\end{enumerate}
\end{proposition}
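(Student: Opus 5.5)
The proof splits into three parts: item 2, item 1 (whose upper bound needs an explicit construction), and the word-complexity-measure axioms, which then follow easily.

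\emph{Item 2 and the lower bound in item 1.} For monotonicity, note that if $x \sqsubseteq y$ and $H$ is any $R$-adic layered ZLP with $y \sqsubseteq \val(H)$, then $x \sqsubseteq \val(H)$ by transitivity of the prefix relation. Hence the set over which $\LZLC_R(x)$ minimizes contains the one for $\LZLC_R(y)$, giving $\LZLC_R(x) \leq \LZLC_R(y)$. For the lower bound, every ZLP has $q_0 \in Q$, so $Q_0 \neq \emptyset$ and $w(H) \geq 1$.

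\emph{Upper bound in item 1.} I would give a ``complete tree'' construction.

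\begin{itemize}
\item \textbf{Choice of arity and depth.} Fix any $r \in R$ and let $n \coloneqq \abs{x}$. Let $d \geq 1$ be minimal with $r^d \geq n$. Then either $d = 1$, or $r^{d-1} < n$, i.e.\ $r^{d-1} \leq n - 1$. In both cases $r^{j} \leq \max(n-1,1)$ for all $j < d$.
\item \textbf{States and transitions.} Take $Q \coloneqq \{(j, c) \mid j < d,\ c < r^j\}$ with $q_0 \coloneqq (0,0)$ and $h(j,c) \coloneqq j$. For $j < d-1$ set $\delta(j,c) \coloneqq (j+1, c)(j+1, c + r^j) \cdots (j+1, c + (r-1) r^j)$. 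For $j = d-1$ set $\delta(j,c)[i] \coloneqq x[c + i r^{d-1}]$ if $c + i r^{d-1} < n$, and an arbitrary fixed symbol of $\Sigma$ otherwise.
\item \textbf{ZLP axioms.} Since $i \mapsto c + i r^j$ for $i<r$, $c<r^j$ is a bijection onto $[r^{j+1}]$, the graph is a tree rooted at $q_0$. So it is acyclic, and $q_0$ is the unique state not appearing in any $\delta(q)$. All arities equal $r \geq 2$, so conditions 1--3 of \definitionref{def:layered} hold and $H$ is $R$-adic.
\item \textbf{Width.} $w(H) = \max_{j<d} r^j \leq \max(n-1,1)$ by the choice of $d$.
\end{itemize}

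\emph{Correctness of the value.} By induction from the bottom layer, every state at layer $j$ has a value of length exactly $r^{d-j}$. Indeed, zipping $r$ words of equal length $\ell$ gives length $r\ell$ by \equationref{eq:zip} (there $\mu = \ell$ and the minimal index is $0$), and the base case is $r$ single symbols. Hence $\abs{\val(H)} = r^d \geq n$. Then \lemmaref{lem:descent} (the mixed-radix descent) shows that position $t$ descends through residues $t \bmod r^j$ to state $(d-1, t \bmod r^{d-1})$ and child index $\floor{t / r^{d-1}}$, which is the terminal $x[t]$ for $t < n$. Thus $x \sqsubseteq \val(H)$. This is the only step requiring care: checking that the residue indexing matches the digit order of the descent lemma, and that padding symbols never affect positions $< n$. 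The edge cases $n \leq 1$ are covered by $d = 1$ (a single state of arity $r$).

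\emph{Word complexity measure.} The polynomial bound follows from item 1 with $p(x) = x + 1$. Approximate monotonicity follows from item 2 with $q \equiv 1$, since $\LZLC_R(u[:k]) \leq \LZLC_R(u) \leq \LZLC_R(u) + 1$.
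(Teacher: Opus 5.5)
Your proof is correct and follows essentially the same route as the paper: the same complete $r$-ary tree whose layer-$j$ vertices are the residues modulo $r^j$, with $x$ read off at the last layer and correctness checked via \lemmaref{lem:descent}. The differences are cosmetic: the paper takes $r = \min R$ and $N = \max(\abs{x}, 2)$ and gets the length $r^d$ from \lemmaref{lem:exact}, while you allow any $r \in R$, impose $d \geq 1$ directly, and compute the value lengths by a direct induction on zips.
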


The minimal-width programs for a given word can be badly behaved: values of
individual vertices may be truncated by shorter siblings (see
\equationref{eq:zip}), and the depth may be much larger than $\log \abs{x}$.
The following definition captures the absence of truncation.

\begin{definition}
\label{def:exact}
A layered ZLP $H$ is called \emph{exact} when $\abs{\val_H(q)} = m_H(d(H)) /
m_H(h(q))$ for every $q \in Q$. In particular, if $H$ is exact then
$\abs{\val(H)} = m_H(d(H))$.
\end{definition}

\begin{lemma}
\label{lem:exact}
A layered ZLP $H$ is exact if and only if no vertex outside the last layer
$Q_{d(H) - 1}$ has a terminal child.
\end{lemma}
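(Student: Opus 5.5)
We prove both directions by induction on layers, working bottom-up from the last layer. Write $d \coloneqq d(H)$ and $L(j) \coloneqq m_H(d)/m_H(j)$, which is the length that exactness demands for a vertex in layer $j$. By condition 2 of \definitionref{def:layered}, any $Q$-child of a vertex in $Q_{d-1}$ would lie in layer $d$, which is empty. So every vertex in the last layer has only terminal children, and its value, a zip of $\ar_H(d-1)$ single symbols, has length exactly $\ar_H(d-1) = L(d-1)$ by \equationref{eq:zip}. The target length for a terminal is $1$.

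For the ``if'' direction, assume no vertex outside $Q_{d-1}$ has a terminal child. We show by downward induction on $j$ that every $q \in Q_j$ has $\abs{\val_H(q)} = L(j)$. The base case $j = d-1$ was just handled. For $j < d-1$, all children of $q$ are non-terminal and lie in $Q_{j+1}$. By the induction hypothesis they all have the same length $\mu = L(j+1)$, so in \equationref{eq:zip} the minimal index $j'$ is $0$. Hence $\abs{\val_H(q)} = \ar_H(j)\, L(j+1) = L(j)$.

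For the ``only if'' direction, suppose some $q \in Q_j$ with $j < d-1$ has a terminal child. Then $\min_i \abs{\val_H(\delta(q)[i])} \le 1$, so $\abs{\val_H(q)} \le \ar_H(j)\cdot 1 + \ar_H(j) - 1 < 2\ar_H(j)$. If $H$ were exact, we would have $\abs{\val_H(q)} = L(j) = \ar_H(j)\, L(j+1) \ge 2\ar_H(j)$, because $L(j+1) \ge \ar_H(d-1) \ge 2$ when $j+1 \le d-1$. This is a contradiction. Note that this direction only uses exactness at the single vertex $q$.

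The only step needing care is the bookkeeping in the ``only if'' bound. There we must use the exact form $r\mu + j'$ with $j' < r$, and check that $L(j+1) \ge 2$ requires $j + 1 \le d - 1$. That inequality is exactly why the last layer is excluded from the statement.
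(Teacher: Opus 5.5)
Your proof is correct and essentially the same as the paper's. The ``if'' direction is the same bottom-up induction over layers using \equationref{eq:zip}. The converse uses the same bound $\abs{\val_H(q)} \leq 2\ar(q) - 1 < \ar(q) \cdot m_H(d(H))/m_H(j+1)$, which holds because that ratio is at least $2$ outside the last layer.
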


\begin{proof}
Suppose no vertex outside the last layer has a terminal child. All children of
a vertex of the last layer are terminal (a non-terminal child would belong to
the empty layer $Q_{d(H)}$), so, by upward induction on layers using
\equationref{eq:zip}, the value of any $q \in Q_j$ is the zip of $\ar(q)$ words
of equal length $m_H(d(H)) / m_H(j + 1)$, whence $\abs{\val_H(q)} = m_H(d(H)) /
m_H(j)$. Conversely, if some $q \in Q_j$ with $j < d(H) - 1$ has a terminal
child then, by \equationref{eq:zip}, $\abs{\val_H(q)} \leq 2 \ar(q) - 1 <
\ar(q) \cdot m_H(d(H)) / m_H(j + 1) = m_H(d(H)) / m_H(j)$, where the strict
inequality holds because $m_H(d(H)) / m_H(j + 1) \geq 2$; so $H$ is not exact.
\end{proof}

The following lemma lets us assume exactness and logarithmic depth, at a small
cost in width.

\begin{lemma}[Regularization]
\label{lem:reg}
Let $n \geq 2$, $x \in \Sigma^n$, and let $H$ be an $R$-adic layered ZLP with
$x \sqsubseteq \val(H)$. Then there exists an $R$-adic layered ZLP $H'$, with
vertex set $Q'$ and layers $Q'_j$, s.t.:
\begin{enumerate}
  \item $x \sqsubseteq \val(H')$;
  \item $d(H') \leq \ceil{\log n}$ and $m_{H'}(d(H')) \geq n$;
  \item $w(H') \leq w(H) + \sigma$;
  \item $H'$ is exact; in particular, $\abs{\val(H')} = m_{H'}(d(H'))$.
\end{enumerate}
Moreover, if all arities of $H$ are equal to $k$ then so are those of $H'$, and
$d(H') = \ceil{\log_k n}$.
\end{lemma}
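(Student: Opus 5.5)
The plan is to truncate $H$ at the first depth where the scale reaches $n$, and to patch the truncation and any premature terminals so that every path has the same length. Correctness will come from the digit-descent description of \lemmaref{lem:descent}, and exactness from \lemmaref{lem:exact}.

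\textbf{Choosing the depth.} Let $D$ be the least $j$ with $m_H(j) \geq n$. Every leaf contributes one symbol, so $\abs{\val(H)} \leq m_H(d(H))$. Since $x \sqsubseteq \val(H)$ we get $m_H(d(H)) \geq n$, hence $D \leq d(H)$ and $D$ is well defined. Also $D \geq 1$ because $n \geq 2$. All arities are at least $2$, so $m_H(j) \geq 2^j$, giving $D \leq \ceil{\log n}$. If all arities equal $k$, then $D = \ceil{\log_k n}$ exactly.

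\textbf{Construction of $H'$.}
\begin{enumerate}
\item $H'$ has layers $0, \dots, D-1$. Layer $j$ consists of $Q_j$ together with up to $\sigma$ new ``constant'' vertices $c_{a,j}$, one for each $a \in \Sigma$.
\item The arity of every vertex at layer $j$ is $\ar_H(j)$, so $H'$ is $R$-adic, with the same profile as $H$ on its first $D$ layers.
\item For $q \in Q_j$ with $j < D-1$, replace each terminal child $a$ by $c_{a,j+1}$ and keep non-terminal children as they are.
\item For $q \in Q_{D-1}$, replace each non-terminal child $q'$ by the terminal $\val_H(q')[0]$. This is well defined because values in a ZLP are never empty: by induction, a zip of nonempty words has length at least $r \geq 2$.
\item Every child of $c_{a,j}$ is $c_{a,j+1}$ if $j < D-1$, and $a$ if $j = D-1$.
\item Finally, delete everything unreachable from $q_0$.
\end{enumerate}

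Acyclicity, uniqueness of the root and the layering conditions are then immediate. The width is at most $w(H) + \sigma$. By construction no vertex outside the last layer has a terminal child, so $H'$ is exact by \lemmaref{lem:exact}, and $\abs{\val(H')} = m_{H'}(D) \geq n$.

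\textbf{Correctness.} Fix $t < n$ and write it in mixed radix, $t = \sum_{j<D} t_j m_H(j)$. All higher digits are zero since $t < m_H(D)$.
\begin{itemize}
\item By \lemmaref{lem:descent} applied to $H$, which is valid because $t < n \leq \abs{\val(H)}$, $\val(H)[t]$ is the terminal reached by descending from $q_0$ along the digits.
\item If that descent hits a terminal $a$ at some layer $j+1 \leq D-1$, the corresponding descent in $H'$ enters $c_{a,j+1}$ and ends at $a$.
\item If the descent in $H$ reaches some $q' \in Q_D$, the remaining digits are all $0$. Following child $0$ repeatedly from $q'$ yields $\val_H(q')[0]$, since index $0$ of a zip is index $0$ of its first component. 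This is exactly the terminal placed in $H'$.
\end{itemize}
Applying \lemmaref{lem:descent} to the exact $H'$ then gives $\val(H')[t] = \val(H)[t] = x[t]$.

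The step I expect to need the most care is this correctness argument. The issue is that \lemmaref{lem:descent} must be used in its truncated form, for a non-exact $H$ where a descent may stop early at a terminal. I would check that its hypothesis, $t < \abs{\val(H)}$, covers every $t < n$.
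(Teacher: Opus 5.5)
Your proposal is correct and follows the paper's proof essentially step for step: the same cut depth (the least $j$ with $m_H(j) \geq n$), constant padding vertices replacing premature terminals, the first symbols $\val_H(q')[0]$ at the last layer, exactness via \lemmaref{lem:exact}, and correctness by comparing the two descents of \lemmaref{lem:descent}. Your correctness check omits one case, where the descent in $H$ ends at a terminal child of a vertex of $Q_{D-1}$; this case is immediate, because your construction keeps those terminals unchanged (the paper treats it as the subcase where $c$ is terminal).
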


\begin{corollary}
\label{cor:szlc}
For every finite non-empty $R \subseteq \{2, 3, 4, \dots\}$ with $r \coloneqq \max R$,
every $n \geq 2$ and every $x \in \Sigma^n$:
\[
  \ARC(x) \leq \sZLC(x) \leq r \ceil{\log n} \cdot (\LZLC_R(x) + \sigma)
\]
\end{corollary}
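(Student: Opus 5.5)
The corollary has two inequalities, and I would treat them separately.

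For $\ARC(x) \leq \sZLC(x)$ I would cite \citep{Kosoy2026b}. There it is shown that a synchronous ZLP $P$ yields an arithmetic repetition system of size at most $\abs{P}$ that determines $\val(P)$. Restricting to the prefix $x \sqsubseteq \val(P)$ should not increase the size: drop the equations that refer only to positions $\geq n$, and truncate the arithmetic sequences. Since ARC is itself defined in \citep{Kosoy2026b} with this comparison in mind, I expect this step to be a direct citation.

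For the second inequality, take an $R$-adic layered ZLP $H$ of minimal width, so $w(H) = \LZLC_R(x)$ and $x \sqsubseteq \val(H)$. Apply \lemmaref{lem:reg} with $n \geq 2$. This gives an $R$-adic layered ZLP $H'$ with
\begin{itemize}
  \item $x \sqsubseteq \val(H')$,
  \item $d(H') \leq \ceil{\log n}$,
  \item $w(H') \leq \LZLC_R(x) + \sigma$.
\end{itemize}
By \propositionref{prop:sync}, $H'$ is synchronous, so $\sZLC(x) \leq \abs{H'}$.

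It remains to bound the size:
\[
  \abs{H'} = \sum_{q \in Q'} \ar(q) = \sum_{j < d(H')} \abs{Q'_j} \cdot \ar_{H'}(j) \leq d(H') \cdot w(H') \cdot r,
\]
using $\ar_{H'}(j) \in R$, hence $\ar_{H'}(j) \leq r$. Substituting the bounds from the regularization lemma gives
\[
  \abs{H'} \leq r \ceil{\log n}\,(\LZLC_R(x) + \sigma),
\]
which is the claimed inequality.

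The only real work is in \lemmaref{lem:reg}, which we may assume. The one point to check is the definitional fit for the first inequality. If the cited result from \citep{Kosoy2026b} is stated for $\val(P)$ rather than for its prefixes, I would add the short truncation argument above. It relies on the fact that a system uniquely determining $\val(P)$ also determines its prefix $x$ once the equations are restricted to positions $< n$.
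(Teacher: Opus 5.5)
Your proof is correct and is essentially the paper's: regularize a width-optimal program with \lemmaref{lem:reg}, make it synchronous by \propositionref{prop:sync}, and bound $\abs{H'}$ layer by layer, while the first inequality is cited directly from \citep{Kosoy2026b}, where it is Proposition 4.9. The paper cites that result as it stands for $\ARC(x) \leq \sZLC(x)$, so your fallback truncation is unnecessary. This is just as well, because as stated it would fail: deleting the equations that reach positions $\geq n$ can leave an earlier position undetermined when its only link to a specified symbol runs through a later position, e.g.\ $u[3] = u[5]$, $u[0] = u[5]$, $u[0] = a$ with $n = 4$.
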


\begin{proof}
Apply \lemmaref{lem:reg} to a width-optimal program for $x$. The resulting $H'$
satisfies $\abs{H'} = \sum_{j < d(H')} \abs{Q'_j} \cdot \ar_{H'}(j) \leq
\ceil{\log n} \cdot (\LZLC_R(x) + \sigma) \cdot r$, it is synchronous by
\propositionref{prop:sync}, and $x \sqsubseteq \val(H')$; hence the second
inequality. The first inequality is Proposition 4.9 of \citep{Kosoy2026b}.
\end{proof}

\subsection{Relation to RTL Automaticity}
\label{sec:rtl}

Next, we compare $\LZLC_R$ with RTL automaticity, showing that the two are
essentially equivalent (up to logarithmic factors) when $R$ consists of a
single base, while general $R$ yields a strictly more flexible measure. Recall
the setting of Section 3 of \citep{Kosoy2026b}: fix an integer base $k \geq 2$;
for $m \in \mathbb{N}$ and $0 \leq t < k^m$, let $\langle t \rangle_k^m \in
[k]^m$ denote the base-$k$ representation of $t$ padded with leading zeros to
length $m$, and let $(\cdot)^{\Rt}$ denote string reversal. A
\emph{Deterministic Finite Automaton with Output} (DFAO) with input alphabet
$[k]$ is $M = (Q_M, q_M, \delta_M, \tau_M)$, computing $M(u) \coloneqq
\tau_M(\delta_M^*(q_M, u))$ for $u \in [k]^*$. The \emph{RTL
$k$-automaticity}\footnote{The \emph{LTR $k$-automaticity} $\AC_k^{\Lt}(x)$ of
\citep{Kosoy2026a} is defined identically, except that the automaton reads the
digits most significant first: $(\langle t \rangle_k^m)^{\Rt}$ is replaced by
$\langle t \rangle_k^m$.} of $x \in \Sigma^*$ is
\[
  \AC_k^{\Rt}(x) \coloneqq \min\{\abs{Q_M} \mid \exists m \in \mathbb{N} :
  \abs{x} \leq k^m \text{ and } \forall t < \abs{x} : x[t] = M((\langle t
  \rangle_k^m)^{\Rt})\}
\]

\begin{lemma}[Radix refinement]
\label{lem:radix-refinement}
Let $k \geq 2$, write $k = \prod_{i < l} f_i$ with $l \geq 1$ and $f_i \geq 2$,
and let $M = (Q_M, q_M, \delta_M, \tau_M)$ be a DFAO over $[k]$. Suppose that
$m \geq 1$, $x \in \Sigma^n$, $n \leq k^m$, and
\[
  x[t] = M((\langle t \rangle_k^m)^{\Rt})
\]
for every $t < n$. Then there is an exact layered ZLP $H$ with $x \sqsubseteq
\val(H)$, $d(H) = l m$, and $\ar_H(j) = f_{j \bmod l}$ for every $j < l m$,
whose width satisfies
\[
  w(H) \leq \abs{Q_M} \max_{j < l} \prod_{i < j} f_i.
\]
\end{lemma}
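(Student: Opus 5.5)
We construct $H$ directly by letting each vertex stand for a DFAO state together with a partially read base-$k$ digit. Refine each base-$k$ digit $t_i$ (read least significant first) into the mixed-radix digits $t_i = \sum_{j<l} d_{i,j} \prod_{i'<j} f_{i'}$ with $d_{i,j} < f_j$. The position $t = \sum_{i<m} t_i k^i$ then has mixed-radix expansion with radices $f_0,\dots,f_{l-1},f_0,\dots$ (periodic), digit $d_{i,j}$ sitting at layer $il+j$. This is because $m_H(il+j) = k^i \prod_{i'<j} f_{i'}$.

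\textbf{Vertices.} At layer $il+j$ we take the set
\[
  Q_{il+j} \coloneqq \{(q,i,j,p) \mid q\in Q_M,\ p<\textstyle\prod_{i'<j} f_{i'}\},
\]
where $q$ is the DFAO state after reading $t_0,\dots,t_{i-1}$ and $p$ is the partial digit $\sum_{j'<j} d_{i,j'}\prod_{i'<j'} f_{i'}$. The $a$-th child of $(q,i,j,p)$, for $a<f_j$, is defined as follows:
\begin{itemize}
  \item if $j<l-1$, it is $(q,i,j+1,p+a\prod_{i'<j}f_{i'})$;
  \item if $j=l-1$, write $t_i = p + a\prod_{i'<l-1}f_{i'} < k$ and $q'=\delta_M(q,t_i)$. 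If $i<m-1$ the child is $(q',i+1,0,0)$; if $i=m-1$ it is the terminal $\tau_M(q')$.
\end{itemize}
The root is $(q_M,0,0,0)$. We restrict to vertices reachable from the root, which guarantees that the root is the unique source and leaves the width bound unaffected. By construction this gives $\abs{Q_{il+j}}\le \abs{Q_M}\prod_{i'<j}f_{i'}$, which is the claimed bound on $w(H)$.

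\textbf{Layering and exactness.} The structural conditions are immediate. Setting $h(q,i,j,p)=il+j$ satisfies condition 2 of \definitionref{def:layered}. All vertices on a layer have arity $f_j = f_{(il+j)\bmod l}$, giving condition 3. The graph is acyclic because $h$ strictly increases along edges. Terminals occur only at layer $lm-1$, so $H$ is exact by \lemmaref{lem:exact}, and $d(H)=lm$.

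\textbf{Value.} It remains to show $x\sqsubseteq\val(H)$, i.e.\ $\val(H)[t]=M((\langle t\rangle_k^m)^{\Rt})$ for all $t<k^m$. We use the descent characterization \lemmaref{lem:descent}: in an exact layered ZLP, $\val(H)[t]$ is obtained by descending from $q_0$ along the mixed-radix digits of $t$. If that lemma is not directly usable in this form, we instead prove it inline by induction on layers, from the bottom up, via \equationref{eq:zip}. The claim is that for a vertex on layer $j$, $\val_H(v)[s]$ is the terminal reached by descending along the digits of $s$ in radices $\ar_H(j),\ar_H(j+1),\dots$. The induction step uses $z[s]=u_{s\bmod r}[\lfloor s/r\rfloor]$ together with exactness, which guarantees there is no truncation.

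Along the descent for $t$, the component $p$ accumulates exactly $t_i$ across the $l$ sublayers of block $i$. The state component therefore runs $\delta_M^*$ over $t_0t_1\cdots t_{m-1} = (\langle t\rangle_k^m)^{\Rt}$, and the terminal reached is $\tau_M$ applied to the final state, which is $M((\langle t\rangle_k^m)^{\Rt})$. Since $n\le k^m=\abs{\val(H)}$, this gives $x\sqsubseteq\val(H)$.

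The main obstacle is this correctness of descent, specifically keeping the indexing consistent between least-significant-first reading and the ordering of sub-digits within a block. The alternative is to avoid tracking $p$ and instead refine $M$ into an automaton over $\prod_j[f_j]$ cyclically; but the explicit partial-digit bookkeeping above is the plan.
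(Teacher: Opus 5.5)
Your proposal is correct and is essentially the paper's proof. The paper uses the same vertices: a DFAO state together with a partially assembled base-$k$ digit, one copy per layer, with unreachable vertices discarded. It uses the same transitions, obtains exactness from \lemmaref{lem:exact}, and proves $x \sqsubseteq \val(H)$ via \lemmaref{lem:descent}, matching the mixed-radix digits of $t$ to the sub-digits of its base-$k$ digits by uniqueness of the mixed-radix representation, since $m_H(il+j) = k^i \prod_{i'<j} f_{i'}$.
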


\begin{proposition}
\label{prop:rtl}
Let $k \geq 2$, $n \geq 2$ and $x \in \Sigma^n$. Then:
\begin{enumerate}
  \item $\LZLC_R(x) \leq \AC_k^{\Rt}(x)$ for every finite $R$ with $k \in R$;
  \item $\AC_k^{\Rt}(x) \leq \ceil{\log_k n} \cdot (\LZLC_{\{k\}}(x) + \sigma)
    + \sigma$.
\end{enumerate}
\end{proposition}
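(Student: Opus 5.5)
Part 1 follows directly from \lemmaref{lem:radix-refinement} with the trivial factorization $l = 1$, $f_0 = k$. Take a DFAO $M$ with $\abs{Q_M} = \AC_k^{\Rt}(x)$ and an $m$ with $n \leq k^m$ that witness the automaticity. Since $n \geq 2$, we may take $m \geq 1$ (enlarging $m$ only pads the representation with leading zeros, and the witness definition allows any suitable $m$). We should check that enlarging $m$ keeps $M$ valid. Reading reversed digits, the extra zeros come at the end, so enlarging $m$ is not automatically harmless. However, $n \geq 2$ already forces $k^m \geq 2$, hence $m \geq 1$, so no enlargement is needed. The lemma then yields an exact layered ZLP $H$ with all arities equal to $k$ and $x \sqsubseteq \val(H)$. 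The empty product makes $\max_{j<1}\prod_{i<j} f_i = 1$, so $w(H) \leq \abs{Q_M}$. Since $k \in R$, $H$ is $R$-adic, and therefore $\LZLC_R(x) \leq w(H) \leq \AC_k^{\Rt}(x)$.

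For part 2, take a width-optimal $\{k\}$-adic program $H$ for $x$ and apply \lemmaref{lem:reg}, using its ``moreover'' clause. This gives an exact layered ZLP $H'$ with all arities $k$, depth $D \coloneqq \ceil{\log_k n}$, width at most $\LZLC_{\{k\}}(x) + \sigma$, and $x \sqsubseteq \val(H')$, $\abs{\val(H')} = k^D \geq n$. Build a DFAO from $H'$ as follows. Its states are the vertices of $H'$ (tagged with their layer, which the layering makes automatic since layers are disjoint), plus one sink state per terminal symbol $a \in \Sigma$. The initial state is $q_0$. On digit $c$, a vertex $q$ moves to $\delta(q)[c]$: if that child is a vertex of the next layer, the DFAO goes there, and if it is a terminal $a$, it goes to the sink for $a$. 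Each sink loops to itself on every digit and outputs $a$. Exactness and \lemmaref{lem:exact} say terminal children occur only at the last layer, so after exactly $D$ digits the automaton sits in a sink. The output on vertex states can be set arbitrarily. The state count is at most $\sum_{j<D}\abs{Q'_j} + \sigma \leq D(\LZLC_{\{k\}}(x)+\sigma) + \sigma$, which is the claimed bound.

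Correctness comes from \lemmaref{lem:descent}. For $t < k^D$, descending from $q_0$ along the base-$k$ digits of $t$, least significant first, reaches the terminal $\val(H')[t]$. Those digits are exactly the letters of $(\langle t\rangle_k^D)^{\Rt}$. Hence $M((\langle t\rangle_k^D)^{\Rt}) = \val(H')[t] = x[t]$ for all $t < n$, with $m = D$ satisfying $n \leq k^D$.

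The main obstacle is the precise use of \lemmaref{lem:descent}, which is stated only in the appendix. We must confirm that it applies to exact programs at every position $t < m_H(d(H))$ and that its mixed-radix digit order matches the reversal convention. Exactness is what guarantees that no descent hits a terminal prematurely or runs into a truncated word. If the lemma were stated only in terms of $\val$ lengths, one would instead derive the descent directly from \equationref{eq:zip} by induction on layers, using that every vertex value has full length by exactness.
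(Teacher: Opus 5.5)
Your proposal is correct and follows the paper's proof essentially verbatim: part 1 applies \lemmaref{lem:radix-refinement} with $l = 1$, $f_0 = k$ (your final observation that $n \geq 2$ already forces $m \geq 1$ is exactly the paper's, so the detour about enlarging $m$ can be dropped), and part 2 regularizes a width-optimal $\{k\}$-adic program via \lemmaref{lem:reg}, reads off a DFAO on $Q' \cup \{f_a \mid a \in \Sigma\}$ with absorbing states $f_a$, and gets correctness from \lemmaref{lem:descent}. The only cosmetic difference is that you use exactness to say the run reaches a sink after exactly $D$ letters, whereas the paper only needs that it gets there after $J(t)+1 \leq D$ letters and then stays.
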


The extra freedom of $\LZLC_R$ manifests once we consider several arities. For
a finite $R \subseteq \{2, 3, 4, \dots\}$, denote by $R^!$ the multiplicative
closure of $R$: the set of all finite products of elements of $R$ (including
the empty product $1$).

\begin{corollary}
\label{cor:multibase}
Let $R$ be finite, $x \in \Sigma^*$ and $k \in R^!$ with $k \geq 2$. Then
$\LZLC_R(x) \leq (k / 2) \cdot \AC_k^{\Rt}(x)$.
\end{corollary}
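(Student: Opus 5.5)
The plan is to reduce directly to \lemmaref{lem:radix-refinement}. Since $k \in R^!$ and $k \geq 2$, we can write $k = \prod_{i < l} f_i$ with $l \geq 1$ and every $f_i \in R$; in particular $f_i \geq 2$. Take a DFAO $M$ with $\abs{Q_M} = \AC_k^{\Rt}(x)$ and an exponent $m$ with $\abs{x} \leq k^m$ that witness the automaticity. The lemma then produces an exact layered ZLP $H$ with $x \sqsubseteq \val(H)$ and arity profile $\ar_H(j) = f_{j \bmod l} \in R$. So $H$ is $R$-adic and $\LZLC_R(x) \leq w(H) \leq \abs{Q_M} \max_{j<l} \prod_{i<j} f_i$.

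It remains to show that $\max_{j<l}\prod_{i<j} f_i \leq k/2$. For every $j < l$ we have $\prod_{i<j} f_i = k / \prod_{i=j}^{l-1} f_i$. The denominator is a nonempty product of factors that are each at least $2$, so it is at least $2$, and the bound $k/2$ follows. The ordering of the factors therefore does not matter.

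The main obstacle is the degenerate edge cases of the lemma's hypotheses, $m \geq 1$ and $n = \abs{x}$. If $x$ is empty, any single-vertex program of arity $r \in R$ with terminal children has width $1$. Note that $\LZLC_R$ is always at least $1$, while $(k/2)\AC_k^{\Rt}(x) \geq 1$ as long as $\AC_k^{\Rt}(x) \geq 1$, which holds since a DFAO has at least one state.

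If the witnessing $m$ is $0$, then $\abs{x} \leq 1$, and $m$ can be replaced by $1$ using the same automaton: padding with a leading zero gives the representation $0$, and reading it reversed means the automaton reads an extra $0$ digit. This changes the output, so $\abs{x}\le 1$ should instead be handled directly. The way to do that is to build a program with one vertex of arity $f \in R$ whose children are all the symbol $x[0]$, which gives width $1 \leq (k/2)\AC_k^{\Rt}(x)$. With these trivial cases settled, the inequality follows.
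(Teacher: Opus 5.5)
Your proof is correct and follows the paper's route: factor $k$ into elements of $R$, and apply \lemmaref{lem:radix-refinement} to an optimal automaton whenever the witnessing $m$ is at least $1$. You then bound every prefix product by $k/2$ (the paper places the largest factor last, which, as you observe, is not needed), and handle $\abs{x} \leq 1$ by a width-$1$ program, which the paper gets from claim 1 of \propositionref{prop:lzlc-wcm}. Delete the abandoned suggestion of replacing $m = 0$ by $1$: you correctly note it can change the automaton's output, and your final argument does not use it.
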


Thus the single measure $\LZLC_R$ simultaneously dominates RTL automaticity in
all the (infinitely many) bases of $R^!$; recall from Appendix D of
\citep{Kosoy2026b} that automaticity itself is highly sensitive to the choice
of base (cf.\ also the mix-automatic sequences of \citep{Endrullis2013}).

\section{The $P^3$ Predictor}
\label{sec:p3}

We now present the promised predictor for $\LZLC_R$, which we call \emph{Path
Plurality of Pluralities} ($P^3$). It generalizes the Tabular Plurality (TP)
algorithm of \citep{Kosoy2026b}, which handles the case $R = \{k\}$. The idea
behind TP is the following. Suppose that $x \sqsubseteq \val(H)$ for an exact
layered ZLP $H$ (\definitionref{def:exact}) all of whose arities equal $k$. Two
positions congruent modulo $k^j$ have the same $j$ least significant base-$k$
digits, so their descents pass through the same vertex $v$ of layer $j$, and
\lemmaref{lem:descent} gives $x[t] = \val_H(v)[\floor{t / k^j}]$ for every such
position $t$; consequently, the word read off the positions congruent to $q$
modulo $k^j$ --- an arithmetic subsequence of $x$ --- is, for every residue $q$,
a prefix of the value of one of the at most $w(H)$ vertices of layer $j$. TP
maintains, for every $j$, a partition of the residues modulo $k^j$ into groups
whose subsequences the observations so far have not shown to differ, together
with tentative tables recording, for each group modulo $k^j$ and each digit $i
< k$, the group modulo $k^{j+1}$ that contains its residues $q + i k^j$. In the
present setting, the arity profile of the program is unknown: the moduli that
play the role of $k^j$ are the scales $m_H(0), m_H(1), \dots$ of an unknown
$R$-adic program, which form some path through the multiplicative closure $R^!$
of $R$ (\sectionref{sec:rtl}) that starts at $1$ and multiplies by an element
of $R$ at each step. $P^3$ therefore maintains TP-style tables over a finite
fragment $S \subset R^!$ of this set and, when predicting, aggregates over all
maximal paths of $S$ by a weighted plurality vote, each path casting the
forecast that TP would make along it, itself a plurality --- whence the name.

\sectionref{sec:p3-model}, \sectionref{sec:p3-update} and
\sectionref{sec:p3-predict} give an overview of $P^3$, and
\sectionref{sec:p3-guarantees} its guarantees; the complete pseudocode, with a
line-by-line discussion, is given in \appendixref{app:p3-desc}.

\subsection{The Model}
\label{sec:p3-model}

The predictor receives $R$ as a parameter. For $m \geq 1$ and a residue $q \in
[m]$, the \emph{subsequence} of $q$ modulo $m$ is the arithmetic subsequence
$x[q] x[q + m] x[q + 2 m] \dots$ of $x$, read off the positions congruent to
$q$ modulo $m$; the subsequence of a position is that of its residue, and we
refer to the position $m \ell + q$ as \emph{row} $\ell$ of the subsequence of
$q$ modulo $m$, so that the rows of a subsequence enumerate its entries in
order. For $r \geq 2$, the subsequence of $q$ modulo $m$ is the zip
(\equationref{eq:zip}) of the subsequences of $q + i m$ modulo $m r$, $i < r$;
we say that the residues $q + i m$ modulo $m r$ \emph{refine} the residue $q$
modulo $m$. The internal state of $P^3$ consists of:
\begin{itemize}
  \item The time index $n \in \mathbb{N}$ --- the number of positions observed
    so far --- and a capacity parameter $c_{\max} \in \mathbb{N}_+$, which
    bounds the number of states kept at any scale (see below).
  \item A finite set of \emph{scales} $S \subset R^!$ with $1 \in S$, the scale
    $1$ being called the \emph{root}, and a set of \emph{edges} $\mathcal{E}
    \subseteq S \times R$, where $(m, r) \in \mathcal{E}$ requires $m r \in S$;
    every $m \in S \setminus \{1\}$ is required to be the target $m = m' r$ of
    at least one edge $(m', r) \in \mathcal{E}$. The directed graph on $S$ with
    an edge $m \to m r$ for every $(m, r) \in \mathcal{E}$ is the \emph{scale
    fragment}; its \emph{sinks} are the scales with no outgoing edges.
  \item For each $m \in S$: a set of \emph{states} $Q_m \subseteq [m]$ ---
    residues modulo $m$ whose subsequences the observations so far have shown
    to be pairwise distinct --- together with a partial \emph{output function}
    $\tau_m : Q_m \to \Sigma$, recording $x[q]$, the first symbol of the
    subsequence of $q$. Every residue modulo $m$ is tentatively
    \emph{identified} with a state, the model treating their subsequences as
    identical; a state is the minimal residue identified with it, in particular
    $0 \in Q_m$.
  \item For each $(m, r) \in \mathcal{E}$: a \emph{transition table} $\delta_m^r
    : Q_m \times [r] \to Q_{m r}$, where $\delta_m^r(q, i)$ is the state with
    which the refined residue $q + i m$ modulo $m r$ is identified: the minimal
    state of $Q_{m r}$ whose subsequence is consistent with all observations of
    the subsequence of $q + i m$ modulo $m r$.
\end{itemize}
We call a scale $m \in S$ \emph{beyond the horizon} when the positions observed
so far are all smaller than $m$. At such a scale, no subsequence has been
observed beyond its row $0$, so the states are simply the first occurrences of
the distinct symbols observed so far; the output functions are maintained only
beyond the horizon, which is the only place where they are consulted
(\sectionref{app:p3-update}). Exactly as for $R = \{k\}$ above, if $x
\sqsubseteq \val(H)$ for an exact $R$-adic layered $H$, the subsequences at
scale $m_H(j)$ are prefixes of the values of the at most $w(H)$ vertices of
layer $j$ (\lemmaref{lem:reg} provides exactness at the cost of $\sigma$ extra
width). For $R = \{k\}$, this is essentially the state of TP, $S$ being a chain
of powers of $k$.

\textbf{Subroutines.} The prediction and update routines share two helper
subroutines, Eval and Diverge (\algorithmref{alg:eval} and
\algorithmref{alg:diverge} in \sectionref{app:p3-subroutines}). Eval$(t)$
computes the model's current guess of $x[t]$, for $t \in \mathbb{N}$: it
follows a path of the fragment from the root, moving at the edge $(m, r)$ from
its current state $q \in Q_m$ to $\delta_m^r(q, \floor{t / m} \bmod r)$ --- the
state identified with the residue of $t$ modulo $m r$ --- and returns the
output function at the sink where it halts. The path is selected arbitrarily;
by the data-consistency invariant (claim 1 of \lemmaref{lem:p3-inv}), the value
returned at an already-observed position does not depend on this choice.
Diverge$()$ computes, for every scale $m \in S$ and every pair of states $q, q'
\in Q_m$, the \emph{divergence value} $d_m(q, q') \in \mathbb{N} \cup
\{\infty\}$: a row at which the subsequences of $q$ and $q'$ have been observed
to differ, while agreeing, as far as the model can tell, on all earlier rows
(claim 3 of \lemmaref{lem:p3-inv}; $d_m(q, q) = \infty$). The computation is by
dynamic programming over the scale fragment, from the sinks downwards. At a
sink, distinct states are declared to diverge already at row $0$, which is
sound because sinks lie beyond the horizon, where states are first occurrences
of distinct symbols (claim 4 of \lemmaref{lem:p3-inv}). At any other scale,
since row $r \ell' + i$ of the subsequence of $q$ modulo $m$ is row $\ell'$ of
the subsequence of $q + i m$ modulo $m r$, we let $d_m(q, q') \coloneqq \min (r
\cdot d_{m r}(\delta_m^r(q, i), \delta_m^r(q', i)) + i)$, the minimum ranging
over the edges $(m, r) \in \mathcal{E}$ and the digits $i < r$, with $r \cdot
\infty + i \coloneqq \infty$. The divergence tables are recomputed when needed
rather than stored, trading per-round time for a stronger compression bound
(cf.\ Remark A.6 of \citep{Kosoy2026b}).

\subsection{Update}
\label{sec:p3-update}

The initial state is the \emph{empty state} --- no scales and no edges, with
$c_{\max} = 1$ --- at which the prediction is an arbitrary fixed symbol. Upon
observing $x[n]$, the state is updated in four stages.

\textbf{Initialization.} At the empty state, the update creates the minimal
nonempty state: $S = \{1\}$, whose sole state $0 \in Q_1$ represents the entire
sequence, with $\tau_1(0) \coloneqq x[n]$, and the time index is reset, so that
positions are henceforth indexed relative to the current one. This happens at
round $0$ and after every restart (see below).

\textbf{Preprocessing.} When the time index reaches a scale, $n \in S$, the
fragment grows: for every $r \in R$ with $(n, r) \notin \mathcal{E}$, the scale
$n r$ is adjoined to $S$ if not already present, inheriting $Q_{n r} \coloneqq
Q_n$ and $\tau_{n r} \coloneqq \tau_n$, and the edge $(n, r)$ is added to
$\mathcal{E}$ with the transitions $\delta_n^r(q, 0) \coloneqq q$ --- the
residue $q$ modulo $n r$ is a state, as $Q_{n r} = Q_n$ --- and $\delta_n^r(q,
i) \coloneqq 0$ for $i \geq 1$: the subsequences of $q + i n$ modulo $n r$ with
$i \geq 1$ start at positions $\geq n$, hence are as yet entirely unobserved,
and are tentatively identified with the minimal state. The position $n$ itself
is the first to test these identifications, later in the same round. This is
the only mechanism by which the fragment acquires scales or edges.

\textbf{Repair.} A single pass over the scales $m \in S \setminus \{1\}$, in
increasing order, restores the consistency of the model with the new
observation by the smallest local repair available. At each edge $(m', r) \in
\mathcal{E}$ into $m$ whose source residue $n \bmod m'$ is a state, the entry
$q' \coloneqq \delta_{m'}^r(n \bmod m', \floor{n / m'} \bmod r)$ of its table
is examined. If the residue $n \bmod m$ is not itself a state, this entry
identifies it, tentatively, with a smaller state $q'$. Row $\ell \coloneqq
\floor{n / m}$ of the subsequence of $q'$ is the position $m \ell + q' < n$,
already observed; if the symbol there, Eval$(m \ell + q')$, disagrees with
$x[n]$, the identification is refuted, and the entry is \emph{rerouted} to the
minimal state $\tilde{q} \in Q_m$ that is compatible with the recorded
divergences, $d_m(\tilde{q}, q') \geq \ell$, and carries the new observation at
the current row, Eval$(m \ell + \tilde{q}) = x[n]$. If no such state exists,
the data has separated the subsequence of $n$ from that of every compatible
state, and the \emph{state} $n \bmod m$ \emph{is created}, inheriting the
outgoing transitions of $q'$ verbatim and, when $m > n$, recording $\tau_m(n)
\coloneqq x[n]$. Finally, once $n \bmod m$ is a state, every examined entry is
set to $n \bmod m$. A creation does not by itself restore consistency: it turns
$n \bmod m$ into a state, whose inherited transitions face the same test at the
larger scales of the pass, so a single mismatch \emph{cascades} upward along
the residues of $n$ until it is resolved by rerouting or anchored at the sinks.
For $R = \{k\}$, this is the update rule of TP.

\textbf{Post-processing.} Finally, every scale $m$ with $\abs{Q_m} > c_{\max}$
is deleted, together with the scales thereby disconnected from the root and,
iterating as needed, every scale $m \leq n$ that has lost all of its outgoing
edges --- so that every sink of the fragment stays beyond the horizon; the
edges incident to a deleted scale are deleted with it. If the root itself is
deleted, $c_{\max}$ is doubled and the predictor restarts, re-initializing on
the current symbol. The restarts divide the processing of the input into
\emph{epochs}, the $i$-th of which runs with the constant capacity $c_{\max} =
2^i$; owing to this doubling trick, no prior knowledge of the complexity of $x$
is required.

The update rule preserves the following invariants --- the analogues of those
of TP (Lemma 3.3 of \citep{Kosoy2026b}) --- which fix the semantics of the
model; the proof is given in \appendixref{app:p3-correctness}. The indexing is
local to an epoch: if an epoch begins at position $b$ of the original input, we
re-index the suffix $x[b], x[b+1], \dots$ from $0$ and again denote it by $x$,
so that the $n$ in the lemma is the last observed index of the current epoch.

\begin{lemma}
\label{lem:p3-inv}
Consider the state of $P^3$ in some epoch after updating on the prefix
$x[:n+1]$. The following statements hold.
\begin{enumerate}
  \item \textbf{Data consistency.} For every $t \leq n$, every run of
    Eval$(t)$ returns $x[t]$, regardless of how $r$ is selected in line 3 of
    \algorithmref{alg:eval}.
  \item \textbf{Routing.} For every $(m, r) \in \mathcal{E}$, $q \in Q_m$ and
    $i < r$: $\delta_m^r(q, i) \leq q + i m$, with equality whenever $q + i m
    \in Q_{m r}$. Moreover, routing is hereditary: every $u \in Q_{m r}$
    satisfies $u \bmod m \in Q_m$ --- so, by the equality above, $\delta_m^r(u
    \bmod m, \floor{u / m}) = u$.
  \item \textbf{Distinguishability.} For every $m \in S$ and distinct $q_1, q_2
    \in Q_m$, the divergence value $\ell \coloneqq d_m(q_1, q_2)$ computed by
    Diverge satisfies $m \ell + \max(q_1, q_2) \leq n$ and $x[m \ell + q_1]
    \neq x[m \ell + q_2]$.
  \item \textbf{Horizon.} For every $m \in S$ with $m > n$, the states at scale
    $m$ are precisely the first occurrences of the distinct symbols of
    $x[:n+1]$ --- $Q_m = \{t \leq n \mid \forall t' < t : x[t'] \neq x[t]\}$
    --- and $\tau_m(q) = x[q]$ for every $q \in Q_m$. Moreover, every sink of
    the scale fragment (a scale with no outgoing edges) satisfies $m > n$.
    Consequently, $\tau_m$ is injective on $Q_m$ at every sink $m$.
\end{enumerate}
\end{lemma}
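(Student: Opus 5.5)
The plan is to prove all four claims simultaneously by induction on $n$ within an epoch. The base case is the initialization round $n=0$: $S=\{1\}$ and $Q_1=\{0\}$ with $\tau_1(0)=x[0]$. Claims 1 and 4 are immediate there, claim 2 is vacuous, and claim 3 is vacuous because there is only one state. The inductive step follows the four update stages in order. For each stage I will check that the invariants, suitably weakened while the new symbol $x[n]$ is being absorbed, are preserved. I will first establish a structural lemma for Diverge: if claim 3 holds at every scale above $m$ (in the order of the fragment), then the recursion $d_m(q,q')=\min(r\,d_{mr}(\cdot,\cdot)+i)$ yields a genuine witnessing row. This is because row $r\ell'+i$ of residue $q$ modulo $m$ is row $\ell'$ of residue $q+im$ modulo $mr$, and by claim 2 $\delta_m^r(q,i)$ has the same observed subsequence as $q+im$, by data consistency (claim 1). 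Hence claim 3 reduces to claims 1, 2 and 4 (the last for the sink base case), by downward induction over the finite acyclic fragment.

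\textbf{Preprocessing.} The new scales $nr$ lie beyond the horizon with respect to $x[:n]$, since $nr>n$. They inherit $Q_n$, which by claim 4 applied at scale $n$ (or by the inductive hypothesis, since before this round $n>n-1$) consists of the first occurrences. The new transitions $\delta_n^r(q,0)=q$ and $\delta_n^r(q,i)=0$ satisfy claim 2, because $0\le q+in$ and $q+in\notin Q_{nr}$ for $i\ge1$. Evaluation at positions $t<n$ never takes a digit $i\ge1$ at these edges, since $\floor{t/n}=0$. So claim 1 for $t<n$ is unaffected, and the old scale $n$ stops being a sink without violating claim 4.

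\textbf{Repair.} This is the main obstacle. I will maintain the following invariant along the pass in increasing order of $m$: after processing all scales $\le m$, every Eval$(t)$ with $t<n$ still returns $x[t]$. Moreover, every path from the root to a scale $\le m$, when followed with the digits of $n$, ends at a state whose subsequence is consistent with $x[n]$ at the current row. For rerouting to $\tilde q$, I must show that no earlier position's evaluation changes. The edited entry lies on the descent of a position $t<n$ only if $t\equiv n\pmod m$ and $t$ uses row $\floor{t/m}<\ell$. The condition $d_m(\tilde q,q')\ge\ell$ together with claim 3 (guaranteeing that $d_m$ is the \emph{first} recorded disagreement, i.e.\ agreement up to row $\ell$ on what the model knows) should give that $\tilde q$ and $q'$ evaluate identically on rows $<\ell$. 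I expect this to require a secondary induction showing that agreement of divergence values propagates down the tables, and this is where care is needed. For creation, the new state $n\bmod m$ inherits $q'$'s transitions, so earlier rows still evaluate as before. Claim 2's hereditary part holds because creations cascade upward along the residues of $n$: the creation at $m$ occurs only when $n\bmod m'$ is already a state at the source scale. Claim 3 for the new pair is supplied either by the disagreement $x[n]\ne$ Eval$(m\ell+q')$ or by the failure of every compatible $\tilde q$. Finally, at sinks $m>n$ the created state is the first occurrence of a new symbol, with $\tau_m(n)=x[n]$, which restores claim 4.

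\textbf{Post-processing.} Deleting scales removes edges and states but does not alter any remaining table. Claims 1--3 are therefore preserved as long as every root-to-sink path still exists and every sink remains beyond the horizon. This is exactly what the iterated deletion of scales $m\le n$ without outgoing edges guarantees. A deletion of the root starts a new epoch, which falls under the base case.
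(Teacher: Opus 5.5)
Your overall skeleton (induction over the rounds of an epoch, through preprocessing, the repair pass, post-processing and restarts) is the right one. Two steps, however, fail as stated.

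\emph{The reduction of claim 3.} The structural lemma that claim 3 ``reduces to claims 1, 2 and 4'' is false. Those claims give the separating symbols only once the numeric bound $m\ell+\max(q_1,q_2)\leq n$ is known: along every continuation, the partial evaluation of a stored state at an observed row is pinned to the data there. The bound itself records that distinct states were only ever created when the data separated them, and it does not follow from the other three claims. Your inductive step breaks exactly here. A witness row $\ell'$ at scale $mr$ for $\delta_m^r(q,i)$ and $\delta_m^r(q',i)$ gives $mr\ell'+\delta_m^r(q,i)\leq n$. But $\delta_m^r(q,i)\leq q+im$ points the wrong way, so row $r\ell'+i$ of $q$ need not be observed. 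Moreover, two distinct states with identical rows have $d_m=\infty$. That is precisely the situation line 14 creates when the new state $u=n\bmod m$ copies the rows of $q'$. You therefore need a genuinely relaxed invariant during the pass. In the paper, every pair is either \emph{settled} (it has an observed witness, which pinning keeps stable under all later repairs) or \emph{saturated} ($\max(q_1,q_2)=n\bmod m$ and $x[m\floor{n/m}+\min(q_1,q_2)]\neq x[n]$). Saturated pairs become genuine witnesses only at the end of the cascade, once every run of Eval$(n)$ has been shown to return $x[n]$.

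\emph{Several incoming edges.} This is the more serious gap. Your repair step is the chain argument of TP and does not handle scales with several incoming edges, which is the genuinely new difficulty. Heredity must hold at \emph{every} edge into $m$, not just the one through which the cascade arrived. With $R=\{2,3\}$, a creation at scale $6$ while $n\bmod 3\notin Q_3$ would break heredity at the edge $3\to 6$. Heredity is what makes every entry path route a stored state to itself, and all the consistency arguments rely on that. Likewise, line 17 commits every active incoming edge to $u$, whose rows were copied from the first examined target $q'$ alone. Past consistency therefore requires each sibling target $q_f$ to agree with $q'$ on all rows below $\ell$, along every continuation.

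The missing ingredient is the paper's entry-path compatibility lemma. First, every active target satisfies $\PEval_\gamma(q_e,s)=x[ms+u]$ for $s<\ell$, so the siblings are mutually compatible and a creation can only happen at the first edge examined. Second, suppose some incoming edge is inactive. Follow an entry path to its source up to the first scale $k$ with $n\bmod k\notin Q_k$. The passed or rerouted entry $z_k$ there satisfies $x[kj+z_k]=x[kj+(n\bmod k)]$ for $j<\floor{n/k}$ and $x[k\floor{n/k}+z_k]=x[n]$. Routing $u-(n\bmod k)+z_k$ on to scale $m$ produces a rerouting candidate valid at every active edge. Hence a state is created only when all incoming edges are active, and that is what yields heredity.

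A smaller inaccuracy: a run of Eval$(t)$ can cross a repaired entry without $t\equiv n\pmod{m}$, since residues are routed to smaller representatives. What holds is $t\bmod m\geq n\bmod m$, which still puts the crossing at a row below $\ell$.
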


\subsection{Prediction}
\label{sec:p3-predict}

Predict is invoked to forecast $x[n]$ after the state has been updated on
$x[:n]$. Call the pair $(m, q)$ \emph{active} when $m \in S$ and $q = n \bmod m
\in Q_m$. For an active pair and an edge $(m, r) \in \mathcal{E}$, let $i
\coloneqq \floor{n / m} \bmod r$ be the digit of $n$ at this edge and $q'
\coloneqq \delta_m^r(q, i)$, so that $q' \leq q + i m = n \bmod m r$ by routing
(claim 2 of \lemmaref{lem:p3-inv}); we call $(m, q, r)$ a \emph{leaf edge} when
this inequality is strict --- when the residue $n \bmod m r$ is not a state of
its own but is identified with the smaller state $q'$. These are the points at
which the model's treatment of $n$ is tentative. At a leaf edge, $P^3$
forecasts the current symbol by a plurality vote among states, exactly as TP
does at the single such point of its chain. With $\ell \coloneqq \floor{n / m
r}$ the current row, the candidates are the states $\tilde{q} \in Q_{m r}$
whose subsequences have not been observed to diverge from that of $q'$ before
row $\ell$, i.e.\ $d_{m r}(q', \tilde{q}) \geq \ell$; the vote of $\tilde{q}$
is the number of pairs $(\hat{q}, i') \in Q_m \times [r]$ with
$\delta_m^r(\hat{q}, i') = \tilde{q}$ whose row-$\ell$ position $m r \ell + i'
m + \hat{q}$ precedes $n$ --- the refined residues identified with $\tilde{q}$
that have already been traversed within the current row; and the forecast is
Eval$(m r \ell + q^*)$, the symbol at row $\ell$ of the subsequence of the
winner $q^*$.

The forecasts of the leaf edges along the different paths of the fragment are
aggregated by mass. Let $\eta \geq 0$ be the unique solution of $\sum_{r \in R}
r^{-\eta} = 1$, and give an edge of arity $r$ the weight $r^{-\eta}$ (for
$\abs{R} = 1$, $\eta = 0$ and all weights equal $1$). The weights are the step
probabilities of a random walk on $R^!$ that starts at the root and multiplies
its current scale by $r$ with probability $r^{-\eta}$; an \emph{infinite path}
is an infinite sequence of such steps, and we call the probability of a set of
infinite paths its \emph{mass}. Call an infinite path \emph{unblocked} if the
fragment contains every edge that it traverses before reaching a sink, and
\emph{blocked} otherwise; a blocked path is one that attempts an edge deleted
by post-processing. Prediction is the plurality vote in which every unblocked
path casts the forecast of the first leaf edge it encounters, and the votes for
a symbol are counted by mass; a path that encounters no leaf edge abstains,
which happens only when $n \in S$, for the paths whose walk reaches the sink
$n$. \algorithmref{alg:p3-predict} computes the vote by a single pass over the
scales in decreasing order, in time polynomial in the size of the state. Since
$\eta$ is in general irrational, the implementation works with a rational
approximation of it and fixed-point arithmetic (\sectionref{app:p3-predict}),
which the mistake analysis accounts for. For $R = \{k\}$, $\eta = 0$, the
fragment is a chain, and the vote is the forecast of the chain's first leaf
edge, as in TP.

\subsection{Guarantees}
\label{sec:p3-guarantees}

\begin{theorem}
\label{thm:p3}
Fix a finite nonempty $R \subseteq \{2, 3, 4, \dots\}$ and let $r \coloneqq
\max R$. The predictor $P^3 = P^3(R)$ is statistically and computationally
efficient with respect to $\LZLC_R$. Specifically, for any $n \geq 2$ and $x
\in \Sigma^n$, denoting $s \coloneqq \LZLC_R(x)$:
\begin{enumerate}
  \item \textbf{Statistical efficiency:} the number of mistakes is bounded by
    \begin{equation}
      \label{eq:p3-mistakes}
      M_{P^3}(x) = O(r \log r \cdot (s + \sigma) \log(r (s + \sigma)) \cdot
      \log n \cdot \log(r n))
    \end{equation}
  \item \textbf{Computational efficiency:} the predictor satisfies the
    compression bound
    \begin{equation}
      \label{eq:p3-state}
      S_{P^3}(x) = O((s + \sigma) \cdot (\abs{R} r \log(s + \sigma) + \log(r
      n)) \cdot (\log(2 r n))^{\abs{R}})
    \end{equation}
\end{enumerate}
\end{theorem}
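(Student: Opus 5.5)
We first fix a comparison program and then bound the predictor against it. Given $x\in\Sigma^n$ with $s=\LZLC_R(x)$, we apply \lemmaref{lem:reg} to a width-optimal program. This yields an exact $R$-adic layered $H$ with $x\sqsubseteq\val(H)$, $d(H)\le\ceil{\log n}$ and $w(H)\le s+\sigma$. The same holds, with the same $H$, for every suffix $x[b:]$ that starts an epoch, after shifting. This is the only place where a shift could raise complexity, and we handle it by noting that the subsequences modulo $m_H(j)$ of $x[b:]$ are still subsequences modulo $m_H(j)$ of $x$ (residues are merely permuted). Hence at most $w(H)$ distinct prefixes occur at each true scale $m_H(j)$.

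The compression bound then follows from the capacity mechanism. By \lemmaref{lem:p3-inv} (claims 2 and 3), the states at any scale $m$ have pairwise distinct subsequences. At the true scales along the path $m_H(0),m_H(1),\dots$ we therefore have $\abs{Q_m}\le s+\sigma$, so the root, and the true path, are never deleted once $c_{\max}\ge s+\sigma$. Consequently at most $O(\log(s+\sigma))$ epochs occur and $c_{\max}\le 2(s+\sigma)$ throughout. The fragment contains only scales $m\le r\cdot(n+1)$: sinks lie just beyond the horizon, and non-sinks are at most $n$. Each such scale is a product of at most $\log(rn)$ factors from $R$, so it is determined by a vector of exponents, which gives $\abs{S}\le(\log(2rn))^{\abs{R}}$. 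Per scale we store $Q_m$ and $\tau_m$, costing $O(c_{\max}\log(rn))$ bits, and $\abs{R}$ tables of size $c_{\max}\cdot r$ with entries of $O(\log c_{\max})$ bits. Multiplying gives \equationref{eq:p3-state}.

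For the mistake bound we use a potential argument per epoch, and this is the main obstacle. The plan is to show that every mistake either triggers a state creation, a rerouting at some scale, or carries a constant fraction of the prediction mass that is \emph{wrong} on the true path's forecasts. Concretely, consider the unblocked mass whose first leaf edge lies on the true path of $H$ (mass at least $\prod$ of weights along it, but we need an aggregate over all $R$-adic paths). Because every path whose first leaf edge sits at a scale with correctly identified states forecasts correctly except when the TP plurality is wrong, we would adapt the TP analysis of \citep{Kosoy2026b}. A plurality at a leaf edge with at most $c_{\max}$ candidates errs only if a fraction of the pairs $(\hat q,i')$ traversed this row are misrouted. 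Each misrouting is eventually corrected by a reroute or creation, and the number of reroutes of each table entry is bounded by $O(\log c_{\max})$ via the halving of compatible candidates in the plurality. This amortization gives, per scale, $O(r\,c_{\max}\log c_{\max})$ chargeable events.

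To turn path-wise correctness into a weighted plurality guarantee, we charge a mistake to the mass of paths that voted wrongly. A mistake means that at least half the voting mass was wrong. We then need every unit of wrong mass to be paid for by events at scales weighted by that mass; summing over scales with the random-walk weights yields an extra factor. Bounding the total weight of scales visited at each depth, times depth $\le\log n$, times $\log(rn)$ from the fixed-point and precision loss, and the $\log r$ factor from the relation $r^{-\eta}\ge r^{-\log\abs{R}/\log 2}$ relating the weights to depth, is the step I expect to be delicate. I would first try to prove that the mass of wrong votes in a round is at most the weighted count of table entries being wrong at that round, summed over a single layer, and that each entry's wrongness is removed after $O(r\log c_{\max})$ charged events. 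Multiplying by $O(\log(s+\sigma))$ epochs then gives \equationref{eq:p3-mistakes}.
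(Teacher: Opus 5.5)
The compression half of your proposal follows the paper's route and is fine up to constants. The mistake bound, however, has a genuine gap, and the plan you sketch would not close it. Two ingredients are missing.

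First, you need a wrong-forecast budget for \emph{every} edge of the fragment, not only for the edges of the true path of $H$. Every unblocked path votes, most of them are not the true path, and nothing in your plan controls their votes. The paper gets this budget from the capacity bound alone: every retained scale has at most $c$ states. It couples the leaf-edge forecasts of an edge $(m,r)$ to the branch prediction game on the refinement tree of the subsequences modulo $mr$, with $r\abs{Q_m}\le rc$ elements. Then \propositionref{prop:branch-prediction} gives at most $3rc\log(rc)$ wrong forecasts per edge (\lemmaref{lem:p3-edge-budget}). Your substitute, $O(\log c_{\max})$ reroutes per table entry ``via halving of compatible candidates'', is not a property of the algorithm. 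Line 10 of the update reroutes to the \emph{minimal} compatible state, not to a plurality winner, and the paper's analysis never counts reroutes at all.

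Second, additive mass accounting (``every unit of wrong mass paid for by events weighted by that mass'') cannot give a polylogarithmic bound. A mistake only certifies that about half of the \emph{currently live} mass is charged. The only available lower bound on the unblocked mass is $(rn)^{-\eta}$, which is polynomially small whenever $\abs{R}\ge 2$. Dividing a total charged mass of order $K$ by this gives only $O(K(rn)^{\eta})$ mistakes. What is needed is the multiplicative potential of \lemmaref{lem:budgeted-plurality}. With $\Phi_t=\int_{H_t}(2K-\kappa_t)\,\mathrm{d}w$, the charged mass at every mistake is at least $w(H_t)/4\ge\Phi_t/(8K)$, so there are at most $8K\ln(2/w(H_T))$ mistakes. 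Applying it requires three facts:
\begin{itemize}
  \item the unblocked set only shrinks within an epoch, because an edge out of $m$ is created only at round $m$ and is never re-created;
  \item its mass stays at least $(rn)^{-\eta}$, because the largest scale is a sink with an entry path (\lemmaref{lem:p3-unblocked-monotone});
  \item each path is charged at most $K=(\floor{\log T}+1)(B(c)+1)$ times, because it votes only through the $O(\log n)$ edges of its own prefix and abstains at most once per scale (\lemmaref{lem:p3-path-budget}).
\end{itemize}

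This also corrects your bookkeeping of the factors. The factor $\log r\cdot\log(rn)$ in \equationref{eq:p3-mistakes} is $\eta\ln(rn)$ with $\eta\le\log\abs{R}\le\log r$, i.e.\ the logarithm of the inverse surviving mass. The fixed-point error contributes no factor: it is a tolerance $\epsilon_t=(rn)^{-\eta}/2$, which the potential argument absorbs because $2\epsilon_t\le w(H_t)$. Multiplying a per-epoch bound by $O(\log(s+\sigma))$ epochs would cost an extra $\log(s+\sigma)$; the paper instead sums $B(2^i)$ over the epochs geometrically.

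Two smaller points concern the state bound. Shifting the origin to an epoch start $b$ does not merely permute residues. The subsequence of $q$ modulo $m$ in $x[b:]$ is the suffix of a subsequence of $x$ starting at row $\floor{b/m}$ or $\floor{b/m}+1$, depending on whether $q+(b \bmod m)$ overflows $m$. So up to $2w(H)$ distinct words occur, which is why \lemmaref{lem:p3-factor-program} carries a bit and gets width $2(s+\sigma)$; only constants change. You must also show that the protected chain survives the structural sweep, not just the capacity sweep, as in \lemmaref{lem:p3-no-large-restart}.
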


The constants in \equationref{eq:p3-mistakes} and \equationref{eq:p3-state} are
absolute, and the time complexity of each round is polynomial in the size of
the state --- including the root-finding and the fixed-point evaluation of the
prediction vote, specified in \sectionref{app:p3-predict}.

Both bounds of \theoremref{thm:p3} are quasilinear in $s$; an exponential
dependence on $\abs{R}$ enters only through the factor $(\log(2 r
n))^{\abs{R}}$ of \equationref{eq:p3-state}, which bounds the number of scales
$\abs{S}$. Whether this dependence can be avoided, we leave as an open problem.
In particular, for an infinite sequence $x \in \Sigma^\omega$ s.t.\
$\LZLC_R(x[:n])$ grows polylogarithmically in $n$, $P^3$ makes
polylogarithmically many mistakes while running in polylog space and
quasilinear cumulative time: the regime promised in \sectionref{sec:intro}.
Comparing with previous results: by \corollaryref{cor:szlc}, the ARC
predictor of \citep{Kosoy2026b} already achieves a mistake bound of $O(r (s +
\sigma) (\log n)^2)$ on words of low $\LZLC_R$ --- the contribution of
\theoremref{thm:p3} is the compression bound \equationref{eq:p3-state}, which
that predictor entirely lacks; conversely, for $R = \{k\}$, combining
\theoremref{thm:p3} with \propositionref{prop:rtl} yields guarantees
comparable, up to logarithmic factors, to those established for TP in Theorem
3.4 of \citep{Kosoy2026b}.

\textbf{Proof outline.} The invariants of \lemmaref{lem:p3-inv} are established
in \appendixref{app:p3-correctness}, and \appendixref{app:p3-proof} derives the
bounds from them. First, every retained scale has $O(s + \sigma)$ states
(\sectionref{app:p3-state-bound}): within an epoch, let $G$ be an exact
$R$-adic layered ZLP of width at most $2(s + \sigma)$ for the suffix being
processed; by distinguishability, the states at the scale $m_G(j)$ inject into
layer $j$ of $G$, and those at $m_G(d(G))$ into $\Sigma$, so once $c_{\max}
\geq 2(s + \sigma)$ the chain of scales of $G$ is never deleted and no further
restart occurs, whence $c_{\max} < 4(s + \sigma)$ throughout. For the mistake
bound (\sectionref{app:p3-stat}), the unblocked paths are the experts of a
plurality vote under the prior given by mass: within an epoch, their set only
shrinks, while its mass stays at least $(r n)^{-\eta}$; a single edge forecasts
wrongly at most $O(r c_{\max} \log(r c_{\max}))$ times, by a coupling to the
branch prediction game of \citep{Kosoy2026b}; and a path is charged only at the
$O(\log n)$ edges of its own prefix. A lemma on plurality voting under such
budgets (\sectionref{app:budgeted-plurality}) bounds the mistakes of an epoch,
and summing over the epochs, whose capacities form a geometric progression,
gives \equationref{eq:p3-mistakes}. Finally, the fragment has at most $(\log(2
r n))^{\abs{R}}$ scales, each carrying $O(s + \sigma)$ states of $O(\log(r n))$
bits and at most $\abs{R}$ tables of $O(r (s + \sigma))$ entries of $O(\log(s +
\sigma))$ bits each; this gives \equationref{eq:p3-state}
(\sectionref{app:p3-compression}).

\section*{AI Disclosure}

The core mathematical content and the text of the introduction were created by 
the author with only minimal AI assistance. The other sections of the paper 
were mostly written by AI based on the author's detailed research notes and guidance, 
with some back-and-forth and small manual corrections. The notes contained the algorithms and high-level 
proof sketches but not all the proof details. In the process, the AI
identified and proposed corrections for some minor technical errors in the notes, and also 
fleshed out certain technical details, e.g. the finite-precision treatment 
of the $\eta$ parameter. All AI-written material was subsequently reviewed by the author,
who takes full responsibility for its correctness. Multiple versions of Anthropic Claude
and OpenAI ChatGPT were involved, but Anthropic Claude Fable 5 was used the most.

\acks{This work was supported by the Advanced Research+Invention Agency (ARIA) of the United Kingdom, Survival and Flourishing Corp, and Coefficient Giving in San Francisco, California.}

\bibliography{references}

\appendix

\section{Proofs for \texorpdfstring{\sectionref{sec:ziplines}}{Section 3}}
\label{app:proofs}

The following lemma is the basic link between layered programs and automata
reading the digits of the position index (\sectionref{sec:ziplines}). By its
claim 2, every position $t < \abs{\val(H)}$ satisfies $t < m_H(d(H))$, so that
the numbers $t_j$ appearing in it are exactly the digits of $t$ in the
mixed-radix numeration system with radices $\ar_H(0), \ar_H(1), \dots$, i.e.\
$t = \sum_{j < d(H)} t_j m_H(j)$.

\begin{lemma}[Descent]
\label{lem:descent}
Let $H$ be a layered ZLP and let $t < \abs{\val(H)}$. For $j < d(H)$, denote
$t_j \coloneqq \floor{t / m_H(j)} \bmod \ar_H(j)$. Define the \emph{descent} of
$t$ in $H$ by $q^0 \coloneqq q_0$ and, recursively, $c^{j+1} \coloneqq
\delta(q^j)[t_j]$ and $q^{j+1} \coloneqq c^{j+1}$, as long as $c^{j+1} \in Q$;
this is well-defined, since $q^j \in Q_j$ by \definitionref{def:layered}, so
that $t_j < \ar(q^j)$. Then:
\begin{enumerate}
  \item whenever $q^j$ is defined (i.e.\ $c^1, \dots, c^j \in Q$), $\floor{t /
    m_H(j)} < \abs{\val_H(q^j)}$ and
    \begin{equation}
      \label{eq:descent}
      \val(H)[t] = \val_H(q^j)[\floor{t / m_H(j)}]
    \end{equation}
  \item the descent stops at a terminal: there is $J(t) < d(H)$ with $c^1,
    \dots, c^{J(t)} \in Q$ and $c^{J(t)+1} \in \Sigma$; moreover, $t < m_H(J(t)
    + 1)$ and $\val(H)[t] = c^{J(t)+1}$.
\end{enumerate}
\end{lemma}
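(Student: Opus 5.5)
Both claims follow by induction on $j$, using the zip formula \equationref{eq:zip} at each step of the descent.

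\textbf{Claim 1.} The base case $j = 0$ is immediate: $q^0 = q_0$, $m_H(0) = 1$, and $\val(H) = \val_H(q_0)$, so both the bound and \equationref{eq:descent} hold by the hypothesis $t < \abs{\val(H)}$.

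For the inductive step, suppose $q^j$ is defined and satisfies the claim. Write $p \coloneqq \floor{t / m_H(j)} < \abs{\val_H(q^j)}$ and $a \coloneqq \ar(q^j) = \ar_H(j)$. By \equationref{eq:zip}, position $p$ of $\val_H(q^j) = \zip(\val_H(\delta(q^j)[0]), \dots)$ is defined, and
\[
  \val_H(q^j)[p] = \val_H(c^{j+1})[\floor{p / a}],
  \qquad c^{j+1} = \delta(q^j)[p \bmod a] = \delta(q^j)[t_j].
\]
The zip formula also gives $\floor{p/a} < \abs{\val_H(c^{j+1})}$, because $z[i]$ is only defined when the requested symbol exists. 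Finally, the nested-floor identity gives $\floor{p / a} = \floor{t / m_H(j+1)}$. If $c^{j+1} \in Q$, this is exactly the claim for $q^{j+1}$.

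\textbf{Claim 2.} Every non-terminal child of $q^j$ lies in $Q_{j+1}$, and $Q_{d(H)}$ is empty. So the descent cannot produce $q^{d(H)}$, and some first $J < d(H)$ has $c^{J+1} \in \Sigma$. Apply the computation above at $j = J$. The value of the terminal $c^{J+1}$ has length $1$, so the index satisfies $\floor{t / m_H(J+1)} < 1$, which gives $t < m_H(J+1)$. It also gives $\val(H)[t] = \val_H(c^{J+1})[0] = c^{J+1}$.

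The main subtlety is the bookkeeping that \equationref{eq:zip} really guarantees existence of the child's symbol at index $\floor{p/a}$, i.e.\ that $i < \abs{z}$ implies $\floor{i/r} < \abs{u_{i \bmod r}}$. This needs the small check that $r\mu + j$ counts exactly the available round-robin positions: index $i = r\ell + b$ with $b < r$ is below $r\mu + j$ iff $\ell < \mu$, or $\ell = \mu$ and $b < j$. In either case $\ell < \abs{u_b}$, by the minimality of $j$ among indices with $\abs{u_j} = \mu$.
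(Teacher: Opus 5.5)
Your proof is correct and matches the paper's argument step for step. Both use the zip-step identity with its index bound, induct on $j$ via $\floor{\floor{t/m_H(j)}/\ar_H(j)} = \floor{t/m_H(j+1)}$, and derive claim 2 from $Q_{d(H)} = \emptyset$ together with the length-one value of the terminal. Your explicit check that $i < \abs{z}$ forces $\floor{i/r} < \abs{u_{i \bmod r}}$ is a useful detail that the paper only asserts.
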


\begin{proof}
By \equationref{eq:zip}, for any $q \in Q$ and $i < \abs{\val_H(q)}$, with $r
\coloneqq \ar(q)$,
\begin{equation}
  \label{eq:zip-step}
  \val_H(q)[i] = \val_H(\delta(q)[i \bmod r])[\floor{i / r}]
\end{equation}
where the position on the right-hand side is smaller than the length of the
corresponding value, since \equationref{eq:zip} defines $z[i]$ for exactly the
$i < \abs{z}$.

Consider a $j$ for which $q^j$ is defined and claim 1 holds, and put $r
\coloneqq \ar_H(j) = \ar(q^j)$. Then
\begin{equation}
  \label{eq:descent-step}
  \begin{aligned}
    \val(H)[t] &= \val_H(q^j)[\floor{t / m_H(j)}] \\
    &= \val_H(\delta(q^j)[\floor{t / m_H(j)} \bmod r])[\floor{\floor{t /
      m_H(j)} / r}] \\
    &= \val_H(c^{j+1})[\floor{t / m_H(j+1)}],
  \end{aligned}
\end{equation}
and the position $\floor{t / m_H(j+1)}$ is smaller than
$\abs{\val_H(c^{j+1})}$. Here the first equality is \equationref{eq:descent}
for $j$; the second is \equationref{eq:zip-step} applied to $q \coloneqq q^j$
and $i \coloneqq \floor{t / m_H(j)}$, which is admissible since $i <
\abs{\val_H(q^j)}$ by claim 1 for $j$, and which also gives the bound on the
position; and the third holds since $\floor{t / m_H(j)} \bmod r = t_j$ by
definition, so that $\delta(q^j)[t_j] = c^{j+1}$, while $\floor{\floor{t /
m_H(j)} / r} = \floor{t / m_H(j+1)}$, as $m_H(j+1) = m_H(j) r$.

Claim 1 now follows by induction on $j$: it holds for $j = 0$, since $q^0 =
q_0$, $m_H(0) = 1$ and $t < \abs{\val(H)} = \abs{\val_H(q_0)}$; and if it holds
for $j$ and $q^{j+1}$ is defined, i.e.\ $c^{j+1} = q^{j+1}$, then
\equationref{eq:descent-step} and the accompanying bound are claim 1 for $j +
1$.

For claim 2: since $q^j \in Q_j$ and $Q_{d(H)} = \emptyset$, $q^{d(H)}$ is
undefined, so the recursion stops at some $J \coloneqq J(t) < d(H)$, with $c^1,
\dots, c^J \in Q$ and $c^{J+1} \in \Sigma$. By claim 1,
\equationref{eq:descent-step} is available for $j = J$, and the accompanying
bound reads $\floor{t / m_H(J+1)} < \abs{\val_H(c^{J+1})} = 1$. Hence $t <
m_H(J+1)$, and \equationref{eq:descent-step} gives $\val(H)[t] =
\val_H(c^{J+1})[0] = c^{J+1}$.
\end{proof}

\begin{proofof}{\propositionref{prop:lzlc-wcm}}
Claim 2 is immediate from \definitionref{def:lzlc}: if $x \sqsubseteq y
\sqsubseteq \val(H)$ then $x \sqsubseteq \val(H)$. For claim 1, the lower bound
holds because $Q_0 = \{q_0\} \neq \emptyset$. For the upper bound, we construct
an $R$-adic layered ZLP $H$ with $x \sqsubseteq \val(H)$ and $w(H) \leq
\max(\abs{x} - 1, 1)$. Let $r \coloneqq \min R$, $N \coloneqq \max(\abs{x},
2)$, and let $d \geq 1$ be minimal with $r^d \geq N$. The vertices of layer $j
< d$ are the elements of $[r^j]$ (a separate copy for each layer), the root is
$0 \in [r^0]$, and for $q \in [r^j]$ and $i < r$:
\begin{itemize}
  \item if $j < d - 1$, then $\delta(q)[i] \coloneqq q + i r^j \in [r^{j+1}]$;
  \item if $j = d - 1$, then $\delta(q)[i]$ is the terminal $x[q + i r^{d-1}]$
    when $q + i r^{d-1} < \abs{x}$, and an arbitrary terminal otherwise.
\end{itemize}
For $j < d - 1$, every $q' \in [r^{j+1}]$ is the child $\delta(q' \bmod
r^j)[\floor{q' / r^j}]$, so the root is the unique source, and $H$ is an
$\{r\}$-adic, hence $R$-adic, layered ZLP of depth $d$ and width $w(H) =
r^{d-1}$. Only the vertices of its last layer have terminal children, so $H$ is
exact by \lemmaref{lem:exact}, and $\abs{\val(H)} = r^d \geq \abs{x}$.

Now fix $t < \abs{x}$ and consider its descent in $H$ (\lemmaref{lem:descent}),
whose digits are $t_j = \floor{t / r^j} \bmod r$. By induction on $j$, $q^j = t
\bmod r^j$ for all $j < d$: indeed, $q^0 = 0$, and $q^{j+1} = q^j + t_j r^j =
(t \bmod r^j) + (\floor{t / r^j} \bmod r) r^j = t \bmod r^{j+1}$. Hence $c^d =
x[q^{d-1} + t_{d-1} r^{d-1}] = x[t \bmod r^d] = x[t]$, as $t < \abs{x} \leq
r^d$, and claim 2 of \lemmaref{lem:descent} gives $\val(H)[t] = x[t]$. Thus $x
\sqsubseteq \val(H)$, and
\[
  \LZLC_R(x) \leq w(H) = r^{d-1} \leq N - 1 = \max(\abs{x} - 1, 1),
\]
where $r^{d-1} \leq N - 1$ holds since $r^{d-1} = 1 \leq N - 1$ if $d = 1$, and
$r^{d-1} < N$ by the minimality of $d$ if $d \geq 2$.

Finally, claims 1 and 2 give \equationref{eq_word_com_bound} (with $p \coloneqq
x + 1$) and \equationref{eq:mono} (with $q \coloneqq 1$), so $\LZLC_R$ is a
word complexity measure.
\end{proofof}

\begin{proofof}{\lemmaref{lem:reg}}
By claim 2 of \lemmaref{lem:descent}, every $t < \abs{\val(H)}$ satisfies $t <
m_H(d(H))$; hence $m_H(d(H)) \geq \abs{\val(H)} \geq n$, and
\[
  d^* \coloneqq \min \{j \mid m_H(j) \geq n\}
\]
is well-defined, with $1 \leq d^* \leq d(H)$ (as $m_H(0) = 1 < n$). By the
minimality of $d^*$, $m_H(d^* - 1) < n$; since every arity is at least $2$,
$m_H(d^* - 1) \geq 2^{d^* - 1}$, so $d^* - 1 < \log n$ and $d^* \leq \ceil{\log
n}$. In the uniform-arity case, $k^{d^* - 1} < n \leq k^{d^*}$, i.e.\ $d^* =
\ceil{\log_k n}$.

We now construct $H'$, of depth $d^*$, by cutting $H$ below layer $d^* - 1$:
each child of that layer is replaced by the first symbol of its value, and each
terminal child of an earlier layer is replaced by a padding vertex of constant
value. For $c \in \overline{Q}$, write $\mathrm{fs}(c) \coloneqq \val_H(c)[0]$
(values are non-empty: terminals have length $1$, and by \equationref{eq:zip} a
zip of non-empty words has length at least $2$). The vertices of $H'$ are
$\bigcup_{j < d^*} Q_j$ together with padding vertices $\gamma_a^{(j)}$ for $a
\in \Sigma$ and $1 \leq j < d^*$, where the vertices of $Q_j$ and the
$\gamma_a^{(j)}$ are placed in layer $j$; unreachable vertices are discarded.
Define the productions as follows:
\begin{itemize}
  \item if $q \in Q_j$ and $j < d^* - 1$, retain each non-terminal child
    $\delta(q)[i]$, and replace each terminal child $\delta(q)[i] = a \in
    \Sigma$ by $\gamma_a^{(j+1)}$;
  \item if $q \in Q_{d^* - 1}$, put $\delta'(q)[i] \coloneqq
    \mathrm{fs}(\delta(q)[i])$;
  \item for $a \in \Sigma$, let $\delta'(\gamma_a^{(j)})$ consist of $\ar_H(j)$
    copies of $\gamma_a^{(j+1)}$ when $j < d^* - 1$, and let
    $\delta'(\gamma_a^{(d^* - 1)})$ consist of $\ar_H(d^* - 1)$ copies of $a$.
\end{itemize}
Every child of a vertex of layer $j$ lies in layer $j + 1$ or is terminal, and
all vertices of layer $j$ have arity $\ar_H(j)$. So $H'$ is a layered ZLP with
$d(H') = d^*$ and $\ar_{H'}(j) = \ar_H(j)$ for $j < d^*$ --- so it is $R$-adic,
and all its arities equal $k$ when those of $H$ do --- and its layers satisfy
$\abs{Q'_j} \leq \abs{Q_j} + \sigma \leq w(H) + \sigma$. Every child outside its
last layer is non-terminal, so it is exact by \lemmaref{lem:exact}, and
$m_{H'}(d(H')) = m_H(d^*) \geq n$. This proves claims 2--4 and the
uniform-arity depth formula.

It remains to verify claim 1, i.e.\ that $\val(H')[t] = x[t]$ for every $t <
n$. Fix such $t$. Since $t < n \leq \abs{\val(H)}$ and $t < n \leq
\abs{\val(H')}$, \lemmaref{lem:descent} applies to the descents of $t$ in both
$H$ and $H'$, and these use the same digits $t_0, \dots, t_{d^* - 1}$, the two
arity profiles agreeing through layer $d^* - 1$. Let $J \coloneqq J(t)$ be the
layer at which the descent in $H$ stops and $a \coloneqq c^{J+1}$ its terminal,
so that $\val(H)[t] = a$ by claim 2 of \lemmaref{lem:descent}. Both descents
start at $q_0$, and as long as the descent in $H$ takes a non-terminal child at
a layer below $d^* - 1$, the descent in $H'$ takes the same child.
\begin{itemize}
  \item If $J < d^* - 1$, then the descent in $H'$ reaches $q^J$ and there
    takes the child $\delta'(q^J)[t_J] = \gamma_a^{(J+1)}$, whose descendants
    along any digits are $\gamma_a^{(J+2)}, \dots, \gamma_a^{(d^* - 1)}$ and
    then the terminal $a$. So $\val(H')[t] = a$ by claim 2 of
    \lemmaref{lem:descent}.
  \item If $J \geq d^* - 1$, then both descents reach the same $q \coloneqq
    q^{d^* - 1} \in Q_{d^* - 1}$. With $c \coloneqq \delta(q)[t_{d^* - 1}]$,
    the descent in $H'$ stops at the terminal $\delta'(q)[t_{d^* - 1}] =
    \mathrm{fs}(c)$, so $\val(H')[t] = \mathrm{fs}(c)$. If $c$ is terminal,
    then $J = d^* - 1$ and $\val(H)[t] = a = c = \mathrm{fs}(c)$. Otherwise $c
    = q^{d^*}$ is defined, and \equationref{eq:descent} for $j = d^*$ gives
    $\val(H)[t] = \val_H(c)[\floor{t / m_H(d^*)}] = \val_H(c)[0] =
    \mathrm{fs}(c)$, as $t < n \leq m_H(d^*)$.
\end{itemize}
In either case $\val(H')[t] = \val(H)[t] = x[t]$.
\end{proofof}

\begin{proofof}{\lemmaref{lem:radix-refinement}}
Put $F_j \coloneqq \prod_{i < j} f_i$, so $F_0 = 1$ and $F_l = k$. The program
$H$ simulates $M$, a block of $l$ consecutive layers reading one base-$k$ digit
of the position, one factor $f_j$ at a time. Its vertices at layer $h l + j$,
where $h < m$ and $j < l$, are the pairs $(q, u) \in Q_M \times [F_j]$ (a
separate copy for each layer), $q$ being meant as the state of $M$ after
reading $h$ base-$k$ digits and $u$ as the portion of the next digit assembled
from its first $j$ sub-digits; the root is $q_0 \coloneqq (q_M, 0)$ at layer
$0$, and unreachable vertices are discarded. For a vertex $(q, u)$ at layer $h
l + j$ and $v \in [f_j]$, put $d \coloneqq u + v F_j \in [F_{j+1}]$ and define
\[
  \delta((q, u))[v] \coloneqq \begin{cases}
    (q, d) \text{ at layer } h l + j + 1 & \text{if } j < l - 1, \\
    (\delta_M(q, d), 0) \text{ at layer } (h + 1) l & \text{if } j = l - 1
      \text{ and } h < m - 1, \\
    \tau_M(\delta_M(q, d)) \in \Sigma & \text{if } j = l - 1 \text{ and } h = m
      - 1.
  \end{cases}
\]
Thus $H$ is a layered ZLP of depth $l m$ with $\ar_H(h l + j) = f_j$, and no
terminal occurs before its last layer, so $H$ is exact by
\lemmaref{lem:exact}; in particular, $\abs{\val(H)} = m_H(l m) = k^m \geq n$.
At layer $h l + j$ there are at most $\abs{Q_M} F_j$ vertices, which gives the
width bound.

It remains to show that $\val(H)[t] = x[t]$ for every $t < n$. Let $D_h
\coloneqq \floor{t / k^h} \bmod k$ be the base-$k$ digits of $t$, so that
$(\langle t \rangle_k^m)^{\Rt} = D_0 D_1 \dots D_{m-1}$, and let $t_0, \dots,
t_{l m - 1}$ be its digits in the mixed-radix system of $H$
(\lemmaref{lem:descent}). These are related as follows: writing each $D_h$ in
the mixed-radix system with radices $f_0, \dots, f_{l-1}$, i.e.\ $D_h = \sum_{j
< l} e_{h, j} F_j$ with $e_{h, j} < f_j$, and substituting into $t = \sum_{h <
m} D_h k^h$ gives $t = \sum_{h < m} \sum_{j < l} e_{h, j} k^h F_j$; since $k^h
F_j = m_H(h l + j)$, this is a representation of $t$ in the mixed-radix system
of $H$, which is unique, so $e_{h, j} = t_{h l + j}$. That is, the digits $t_{h
l}, \dots, t_{h l + l - 1}$ read by the $h$-th block of layers are the
sub-digits of $D_h$:
\[
  D_h = \sum_{j < l} t_{h l + j} F_j.
\]
By induction along the descent of $t$, the vertex reached at layer $h l + j$ is
\[
  q^{h l + j} = \left(\delta_M^*(q_M, D_0 \dots D_{h-1}), \sum_{i < j} t_{h l +
  i} F_i\right).
\]
Indeed, the second coordinate accumulates the sub-digits of $D_h$ until, at $j
= l - 1$, the assembled digit is $d = D_h$ and the first coordinate advances to
$\delta_M(\delta_M^*(q_M, D_0 \dots D_{h-1}), D_h) = \delta_M^*(q_M, D_0 \dots
D_h)$. At the last layer, $h = m - 1$ and $j = l - 1$, the descent therefore
stops at the terminal $\tau_M(\delta_M^*(q_M, D_0 \dots D_{m-1})) = M((\langle
t \rangle_k^m)^{\Rt}) = x[t]$, and claim 2 of \lemmaref{lem:descent} gives
$\val(H)[t] = x[t]$.
\end{proofof}

\begin{proofof}{\propositionref{prop:rtl}}
For claim 1, let $M$ realize $\AC_k^{\Rt}(x)$, with $m$ witnessing the
realization; then $m \geq 1$, as $k^m \geq n \geq 2$. Apply
\lemmaref{lem:radix-refinement} with $l = 1$ and $f_0 = k$, so that the maximum
in its width bound is the empty product $1$. The resulting $\{k\}$-adic, hence
$R$-adic, layered ZLP $H$ has $x \sqsubseteq \val(H)$ and $w(H) \leq
\abs{Q_M}$, proving $\LZLC_R(x) \leq \AC_k^{\Rt}(x)$.

For claim 2, let $H$ realize $s \coloneqq \LZLC_{\{k\}}(x)$ and let $H'$ be
obtained from \lemmaref{lem:reg}, so that $d(H') = m \coloneqq \ceil{\log_k
n}$, all arities of $H'$ equal $k$, $\abs{Q'} \leq m (s + \sigma)$, and $H'$ is
exact with $\abs{\val(H')} = k^m \geq n$. Define a DFAO $M$ over $[k]$ with
state set $Q' \cup \{f_a \mid a \in \Sigma\}$, initial state $q_0$, transitions
\[
  \delta_M(q, i) \coloneqq \begin{cases}
    \delta'(q)[i] & \text{if } q \in Q' \text{ and } \delta'(q)[i] \in Q', \\
    f_{\delta'(q)[i]} & \text{if } q \in Q' \text{ and } \delta'(q)[i] \in
      \Sigma, \\
    f_a & \text{if } q = f_a,
  \end{cases}
\]
and output $\tau_M(f_a) \coloneqq a$ (and arbitrary on $Q'$). Fix $t < n$. The
$j$-th letter of $(\langle t \rangle_k^m)^{\Rt}$ is $\floor{t / k^j} \bmod k =
t_j$, the $j$-th digit of the descent of $t$ in $H'$ (\lemmaref{lem:descent},
which applies as $t < \abs{\val(H')}$). Hence the run of $M$ on this word
follows the descent: its state after $j$ letters is $q^j$ as long as $q^j$ is
defined, and after $J(t) + 1 \leq m$ letters it is $f_{c^{J(t)+1}} =
f_{\val(H')[t]} = f_{x[t]}$ (claim 2 of \lemmaref{lem:descent}), where it stays
for the remaining letters. So $M((\langle t \rangle_k^m)^{\Rt}) = x[t]$ for
every $t < n$, and $M$ witnesses $\AC_k^{\Rt}(x) \leq \abs{Q'} + \sigma \leq m
(s + \sigma) + \sigma$.
\end{proofof}

\begin{proofof}{\corollaryref{cor:multibase}}
If $\abs{x} \leq 1$, then $\LZLC_R(x) \leq 1 \leq (k / 2) \cdot \AC_k^{\Rt}(x)$
by claim 1 of \propositionref{prop:lzlc-wcm} and $\AC_k^{\Rt}(x) \geq 1$.
Otherwise, let $M$ realize $\AC_k^{\Rt}(x)$, with $m$ witnessing the
realization; then $m \geq 1$, as $k^m \geq \abs{x} \geq 2$. Since $k \in R^!$
and $k \geq 2$, we may write $k = f_0 f_1 \dots f_{l-1}$ with $l \geq 1$, all
$f_i \in R$ and, placing the largest factor last, $f_{l-1} = \max_{i<l} f_i$.
Applying \lemmaref{lem:radix-refinement} gives a layered ZLP $H$ with $x
\sqsubseteq \val(H)$, which is $R$-adic since $\ar_H(j) = f_{j \bmod l} \in R$,
and whose width is at most
\[
  \abs{Q_M} \max_{j < l} \prod_{i < j} f_i = \abs{Q_M} \prod_{i < l - 1} f_i =
  \abs{Q_M} k / f_{l-1} \leq \abs{Q_M} k / 2,
\]
the maximum being attained at $j = l - 1$. Hence $\LZLC_R(x) \leq (k / 2) \cdot
\AC_k^{\Rt}(x)$.
\end{proofof}

\section{Algorithm Description}
\label{app:p3-desc}

In this appendix, we give the complete pseudocode of the $P^3$ predictor, with
a line-by-line discussion, completing the overview of \sectionref{sec:p3},
whose notation and terminology --- scales, edges, states, subsequences and
rows, the horizon --- we use throughout. \appendixref{app:p3-correctness} and
\appendixref{app:p3-proof} then prove \theoremref{thm:p3}.

\subsection{Subroutines}
\label{app:p3-subroutines}

The helper subroutines Eval and Diverge, shared by the prediction and update
routines and described in \sectionref{sec:p3-model}, are
\algorithmref{alg:eval} and \algorithmref{alg:diverge}. In
\algorithmref{alg:eval}, the path taken through the fragment is selected
arbitrarily in line 3. In \algorithmref{alg:diverge}, line 2 declares the
subsequences of distinct states of a sink to diverge already at row $0$, and
the recurrence of line 6 is evaluated with the convention $r \cdot \infty + i
\coloneqq \infty$.

\begin{algorithm}[htbp]
\floatconts{alg:eval}
{\caption{Eval$(t)$}}
{\begin{alglines}
  \item $m \gets 1$;\hspace{0.5em} $q \gets 0$
  \item \textbf{while} ${\{r \in R \mid (m, r) \in \mathcal{E}\} \neq
    \emptyset}$
  \item \algind{1} select $r \in R$ s.t.\ $(m, r) \in \mathcal{E}$
  \item \algind{1} $q \gets \delta_m^r(q, \floor{t / m} \bmod r)$
  \item \algind{1} $m \gets m r$
  \item \textbf{return} $\tau_m(q)$
\end{alglines}}
\end{algorithm}

\begin{algorithm}[htbp]
\floatconts{alg:diverge}
{\caption{Diverge$()$}}
{\begin{alglines}
  \item $A \gets \{m \in S \mid \forall r \in R : (m, r) \notin \mathcal{E}\}$
  \item \textbf{for} $m \in A$ and $q, q' \in Q_m$: \textbf{if} $q = q'$
    \textbf{then} $d_m(q, q') \gets \infty$ \textbf{else} $d_m(q, q') \gets 0$
  \item \textbf{while} $A \neq S$
  \item \algind{1} select $m \in S \setminus A$ s.t.\ $\forall r \in R : (m, r)
    \in \mathcal{E} \Rightarrow m r \in A$
  \item \algind{1} \textbf{for} $q, q' \in Q_m$
  \item \algind{2} $d_m(q, q') \gets \min_{r \in R : (m, r) \in \mathcal{E}} \
    \min_{i < r} (r \cdot d_{m r}(\delta_m^r(q, i), \delta_m^r(q', i)) + i)$
  \item \algind{1} $A \gets A \cup \{m\}$
  \item \textbf{return} $\{d_m\}_{m \in S}$
\end{alglines}}
\end{algorithm}

\subsection{Initialization}
\label{app:p3-init}

The initial state $s_{\mathrm{init}}$ of the protocol
(\sectionref{sec:protocol}) is the \emph{empty state}: no scales and no edges
($S = \emptyset$, $\mathcal{E} = \emptyset$), with $c_{\max} = 1$ and the time
index immaterial; the prediction function outputs an arbitrary fixed symbol
there, so round $0$ costs at most one mistake. The routine Initialize
(\algorithmref{alg:p3-init}) is the branch of Update that handles this state
(line 1 of \algorithmref{alg:p3-update}): invoked with the observed symbol $a =
x[0]$, it creates the minimal nonempty state. The fragment $S = \{1\}$ consists
of the single scale $1$, whose sole state $0 \in Q_1$ --- the minimal (indeed,
the only) residue modulo 1 --- represents the entire sequence, and the output
function records the symbol $x[0]$ at its row 0. No transition tables exist
yet, as the fragment has no edges; the fragment first grows at time $n = 1$,
when $1 \in S$ triggers preprocessing (\sectionref{app:p3-preprocessing}) and
the scales $R$ are adjoined. In particular, immediately after initialization
every run of Eval$(0)$ returns $\tau_1(0) = x[0]$ at once, and the invariants of
\lemmaref{lem:p3-inv} hold vacuously.

\begin{algorithm}[htbp]
\floatconts{alg:p3-init}
{\caption{$P^3$ Initialize$(a)$}}
{\begin{alglines}
  \item $n \gets 0$;\hspace{0.5em} $S \gets \{1\}$;\hspace{0.5em} $\mathcal{E}
    \gets \emptyset$
  \item $Q_1 \gets \{0\}$;\hspace{0.5em} $\tau_1(0) \gets a$
\end{alglines}}
\end{algorithm}

The capacity parameter is deliberately left untouched by
\algorithmref{alg:p3-init}: it starts at $c_{\max} = 1$ in $s_{\mathrm{init}}$
and is modified only by the doubling trick of post-processing
(\sectionref{app:p3-post}). When post-processing empties the fragment, the
predictor restarts by invoking Initialize with the symbol observed in the
current round; resetting the time index in line 1 re-indexes positions relative
to the current one --- the shared increment of line 19 of
\algorithmref{alg:p3-update} then leaves $n = 1$, exactly as after round $0$
--- so that the remaining suffix is processed as a fresh sequence under the
doubled capacity.

\subsection{Preprocessing}
\label{app:p3-preprocessing}

The routine Preprocess (\algorithmref{alg:p3-preprocess}) is invoked at the
beginning of every update, before the main cascade
(\sectionref{app:p3-update}), and grows the fragment as described in
\sectionref{sec:p3-update}, creating the missing edges out of the scale reached
by the time index. Note that the guard of line 2 is on the edge rather than on
the scale: for $\abs{R} \geq 2$, the scale $n r$ may predate the edge (with $R
= \{2, 3\}$, for instance, the scale $6$ is adjoined at time $2$, as $2 \cdot
3$, while the edge $3 \to 6$ is only created at time $3$), in which case only
the transitions are missing and line 3 is skipped.

\begin{algorithm}[htbp]
\floatconts{alg:p3-preprocess}
{\caption{$P^3$ Preprocess$()$}}
{\begin{alglines}
  \item \textbf{if} $n \in S$
  \item \algind{1} \textbf{for} $r \in R$ s.t.\ $(n, r) \notin \mathcal{E}$
  \item \algind{2} \textbf{if} $n r \notin S$ \textbf{then} $S \gets S \cup \{n
    r\}$;\hspace{0.5em} $Q_{n r} \gets Q_n$;\hspace{0.5em} $\tau_{n r} \gets
    \tau_n$
  \item \algind{2} $\mathcal{E} \gets \mathcal{E} \cup \{(n, r)\}$
  \item \algind{2} \textbf{for} $q \in Q_n$
  \item \algind{3} $\delta_n^r(q, 0) \gets q$
  \item \algind{3} \textbf{for} $1 \leq i < r$: \hspace{0.4em} $\delta_n^r(q,
    i) \gets 0$
\end{alglines}}
\end{algorithm}

The transitions follow the semantics of \sectionref{sec:p3-model}. Refining the
residue $q$ modulo $n$ by the digit $i = 0$ yields the residue $q + 0 \cdot n =
q$ modulo $n r$, so line 6 routes each state to its own residue --- as the
routing invariant (claim 2 of \lemmaref{lem:p3-inv}) mandates whenever $q \in
Q_{n r}$. The subsequences of $q + i n$ modulo $n r$ with $i \geq 1$, by
contrast, start at positions $q + i n \geq n$, hence are as yet entirely
unobserved; every state of $Q_{n r}$ is consistent with this empty evidence,
and line 7 tentatively routes each such residue to the minimal state, $0 \in
Q_{n r}$. (The residue $0$ belongs to $Q_m$ for every $m \in S$: it is placed
in $Q_1$ by \algorithmref{alg:p3-init}, copied by line 3 on scale creation,
never removed by the update, and discarded by post-processing only together
with its scale.)

When the scale $n r$ predates the edge, it was created from a different
predecessor and has since evolved independently, so it is not immediate that
line 6 is even well-defined --- that $q \in Q_{n r}$ for every $q \in Q_n$.
This is where the horizon invariant (claim 4 of \lemmaref{lem:p3-inv}) comes
in. When Preprocess runs, the observed prefix is $x[:n]$, so both $n$ and $n r$
are beyond the horizon; every observed position $t < n$ has $\floor{t / n} = 0$,
so the subsequences at these scales have been observed at row $0$ only ---
words of length at most $1$, distinguishable only by their single symbols.
Accordingly, by claim 4, the states of every such scale are exactly the first
occurrences of the distinct symbols of $x[:n]$: in particular $Q_{n r} = Q_n$
and $\tau_{n r} = \tau_n$ hold already, the assignments of line 3 would have
been no-ops, and the same transitions serve both cases of the guard.

The timing is as lazy as possible. Creating the edge earlier would gain
nothing: all evidence available at scale $n r$ before time $n$ sits in row $0$
of the $i = 0$ refinements, which is exactly what $\tau_{n r}$ records.
Position $n$ itself, whose digit at scale $n$ is $\floor{n / n} \bmod r = 1$,
is the first to leave the $i = 0$ refinements, and the tentative transitions of
line 7 are put to the test by the main update of the very same round. Data
consistency (claim 1 of \lemmaref{lem:p3-inv}) is undisturbed: a run of
Eval$(t)$ with $t < n$ that crosses a new edge does so through $\delta_n^r(q,
0) = q$ of line 6, halts at $n r$ --- a scale beyond the horizon has no
outgoing edges, as those would only be created at time $n r$ --- and reads
$\tau_{n r}(q) = \tau_n(q)$, the same symbol it would have read at scale $n$
before the edge existed. Finally, for $R = \{k\}$ the fragment is a chain of
powers of $k$ whose interior edges were created at their own scales' rounds, so
the guard of line 2 fires exactly when $n$ reaches the maximal scale;
\algorithmref{alg:p3-preprocess} then reduces to the preprocessing step of TP,
extending the chain by one power of $k$.

\subsection{Predict}
\label{app:p3-predict}

The prediction routine of $P^3$ is \algorithmref{alg:p3-predict}, invoked to
predict $x[n]$ after the state has been updated on $x[:n]$; it computes the
vote of \sectionref{sec:p3-predict}, with the edge weights $r^{-\eta}$, where
$\eta \geq 0$ is the unique solution of $\sum_{r \in R} r^{-\eta} = 1$.

Solving for $\eta$ is a one-dimensional root-finding problem. The function
$f(t) \coloneqq \sum_{r \in R} r^{-t}$ is continuous, strictly decreasing and
convex on $[0, \infty)$, with $f(0) = \abs{R}$ and $f(t) \to 0$ as $t \to
\infty$, so the root $\eta$ exists and is unique. Every arity being at least
$2$, $f(t) \leq \abs{R} 2^{-t}$, whence $f(\ceil{\log \abs{R}}) \leq 1 \leq
f(0)$: the interval $[0, \ceil{\log \abs{R}}]$ brackets the root, with integer
endpoints. Throughout the rest of this subsection, put $r_{\max} \coloneqq \max
R$.

The finite-precision implementation at the end of this subsection will need a
rational approximation $\tilde{\eta}$ of $\eta$, accurate to polynomially many
bits. The obstacle to producing one by textbook bisection is the sign test:
whether $f(t) > 1$ at a rational $t$ is the comparison with $1$ of an algebraic
number that can lie arbitrarily close to $1$ --- or equal it: for $R = \{2, 3,
6\}$, $\eta = 1$ --- and no efficient such test is apparent. The subroutine
Bisect$(\epsilon)$ (\algorithmref{alg:p3-bisect}), where $\epsilon$ is the
requested accuracy, therefore bisects on approximate values of $f$: in place of
an exact straddle of the root, its bracket maintains the relaxed straddle $f(l)
\geq 1 - \epsilon / 2$, $f(h) \leq 1 + \epsilon / 2$, which an evaluation of
$f$ to accuracy $\epsilon / 2$ suffices to preserve, and the loop runs until
the bracket is too short for $f$ to stray from $1$ anywhere on it.

\begin{algorithm}[htbp]
\floatconts{alg:p3-bisect}
{\caption{Bisect$(\epsilon)$}}
{\begin{alglines}
  \item $l \gets 0$;\hspace{0.5em} $h \gets \ceil{\log \abs{R}}$
  \item \textbf{while} $h - l > \epsilon / (2 \abs{R} \ceil{\log r_{\max}})$
  \item \algind{1} $t \gets (l + h) / 2$
  \item \algind{1} compute $\tilde{f} \in \mathbb{Q}$ with $\abs{\tilde{f} -
    f(t)} \leq \epsilon / 2$
  \item \algind{1} \textbf{if} $\tilde{f} \geq 1$ \textbf{then} $l \gets t$
    \textbf{else} $h \gets t$
  \item \textbf{return} $l$
\end{alglines}}
\end{algorithm}

Every comparison that Bisect performs is between rational numbers: the
endpoints, integers at line 1, remain dyadic rationals under the halving of
line 3, the threshold of line 2 is rational along with $\epsilon$, and the
estimate $\tilde{f}$ of line 4 is rational by specification. The irrational $f$
enters only through the approximate evaluation of line 4, a routine
computation, $f(t)$ being a finite sum of exponentials taken at a dyadic
rational. The validity of the procedure --- for every rational $\epsilon \in
(0, 1 / 2]$, it halts within polynomially many iterations and returns a
rational $\tilde{\eta} \geq 0$ with $\abs{\tilde{\eta} - \eta} \leq 3 \epsilon$
--- is proved as \lemmaref{lem:p3-bisect} (\sectionref{app:p3-exponent}).

\begin{algorithm}[htbp]
\floatconts{alg:p3-predict}
{\caption{$P^3$ Predict$()$}}
{\begin{alglines}
  \item $\{d_m\} \gets$ Diverge$()$
  \item $\mathcal{A} \gets \{(m, q) \mid m \in S, q \in Q_m, q \equiv n
    \pmod{m}\}$
  \item $\mathcal{L} \gets \{(m, q, r) \mid (m, q) \in \mathcal{A}, (m, r) \in
    \mathcal{E}, \delta_m^r(q, \floor{n / m} \bmod r) \not\equiv n \pmod{m r}\}$
  \item \textbf{for} $(m, q, r) \in \mathcal{L}$
  \item \algind{1} $\ell \gets \floor{n / m r}$;\hspace{0.5em} $q' \gets
    \delta_m^r(q, \floor{n / m} \bmod r)$
  \item \algind{1} $D \gets \{\tilde{q} \in Q_{m r} \mid d_{m r}(q',
    \tilde{q}) \geq \ell\}$
  \item \algind{1} \textbf{for} $\tilde{q} \in D$: \hspace{0.4em}
    $\nu(\tilde{q}) \gets \abs{\{(\hat{q}, i) \in Q_m \times [r] \mid
    \delta_m^r(\hat{q}, i) = \tilde{q},\ i m + \hat{q} < n \bmod m r\}}$
  \item \algind{1} $q^* \gets \argmax_{\tilde{q} \in D} \nu(\tilde{q})$
  \item \algind{1} $\phi(m, q, r) \gets$ Eval$(m r \ell + q^*)$
  \item \textbf{for} $m \in S$ in decreasing order
  \item \algind{1} \textbf{if} $\{r \in R \mid (m, r) \in \mathcal{E}\} =
    \emptyset$ \textbf{then} $\theta(m) \gets 1$ \textbf{else} $\theta(m) \gets
    \sum_{r \in R : (m, r) \in \mathcal{E}} r^{-\eta} \cdot \theta(m r)$
  \item \algind{1} \textbf{for} $(m, q) \in \mathcal{A}$ and $a \in \Sigma$
  \item \algind{2} $\begin{aligned}[t]
      \pi(m, q)[a] \gets{} & \sum_{(m, q, r) \in \mathcal{L} : \phi(m, q, r) =
        a} \, r^{-\eta} \cdot \theta(m r) \\
      & + \sum_{r \in R : (m, r) \in \mathcal{E},\, (m, q, r) \notin
        \mathcal{L}} \, r^{-\eta} \cdot \pi(m r, \delta_m^r(q, \floor{n / m}
        \bmod r))[a]
    \end{aligned}$
  \item \textbf{return} $\argmax_{a \in \Sigma} \pi(1, 0)[a]$
\end{alglines}}
\end{algorithm}

Let us unpack \algorithmref{alg:p3-predict}. Line 2 collects the \emph{active
pairs} and line 3 the \emph{leaf edges} of \sectionref{sec:p3-predict}: $(m, q)
\in \mathcal{A}$ when the residue $n \bmod m$ is itself a state of scale $m$,
in which case $q = n \bmod m$ (so each scale contributes at most one active
pair, and scale $1$ always contributes $(1, 0)$), and $(m, q, r) \in
\mathcal{L}$ when the routing inequality $\delta_m^r(q, i) \leq q + i m = n
\bmod m r$ (claim 2 of \lemmaref{lem:p3-inv}) is strict at the digit $i =
\floor{n / m} \bmod r$. Whether installed as a preprocessing default
(\sectionref{app:p3-preprocessing}) or by a rerouting of some earlier update
(\sectionref{app:p3-update}), the identification at a leaf edge is consistent
with all past observations but not determined by them. At these edges, lines
5--9 forecast the current symbol by a plurality vote among states, exactly as
in TP. The candidates $D$ are the states whose subsequences have not been
observed to diverge from that of the routed target $q'$ before the current row
$\ell$ (claim 3 of \lemmaref{lem:p3-inv}); in particular $q' \in D$, so $D \neq
\emptyset$. The vote $\nu(\tilde{q})$ counts the refined residues that the
model routes to $\tilde{q}$ and that have already been traversed within the
current row --- the residues $i m + \hat{q}$ modulo $m r$ with $m r \ell + (i m
+ \hat{q}) < n$. The forecast $\phi(m, q, r)$ is read off the winner $q^*$ via
Eval.

Lines 10--13 aggregate these forecasts over the whole fragment. Recall from
\sectionref{sec:p3-predict} that an infinite path from the root is
\emph{unblocked} if the fragment contains every edge that it traverses before
reaching a sink, and \emph{blocked} otherwise; since every scale below the
horizon retains an outgoing edge (\sectionref{app:p3-post}), a blocked path is
one that attempts an edge deleted by post-processing. Line 11 computes, by the
same traversal as Diverge, the \emph{unblocked mass} $\theta(m)$ of every scale
$m \in S$: the probability that the random walk, started at $m$, reaches a sink
of the fragment through edges of $\mathcal{E}$. At a sink nothing has been
blocked --- its outgoing edges are merely not created yet, and round $m$ will
create all of them at once (\sectionref{app:p3-preprocessing}) --- so
$\theta(m) = 1$; elsewhere $\theta(m)$ accumulates, over the steps that the
fragment still offers, the step weight times the unblocked mass of the target,
a deleted step contributing nothing. Like the divergence tables, $\theta$ is
recomputed at every call and is not part of the persistent state. The recursion
of line 13 is well-founded: at an edge with $(m, q, r) \notin \mathcal{L}$, the
transition target is $n \bmod m r$ itself, so $(m r, n \bmod m r)$ is an active
pair, whose vector $\pi$ was computed at an earlier iteration of line 10.
Unwinding, every unblocked path casts the forecast of the \emph{first} leaf
edge it encounters, and the votes are counted by mass: a leaf edge $(m, r)$
receives the weight $m^{-\eta} \cdot r^{-\eta} \theta(m r)$ through each path
of the fragment from the root to $m$ --- all of which consist of active pairs
and non-leaf edges, by heredity (claim 2 of \lemmaref{lem:p3-inv}) --- and
altogether the mass of the unblocked paths traversing it. Blocked paths are
thereby excluded from the vote even when the deleted edge lies beyond the leaf
edge at which they would have voted; this is the point of the factor $\theta(m
r)$, without which a leaf edge would carry the mass $(m r)^{-\eta}$ of
\emph{all} paths through it. The total mass $\theta(1)$ is accounted for in
$\pi(1, 0)$ except for the paths that encounter no leaf edge, which abstain.
This happens in exactly one way: when $n \in S$, the residue of $n$ modulo each
of its divisors is $0$, a state at every scale, so the descent stays exact ---
the digits of $n$ at these scales are $0$, and $\delta(0, 0) = 0$ by routing
--- all the way to the active pair $(n, 0)$, where it finds no outgoing edge,
the scale $n$ being still a sink when $x[n]$ is predicted (its edges are
created by the update of the same round). The rationale for the plurality rule
--- a coupling to the branch prediction game of \citep{Kosoy2026b} --- is
deferred to \sectionref{app:p3-stat}, and so is the reason for confining the
vote to the unblocked paths: the mistake analysis treats them as the experts of
a budgeted plurality vote (\sectionref{app:budgeted-plurality}), each able to
err only boundedly often, and it needs the set of voters to shrink
monotonically between restarts. It does, because an edge out of $m$ is created
only in the round $n = m$ (\sectionref{app:p3-preprocessing}) --- a deleted
edge therefore stays absent until the predictor restarts --- while the edges
that Preprocess creates lead into sinks, extending the unblocked paths without
blocking any. Finally, for $R = \{k\}$ the fragment is a single chain, $\eta =
0$ and $\theta \equiv 1$: $\pi(1, 0)$ is then the indicator vector of the
forecast at the chain's first leaf edge --- the breakpoint at which TP halts
its descent --- and \algorithmref{alg:p3-predict} essentially reduces to the
prediction routine of TP. The one difference arises at the rounds $n = k^j$,
where the chain ends at the pair $(n, 0)$ and the vote is empty:
\algorithmref{alg:p3-predict} returns an arbitrary symbol, while TP, whose
descent stops one level short of the maximal scale, forecasts there as at any
other breakpoint --- at most one mistake per power of $k$ either way.

\textbf{Finite precision.} For $\abs{R} \geq 2$, the exponent $\eta$ and the
weights $r^{-\eta}$ admit, in general, no exact finite representation, so we
read lines 10--13 of \algorithmref{alg:p3-predict} as defining ideal real
quantities $\theta$ and $\pi$. The implementation runs lines 1--9 unchanged ---
every quantity there is an integer or a symbol, computed exactly --- and
replaces lines 10--14 by \algorithmref{alg:p3-predict-fp}, which computes
fixed-point approximations $\hat{\theta}$ and $\hat{\pi}$ in place of $\theta$
and $\pi$ and returns a symbol maximizing $\hat{\pi}(1, 0)$, ties broken
arbitrarily. There, $n$ is the time index of the round, the \emph{unit} of the
arithmetic is the $u = 2^{-F}$ of line 1, and $\floor{x}_u \coloneqq u \floor{x
/ u}$ is $x$ rounded down to a multiple of $u$ --- so every product of lines 5
and 7 carries a rounding error smaller than $u$, while the sums, of multiples
of $u$, are exact.

\begin{algorithm}[htbp]
\floatconts{alg:p3-predict-fp}
{\caption{$P^3$ Predict$()$, lines 10--14 as implemented}}
{\begin{alglines}
  \item $F \gets (\ceil{\log r_{\max}} + 1) \cdot \ceil{\log(2 r_{\max} n)} +
    2$;\hspace{0.5em} $u \gets 2^{-F}$;\hspace{0.5em} $\epsilon \gets u / (16
    \ceil{\log r_{\max}})$
  \item $\hat{\eta} \gets$ Bisect$(\epsilon) + 3 \epsilon$
  \item \textbf{for} $r \in R$: \hspace{0.4em} compute $\rho_r \in [0,
    r^{-\hat{\eta}}] \cap \mathbb{Q}$ with $r^{-\hat{\eta}} - \rho_r \leq
    2^{-F - 1}$
  \item \textbf{for} $m \in S$ in decreasing order
  \item \algind{1} \textbf{if} $\{r \in R \mid (m, r) \in \mathcal{E}\} =
    \emptyset$ \textbf{then} $\hat{\theta}(m) \gets 1$ \textbf{else}
    $\hat{\theta}(m) \gets \sum_{r \in R : (m, r) \in \mathcal{E}}
    \floor{\rho_r \cdot \hat{\theta}(m r)}_u$
  \item \algind{1} \textbf{for} $(m, q) \in \mathcal{A}$ and $a \in \Sigma$
  \item \algind{2} $\begin{aligned}[t]
      \hat{\pi}(m, q)[a] \gets{} & \sum_{(m, q, r) \in \mathcal{L} : \phi(m, q,
        r) = a} \, \floor{\rho_r \cdot \hat{\theta}(m r)}_u \\
      & + \sum_{r \in R : (m, r) \in \mathcal{E},\, (m, q, r) \notin
        \mathcal{L}} \, \floor{\rho_r \cdot \hat{\pi}(m r, \delta_m^r(q,
        \floor{n / m} \bmod r))[a]}_u
    \end{aligned}$
  \item \textbf{return} $\argmax_{a \in \Sigma} \hat{\pi}(1, 0)[a]$
\end{alglines}}
\end{algorithm}

Lines 1--3 calibrate the arithmetic; \lemmaref{lem:p3-fp-accuracy}
(\sectionref{app:p3-exponent}) certifies the accuracy achieved: $F = O(\log
r_{\max} \cdot \log(r_{\max} n))$, the $\hat{\eta}$ of line 2 is rational with
$0 \leq \hat{\eta} - \eta \leq 6 \epsilon$, and every \emph{rounded weight}
$\rho_r$ of line 3 underestimates its ideal counterpart by at most the unit ---
$\rho_r \in [0, r^{-\eta}]$ with $r^{-\eta} - \rho_r \leq u$. All the numbers
involved can be taken with $O(F)$ bits, so the time of the round remains
polynomial in the size of the state; and the rounding is fully accounted for in
the mistake analysis: by \lemmaref{lem:p3-precision}, the returned symbol
maximizes the ideal vote $\pi(1, 0)$ up to an additive error of $(r_{\max}
n)^{-\eta} / 2$, a tolerance which the budgeted plurality vote of
\sectionref{app:budgeted-plurality} is designed to absorb.

\subsection{Update}
\label{app:p3-update}

Upon receiving $a = x[n]$, the algorithm runs Update
(\algorithmref{alg:p3-update}); together with Predict, it realizes the protocol
of \sectionref{sec:protocol} --- Update is the state-update function
$\mathcal{U}$ and Predict the state-prediction function $\mathcal{P}$, up to
the binary encoding of the state --- where $n$ denotes the time index stored in
the state (\sectionref{sec:p3-model}), which line 19 advances at the end of
every round. Line 1 dispatches the empty state to Initialize
(\sectionref{app:p3-init}), and line 18 hands the updated fragment to
post-processing (\sectionref{app:p3-post}); in between, after Preprocess
(\algorithmref{alg:p3-preprocess}) has extended the fragment and Diverge has
refreshed the divergence tables (line 2), the pass of lines 3--17 performs the
repairs described in \sectionref{sec:p3-update}. (In a round that fires line 1,
the rest of the pass is vacuous: Preprocess finds $n = 0 \notin S$, the loop of
line 3 has no scale to visit, and post-processing finds nothing to delete; line
19 then ends the round with the time index at $1$.)

\begin{algorithm}[htbp]
\floatconts{alg:p3-update}
{\caption{$P^3$ Update$(a)$}}
{\begin{alglines}
  \item \textbf{if} $S = \emptyset$ \textbf{then} Initialize$(a)$
  \item Preprocess$()$;\hspace{0.5em} $\{d_m\} \gets$ Diverge$()$
  \item \textbf{for} $m \in S \setminus \{1\}$ in increasing order
  \item \algind{1} $\mathcal{I} \gets \{(m', r) \in \mathcal{E} \mid m' r = m,
    n \bmod m' \in Q_{m'}\}$;\hspace{0.5em} $\ell \gets \floor{n / m}$
  \item \algind{1} \textbf{if} $n \bmod m \notin Q_m$
  \item \algind{2} \textbf{for} $(m', r) \in \mathcal{I}$
  \item \algind{3} $q' \gets \delta_{m'}^r(n \bmod m', \floor{n / m'} \bmod r)$
  \item \algind{3} \textbf{if} Eval$(m \ell + q') \neq a$
  \item \algind{4} \textbf{if} there is $\tilde{q} \in Q_m$ with $d_m(\tilde{q},
    q') \geq \ell$ and Eval$(m \ell + \tilde{q}) = a$
  \item \algind{5} $\delta_{m'}^r(n \bmod m', \floor{n / m'} \bmod r) \gets$
    the minimal such $\tilde{q}$
  \item \algind{4} \textbf{else}
  \item \algind{5} $Q_m \gets Q_m \cup \{n \bmod m\}$
  \item \algind{5} \textbf{if} $m > n$ \textbf{then} $\tau_m(n) \gets a$
  \item \algind{5} \textbf{for} $(m, r') \in \mathcal{E}$ and $j \in [r']$:
    \hspace{0.4em} $\delta_m^{r'}(n \bmod m, j) \gets \delta_m^{r'}(q', j)$
  \item \algind{5} \textbf{break}
  \item \algind{1} \textbf{if} $n \bmod m \in Q_m$
  \item \algind{2} \textbf{for} $(m', r) \in \mathcal{I}$: \hspace{0.4em}
    $\delta_{m'}^r(n \bmod m', \floor{n / m'} \bmod r) \gets n \bmod m$
  \item Post-process$(a)$
  \item $n \gets n + 1$
\end{alglines}}
\end{algorithm}

\textbf{Detection.} Line 4 collects the incoming edges of the scale $m$ whose
source pair is active --- $(m', r) \in \mathcal{E}$ with $m' r = m$ and $n
\bmod m' \in Q_{m'}$, the edge-level counterpart of the active pairs of
\algorithmref{alg:p3-predict} --- together with the row $\ell = \floor{n / m}$
that the current position occupies at scale $m$. The set $\mathcal{I}$ is
evaluated only when the loop reaches $m$: a source residue may have entered
$Q_{m'}$ through a state creation at an earlier iteration, and this is the
mechanism by which a repair propagates through the fragment (see below). In the
branch of line 5 the residue $n \bmod m$ is not a state, so every edge of
$\mathcal{I}$ routes the residue $n \bmod m$ to some state $q' < n \bmod m$
(claim 2 of \lemmaref{lem:p3-inv}) --- a tentative identification of the
residue with the smaller state; in the terminology of
\sectionref{app:p3-predict}, a leaf edge. Row $\ell$ of the subsequence of $q'$
is then the position $m \ell + q' < m \ell + (n \bmod m) = n$: already
observed, hence evaluated identically by every run of Eval (claim 1). Line 8
tests this recorded value against the fresh observation, and the
identification is refuted --- calling for a repair at this edge --- exactly
when the two differ. (The divergence tables of line 2 stay valid throughout the
pass: before the loop reaches $m$, every table entry touched and every state
created belongs to a scale smaller than $m$, while $d_m$ is determined by the
transitions out of the scales at least $m$ and the state sets at their sinks.)

\textbf{Correction by rerouting.} The preferred remedy (lines 9--10) is to
reroute the refuted transition to another state of $Q_m$, subject to two
compatibility conditions, as in TP. The divergence condition $d_m(\tilde{q},
q') \geq \ell$ protects the past: by minimality of representatives, every
position evaluated through the modified entry has residue at least $n \bmod m$,
so the observed ones lie at rows below $\ell$ --- exactly the rows on which the
subsequences of $\tilde{q}$ and $q'$ have been evaluating identically
(\sectionref{sec:p3-model}). The same condition confines the candidates to
observed rows: by distinguishability (claim 3 of \lemmaref{lem:p3-inv}, applied
to the state after the previous round --- preprocessing does not disturb it),
the divergence value of a pair of distinct states points at an observed
difference, $m \cdot d_m(\tilde{q}, q') + \tilde{q} \leq n - 1$, so
$d_m(\tilde{q}, q') \geq \ell$ forces $\tilde{q} < n \bmod m$. In particular a
committed target never exceeds the residue it represents, as the routing
invariant requires, and row $\ell$ of every candidate's subsequence is the past
position $m \ell + \tilde{q} < n$, on which every run of Eval agrees (claim 1).
The second condition, Eval$(m \ell + \tilde{q}) = a$, is therefore the single
comparison $x[m \ell + \tilde{q}] = a$: it certifies that the candidate's
subsequence carries the new observation at the current row. Among the states
satisfying both conditions, line 10 commits the minimal one, following the
minimal-consistent-transition principle of \sectionref{app:p3-preprocessing}.

We call the routing property \emph{hereditary} when, for every edge $(p, r) \in
\mathcal{E}$ and every state $u \in Q_{p r}$, the reduced residue $u \bmod p$
belongs to $Q_p$. Equivalently, every state at an edge's target has its parent
residue stored at the edge's source. We refer to this implication as
\emph{heredity} below.

\textbf{Correction by state creation.} When no state of $Q_m$ passes both
tests, the data has separated the subsequence of $n$ from the subsequence of
every state compatible with $q'$, and line 12 installs its canonical
representative $n \bmod m$. An output value is recorded (line 13) only beyond
the horizon, where $n \bmod m = n$ and the symbol is the fresh observation
itself; as observed in \sectionref{app:p3-preprocessing}, these are precisely
the creations at the sinks, and the assignment keeps the states of every sink
the first occurrences of the distinct observed symbols, as claim 4 of
\lemmaref{lem:p3-inv} requires. A creation at a scale below the horizon ---
that is, $m < n$; the case $m = n$ cannot arise, as the residue $n \bmod n = 0$
is always a state --- leaves $\tau_m$ untouched, because the new entry could
never be consulted. Indeed, outputs are read in exactly two places: by Eval, at
the halting scale of its walk, which is a sink and hence beyond the horizon;
and by Preprocess, which copies $\tau_m$ onto fresh target scales in the round
$n = m$ itself. A scale below the time index never returns beyond the horizon,
and never becomes a sink either --- it retains an outgoing edge, or is deleted
by post-processing (\sectionref{app:p3-post}) --- so neither reader ever
revisits it. The output functions are thus maintained, and constrained by
\lemmaref{lem:p3-inv}, only beyond the horizon. The outgoing transitions of the
new state (line 14) are inherited verbatim from its former representative $q'$,
which keeps every past evaluation through the new state intact. Verbatim
inheritance is sound only if no refined residue $(n \bmod m) + j m$ is already
a state at the target scale --- such a state would demand the entry route to
it (claim 2), not to $q'$'s target. This is exactly the heredity part of claim
2: a state $u \in Q_{m r'}$ has $u \bmod m \in Q_m$, whereas $n \bmod m$ enters
$Q_m$ only now. In a chain, heredity is immediate --- a state refining the
residue $n \bmod m$ could only have been created at a round at which $n \bmod m
\in Q_m$ already held. In the scale fragment, a state refining $n \bmod m$
might seem obtainable through a \emph{different} predecessor of $m r'$, but the
update itself forestalls this: at such a round, the residue $n \bmod m$ is
still identified with a smaller representative, whose entries supply a
candidate passing both tests of line 9 --- by data consistency, the recorded
rows of that candidate's subsequence are the very data of the subsequence being
separated --- so the update reroutes rather than creates. Consequently, a state
is created at a scale only when every edge of $\mathcal{E}$ into it has an
active source; this is proved together with \lemmaref{lem:p3-inv} in
\appendixref{app:p3-correctness}. The same transfer of evidence operates
between the edges of $\mathcal{I}$ themselves: had an earlier edge passed the
test of line 8 or committed a reroute, its target would pass both tests of line
9 at every later edge --- the subsequences of sibling targets agree below row
$\ell$, by data consistency through the respective routes --- and the search
would succeed. A creation therefore occurs, if at all, at the first edge
examined, before any sibling edge has acted, and only when the recorded
row-$\ell$ value of every sibling target's subsequence is likewise refuted.
Line 15 then abandons the remaining edges of $\mathcal{I}$, deferring them to
line 17.

\textbf{Restoring routing equality.} Once $n \bmod m \in Q_m$ --- whether from
an earlier round or from line 12 --- the routing invariant demands equality: at
every edge of $\mathcal{I}$, the entry refining the residue of $n$ must land on
the state $n \bmod m$ exactly, and line 17 enforces this. When the state
predates the round, the assignments are no-ops: by heredity, every source
residue $n \bmod m'$ was in $Q_{m'}$ after the previous round as well, and
claim 2 already places its transition at $n \bmod m$. The assignments do real
work only in the wake of line 12, committing the creating edge together with
the edges that line 15 abandoned --- by the first-edge property above, this is
all of $\mathcal{I}$, none of it yet modified in this round. Their soundness is
uniform: every observed position evaluated through such an entry sits at a row
$\ell' < \ell$ (the minimality argument above), and at these rows nothing
changes --- a run of Eval at the observed position $m \ell' + (n \bmod m) < n$
can be routed through any edge of $\mathcal{I}$, so by claim 1 the subsequences
of the old target and of $n$ carry the same recorded symbol $x[m \ell' + (n
\bmod m)]$ at row $\ell'$. These commits could not have been left to the search
of line 9: at the candidate $n \bmod m$, the forecast test evaluates the
position $m \ell + (n \bmod m) = n$ itself, and mid-pass Eval$(n)$ still
returns the refuted value --- the repairs that will turn it to $a$ belong to
later iterations (see below).

\textbf{The cascade.} A creation at an interior scale does not restore
consistency at $n$ by itself: the inherited transitions make every run of
Eval$(n)$ produce the same refuted values as before. What it does do is turn
the residue $n \bmod m$ into a state, injecting the edges out of $m$ into the
sets $\mathcal{I}$ of later iterations, where the inherited transitions face
line 8 in their turn. A single mismatch thus cascades upward along the residues
of $n$ --- through every branch of the fragment it activates --- until it is
resolved by rerouting, or anchored at the sinks, where the created states
record the fresh symbol (line 13). The two resolutions are the alternatives of
claim 4 of \lemmaref{lem:p3-inv} at the sinks, where $\ell = 0$ and the
divergence condition is vacuous: when $a$ has occurred before, the minimal
compatible state is its recorded first occurrence and rerouting succeeds; when
$a$ is a fresh symbol, no stored value and hence no forecast anywhere in the
fragment can equal it, every test of line 9 fails, the cascade sweeps the
entire fragment, and every sink acquires the first-occurrence state $n$ with
output $a$. Finally, for $R = \{k\}$ the fragment is a chain: each $\mathcal{I}$
is a singleton, line 17 merely commits the transition to a freshly created
state, and, post-processing aside --- TP keeps every state it ever creates and
has no counterpart of the capacity control --- \algorithmref{alg:p3-update}
essentially reduces to the update routine of TP.

\subsection{Post-Processing}
\label{app:p3-post}

After the update cascade, post-processing --- the subroutine Post-process
(\algorithmref{alg:p3-post}), invoked at line 18 of
\algorithmref{alg:p3-update} with the symbol observed in the current round ---
performs the deletions and the restart described in
\sectionref{sec:p3-update}.

\begin{algorithm}[htbp]
\floatconts{alg:p3-post}
{\caption{$P^3$ Post-process$(a)$}}
{\begin{alglines}
  \item $\mathcal{D} \gets \{m \in S \mid \abs{Q_m} > c_{\max}\}$
  \item \textbf{while} $\mathcal{D} \neq \emptyset$
  \item \algind{1} $S \gets S \setminus \mathcal{D}$;\hspace{0.5em}
    $\mathcal{E} \gets \{(m, r) \in \mathcal{E} \mid m \in S, m r \in S\}$
  \item \algind{1} $\mathcal{D} \gets \{m \in S \setminus \{1\} \mid \forall
    (m', r) \in \mathcal{E} : m' r \neq m\}$ \\
    \algind{2} $\cup \{m \in S \mid m \leq n, \forall r \in R : (m, r) \notin
    \mathcal{E}\}$
  \item \textbf{if} $S = \emptyset$ \textbf{then} $c_{\max} \gets 2
    c_{\max}$;\hspace{0.5em} Initialize$(a)$
\end{alglines}}
\end{algorithm}

\textbf{Deletion by capacity.} The capacity parameter is consulted only here
--- neither Predict nor the update cascade reads $c_{\max}$ --- and a state,
once created, is never deleted on its own: the unit of deletion is the scale,
with all of its attached data. Removing $m$ from $S$ discards $Q_m$ and
$\tau_m$, and the restriction of $\mathcal{E}$ in line 3 discards the
transition tables of the incident edges (\sectionref{sec:p3-model}). The sweep
of line 1 examines the trail of the cascade just completed: at the end of the
previous round every scale satisfied $\abs{Q_m} \leq c_{\max}$, preprocessing
only copies the state set of an existing scale (line 3 of
\algorithmref{alg:p3-preprocess}), and line 12 of \algorithmref{alg:p3-update}
fires at most once per scale, so line 1 collects exactly the scales at which
the current cascade pushed the state count to $c_{\max} + 1$. Deleting them
restores the bound $\abs{Q_m} \leq c_{\max}$ throughout the fragment by the
time the round ends --- the uniform capacity on which the per-path mistake
analysis of \sectionref{app:p3-stat} relies.

\textbf{Deletion by structure.} The deletions of line 3 may leave the fragment
malformed, and line 4 collects the scales violating either of two structural
requirements, to be deleted in the next pass; the loop runs until no violation
remains. The first set of line 4 restores the requirement of
\sectionref{sec:p3-model} that every non-root scale be the target of an edge;
since the source of an edge is a smaller scale than its target, iterating this
criterion prunes exactly the scales no longer reachable from the root --- among
unreachable scales, a minimal one is the target of no edge at all --- so that
every surviving scale can still be visited by a run of Eval. The second set
deletes every scale at or below the time index that has no outgoing edges,
restoring the sink condition of claim 4 of \lemmaref{lem:p3-inv}: sinks are
where the runs of Eval halt and read the output functions, and where the
recursion of Diverge is grounded (line 2 of \algorithmref{alg:diverge}), and
both readings are sound only beyond the horizon, where the states are the first
occurrences of the distinct observed symbols and the output functions are
maintained. This is the deletion promised in \sectionref{app:p3-update}: a
scale overtaken by the time index either retains an outgoing edge or is
deleted, and its output function is never consulted again. The loop terminates,
as each pass strictly shrinks $S$; and since a deletion removes edges and never
adds any, it can create new violations but cannot cure one, so a scale
collected at line 4 stays violating until deleted --- the surviving fragment is
thus the unique maximal subset of $S$ satisfying both requirements, whichever
order or grouping the deletions are performed in. Note also that when line 1
collects nothing, post-processing deletes nothing at all: the cascade never
removes an edge, and preprocessing adjoins a scale only together with an edge
into it, while equipping the scale reached by the time index, if there is one,
with its outgoing edges in the same round
(\sectionref{app:p3-preprocessing}); the criteria of line 4 can thus first fire
only in the wake of a capacity deletion.

\textbf{Consistency across deletions.} Pruning modifies no surviving table, so
routing and heredity (claim 2) and the description of the surviving scales
beyond the horizon (claim 4) are inherited directly. Data consistency (claim 1)
persists by a truncation argument. A run of Eval$(t)$, $t \leq n$, on the
pruned state selects edges that the unpruned state offered as well, and halts
at a pair $(m, q)$ whose scale is a sink of the pruned fragment --- beyond the
horizon, by the sink condition just restored. On the unpruned state, the run
could have continued past $m$, but without ever changing state: the digit of
$t$ at any scale $m' > n$ is $\floor{t / m'} \bmod r = 0$ whatever the arity,
and $\delta_{m'}^r(q, 0) = q$, since by routing $\delta_{m'}^r(q, 0) \leq q + 0
\cdot m'$, with equality because $q$ belongs to $Q_{m' r}$ --- the state sets of
the scales beyond the horizon all being equal (claim 4). Every continuation
would therefore have halted at some sink $M$ of the unpruned fragment still at
the state $q$, returning $\tau_M(q) = x[q] = \tau_m(q)$ (claim 4 at both
scales); the truncated run thus returns the same symbol as the runs it
truncates --- $x[t]$, by data consistency before the deletions.
Distinguishability (claim 3) requires one extra step: deleting edges changes
the set of continuations over which Diverge minimizes, but pinning turns the
old observed witness into a witness along every surviving continuation. Hence
the new divergence occurs no later than the old one, which preserves the
required bound; see \sectionref{app:p3-pruning-proof}.

\textbf{The restart.} The root is deleted only last: it is immune to the sweep
of line 1 --- $\abs{Q_1} = 1 \leq c_{\max}$, the residue $0$ being the only
residue modulo $1$ --- and to the first criterion of line 4, so it is deleted
only upon losing all of its outgoing edges, at a round $n \geq 1$ (at round
$0$, the sole scale $1$ is itself beyond the horizon). And nothing outlives the
root: among any surviving scales, a minimal one would be a non-root scale that
is the target of no edge --- a violation the loop does not leave behind. Line 5
therefore fires exactly when the deletions have consumed the root, and the
predictor restarts: the capacity is doubled, and Initialize
(\algorithmref{alg:p3-init}) re-creates the minimal nonempty state from the
symbol observed in the current round, re-indexing positions so that the current
round plays the part of round $0$ for the remaining suffix
(\sectionref{app:p3-init}); the shared increment of line 19 of
\algorithmref{alg:p3-update} then ends the round with the time index at $1$,
exactly as after a true round $0$. The restarts divide the processing of an
input into \emph{epochs}: an epoch begins at a round at which Initialize runs
--- round $0$, by line 1 of \algorithmref{alg:p3-update}, or a round whose
post-processing fires line 5 --- and extends up to and including the next such
round, whose prediction and update cascade are still computed from the states
of the epoch, only the re-initialization at its end belonging to the next
epoch. Thus the epoch beginning at round $b$ processes the suffix of the input
from position $b$ as a fresh sequence, its first symbol being the one recorded
by Initialize; and the capacity is constant throughout an epoch, being doubled
exactly when an epoch ends in a restart, so that the $i$-th epoch, $i = 0, 1,
\dots$, runs with $c_{\max} = 2^i$. For an illustration, suppose the number of
distinct observed symbols exceeds $c_{\max}$ for the first time. Every scale
beyond the horizon holds one state per distinct symbol (claim 4), so the sweep
of line 1 deletes all of them at once, and the loop then unravels the remainder
--- its maximal scale is, at every pass, a sink at or below the time index ---
forcing a restart; indeed, no nonempty fragment obeying both the cap and the
sink condition exists until the cap is raised. Finally, for $R = \{k\}$,
post-processing is all-or-nothing. In a chain, deleting any scale disconnects
every scale above it and leaves the scale below it with no outgoing edges ---
at or below the time index, the next-to-maximal scale of the chain being
precisely the round at which the maximal one was adjoined --- so a capacity
violation anywhere unravels the whole chain. TP, which keeps every state it
ever creates (\sectionref{app:p3-update}), has no counterpart of this routine:
for $R = \{k\}$, \algorithmref{alg:p3-post} amounts to the doubling
meta-algorithm wrapped around TP, and the capacity control it enforces --- with
the restarts it forces --- is the price of aggregating, in the general case,
over all the maximal paths of the fragment (\sectionref{app:p3-stat}).

\section{Model Correctness}
\label{app:p3-correctness}

The proof of \theoremref{thm:p3} is divided between this appendix and
\appendixref{app:p3-proof}. This appendix proves \lemmaref{lem:p3-inv} --- the
invariants of the persistent state, which fix the semantics of the model.
\appendixref{app:p3-proof} then bounds the number of states at every scale in
terms of $s = \LZLC_R(x)$ (\sectionref{app:p3-state-bound}), and proves the
mistake bound \equationref{eq:p3-mistakes} (\sectionref{app:p3-stat}) and the
compression bound \equationref{eq:p3-state}
(\sectionref{app:p3-compression}).

Recall the indexing convention of \lemmaref{lem:p3-inv}, which is local to an
epoch (\sectionref{app:p3-post}): if an epoch begins at position $b$ of the
original input, we re-index the suffix $x[b], x[b+1], \dots$ from $0$ and again
denote it by $x$. Thus the $n$ in the lemma is the last observed index of the
current epoch; immediately after that update, the counter stored by the
algorithm is $n + 1$. A restart begins a new epoch and hence changes this
re-indexing.

The proof must account for two features absent from the chain-shaped state of
TP: a scale can have several incoming edges, and Eval may continue from a scale
along several different paths. We therefore split the proof into four parts.
First we give a pathwise semantics to Eval and Diverge. We then establish a
strengthened invariant for the unpruned update cascade, deduce from it the four
claims of \lemmaref{lem:p3-inv} for the configuration that the cascade hands to
post-processing, and finally verify that post-processing and restarts preserve
them.

\subsection{Pathwise Evaluation and Divergence}
\label{app:p3-path-semantics}

Fix a nonempty scale fragment $(S, \mathcal{E})$. A \emph{continuation} from $m
\in S$ is a sequence
\[
  \gamma = (m_0, r_0, m_1, r_1, \dots, r_{L-1}, m_L)
\]
such that $m_0 = m$, $(m_j, r_j) \in \mathcal{E}$ and $m_{j+1} = m_j r_j$ for
every $j < L$, and $m_L$ is a sink. We denote the set of continuations from $m$
by $\mathcal{C}_m$. It is nonempty: scales strictly increase along edges, and
the fragment is finite, so every directed path can be extended to a sink.

An \emph{entry path} to $m$ is a directed path in the scale fragment from $1$
to $m$. If $\alpha$ is an entry path to $m$ and $\gamma \in \mathcal{C}_m$, we
write $\alpha \gamma \in \mathcal{C}_1$ for their concatenation. Every $\zeta
\in \mathcal{C}_1$ fixes all the choices in Eval; for $t \in \mathbb{N}$, we
denote the resulting output by $\Eval_\zeta(t)$. Thus $\Eval_{\alpha
\gamma}(t)$ decomposes the pathwise evaluation according to its entry path
$\alpha$ to $m$ and continuation $\gamma$ from $m$.

\textbf{Routing along an entry path.} Let
\[
  \alpha = (a_0, s_0, a_1, s_1, \dots, s_{J-1}, a_J)
\]
be an entry path to $m$, so $a_0 = 1$, $a_J = m$, $(a_j, s_j) \in \mathcal{E}$
and $a_{j+1} = a_j s_j$ for $j < J$. For $u \in [m]$, define its \emph{route
along} $\alpha$, denoted $\rho_\alpha(u) \in Q_m$, by setting $p_0 \coloneqq 0$
and
\begin{equation}
  \label{eq:p3-route}
  p_{j+1} \coloneqq \delta_{a_j}^{s_j}(p_j, \floor{u / a_j} \bmod s_j), \quad
  \rho_\alpha(u) \coloneqq p_J.
\end{equation}
Thus $\rho_\alpha(u)$ is the endpoint of the transition recursion along
$\alpha$ on the mixed-radix digits of $u$.

For $\gamma \in \mathcal{C}_m$, $q \in Q_m$ and $\ell \in \mathbb{N}$, define
the \emph{pathwise partial evaluation} $\PEval_\gamma(q, \ell)$ as follows. Set
$q_0 \coloneqq q$ and $\ell_0 \coloneqq \ell$, and, for $j < L$, set
\[
  q_{j+1} \coloneqq \delta_{m_j}^{r_j}(q_j, \ell_j \bmod r_j), \quad \ell_{j+1}
  \coloneqq \floor{\ell_j / r_j}.
\]
Then
\begin{equation}
  \label{eq:p3-peval}
  \PEval_\gamma(q, \ell) \coloneqq \tau_{m_L}(q_L).
\end{equation}
Thus $\PEval_\gamma(q, \ell)$ is the terminal output of the transition
recursion initialized at state $q$ and row $\ell$ of scale $m$ and indexed by
the continuation $\gamma$.

The following elementary facts about the route map will be used repeatedly.

\begin{lemma}
\label{lem:p3-route-map}
Let $\alpha$ be an entry path to $m$.
\begin{enumerate}
  \item If the upper-bound part of routing holds along $\alpha$, then
    $\rho_\alpha(u) \leq u$ for every $u \in [m]$.
  \item If the full routing claim of \lemmaref{lem:p3-inv} holds along
    $\alpha$, then $\rho_\alpha(q) = q$ for every $q \in Q_m$.
  \item For every continuation $\gamma \in \mathcal{C}_m$, $u \in [m]$ and
    $\ell \in \mathbb{N}$,
    \begin{equation}
      \label{eq:p3-route-factor}
      \Eval_{\alpha \gamma}(m \ell + u) = \PEval_\gamma(\rho_\alpha(u), \ell),
    \end{equation}
\end{enumerate}
\end{lemma}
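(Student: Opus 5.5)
The plan is to prove all three claims by induction along the entry path $\alpha = (a_0, s_0, \dots, a_J)$. The key bookkeeping identity concerns the partial residues $u_j \coloneqq u \bmod a_j$. Since $a_{j+1} = a_j s_j$, we have
\[
  u_{j+1} = u_j + (\floor{u / a_j} \bmod s_j) \cdot a_j,
\]
so the digit fed to the edge $(a_j, s_j)$ in \equationref{eq:p3-route} is exactly the digit by which the residue $u_j$ modulo $a_j$ refines to $u_{j+1}$ modulo $a_{j+1}$. Since $a_J = m$ and $u \in [m]$, $u_J = u$, so it suffices to compare $p_j$ with $u_j$ at every $j$.

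\textbf{Claim 1.} We show $p_j \leq u_j$ by induction. The base case is $p_0 = 0 = u_0$. For the step, the upper-bound part of routing at the edge $(a_j, s_j)$ gives
\[
  p_{j+1} = \delta_{a_j}^{s_j}(p_j, i) \leq p_j + i a_j \leq u_j + i a_j = u_{j+1},
\]
where $i$ is the current digit. At $j = J$ this reads $\rho_\alpha(u) \leq u$.

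\textbf{Claim 2.} Let $q \in Q_m$. By heredity, applied downward along $\alpha$, $q \bmod a_j \in Q_{a_j}$ for every $j$: the residue $q \bmod a_{j+1}$ lies in $Q_{a_j s_j}$, so its reduction modulo $a_j$, which is $q \bmod a_j$, lies in $Q_{a_j}$. We then show $p_j = q \bmod a_j$ by upward induction. The base case is again $p_0 = 0$. For the step, $p_j + i a_j = q \bmod a_{j+1} \in Q_{a_{j+1}}$, so the equality case of routing gives $p_{j+1} = q \bmod a_{j+1}$.

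\textbf{Claim 3.} This is a pure unfolding of the definitions and needs no invariants. Run Eval on $t \coloneqq m\ell + u$ along $\alpha\gamma$. For every $j \leq J$, $a_j$ divides $m$, so
\[
  \floor{t / a_j} \bmod s_j = \floor{u / a_j} \bmod s_j
\]
for $j < J$: indeed $\floor{t/a_j} = \ell\, (m/a_j) + \floor{u/a_j}$ and $s_j$ divides $m/a_j$. Hence the first $J$ steps of Eval reproduce $p_0, \dots, p_J$, and the state at scale $m$ is $\rho_\alpha(u)$.

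Along $\gamma$, the Eval digit at scale $m_j$ is $\floor{t / m_j} \bmod r_j$. We check by induction that $\floor{t/m_j} = \ell_j$, where $\ell_j$ is the row sequence in the definition of $\PEval$:
\begin{itemize}
  \item at $j = 0$, $\floor{t/m} = \ell = \ell_0$ since $u < m$;
  \item for the step, $\floor{t/m_{j+1}} = \floor{\floor{t/m_j}/r_j} = \ell_{j+1}$.
\end{itemize}
So the two recursions coincide step for step, and both return $\tau_{m_L}$ of the same final state, which gives \equationref{eq:p3-route-factor}.

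The only delicate point is in claim 2: the order of the inductions. Heredity must be applied downward first, to know that every intermediate residue is a state, before the upward induction can invoke the equality case of routing. The rest is routine mixed-radix arithmetic.
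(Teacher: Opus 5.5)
Your proof is correct and follows the paper's argument essentially step for step. The paper also uses the induction $p_j \le u \bmod a_j$ for claim 1, heredity applied backwards along $\alpha$ followed by the equality clause of routing for claim 2, and the divisibility $a_{j+1} \mid m$ to match the digits along $\alpha$ for claim 3; your explicit check that $\floor{t/m_j} = \ell_j$ along $\gamma$ just spells out what the paper states in one line.
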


\begin{proof}
For claim 1, we prove by induction that $p_j \leq u \bmod a_j$. This is an
equality for $j = 0$. If $i \coloneqq \floor{u / a_j} \bmod s_j$, then the
upper-bound property and the induction hypothesis give
\[
  p_{j+1} = \delta_{a_j}^{s_j}(p_j, i) \leq p_j + i a_j \leq (u \bmod a_j) + i
  a_j = u \bmod a_{j+1}.
\]
At $j = J$ this is $\rho_\alpha(u) \leq u$, since $u < m = a_J$.

For claim 2, apply heredity backwards along $\alpha$: $q \bmod a_j \in
Q_{a_j}$ for every $j$. Starting with $p_0 = q \bmod 1 = 0$, the equality part
of routing then gives inductively $p_j = q \bmod a_j$, and hence
$\rho_\alpha(q) = q$.

For claim 3, let $t \coloneqq m \ell + u$. Along $\alpha$, the digit used at
the edge $(a_j, s_j)$ is
\[
  \floor{t / a_j} \bmod s_j = \floor{u / a_j} \bmod s_j,
\]
because $a_{j+1}$ divides $m$. Hence the transition recursion along $\alpha$ is
exactly \equationref{eq:p3-route} and has endpoint $\rho_\alpha(u)$. The
successive digits and remaining rows after $m$ are exactly those in the
recursion \equationref{eq:p3-peval}, which proves
\equationref{eq:p3-route-factor}.
\end{proof}

For $q, q' \in Q_m$, let
\[
  D_\gamma(q, q') \coloneqq \min \{\ell \in \mathbb{N} \mid \PEval_\gamma(q,
  \ell) \neq \PEval_\gamma(q', \ell)\},
\]
where the minimum of the empty set is $\infty$. The next lemma gives the exact
semantics of Diverge. Its hypothesis on sinks is an immediate consequence of
the horizon invariant: at a sink, distinct states are first occurrences of
distinct symbols and hence have distinct outputs.

\begin{lemma}
\label{lem:p3-div-semantics}
Suppose that $\tau_M$ is injective on $Q_M$ for every sink $M$. Then the values
computed by Diverge satisfy
\begin{equation}
  \label{eq:p3-div-path}
  d_m(q, q') = \min_{\gamma \in \mathcal{C}_m} D_\gamma(q, q')
\end{equation}
for every $m \in S$ and $q, q' \in Q_m$. Consequently:
\begin{enumerate}
  \item if $d_m(q, q') \geq L$, then $\PEval_\gamma(q, \ell) =
    \PEval_\gamma(q', \ell)$ for every $\gamma \in \mathcal{C}_m$ and every
    $\ell < L$;
  \item if $d_m(q, q') < \infty$, then some $\gamma \in \mathcal{C}_m$
    satisfies $\PEval_\gamma(q, d_m(q, q')) \neq \PEval_\gamma(q', d_m(q,
    q'))$.
\end{enumerate}
\end{lemma}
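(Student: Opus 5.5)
The plan is to prove \equationref{eq:p3-div-path} by well-founded induction on scales in the order in which Diverge processes them, i.e.\ from the sinks downwards. Scales strictly increase along edges and the fragment is finite, so this order is well defined. The two consequences then follow at once from the minimum characterization.

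\textbf{Base case.} Take a sink $m$. Here $\mathcal{C}_m$ consists only of the trivial continuation $\gamma = (m)$ with $L = 0$, so $\PEval_\gamma(q, \ell) = \tau_m(q)$ for every row $\ell$.
\begin{itemize}
  \item If $q = q'$, the two evaluations never differ, so $D_\gamma(q, q) = \infty$, matching line 2 of \algorithmref{alg:diverge}.
  \item If $q \neq q'$, injectivity of $\tau_m$ on $Q_m$ gives $\tau_m(q) \neq \tau_m(q')$ already at $\ell = 0$. Hence $D_\gamma(q, q') = 0$, again matching line 2.
\end{itemize}

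\textbf{Inductive step.} Take a non-sink $m$. Every $\gamma \in \mathcal{C}_m$ decomposes uniquely as a first edge $(m, r) \in \mathcal{E}$ followed by some $\gamma' \in \mathcal{C}_{mr}$. Writing $\ell = r \ell' + i$ with $i < r$, the recursion \equationref{eq:p3-peval} gives
\[
  \PEval_\gamma(q, r \ell' + i) = \PEval_{\gamma'}(\delta_m^r(q, i), \ell').
\]
This identity does the main work. For fixed $i$, the rows $\ell \equiv i \pmod r$ at which $q$ and $q'$ disagree along $\gamma$ are exactly the rows $r \ell' + i$ such that $\delta_m^r(q, i)$ and $\delta_m^r(q', i)$ disagree at row $\ell'$ along $\gamma'$. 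Since $\ell' \mapsto r\ell' + i$ is increasing, the least such row is $r D_{\gamma'}(\delta_m^r(q, i), \delta_m^r(q', i)) + i$, with the convention $r \cdot \infty + i = \infty$. Taking the minimum over $i < r$ gives
\[
  D_\gamma(q, q') = \min_{i < r} \bigl(r D_{\gamma'}(\delta_m^r(q, i), \delta_m^r(q', i)) + i\bigr).
\]
Now minimize over $\gamma'$ and over the first edge. The key point is that the two minima commute: $\min_{\gamma'} \min_i = \min_i \min_{\gamma'}$, and the map $x \mapsto r x + i$ is monotone and compatible with the $\infty$ convention. The inductive hypothesis at $mr$ replaces $\min_{\gamma'} D_{\gamma'}$ by $d_{mr}$, which yields exactly line 6 of \algorithmref{alg:diverge}.

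This interchange of minima is the only step needing care. Its subtlety is that a single continuation $\gamma'$ must serve all digits $i$ simultaneously. That causes no problem, because we minimize over both $i$ and $\gamma'$ jointly and the order of joint minimization is irrelevant.

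\textbf{Consequences.} Both claims follow from \equationref{eq:p3-div-path}.
\begin{enumerate}
  \item If $d_m(q, q') \geq L$, then $D_\gamma(q, q') \geq L$ for every $\gamma \in \mathcal{C}_m$. By the definition of $D_\gamma$ as a first disagreement, $q$ and $q'$ agree at every row $\ell < L$ along every $\gamma$.
  \item If $d_m(q, q') < \infty$, the minimum over the finite set $\mathcal{C}_m$ is attained by some $\gamma$. This $\gamma$ has $D_\gamma(q, q') = d_m(q, q') < \infty$, and the minimum defining $D_\gamma$ is then attained as a genuine disagreement at row $d_m(q, q')$.
\end{enumerate}
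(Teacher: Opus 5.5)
Your proof is correct and follows the paper's argument essentially step for step. Both use downward induction over the fragment in Diverge's order, settle the sink case by injectivity of $\tau_m$, decompose rows as $\ell = r\ell' + i$ to get the per-continuation recurrence, take the minimum over tails and first edges to recover line 6, and read the two consequences off the minimum characterization, with attainment coming from the finiteness of $\mathcal{C}_m$.
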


\begin{proof}
We prove \equationref{eq:p3-div-path} by downward induction over the scale
fragment, in the same topological order used by \algorithmref{alg:diverge}. If
$m$ is a sink, $\mathcal{C}_m$ contains only the length-zero continuation. For
$q = q'$ its path divergence is $\infty$, while for $q \neq q'$ it is $0$ by
the injectivity of $\tau_m$. These are exactly the values assigned in line 2 of
\algorithmref{alg:diverge}.

Now let $m$ be a non-sink and assume the identity at every target $m r$ of an
edge out of $m$. Every $\gamma \in \mathcal{C}_m$ consists of a first edge $(m,
r)$ followed by some $\gamma' \in \mathcal{C}_{m r}$. For this fixed
continuation, partitioning the rows according to $\ell = r \ell' + i$ gives
\[
  D_\gamma(q, q') = \min_{i < r} (r \cdot D_{\gamma'}(\delta_m^r(q, i),
  \delta_m^r(q', i)) + i),
\]
with the convention $r \cdot \infty + i \coloneqq \infty$. Indeed, at a row $r
\ell' + i$, the first transition uses digit $i$, and the remaining row at scale
$m r$ is $\ell'$; this is exactly the recursion in
\equationref{eq:p3-peval}. Taking the minimum first over all tails $\gamma' \in
\mathcal{C}_{m r}$ and then over all outgoing edges, and applying the induction
hypothesis, yields
\[
  \min_{\gamma \in \mathcal{C}_m} D_\gamma(q, q') = \min_{r \in R : (m, r) \in
  \mathcal{E}} \min_{i < r} (r \cdot d_{m r}(\delta_m^r(q, i), \delta_m^r(q',
  i)) + i),
\]
which is line 6 of \algorithmref{alg:diverge}. This proves
\equationref{eq:p3-div-path}. The two consequences follow immediately: the
right-hand side is the earliest row at which \emph{some} continuation separates
the pair, so every continuation agrees at all earlier rows, and a finite
minimum is attained because the fragment and all of its transition tables are
finite.
\end{proof}

We next specialize the route factorization to stored states. The structural
conditions maintained by the algorithm guarantee that every surviving scale has
an entry path. The following observation explains the role of both clauses of
the routing invariant.

\begin{lemma}
\label{lem:p3-peval-lift}
Assume the routing claim of \lemmaref{lem:p3-inv}, and let $m \in S$, $q \in
Q_m$, $\ell \in \mathbb{N}$, $\gamma \in \mathcal{C}_m$. For every entry path
$\alpha$ to $m$,
\[
  \Eval_{\alpha \gamma}(m \ell + q) = \PEval_\gamma(q, \ell).
\]
\end{lemma}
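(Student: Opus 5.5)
This is an immediate combination of the two preceding facts about the route map, \lemmaref{lem:p3-route-map}. Fix an entry path $\alpha$ to $m$, a continuation $\gamma \in \mathcal{C}_m$, a state $q \in Q_m$ and a row $\ell \in \mathbb{N}$. Since $q \in Q_m \subseteq [m]$, we may take $u \coloneqq q$ in claim 3 of \lemmaref{lem:p3-route-map}. This factorizes the pathwise evaluation as
\[
  \Eval_{\alpha \gamma}(m \ell + q) = \PEval_\gamma(\rho_\alpha(q), \ell).
\]
Claim 3 needs no hypothesis on the tables: it only uses that every $a_{j+1}$ along $\alpha$ divides $m$, so that the digits of $m\ell + q$ read along $\alpha$ are those of $q$, while the digits read after $m$ are those of $\ell$.

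It therefore remains to show $\rho_\alpha(q) = q$. This is claim 2 of \lemmaref{lem:p3-route-map}, which requires the full routing claim, both the equality clause and heredity, along $\alpha$. That requirement is exactly the assumption of the present lemma, so substituting $\rho_\alpha(q) = q$ gives the statement.

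The only point deserving care, and the one I would spell out, is why both clauses of routing are needed. Heredity, applied backwards along $\alpha = (a_0, s_0, \dots, a_J)$, gives $q \bmod a_j \in Q_{a_j}$ for every $j \leq J$. The equality clause then forces each transition $\delta_{a_j}^{s_j}(q \bmod a_j, \floor{q / a_j} \bmod s_j)$ to equal $q \bmod a_{j+1}$. Without heredity, an intermediate residue $q \bmod a_j$ might not be a state. In that case the route would pass through a smaller representative, and only the inequality $\rho_\alpha(q) \leq q$ of claim 1 would be available. Since this is already packaged in claim 2, I expect no real obstacle; the proof is two lines once the factorization and the route identity are cited.
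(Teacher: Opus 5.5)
Your proof is correct and is exactly the paper's argument. Claim 3 of \lemmaref{lem:p3-route-map} with $u = q$ gives $\PEval_\gamma(\rho_\alpha(q), \ell)$, and claim 2 of the same lemma, under the full routing hypothesis, gives $\rho_\alpha(q) = q$.
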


\begin{proof}
Claims 2 and 3 of \lemmaref{lem:p3-route-map} give
\[
  \Eval_{\alpha \gamma}(m \ell + q) = \PEval_\gamma(\rho_\alpha(q), \ell) =
  \PEval_\gamma(q, \ell).
\]
\end{proof}

Combined with data consistency, the lift pins the partial evaluations of
stored states to the data itself, wherever the positions involved have already
been observed. We use this observation constantly in what follows, under the
name \emph{pinning}.

\begin{corollary}[Pinning]
\label{cor:p3-pinning}
Assume routing and data consistency through an index $N$ (i.e.\ every run of
Eval$(t)$ with $t \leq N$ returns $x[t]$). Let $m \in S$, $q \in Q_m$ and $\ell
\in \mathbb{N}$ be s.t.\ $m \ell + q \leq N$. Then, for every $\gamma \in
\mathcal{C}_m$,
\[
  \PEval_\gamma(q, \ell) = x[m \ell + q].
\]
\end{corollary}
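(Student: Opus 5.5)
The plan is to deduce the Pinning corollary directly from \lemmaref{lem:p3-peval-lift} together with data consistency, by choosing a suitable full path through the fragment.

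\textbf{Step 1: an entry path to $m$ exists.} Every non-root scale is the target of some edge $(m', r) \in \mathcal{E}$, and edge sources are strictly smaller than their targets. Starting at $m$ and repeatedly stepping back to such a source must therefore terminate, and it can only terminate at the root $1$. Reversing the resulting chain gives an entry path $\alpha$ to $m$ (for $m = 1$, take the empty path).

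\textbf{Step 2: apply the lift.} Since routing holds, \lemmaref{lem:p3-peval-lift} gives
\[
\Eval_{\alpha\gamma}(m\ell + q) = \PEval_\gamma(q, \ell)
\]
for the given $\gamma \in \mathcal{C}_m$. The concatenation $\alpha\gamma$ lies in $\mathcal{C}_1$, so $\Eval_{\alpha\gamma}$ is the output of one particular run of Eval, namely the run whose choices in line 3 of \algorithmref{alg:eval} follow $\alpha\gamma$.

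\textbf{Step 3: invoke data consistency.} Because $m\ell + q \le N$, every run of Eval$(m\ell + q)$ returns $x[m\ell + q]$. In particular the run along $\alpha\gamma$ does, which yields the claim.

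\textbf{Main obstacle.} The only point needing care is checking that $\Eval_\zeta$, for $\zeta \in \mathcal{C}_1$, really is a possible run of Eval. Eval keeps choosing outgoing edges while any exist and halts exactly at a sink. Since $\gamma$ ends at a sink and $\alpha\gamma$ uses only edges of $\mathcal{E}$, this holds by construction.
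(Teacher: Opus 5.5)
Your proof is correct and is essentially the paper's argument: fix an entry path $\alpha$ to $m$, use \lemmaref{lem:p3-peval-lift} to rewrite $\PEval_\gamma(q,\ell)$ as $\Eval_{\alpha\gamma}(m\ell+q)$, and conclude by data consistency at the observed position $m\ell+q\le N$. Your Step 1 and your ``main obstacle'' paragraph spell out two points the paper takes for granted: that an entry path exists (from the requirement that every non-root scale is the target of an edge), and that every $\zeta\in\mathcal{C}_1$ fixes the choices of an actual run of Eval.
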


\begin{proof}
Fix an entry path $\alpha$ to $m$. By \lemmaref{lem:p3-peval-lift} and data
consistency at the position $m \ell + q \leq N$,
\[
  \PEval_\gamma(q, \ell) = \Eval_{\alpha \gamma}(m \ell + q) = x[m \ell + q].
\]
\end{proof}

In particular, pinning translates the pathwise divergence values into
differences in the data.

\begin{corollary}
\label{cor:p3-div-witness}
Assume routing and data consistency through an index $N$, and suppose that
$\tau_M$ is injective on $Q_M$ for every sink $M$. Let $m \in S$ and distinct
$q_1, q_2 \in Q_m$, and put $\ell \coloneqq d_m(q_1, q_2)$. If $\ell < \infty$
and $m \ell + \max(q_1, q_2) \leq N$, then
\[
  x[m \ell + q_1] \neq x[m \ell + q_2].
\]
\end{corollary}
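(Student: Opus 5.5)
The plan is to combine the exact pathwise semantics of Diverge (\lemmaref{lem:p3-div-semantics}) with pinning (\corollaryref{cor:p3-pinning}). First, since $\ell = d_m(q_1, q_2) < \infty$, claim 2 of \lemmaref{lem:p3-div-semantics} applies; its hypothesis, injectivity of $\tau_M$ on $Q_M$ at every sink, is assumed in the statement. It yields a continuation $\gamma \in \mathcal{C}_m$ with
\[
  \PEval_\gamma(q_1, \ell) \neq \PEval_\gamma(q_2, \ell).
\]
This is the only place where the hypotheses on sinks and on the finiteness of $\ell$ enter.

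Second, pinning transfers each side of this inequality to the data. For $i \in \{1, 2\}$ we have $m \ell + q_i \leq m \ell + \max(q_1, q_2) \leq N$ by hypothesis. Since $q_i \in Q_m$ and routing and data consistency hold through $N$, \corollaryref{cor:p3-pinning} applied to $m$, $q_i$, $\ell$ and the same $\gamma$ gives $\PEval_\gamma(q_i, \ell) = x[m \ell + q_i]$. Substituting both equalities into the displayed inequality yields $x[m \ell + q_1] \neq x[m \ell + q_2]$, as claimed.

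There is no real obstacle; the argument is a direct composition of the two earlier results. The only point to check is that pinning requires an entry path to $m$, via \lemmaref{lem:p3-peval-lift}. Every surviving scale is reachable from the root, by the structural requirement that each non-root scale be the target of an edge and that edges increase scales. It is also essential that the \emph{same} continuation $\gamma$ witnessing the divergence is used in both applications of pinning, which is permitted since pinning holds for every $\gamma \in \mathcal{C}_m$.
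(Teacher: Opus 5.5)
Your proof is correct and follows the paper's proof essentially verbatim: claim 2 of \lemmaref{lem:p3-div-semantics} supplies a continuation $\gamma$ separating $q_1$ and $q_2$ at row $\ell$, and pinning (\corollaryref{cor:p3-pinning}) along that same $\gamma$ turns both values into the data symbols $x[m \ell + q_1]$ and $x[m \ell + q_2]$. Your remark about the entry path needed by pinning matches the paper's standing observation that every surviving scale has one.
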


\begin{proof}
By \lemmaref{lem:p3-div-semantics}, some continuation $\gamma \in
\mathcal{C}_m$ satisfies
\[
  \PEval_\gamma(q_1, \ell) \neq \PEval_\gamma(q_2, \ell).
\]
For $h \in \{1, 2\}$ we have $m \ell + q_h \leq m \ell + \max(q_1, q_2) \leq
N$, so pinning (\corollaryref{cor:p3-pinning}) identifies $\PEval_\gamma(q_h,
\ell)$ with $x[m \ell + q_h]$; hence $x[m \ell + q_1] \neq x[m \ell + q_2]$.
\end{proof}

Pinning also makes divergence values with observed witnesses robust: such a
value is determined by the data alone, so it agrees between any two
configurations that both evaluate the observed prefix correctly.

\begin{corollary}[Stability]
\label{cor:p3-div-stability}
Let two configurations of the model share the scale fragment, and suppose that
each of them satisfies the hypotheses of \corollaryref{cor:p3-div-witness} for
the same index $N$: routing and data consistency through $N$, with $\tau_M$
injective on $Q_M$ at every sink $M$. Let $m \in S$, let $q_1 \neq q_2$ be
states of $Q_m$ in both configurations, and suppose that $\ell \coloneqq
d_m(q_1, q_2)$, computed in the first configuration, satisfies $m \ell +
\max(q_1, q_2) \leq N$. Then $d_m(q_1, q_2) = \ell$ in the second
configuration as well.
\end{corollary}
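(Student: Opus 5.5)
The plan is a two-sided comparison, using \corollaryref{cor:p3-div-witness} to turn a divergence value into an observed difference in the data, and pinning (\corollaryref{cor:p3-pinning}) together with \lemmaref{lem:p3-div-semantics} to turn an observed difference back into an upper bound on the divergence value in the other configuration. Write $d_m$ and $d'_m$ for the divergence tables of the first and second configurations, and $\PEval$, $\PEval'$ for their pathwise partial evaluations. Since the two configurations share the scale fragment, they have the same set $\mathcal{C}_m$ of continuations from $m$. Only the transition tables and output functions may differ.

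First, the hypothesis $m \ell + \max(q_1, q_2) \leq N$ forces $\ell < \infty$. Applying \corollaryref{cor:p3-div-witness} in the first configuration then gives $x[m \ell + q_1] \neq x[m \ell + q_2]$. Next, I pass to the second configuration. Both $q_1$ and $q_2$ are states of $Q_m$ there, and $m \ell + q_h \leq N$ for $h \in \{1,2\}$. Pinning in the second configuration therefore gives $\PEval'_\gamma(q_h, \ell) = x[m \ell + q_h]$ for every $\gamma \in \mathcal{C}_m$. Hence $\PEval'_\gamma(q_1, \ell) \neq \PEval'_\gamma(q_2, \ell)$, i.e.\ $D'_\gamma(q_1, q_2) \leq \ell$ for every continuation. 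The sink-injectivity hypothesis makes \equationref{eq:p3-div-path} available in the second configuration, and it yields $\ell' \coloneqq d'_m(q_1, q_2) \leq \ell$.

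For the reverse inequality I exploit the symmetry of the hypotheses. Since $\ell' \leq \ell$, we have $m \ell' + \max(q_1, q_2) \leq m \ell + \max(q_1, q_2) \leq N$, and $\ell'$ is finite. The argument of the previous paragraph therefore applies verbatim with the roles of the two configurations exchanged, giving $\ell \leq \ell'$. Hence $\ell' = \ell$.

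The only point needing care is not computational. Pinning must be applied to states that are stored in the configuration where it is used, and the identity of the continuation sets requires the shared fragment. Both are exactly what the hypotheses provide. So I expect no real obstacle; the one thing to check is that finiteness of $\ell'$ follows from $\ell' \leq \ell < \infty$, so that \corollaryref{cor:p3-div-witness} may be invoked in the swapped direction.
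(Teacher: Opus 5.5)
Your proof is correct. It differs from the paper's only in how it gets the lower bound.

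The paper proves the equality in a single pass. From the first configuration it extracts two facts. The disagreement $x[m\ell+q_1]\neq x[m\ell+q_2]$ comes from \corollaryref{cor:p3-div-witness}. The agreements $x[ms+q_1]=x[ms+q_2]$ for all $s<\ell$ come from claim 1 of \lemmaref{lem:p3-div-semantics} followed by pinning. It then pins both facts in the second configuration, so that $D_\gamma(q_1,q_2)=\ell$ for \emph{every} continuation $\gamma$, and concludes $d_m(q_1,q_2)=\ell$ by \equationref{eq:p3-div-path}.

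You instead prove only $d'_m(q_1,q_2)\le\ell$ directly. You get the reverse inequality by exchanging the roles of the two configurations, invoking \corollaryref{cor:p3-div-witness} in the second configuration at its own value $\ell'$. This is legitimate because the hypotheses are symmetric and $\ell'\le\ell$ keeps the witness positions within $N$, as you note.

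Your route uses a single witness row in each direction and never needs the agreement clause of \lemmaref{lem:p3-div-semantics}. The paper's route reads the divergence value off the full data profile below and at row $\ell$. It therefore yields the slightly stronger pathwise conclusion that every continuation separates the pair exactly at row $\ell$, not merely that the minimum over continuations equals $\ell$.
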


\begin{proof}
By \corollaryref{cor:p3-div-witness} in the first configuration, $x[m \ell +
q_1] \neq x[m \ell + q_2]$. For every $s \leq \ell$ and $h \in \{1, 2\}$, the
position $m s + q_h \leq m \ell + \max(q_1, q_2) \leq N$ is observed, so
pinning applies in both configurations: $\PEval_\gamma(q_h, s) = x[m s + q_h]$
for every $\gamma \in \mathcal{C}_m$. In the first configuration, claim 1 of
\lemmaref{lem:p3-div-semantics} turns $d_m(q_1, q_2) = \ell$ into agreement at
every row below $\ell$, and pinning turns the agreement into the data
identities $x[m s + q_1] = x[m s + q_2]$, $s < \ell$. Reading the pinned values
in the second configuration, every $\gamma \in \mathcal{C}_m$ agrees on the
pair at all rows below $\ell$ and differs at the row $\ell$ itself; hence
$D_\gamma(q_1, q_2) = \ell$ for every $\gamma$, and $d_m(q_1, q_2) = \ell$ by
\equationref{eq:p3-div-path}.
\end{proof}

In view of \corollaryref{cor:p3-div-witness}, the distinguishability part of
\lemmaref{lem:p3-inv} reduces to proving the numerical bound $m d_m(q_1, q_2) +
\max(q_1, q_2) \leq n$: the separating-symbol conclusion then follows
automatically. The pathwise agreement conclusion of
\lemmaref{lem:p3-div-semantics} is likewise exactly what justifies a rerouting
in line 10 of \algorithmref{alg:p3-update} when $d_m(\tilde{q}, q') \geq \ell$.

\subsection{The Strengthened Transient Invariant}
\label{app:p3-transient-inv}

We now prepare an induction over the transient configurations of Update.
Throughout an update round, $n$ denotes the new position and $a \coloneqq x[n]$
its observed symbol; the state before the round is therefore consistent with
$x[:n]$. The divergence tables $d_m$ used by the algorithm are the tables
computed once in line 2 of \algorithmref{alg:p3-update}. For a transient
configuration --- after some, but not necessarily all, iterations of line 3 ---
we write $\Delta_m$ for the tables that a fresh call to Diverge would compute
from that configuration. Thus $\Delta_m = d_m$ immediately after line 2, but
the two notations need not agree later in the cascade.

\textbf{Settled and saturated pairs.} The distinguishability claim of
\lemmaref{lem:p3-inv} cannot be maintained verbatim through the cascade. When
line 12 of \algorithmref{alg:p3-update} creates the state $u_m \coloneqq n
\bmod m$, line 14 copies the outgoing transitions of the old target $v_m
\coloneqq q'$, so the two states evaluate identically along every continuation,
and no divergence table computed from the transient model separates the pair
$\{u_m, v_m\}$; moreover, the divergence values between $u_m$ and the remaining
states of the scale depend on evaluations of the position $n$, which the
repairs of the same round --- reroutings and creations at larger scales ---
keep overwriting. The transient assertion therefore demands of each pair one of
two things: a witness that the data has already fixed, or a separation by the
current position itself.

For a transient configuration, a scale $m \in S$ and distinct states $q_1, q_2
\in Q_m$, we call the pair $\{q_1, q_2\}$ \emph{settled} when
\begin{equation}
  \label{eq:p3-settled}
  \begin{gathered}
    m \Delta_m(q_1, q_2) + \max(q_1, q_2) \leq n - 1 \\
    \text{and} \quad x[m \Delta_m(q_1, q_2) + q_1] \neq x[m \Delta_m(q_1, q_2)
    + q_2],
  \end{gathered}
\end{equation}
and \emph{saturated} when
\begin{equation}
  \label{eq:p3-saturated}
  \max(q_1, q_2) = n \bmod m \quad \text{and} \quad x[m \floor{n / m} + \min(q_1,
  q_2)] \neq a.
\end{equation}

Every position that \equationref{eq:p3-settled} mentions is observed, so,
wherever routing and past consistency are available, pinning reads the settled
property directly off the data, and \corollaryref{cor:p3-div-stability} makes
it immune to everything a repair can do to the tables. A saturated pair, by
contrast, mentions no divergence value at all. Its larger member is the residue
of the current position, so that $m \floor{n / m} + \max(q_1, q_2) = n$, and
\equationref{eq:p3-saturated} states that the data separates the pair at the
current row: the subsequence of the larger member carries the fresh symbol $a =
x[n]$ there, and that of the smaller member does not. This is exactly the
situation of a creation: routing gives $v_m \leq u_m$, equality being
impossible --- $v_m \in Q_m$, while line 5 requires $u_m \notin Q_m$ --- so
$\max(u_m, v_m) = u_m = n \bmod m$, and $x[m \floor{n / m} + v_m] \neq a$ is
the failed test of line 8. The verification that the repairs leave every pair
settled or saturated is carried out in \sectionref{app:p3-cascade-proof}; that
the completed cascade returns every saturated pair to the distinguishability
claim of \lemmaref{lem:p3-inv}, in \sectionref{app:p3-cascade-complete}.

\textbf{The loop invariant.} Let $\mathcal{B}$ be the set of scales whose
iterations in line 3 have been completed in the current round; because the
scales are processed in increasing order, $\mathcal{B}$ is an initial segment
of $S \setminus \{1\}$. For $k \in \mathbb{N}$, write
\[
  \mathcal{F}_k \coloneqq \{t < k \mid \forall t' < t : x[t'] \neq x[t]\}
\]
for the first occurrences of the distinct symbols in $x[:k]$. Immediately
before and after every iteration of line 3, we maintain:
\begin{enumerate}
  \item \textbf{(Past consistency.)} For every $t < n$ and $\zeta \in
    \mathcal{C}_1$, $\Eval_\zeta(t) = x[t]$.
  \item \textbf{(Routing.)} Both parts of the routing claim of
    \lemmaref{lem:p3-inv} hold for every current edge.
  \item \textbf{(Relaxed distinguishability.)} At every scale, every pair of
    distinct states is settled or saturated.
  \item \textbf{(Moving horizon.)} Every sink has scale strictly greater than
    $n$. For every $m \in S$ with $m > n$,
    \[
      Q_m = \begin{cases}
        \mathcal{F}_{n+1} & \text{if } m \in \mathcal{B}, \\
        \mathcal{F}_n & \text{if } m \notin \mathcal{B},
      \end{cases}
    \]
    and $\tau_m(q) = x[q]$ for all $q \in Q_m$.
\end{enumerate}

In particular, the output map is injective at every sink throughout the
cascade. Notice also that when the next iteration is at scale $m$, all
preceding iterations modified only state sets and transitions at scales
strictly smaller than $m$. Since the recurrence of Diverge at $m$ only inspects
$m$ and larger scales, its values on the states present at the beginning of the
round have not changed:
\begin{equation}
  \label{eq:p3-frozen-div}
  \Delta_m(q, q') = d_m(q, q') \quad \text{for all } q, q' \in Q_m.
\end{equation}
This explains why the tests in line 9 may use the frozen table of line 2.
States created during the round are, of course, not arguments of that table;
the pairs they form are classified directly in
\sectionref{app:p3-cascade-proof}.

\textbf{Initialization.} At the beginning of an epoch, Initialize$(x[0])$
produces $S = \{1\}$, $Q_1 = \{0\}$ and $\tau_1(0) = x[0]$. The unique run of
Eval$(0)$ halts at once and returns $x[0]$; routing is vacuous, there are no
distinct states, and the sole scale is a sink beyond the horizon $0$, with $Q_1
= \mathcal{F}_1$. Thus all four claims of \lemmaref{lem:p3-inv} hold after
initialization. The rest of the argument proceeds by induction over the
subsequent rounds of this epoch; the restart case is treated in
\sectionref{app:p3-pruning-proof}.

\textbf{Preprocessing.} Fix a round $n \geq 1$ and assume
\lemmaref{lem:p3-inv} after the preceding round, i.e.\ for the prefix $x[:n]$.
We prove that, after Preprocess and the call to Diverge in line 2, the loop
invariant holds with $\mathcal{B} = \emptyset$; in fact, every pair of distinct
states is settled.

If $n \notin S$, Preprocess changes nothing. The old data consistency and
routing claims persist. Because the fragment is unchanged, the freshly computed
$\Delta_m$ equals the old final divergence table, and the old
distinguishability claim --- all of whose positions are at most $n - 1$ ---
says exactly that every pair is settled. The old horizon claim says that a sink
has scale greater than $n - 1$; equality to $n$ is impossible because $n \notin
S$, so every sink is in fact greater than $n$. It also gives $Q_m =
\mathcal{F}_n$ and $\tau_m(q) = x[q]$ at every $m > n$. Hence the
moving-horizon claim holds.

Suppose now that $n \in S$. Before preprocessing, both $n$ and every $n r$, $r
\in R$, are beyond the old horizon. Thus
\begin{equation}
  \label{eq:p3-pre-horizon}
  Q_n = Q_{n r} = \mathcal{F}_n, \quad \tau_n = \tau_{n r}
\end{equation}
whenever $n r$ already belongs to $S$; if it does not, line 3 of
\algorithmref{alg:p3-preprocess} makes these identities true by definition.
Also, every target $n r$ is a sink: an edge out of a scale is created only when
the time index reaches that scale, and $n r > n$. Preprocess adds the edges
$(n, r)$ with
\begin{equation}
  \label{eq:p3-new-edge}
  \delta_n^r(q, 0) = q, \quad \delta_n^r(q, i) = 0 \quad (1 \leq i < r).
\end{equation}

We verify the four parts of the invariant.

First consider past consistency. Let $t < n$ and $\zeta \in \mathcal{C}_1$ in
the new fragment. If $\zeta$ uses no new edge, then $\Eval_\zeta(t) = x[t]$ by
old past consistency. Otherwise $\zeta$ has the form $\alpha (n, r, n r)$ for
an entry path $\alpha$ to $n$ and a new edge $(n, r)$; put
\[
  q \coloneqq \rho_\alpha(t).
\]
Since $t < n$, the digit at the new edge is $0$, so
\equationref{eq:p3-new-edge} maps $q$ to itself at the sink $n r$. Hence
\equationref{eq:p3-route-factor} and \equationref{eq:p3-pre-horizon} give
\[
  \Eval_\zeta(t) = \tau_{n r}(q) = \tau_n(q) = \Eval_\alpha(t) = x[t],
\]
where the last equality is old past consistency for the path $\alpha$, which
ended at the old sink $n$. Thus the new pathwise identity holds for every
$\zeta$.

Routing on old edges is unchanged. On a new edge, the upper bound is immediate
from \equationref{eq:p3-new-edge}. Moreover, $Q_{n r} = Q_n \subseteq [n]$:
hence a residue $q + i n$ belongs to $Q_{n r}$ only when $i = 0$, in which case
the transition equals $q$; and every $u \in Q_{n r}$ satisfies $u \bmod n = u
\in Q_n$. This proves both routing clauses.

For the moving horizon, \equationref{eq:p3-pre-horizon} shows that every scale
greater than $n$ has the state set $\mathcal{F}_n$ and the correct output map,
as required when $\mathcal{B} = \emptyset$. The only old sink that might fail
the strict inequality was $n$ itself, and it has just acquired all its outgoing
edges. All new sinks are of the form $n r > n$. Thus every sink is strictly
beyond the new position, and its output map is injective.

It remains to verify that every pair of distinct states is settled. First take
distinct states $q_1, q_2$ at a scale already present before preprocessing, and
let $R$ be their old divergence. By the induction hypothesis,
\begin{equation}
  \label{eq:p3-old-div-bound}
  m R + \max(q_1, q_2) \leq n - 1.
\end{equation}
By \lemmaref{lem:p3-div-semantics}, an old continuation separates the pair at
row $R$. If that continuation ended at a sink other than $n$, it is still a
continuation in the new fragment. If it ended at $n$, extend it by any one of
the new edges. At the two positions in \equationref{eq:p3-old-div-bound} the
remaining row upon reaching scale $n$ is $0$, and \equationref{eq:p3-new-edge}
together with \equationref{eq:p3-pre-horizon} preserves the old terminal
output. Hence the extended continuation still separates the pair at row $R$.
Applying \lemmaref{lem:p3-div-semantics} in the new fragment gives
$\Delta_m(q_1, q_2) \leq R$, so
\[
  m \Delta_m(q_1, q_2) + \max(q_1, q_2) \leq n - 1.
\]
Past consistency, routing and injectivity at the new sinks have already been
proved; \corollaryref{cor:p3-div-witness} therefore identifies the two outputs
at row $\Delta_m(q_1, q_2)$ with distinct observed symbols. This is
\equationref{eq:p3-settled}.

Finally, suppose that a scale $n r$ has just been adjoined. Its distinct states
$q_1, q_2 \in Q_{n r} = \mathcal{F}_n$ have different symbols, lie below $n$,
and, since $n r$ is a sink, satisfy $\Delta_{n r}(q_1, q_2) = 0$. Thus the
pairs of the new scale are settled as well. We have proved the loop invariant
immediately after line 2. At this moment $\Delta_m = d_m$ at every scale, which
is the initial case of \equationref{eq:p3-frozen-div}.

\subsection{Preservation by the Update Cascade}
\label{app:p3-cascade-proof}

The proof for a single iteration is longer than in the chain-shaped case of TP
because the scale being processed can have several incoming edges. We divide it
into four steps. The first isolates the genuinely new point: all active entries
into a scale carry the same past subsequence, and a state can be created only
after the cascade has activated \emph{every} incoming edge. The remaining steps
use this fact to treat rerouting, creation and the end of the loop,
respectively.

\subsubsection{Entry-Path Compatibility and the First-Edge Property}
\label{app:p3-cascade-entry}

Fix the iteration of line 3 at a scale $m$, and assume the loop invariant
before that iteration. Put
\[
  u \coloneqq n \bmod m, \quad \ell \coloneqq \floor{n / m}.
\]
The only nontrivial branch is $u \notin Q_m$. Call an incoming edge $e = (p, r)
\in \mathcal{E}$ with $p r = m$ \emph{active} in this iteration when $n \bmod p
\in Q_p$ at the beginning of the iteration at $m$; thus $\mathcal{I}$ is
exactly the set of active incoming edges. At this point the iteration at $p$ is
already complete. Hence an inactive incoming edge satisfies $n \bmod p \notin
Q_p$ even after that update; in particular, the iteration at $p$ did not create
the residue $n \bmod p$. For each $e = (p, r) \in \mathcal{I}$, define
\[
  v_e \coloneqq n \bmod p, \quad i_e \coloneqq \floor{n / p} \bmod r, \quad q_e
  \coloneqq \delta_p^r(v_e, i_e).
\]
Thus $v_e \in Q_p$, and $q_e$ is the target tested for the edge $e$ in line 8.
The next lemma records the compatibility shared by this family of targets. Its
last two claims are the points for which the multiplicity of entry paths
matters.

\begin{lemma}
\label{lem:p3-first-edge}
Suppose $u \notin Q_m$ at the beginning of the iteration at $m$.
\begin{enumerate}
  \item For every $e \in \mathcal{I}$, the corresponding target $q_e$ defined
    above satisfies $q_e < u$ and, for every continuation $\gamma \in
    \mathcal{C}_m$ and every $s < \ell$,
    \begin{equation}
      \label{eq:p3-sibling-data}
      \PEval_\gamma(q_e, s) = x[m s + u].
    \end{equation}
  \item Consequently, for all $e, f \in \mathcal{I}$,
    \begin{equation}
      \label{eq:p3-sibling-compat}
      d_m(q_e, q_f) \geq \ell.
    \end{equation}
  \item If there is an incoming edge $(p, r) \in \mathcal{E}$ with $p r = m$
    for which $n \bmod p \notin Q_p$ at the beginning of the iteration at $m$
    (equivalently, after the completed iteration at $p$), then the symbol $a$
    occurs in $x[:n]$, and line 12 is not executed for any edge of
    $\mathcal{I}$ during the iteration at scale $m$.
  \item Hence, if line 12 does fire, it does so while examining the first edge
    of $\mathcal{I}$ (in the order used by line 6), every edge into $m$ belongs
    to $\mathcal{I}$, and
    \begin{equation}
      \label{eq:p3-all-refuted}
      \Eval(m \ell + q_e) \neq a \quad \text{for every } e \in \mathcal{I}.
    \end{equation}
\end{enumerate}
\end{lemma}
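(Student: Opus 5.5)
The plan is to treat the four claims in order, working throughout from the loop invariant before the iteration at $m$: past consistency, routing, and the moving horizon. Two facts carry most of the weight. The first is the route factorization \equationref{eq:p3-route-factor}. The second is the frozen-table identity \equationref{eq:p3-frozen-div}, which holds because the states of $Q_m$ at the beginning of the iteration at $m$ are exactly those present when line 2 ran.

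\textbf{Claim 1.} For $q_e < u$: routing gives $q_e \leq v_e + i_e p = n \bmod m = u$. Equality is excluded because $q_e \in Q_m$ while $u \notin Q_m$.

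For \equationref{eq:p3-sibling-data}, fix an entry path $\alpha$ to $p$ and let $\beta$ be $\alpha$ extended by $e$. Since $v_e \in Q_p$, claim 2 of \lemmaref{lem:p3-route-map} gives $\rho_\alpha(v_e) = v_e$. The digit of $u$ at $e$ is $i_e$, so $\rho_\beta(u) = q_e$. Then \equationref{eq:p3-route-factor} yields
\[
  \PEval_\gamma(q_e, s) = \Eval_{\beta\gamma}(m s + u).
\]
For $s < \ell$ the position $m s + u$ is smaller than $n$, so past consistency identifies the right-hand side with $x[m s + u]$.

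\textbf{Claim 2.} Claim 1 shows that $q_e$ and $q_f$ have equal partial evaluations along every continuation at every row below $\ell$. Hence $D_\gamma(q_e, q_f) \geq \ell$ for every $\gamma$. Injectivity at sinks follows from the moving-horizon invariant, so \lemmaref{lem:p3-div-semantics} gives $\Delta_m(q_e, q_f) \geq \ell$. This equals $d_m(q_e, q_f)$ by \eqref{eq:p3-frozen-div}. The case $q_e = q_f$ is trivial, since then the value is $\infty$.

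\textbf{Claim 3 (main obstacle).} Let $(p, r)$ be an inactive incoming edge, and fix an entry path to $p$. Along this path take the first scale $p''$ at which $n \bmod p''$ is not a state. Its predecessor on the path is active, and the iteration at $p''$ has completed without creating $n \bmod p''$. So at that iteration the examined entry either passed the test of line 8 or was rerouted in line 10. In both cases some stored state $q''$ satisfies $x[p'' \ell'' + q''] = a$ with $p'' \ell'' + q'' < n$, where $\ell'' \coloneqq \floor{n / p''}$. This already shows that $a$ occurs in $x[:n]$.

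To show that no creation happens at $m$, I will exhibit a candidate that passes line 9 at every edge of $\mathcal{I}$. Let $\beta$ be the entry path through $(p, r)$ and put $q^* \coloneqq \rho_\beta(u)$.
\begin{itemize}
  \item By claim 1 of \lemmaref{lem:p3-route-map}, $q^* < u$.
  \item By the same factorization as in Claim 1, $q^*$ agrees with $x[m s + u]$ at all rows $s < \ell$. Hence $d_m(q^*, q_e) \geq \ell$ for every $e \in \mathcal{I}$, by the argument of Claim 2.
  \item It remains to show $x[m \ell + q^*] = a$. The plan is to follow the route of $u$ along $\beta$. At $p''$ it passes through $q''$. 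Beyond $p''$ it uses the same digits as the position $p'' \ell'' + q''$, namely the digits of $n$. Applying routing equality and heredity to the intermediate states, the positions $m \ell + q^*$ and $p'' \ell'' + q''$ should share a common continuation evaluation. Past consistency then transfers the value $a$ from one to the other.
\end{itemize}
The last item is the delicate bookkeeping step, and the one I expect to be hardest. I would first try to prove it by induction along $\beta$ from $p''$ to $m$.

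\textbf{Claim 4.} If line 12 fires, Claim 3 (by contraposition) forces every incoming edge to be active, so every edge into $m$ lies in $\mathcal{I}$.

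Suppose an earlier edge $f \in \mathcal{I}$ had passed line 8 or been rerouted. Its resulting target agrees with $q_f$, and hence with $q_e$, below row $\ell$: agreement is transitive through the pinned data of Claims 1 and 2. That target also carries $a$ at row $\ell$. So it would pass both tests of line 9 at $e$, and no creation would occur. Therefore a creation can only happen at the first edge of $\mathcal{I}$.

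At that first edge the search of line 9 failed. By Claim 2 every $q_f$ with $f \in \mathcal{I}$ satisfies the divergence condition and is therefore a candidate there. The failure of the search thus means $\Eval(m \ell + q_f) \neq a$ for every $f \in \mathcal{I}$, which is \equationref{eq:p3-all-refuted}.
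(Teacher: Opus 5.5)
Your proposal is correct and follows the paper's proof claim by claim. In Claim~3 the paper uses the same frontier argument, but it routes the shifted residue $w \coloneqq u - (n \bmod p'') + q''$ rather than $u$; since $m\ell + w = p''\ell'' + q''$ is exactly the observed occurrence of $a$, the row-$\ell$ value is a one-line route factorization, and your ``delicate bookkeeping step'' reduces to $\rho_\beta(w) = \rho_\beta(u)$. Routing $u$ itself gives you the agreement below row $\ell$ for free, but it implicitly needs the current entry of the edge into $p''$ on your path to still hold $q''$ when $m$ is processed; this holds because that table is written only at the iterations at its source and at $p''$.
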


\begin{proof}
Fix $e = (p, r) \in \mathcal{I}$ and an entry path $\alpha$ to $p$. Since $v_e
\in Q_p$, claim 2 of \lemmaref{lem:p3-route-map} gives
\[
  \rho_\alpha(v_e) = v_e.
\]
For $s < \ell$, let $t_s \coloneqq m s + u$. Because $p$ divides $m$,
\[
  t_s \bmod p = u \bmod p = v_e, \quad \floor{t_s / p} \bmod r = \floor{u / p}
  = i_e.
\]
Thus, for the entry path $\alpha e$ obtained by appending $e$ to $\alpha$,
\equationref{eq:p3-route} gives
\[
  \rho_{\alpha e}(u) = \delta_p^r(\rho_\alpha(v_e), i_e) = q_e.
\]
By \equationref{eq:p3-route-factor} and past consistency, for every $\gamma \in
\mathcal{C}_m$,
\[
  \PEval_\gamma(q_e, s) = \PEval_\gamma(\rho_{\alpha e}(u), s) = \Eval_{\alpha
  e \gamma}(t_s) = x[t_s],
\]
because $t_s < m \ell + u = n$. This proves
\equationref{eq:p3-sibling-data}. Routing also gives
\[
  q_e \leq v_e + i_e p = u.
\]
The inequality is strict because $q_e \in Q_m$ whereas $u \notin Q_m$.

\equationref{eq:p3-sibling-data} says that the partial evaluations of $q_e$ and
$q_f$ agree along every continuation at every row below $\ell$. By
\lemmaref{lem:p3-div-semantics}, their current divergence is therefore at least
$\ell$. The iteration has not changed $Q_m$ or any transition out of $m$, and
earlier iterations changed only smaller scales; hence
\equationref{eq:p3-frozen-div} identifies this current divergence with the
frozen value $d_m$ used by line 9. This proves
\equationref{eq:p3-sibling-compat}.

We next prove claim 3. Let $(p, r)$ be an incoming edge furnished by its
premise. Thus $n \bmod p \notin Q_p$ after the completed iteration at $p$: that
update did not create the residue $n \bmod p$. Choose an entry path from the
root to $p$, and let $k$ be its first scale for which, after the iteration at
$k$,
\[
  u_k \coloneqq n \bmod k \notin Q_k.
\]
Such a $k$ exists. If $h$ is its predecessor on the path, then $n \bmod h \in
Q_h$ after the iteration at $h$ (and the root residue $0$ is always present).
Hence the edge $(h, k / h)$ entering $k$ was active in the iteration at $k$.
Put $L \coloneqq \floor{n / k}$, and let $z_k$ be the final target of the
corresponding transition after that iteration. Since $u_k \notin Q_k$ after the
iteration, that iteration did not create it. Thus either the target already
passed line 8, or line 10 rerouted it to a state passing line 9. In both cases
$z_k < u_k$, and the compatibility test, together with the data consistency
then in force, gives the permanent data identities
\begin{equation}
  \label{eq:p3-frontier-data}
  x[k j + z_k] = x[k j + u_k] \quad (j < L), \quad x[k L + z_k] = a.
\end{equation}
Indeed, in the no-rerouting case the active entry itself routes the residue
$u_k$; in the rerouting case \lemmaref{lem:p3-div-semantics} supplies the
equality at every row below $L$. All indices in the first identity and the
index $k L + z_k$ are smaller than $n$; in particular, the second identity of
\equationref{eq:p3-frontier-data} places an occurrence of $a$ at a position
smaller than $n$, which proves the first assertion of claim 3.

Let $\hat{\alpha}$ be the entry path to $m$ obtained by appending the rest of
the chosen path and the inactive edge $(p, r)$ to its prefix $\alpha_k$ ending
at $k$. Set
\[
  w \coloneqq u - u_k + z_k.
\]
Since $k$ divides $m$, we have $u_k = n \bmod k = u \bmod k$. Write $b
\coloneqq \floor{u / k}$. Then
\[
  u = k b + u_k, \quad w = k b + z_k.
\]
Since $z_k < u_k$, we have $0 \leq w < u < m$, so $z \coloneqq
\rho_{\hat{\alpha}}(w)$ is defined. Claim 1 of \lemmaref{lem:p3-route-map}
gives
\[
  z \leq w < u.
\]

With $b$ as above, $L = (m / k) \ell + b$. Also,
\[
  m \ell + w = n - u_k + z_k = k L + z_k < n.
\]
For any $\gamma \in \mathcal{C}_m$, \equationref{eq:p3-route-factor}, current
past consistency and the second identity of
\equationref{eq:p3-frontier-data} therefore give
\[
  \PEval_\gamma(z, \ell) = \Eval_{\hat{\alpha} \gamma}(m \ell + w) = x[k L +
  z_k] = a.
\]
In particular, $\Eval(m \ell + z) = a$; this evaluation is of an old position
because $z < u$. For $s < \ell$, put $J_s \coloneqq (m / k) s + b < L$. Then $m
s + w = k J_s + z_k$, and \equationref{eq:p3-route-factor} together with the
first identity of \equationref{eq:p3-frontier-data} gives
\[
  \PEval_\gamma(z, s) = x[k J_s + z_k] = x[k J_s + u_k] = x[m s + u].
\]
Comparing with \equationref{eq:p3-sibling-data} and using
\lemmaref{lem:p3-div-semantics} once more, we obtain
\[
  d_m(z, q_e) \geq \ell
\]
for every active target $q_e$. Thus, for every edge $e \in \mathcal{I}$, either
line 8 passes or line 9 supplies $z$ as a candidate. Consequently, line 12 is
not executed for any edge of $\mathcal{I}$ during the iteration at scale $m$.
This proves claim 3.

It remains to establish the first-edge assertion. Before any edge of
$\mathcal{I}$ is processed, \equationref{eq:p3-sibling-compat} holds among all
of their targets. If an edge passes line 8, its target has row-$\ell$ value $a$
and is compatible through the preceding rows with every still-unprocessed
target. If instead line 10 reroutes it to some $z$, then $d_m(z, q_e) \geq
\ell$ and $\Eval(m \ell + z) = a$; transitivity of the pathwise equalities
below row $\ell$, together with \equationref{eq:p3-sibling-compat}, makes $z$
compatible with every still-unprocessed target as well. In either case, every
later refuted target has an admissible candidate in line 9 and cannot cause a
creation. A creation must therefore occur, if at all, on the first edge
examined.

By claim 3, such a creation implies that no incoming edge is inactive, so
$\mathcal{I}$ contains every edge into $m$. Finally, if some sibling target
$q_e$ had $\Eval(m \ell + q_e) = a$, then \equationref{eq:p3-sibling-compat}
would make $q_e$ a candidate for the first edge, contradicting the failure of
line 9 that triggers line 12. This proves \equationref{eq:p3-all-refuted}.
\end{proof}

\subsubsection{Preservation under Rerouting}
\label{app:p3-cascade-reroute}

We now verify that the loop invariant survives the iterations of line 3 whose
repairs are all reroutings. The iteration under consideration is at the scale
$m$, with $u = n \bmod m$ and $\ell = \floor{n / m}$ as in
\sectionref{app:p3-cascade-entry}; we extend the edge notation of that
subsubsection from $\mathcal{I}$ to all of $\mathcal{E}$: an arbitrary edge $e
= (p, r) \in \mathcal{E}$ has the target scale $m_e \coloneqq p r$, the source
residue $v_e \coloneqq n \bmod p$ and the digit $i_e \coloneqq \floor{n / p}
\bmod r$, so that $m_e = m$ for the edges of $\mathcal{I}$. Recall from claim 4
of \lemmaref{lem:p3-first-edge} that a creation, if one occurs, occurs at the
first edge of $\mathcal{I}$ examined, before any assignment of line 10, and
line 15 then abandons the loop; an iteration that executes line 10 therefore
never executes line 12, and its entire effect on the state is a sequence of
assignments of line 10, one at each refuted edge of $\mathcal{I}$, in the order
of examination. Note also that the frozen-divergence identity
\equationref{eq:p3-frozen-div} persists through these assignments: each of them
modifies an entry at a source scale $p < m$, which the recurrence of Diverge at
the scales at least $m$ never inspects. We prove that a single assignment of
line 10 preserves the loop invariant (\lemmaref{lem:p3-reroute-step}), then
record the effect of a complete creation-free iteration
(\lemmaref{lem:p3-reroute-iter}); state creation is taken up in
\sectionref{app:p3-cascade-create}.

The \emph{current entry} of an edge $e = (p, r) \in \mathcal{E}$ is the table
entry $\delta_p^r(v_e, i_e)$: the entry through which the transition recursion
of the position $n$ refines the residue of $n$ from scale $p$ to scale $m_e$
(the entry depends on the current position $n$, whence the name). The tests and
repairs of the iteration all happen at current entries: line 7 reads the
current entry of the edge under examination for the test of line 8 --- its
value is still the target $q_e$ of \sectionref{app:p3-cascade-entry}, since an
entry is modified only when its own edge is processed --- and the assignments
of lines 10 and 17 write current entries and nothing else. Accordingly, for a
path $\zeta \in \mathcal{C}_1$ containing $e$ --- so that $\zeta = \alpha e
\gamma$, where $\alpha$ is the prefix of $\zeta$ ending at $p$ and $\gamma \in
\mathcal{C}_{m_e}$ is the continuation of $\zeta$ from $m_e$ --- say that the
run of $\Eval_\zeta(t)$ \emph{crosses} the current entry of $e$ when
\[
  \rho_\alpha(t \bmod p) = v_e \quad \text{and} \quad \floor{t / p} \bmod r =
  i_e,
\]
i.e.\ when the transition recursion of $\Eval_\zeta(t)$ arrives at the scale
$p$ in the state $v_e$ and consumes the digit $i_e$ at the edge $e$. The
following lemma identifies the residue that a current entry refines, and bounds
the traffic through the entry: only positions whose residue modulo $m_e$ is at
least that of $n$ can cross it, observed positions cross it only at rows below
that of $n$, and the run of the position $n$ itself, when the edge's source
pair is active, cannot avoid it. The lemma is stated for an arbitrary
configuration arising during the round, and will serve again in the creation
analysis of \sectionref{app:p3-cascade-create}.

\begin{lemma}
\label{lem:p3-current-entry}
Let $e = (p, r) \in \mathcal{E}$ with target scale $m_e = p r$. Then
\begin{equation}
  \label{eq:p3-entry-residue}
  n \bmod m_e = v_e + p i_e.
\end{equation}
Moreover, let $\zeta = \alpha e \gamma \in \mathcal{C}_1$ contain $e$, where
$\alpha$ is the prefix of $\zeta$ ending at $p$ and $\gamma \in
\mathcal{C}_{m_e}$.
\begin{enumerate}
  \item Suppose the upper-bound part of routing holds along $\alpha$. If the
    run of $\Eval_\zeta(t)$, $t \in \mathbb{N}$, crosses the current entry of
    $e$, then $t \bmod m_e \geq n \bmod m_e$ and
    \begin{equation}
      \label{eq:p3-entry-factor}
      \Eval_\zeta(t) = \PEval_\gamma(\delta_p^r(v_e, i_e), \floor{t / m_e}).
    \end{equation}
    In particular, if $t < n$, then $\floor{t / m_e} < \floor{n / m_e}$.
  \item Suppose the full routing claim of \lemmaref{lem:p3-inv} holds along
    $\alpha$, and $v_e \in Q_p$. Then the run of $\Eval_\zeta(n)$ crosses the
    current entry of $e$; in particular, \equationref{eq:p3-entry-factor} holds
    at $t = n$, with the row $\floor{n / m_e}$.
\end{enumerate}
\end{lemma}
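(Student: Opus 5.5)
The identity \equationref{eq:p3-entry-residue} is pure arithmetic, and I would dispose of it first. Since $m_e = p r$, we have $n \bmod m_e = (n \bmod p) + p \cdot (\floor{n / p} \bmod r)$, which is $v_e + p i_e$ by definition.

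For claim 1, fix $t$ whose run along $\zeta = \alpha e \gamma$ crosses the current entry. The run uses $\alpha$ only through the digits of $t \bmod p$, as in the proof of claim 3 of \lemmaref{lem:p3-route-map}, because every scale on $\alpha$ divides $p$. So it reaches $p$ in the state $\rho_\alpha(t \bmod p) = v_e$. At the edge $e$ it consumes the digit $\floor{t / p} \bmod r = i_e$ and moves to $\delta_p^r(v_e, i_e)$ at scale $m_e$. The remaining row there is $\floor{\floor{t / p} / r} = \floor{t / m_e}$, and the rest of the run is exactly the recursion defining $\PEval_\gamma$. This gives \equationref{eq:p3-entry-factor}.

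For the residue inequality, claim 1 of \lemmaref{lem:p3-route-map} (it needs only the upper-bound part of routing) gives $v_e = \rho_\alpha(t \bmod p) \leq t \bmod p$. Adding $p i_e$ and using the digit identity yields
\[
t \bmod m_e = (t \bmod p) + p i_e \geq v_e + p i_e = n \bmod m_e.
\]
The row statement follows. Write $t = m_e \floor{t / m_e} + (t \bmod m_e)$ and likewise for $n$. If $t < n$ while $\floor{t / m_e} \geq \floor{n / m_e}$, then the residue inequality would force $t \geq n$, a contradiction.

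For claim 2, assume full routing along $\alpha$ and $v_e \in Q_p$. Claim 2 of \lemmaref{lem:p3-route-map} gives $\rho_\alpha(v_e) = v_e$. Since $n \bmod p = v_e$, the run of $\Eval_\zeta(n)$ arrives at $p$ in the state $\rho_\alpha(n \bmod p) = v_e$. It then consumes the digit $\floor{n / p} \bmod r = i_e$ by definition, so it crosses the current entry, and \equationref{eq:p3-entry-factor} at $t = n$ follows from claim 1, whose hypothesis is implied by full routing.

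No step is a real obstacle. The only point needing care is the first step of claim 1: checking that the state reached at $p$ is determined by $t \bmod p$ alone, that is, that the digit at each edge $(a_j, s_j)$ of $\alpha$ agrees for $t$ and $t \bmod p$. This holds because $a_j s_j$ divides $p$, the same divisibility fact used in \lemmaref{lem:p3-route-map}.
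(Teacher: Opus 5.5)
Your proposal is correct and follows essentially the same route as the paper: the residue identity by mixed-radix arithmetic, the crossing factorization and the residue inequality via claims 1 and 3 of \lemmaref{lem:p3-route-map}, the row bound by comparing $m_e \floor{t / m_e} + (t \bmod m_e)$ with the same decomposition of $n$, and claim 2 via claim 2 of \lemmaref{lem:p3-route-map}. The only difference is cosmetic. You re-derive the route factorization inline, and only under the crossing conditions. The paper instead cites claim 3 of \lemmaref{lem:p3-route-map} for the continuation $e \gamma$, unfolds its first step for arbitrary $t$, and only then substitutes the crossing conditions.
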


\begin{proof}
We first prove \equationref{eq:p3-entry-residue}. Denote $u' \coloneqq n \bmod
m_e$. Since $p$ divides $m_e$, we have $u' \bmod p = n \bmod p = v_e$; and
dividing $n = m_e \floor{n / m_e} + u'$ by $p$ gives $\floor{n / p} = r
\floor{n / m_e} + \floor{u' / p}$ with $\floor{u' / p} < r$, whence $i_e =
\floor{n / p} \bmod r = \floor{u' / p}$. Combining the two identities, $u' =
v_e + p i_e$, which is \equationref{eq:p3-entry-residue}.

Turning to the claims, write $t' \coloneqq t \bmod p$ and $\ell' \coloneqq
\floor{t / p}$, so that $t = p \ell' + t'$ and, since $m_e = p r$,
\[
  t \bmod m_e = t' + p \cdot (\ell' \bmod r).
\]
The continuation of $\zeta$ from $p$ is the edge $e$ followed by $\gamma$;
denote it $e \gamma \in \mathcal{C}_p$. Claim 3 of
\lemmaref{lem:p3-route-map}, applied to the entry path $\alpha$ and the
continuation $e \gamma$, gives $\Eval_\zeta(t) = \PEval_{e
\gamma}(\rho_\alpha(t'), \ell')$; and unfolding the first step of the recursion
\equationref{eq:p3-peval}, which consumes the digit $\ell' \bmod r$ at the edge
$e$ and leaves the row $\floor{\ell' / r} = \floor{t / m_e}$, we obtain
\begin{equation}
  \label{eq:p3-entry-unfold}
  \Eval_\zeta(t) = \PEval_\gamma(\delta_p^r(\rho_\alpha(t'), \ell' \bmod r),
  \floor{t / m_e}).
\end{equation}

For claim 1, suppose the run crosses the current entry: $\rho_\alpha(t') = v_e$
and $\ell' \bmod r = i_e$. Substituting the two crossing conditions into
\equationref{eq:p3-entry-unfold} yields
\equationref{eq:p3-entry-factor}. Claim 1 of \lemmaref{lem:p3-route-map} gives
$v_e = \rho_\alpha(t') \leq t'$, so
\[
  t \bmod m_e = t' + p \cdot (\ell' \bmod r) \geq v_e + p i_e = u'.
\]
Finally, if $t < n$, then
\[
  m_e \floor{t / m_e} = t - (t \bmod m_e) \leq t - u' < n - u' = m_e \floor{n /
  m_e},
\]
so $\floor{t / m_e} < \floor{n / m_e}$.

For claim 2, take $t = n$, so that $t' = v_e$ and $\ell' = \floor{n / p}$,
whence $\ell' \bmod r = i_e$. Since $v_e \in Q_p$ and the full routing claim
holds along $\alpha$, claim 2 of \lemmaref{lem:p3-route-map} gives
$\rho_\alpha(t') = \rho_\alpha(v_e) = v_e$. Thus both crossing conditions hold,
and \equationref{eq:p3-entry-factor} at $t = n$ follows by claim 1, whose
routing hypothesis is implied by the full claim.
\end{proof}

The workhorse of this subsubsection is the following preservation lemma for a
single assignment of line 10. Its third claim records what the repair achieves
at its own edge: every evaluation of the position $n$ routed through the
repaired edge now returns the observed symbol. (In the chain-shaped state of
TP, this already restores $\Eval(n) = a$ outright; in the scale fragment, the
runs of $n$ through the other branches remain unrepaired until their own scales
are reached, which is why the claim is confined to the paths through $e$.)

\begin{lemma}
\label{lem:p3-reroute-step}
Suppose that $u \notin Q_m$ and that, at some point of the iteration at $m$,
line 10 commits the state $\tilde{q} =: z$ at the edge $e = (p, r) \in
\mathcal{I}$. Assume that at this moment the four parts of the loop invariant
hold and $\Delta_m = d_m$ on $Q_m$. Then:
\begin{enumerate}
  \item $q_e < u$ and $z < u$; moreover $x[m \ell + q_e] \neq a$ and $x[m \ell
    + z] = a$ --- in particular, the symbol $a$ occurs in $x[:n]$;
  \item after the assignment $\delta_p^r(v_e, i_e) \gets z$, the four parts of
    the loop invariant hold and $\Delta_m = d_m$ on $Q_m$, with $\mathcal{B}$
    unchanged;
  \item in the configuration after the assignment, $\Eval_\zeta(n) = a$ for
    every $\zeta \in \mathcal{C}_1$ containing $e$.
\end{enumerate}
\end{lemma}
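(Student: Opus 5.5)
The plan is to prove the three claims in order. The third claim is a direct consequence of \lemmaref{lem:p3-current-entry}, and the substantive work is in the second claim.

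\textbf{Claim 1.} The bound $q_e < u$ and the past identities of $q_e$ come from claim 1 of \lemmaref{lem:p3-first-edge}. Since line 10 fires only after line 8 fails, $\Eval(m \ell + q_e) \neq a$. The position $m \ell + q_e$ is smaller than $n$, so past consistency turns this into $x[m \ell + q_e] \neq a$. For $z$, the divergence condition $d_m(z, q_e) \geq \ell$ combines with $\Delta_m = d_m$ and relaxed distinguishability at the start of the round. The pair $\{z, q_e\}$ consists of two states of $Q_m$ that both predate the iteration. It cannot be saturated, because its maximum is at most $\max(z, q_e)$, which is not $u$ ($u \notin Q_m$). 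So the pair is settled, which gives $m d_m(z, q_e) + \max(z, q_e) \leq n - 1$, and $d_m \geq \ell$ then forces $z < u$. Now $m \ell + z < n$, so the second test of line 9 reads $x[m \ell + z] = a$ by past consistency. This shows in particular that $a$ occurs in $x[:n]$.

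\textbf{Claim 2.} The assignment changes a single entry at a scale $p < m$. It creates no states and touches no sink, so moving horizon and $\mathcal{B}$ are unchanged. The frozen identity $\Delta_m = d_m$ survives because the recurrence at scales $\geq m$ never reads the table at $p$. Routing is preserved: the upper bound holds since $z < u = v_e + p i_e$ by \equationref{eq:p3-entry-residue}, and equality is not required since $u \notin Q_m$. Heredity is untouched because state sets are unchanged.

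For past consistency, take $t < n$ and $\zeta \in \mathcal{C}_1$. If the run of $\Eval_\zeta(t)$ does not cross the current entry of $e$, its value is unchanged. If it does cross, write $\zeta = \alpha e \gamma$. Claim 1 of \lemmaref{lem:p3-current-entry} gives row $s \coloneqq \floor{t / m} < \ell$, and the new value is $\PEval_\gamma(z, s)$ where the old value was $\PEval_\gamma(q_e, s)$. Claim 1 of \lemmaref{lem:p3-div-semantics}, applied with $d_m(z, q_e) \geq \ell$ at scale $m$ (where the tables of the old configuration are unchanged), shows the two are equal.

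Relaxed distinguishability is the main obstacle, since changing an entry at $p$ changes $\Delta$ at $p$ and at every smaller scale. Pairs at scales $\geq m$ are unaffected. For a settled pair at a scale $m' \leq p$, I would invoke \corollaryref{cor:p3-div-stability} with $N = n - 1$. Both configurations satisfy routing and data consistency through $n - 1$ and have injective sink outputs, so the settled value, whose witness is observed, transfers verbatim. For a saturated pair, the defining condition \equationref{eq:p3-saturated} mentions only data and state sets, not tables, so it persists unchanged. Hence every pair stays settled or saturated.

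\textbf{Claim 3.} Take $\zeta$ containing $e$. Since $e \in \mathcal{I}$ we have $v_e \in Q_p$, and routing holds along $\alpha$. Claim 2 of \lemmaref{lem:p3-current-entry} then gives $\Eval_\zeta(n) = \PEval_\gamma(z, \ell)$. Pinning (\corollaryref{cor:p3-pinning}) with $N = n - 1$ and $m \ell + z < n$ evaluates this as $x[m \ell + z]$, which equals $a$ by claim 1.
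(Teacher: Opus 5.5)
Your proposal is correct and follows essentially the same route as the paper's proof. The ingredients match: the settled-versus-saturated analysis of $\{z,q_e\}$ gives $z<u$; locality of the single modified entry gives routing and $\Delta_m=d_m$; the crossing analysis of \lemmaref{lem:p3-current-entry} together with \lemmaref{lem:p3-div-semantics} gives past consistency; \corollaryref{cor:p3-div-stability} gives relaxed distinguishability; and pinning gives claim 3. The only slip is that you call $\{z,q_e\}$ a pair of distinct states without first noting $z\neq q_e$, which holds because otherwise the test of line 9 would read $x[m\ell+q_e]\neq a$; the paper checks this. A smaller point: the stability argument works at every scale, so your case split by scale is unnecessary.
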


\begin{proof}
We refer to the configurations immediately before and immediately after the
assignment as the \emph{old} and the \emph{new} configuration, and decorate
configuration-dependent quantities accordingly when the distinction matters;
the hypotheses of the lemma concern the old one. As noted above, the current
entry of $e$ still holds the target $q_e = \delta_p^r(v_e, i_e)$ of
\sectionref{app:p3-cascade-entry} when line 7 reads it, and no entry is written
between that read and the assignment; the two configurations therefore differ
in exactly one table entry --- the current entry of $e$, overwritten from $q_e$
to $z$.

\textbf{Claim 1.} Since the target scale of $e$ is $m$,
\equationref{eq:p3-entry-residue} reads $u = v_e + p i_e$. The upper-bound part
of routing at the current entry therefore gives $q_e \leq v_e + i_e p = u$, and
the inequality is strict because $q_e \in Q_m$ while $u \notin Q_m$. Hence $m
\ell + q_e < m \ell + u = n$: the position evaluated by line 8 is observed,
past consistency makes every run of Eval$(m \ell + q_e)$ return $x[m \ell +
q_e]$, and the failure of the test of line 8 means exactly $x[m \ell + q_e]
\neq a$. In particular $z \neq q_e$: were the two equal, the run of Eval$(m
\ell + z)$ performed by line 9 would have returned $x[m \ell + q_e] \neq a$,
contrary to its success. The pair $\{z, q_e\}$ thus consists of two distinct
states of $Q_m$; it is not saturated --- the larger of two states cannot be the
residue $n \bmod m = u \notin Q_m$ --- so relaxed distinguishability makes it
settled. By the hypothesis $\Delta_m = d_m$ and the divergence condition
committed by line 9, $\Delta_m(z, q_e) = d_m(z, q_e) \geq \ell$, so
\equationref{eq:p3-settled} gives
\[
  m \ell + \max(z, q_e) \leq m \Delta_m(z, q_e) + \max(z, q_e) \leq n - 1 < n =
  m \ell + u,
\]
whence $z < u$. Hence $m \ell + z < n$ is an observed position as well: the run
of line 9 returned $x[m \ell + z]$, its success means $x[m \ell + z] = a$, and
the symbol $a$ occurs in $x[:n]$, at the position $m \ell + z$. This proves
claim 1.

\textbf{Claim 2.} We first record what the assignment does not change. It
creates and deletes no scale, no edge and no state: the fragment, the
continuation sets $\mathcal{C}_k$, the state sets, the output functions, the
sinks and $\mathcal{B}$ are the same in both configurations. The
moving-horizon part of the loop invariant therefore persists verbatim, and the
output maps remain injective at the sinks, so \lemmaref{lem:p3-div-semantics}
applies in both configurations. Moreover, for $\gamma \in \mathcal{C}_m$, the
recursion of $\PEval_\gamma$ reads only tables at scales visited by $\gamma$
--- all at least $m > p$ --- so $\PEval_\gamma$ is the same function in both
configurations. Likewise, for any scale $k > p$, a fresh run of Diverge
computes $\Delta_k$ from the transition tables out of the scales reachable from
$k$ --- all at least $k$ --- and from the state sets at the sinks, none of
which was modified; hence $\Delta_k^{\mathrm{new}} = \Delta_k^{\mathrm{old}}$
for every $k > p$, and in particular the identity $\Delta_m = d_m$ on $Q_m$
persists. It remains to verify the other three parts of the loop invariant.

\emph{Routing.} Only the current entry of $e$ changed. Its new value satisfies
$z < u = v_e + i_e p$ (claim 1 and \equationref{eq:p3-entry-residue}), which is
the upper-bound clause; the equality clause at this entry is vacuous, because
the refined residue $v_e + i_e p = u$ is not a state of $Q_m$; and heredity
constrains the state sets alone, which did not change. The routing clauses at
every other entry are untouched. Hence routing holds in the new configuration.

\emph{Past consistency.} Let $t < n$ and $\zeta \in \mathcal{C}_1$. The run of
$\Eval_\zeta(t)$ reads the modified entry only if $\zeta$ traverses $e$ --- say
$\zeta = \alpha e \gamma$, with $\alpha$ an entry path to $p$ and $\gamma \in
\mathcal{C}_m$ --- and the run crosses the current entry of $e$. The crossing
conditions depend only on the digits of $t$ and on the route $\rho_\alpha$,
which is computed from tables at scales smaller than $p$; hence they hold in
the new configuration iff they hold in the old one. If $\zeta$ does not
traverse $e$, or if the run does not cross the current entry, then every entry
the run reads is unmodified, the run is identical in the two configurations,
and $\Eval_\zeta(t) = x[t]$ by old past consistency. If the run crosses, claim
1 of \lemmaref{lem:p3-current-entry} --- whose hypothesis, the upper-bound part
of routing along $\alpha$, holds in both configurations --- evaluates the two
runs through \equationref{eq:p3-entry-factor} as
\[
  \Eval_\zeta^{\mathrm{old}}(t) = \PEval_\gamma(q_e, \floor{t / m}), \quad
  \Eval_\zeta^{\mathrm{new}}(t) = \PEval_\gamma(z, \floor{t / m}),
\]
at the common row $\floor{t / m} < \floor{n / m} = \ell$ (using $t < n$). Since
$\Delta_m(z, q_e) = d_m(z, q_e) \geq \ell$, claim 1 of
\lemmaref{lem:p3-div-semantics}, applied in the new configuration, makes
$\PEval_\gamma(z, s) = \PEval_\gamma(q_e, s)$ for every $s < \ell$; hence
$\Eval_\zeta^{\mathrm{new}}(t) = \Eval_\zeta^{\mathrm{old}}(t) = x[t]$ in this
case as well.

\emph{Relaxed distinguishability.} Routing and past consistency hold in both
configurations, the fragment is shared, and the output maps are injective at
the sinks; pinning (\corollaryref{cor:p3-pinning}) and stability
(\corollaryref{cor:p3-div-stability}) are therefore available, with $N = n -
1$. Fix a scale $k \in S$ and distinct states $w_1, w_2 \in Q_k$. If the pair
is settled in the old configuration, \corollaryref{cor:p3-div-stability}
preserves its divergence value, and the two clauses of
\equationref{eq:p3-settled} --- statements about the data at observed positions
--- persist with it: the pair is settled in the new configuration. If the pair
is saturated, \equationref{eq:p3-saturated} mentions only the data, the residue
$n \bmod k$ and the symbol $a$, none of which the assignment touches, and it
persists verbatim. Relaxed distinguishability thus holds at every scale, and
claim 2 is proved.

\textbf{Claim 3.} Let $\zeta = \alpha e \gamma \in \mathcal{C}_1$ contain $e$,
with $\alpha$ an entry path to $p$ and $\gamma \in \mathcal{C}_m$. By claim 2,
the full routing claim holds in the new configuration, and $v_e \in Q_p$
because $e \in \mathcal{I}$; claim 2 of \lemmaref{lem:p3-current-entry}
therefore shows that the run of $\Eval_\zeta(n)$ crosses the current entry of
$e$, and \equationref{eq:p3-entry-factor} gives
\[
  \Eval_\zeta(n) = \PEval_\gamma(\delta_p^r(v_e, i_e), \ell) =
  \PEval_\gamma(z, \ell) = x[m \ell + z] = a,
\]
the last two equalities by pinning at the observed position $m \ell + z < n$
and by claim 1.
\end{proof}

We can now dispose of every iteration of the main loop that does not create a
state.

\begin{lemma}
\label{lem:p3-reroute-iter}
Suppose the loop invariant holds immediately before the iteration of line 3 at
the scale $m$, and that this iteration does not execute line 12. Then the loop
invariant holds immediately after the iteration, with $\mathcal{B}$ replaced by
$\mathcal{B} \cup \{m\}$. Moreover:
\begin{enumerate}
  \item if $u \in Q_m$ when line 5 is evaluated, the iteration modifies
    nothing: the assignments of line 17 rewrite the values already stored;
  \item otherwise, $u \notin Q_m$ holds at the end of the iteration as well; in
    the configuration at the end of the iteration, the current entry of every
    edge $e \in \mathcal{I}$ holds a value $w_e$ with $w_e < u$ and $x[m \ell +
    w_e] = a$, and $\Eval_\zeta(n) = a$ for every $\zeta \in \mathcal{C}_1$
    containing an edge of $\mathcal{I}$.
\end{enumerate}
\end{lemma}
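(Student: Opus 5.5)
The plan is to split on the outcome of the test at line 5, as claims 1 and 2 of the statement do. In each case we check the effect of the iteration on the four parts of the loop invariant, and separately the change of $\mathcal{B}$ to $\mathcal{B} \cup \{m\}$, which only affects the moving-horizon clause at $m$.

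\emph{Case $u \in Q_m$.} Lines 6--15 are skipped, and line 17 writes $u$ into the current entry of every $e \in \mathcal{I}$. We show that this entry already holds $u$. By \equationref{eq:p3-entry-residue}, $u = v_e + p i_e$ is the refined residue, and $u \in Q_m$. So the equality clause of routing, part of the loop invariant before the iteration, gives $\delta_p^r(v_e, i_e) = u$, and the assignments are no-ops. The configuration is therefore unchanged, and parts 1--3 persist. For the horizon: if $m > n$, then $m \notin \mathcal{B}$ before the iteration, so $Q_m = \mathcal{F}_n$. Since $u = n \in Q_m = \mathcal{F}_n \subseteq [n]$ is impossible, $m > n$ cannot occur in this case. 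Hence $m \le n$, and adding $m$ to $\mathcal{B}$ imposes no horizon condition.

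\emph{Case $u \notin Q_m$.} Line 12 is not executed, so each edge of $\mathcal{I}$, in order, either passes line 8 (no change) or undergoes one assignment of line 10. We induct along this sequence using \lemmaref{lem:p3-reroute-step}: claim 2 of that lemma keeps the loop invariant and $\Delta_m = d_m$, which are exactly the hypotheses needed at the next assignment. The state sets never change, so $u \notin Q_m$ at the end, and line 17 does not fire.

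For claim 2 of the statement, fix $e \in \mathcal{I}$ and let $w_e$ be its final entry. Entries of other edges are distinct cells, since distinct edges have distinct tables, so $w_e$ is either $z$ from line 10 or the untouched $q_e$. If rerouted, claim 1 of \lemmaref{lem:p3-reroute-step} gives $w_e < u$ and $x[m\ell+w_e] = a$. If unchanged, $q_e < u$ by \lemmaref{lem:p3-first-edge}, and the passed test together with past consistency at $m\ell+q_e < n$ gives $x[m\ell+q_e] = a$. Then $\Eval_\zeta(n) = a$ for $\zeta$ through $e$ follows exactly as in claim 3 of \lemmaref{lem:p3-reroute-step}: claim 2 of \lemmaref{lem:p3-current-entry}, then pinning at $m\ell+w_e < n$.

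For the horizon clause when $m > n$: $Q_m = \mathcal{F}_n$ must become $\mathcal{F}_{n+1}$, which means showing that $a$ occurs in $x[:n]$. This follows from the previous paragraph provided $\mathcal{I} \neq \emptyset$; otherwise, claim 3 of \lemmaref{lem:p3-first-edge} applies to any incoming edge, which exists since $m \neq 1$. The same argument covers the case $m \le n$, where nothing is required.

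The main obstacle is the bookkeeping in the case $u \notin Q_m$: one must ensure that the per-edge guarantee of \lemmaref{lem:p3-reroute-step} survives later assignments at sibling edges. I would handle this by observing that each later assignment rewrites a different table cell, so the value $w_e$ and the pinned data $x[m\ell+w_e]$ are fixed. Routing and past consistency at the end of the iteration then let the evaluation argument be rerun in the final configuration, rather than transported from an intermediate one.
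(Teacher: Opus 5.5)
Your proposal is correct and follows the paper's proof essentially step for step: the no-op argument through the equality clause of routing when $u \in Q_m$ (with $m > n$ excluded because $n \notin \mathcal{F}_n$), the chaining of \lemmaref{lem:p3-reroute-step} along the assignments of line 10, the distinct-tables observation fixing the final entries $w_e$, and rerunning the evaluation of the position $n$ in the final configuration. The only points to tidy up are cosmetic. First, cite explicitly that the base case of your chaining is the frozen-divergence identity \equationref{eq:p3-frozen-div}, which supplies $\Delta_m = d_m$ before the first assignment. Second, for the case $m > n$ you split on whether $\mathcal{I}$ is empty, whereas the paper splits on whether some incoming edge is inactive; the two splits cover the same cases.
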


\begin{proof}
The set $\mathcal{I}$ and the row $\ell$ are computed by line 4 once, at the
beginning of the iteration; for the edges of $\mathcal{I}$ we use the notation
$v_e$, $i_e$, $q_e$ of \sectionref{app:p3-cascade-entry}. The iteration deletes
nothing and, line 12 not being executed, creates nothing either: the fragment,
the continuation sets, the state sets, the output functions and the sinks are
the same in every configuration of the iteration. We treat the two branches of
line 5 in turn.

\emph{The branch $u \in Q_m$.} Lines 6--15 are skipped, and line 16 finds $n
\bmod m = u \in Q_m$, so the iteration consists of the assignments of line 17,
one at the current entry of each edge of $\mathcal{I}$. Let $e = (m', r) \in
\mathcal{I}$, so that $v_e \in Q_{m'}$. By
\equationref{eq:p3-entry-residue}, the residue that the current entry of $e$
refines is $v_e + m' i_e = n \bmod m = u$, a state of $Q_m$; the equality
clause of routing therefore places the entry at $\delta_{m'}^r(v_e, i_e) = u$
already, and the assignment of line 17, which writes $n \bmod m = u$ into it,
rewrites the stored value. The iteration thus modifies nothing, which proves
claim 1, and the four parts of the loop invariant persist verbatim. Enlarging
$\mathcal{B}$ costs nothing either: the only part that consults $\mathcal{B}$
is the moving-horizon claim, which does so at the scales greater than $n$
alone, and here $m \leq n$ --- indeed, $m > n$ would give $u = n \bmod m = n$,
whereas the moving-horizon claim before the iteration ($m \notin \mathcal{B}$)
puts $Q_m = \mathcal{F}_n$, whose elements are smaller than $n$, contradicting
$u \in Q_m$. Hence the loop invariant holds after the iteration with
$\mathcal{B} \cup \{m\}$.

\emph{The branch $u \notin Q_m$: the effect of the iteration.} Suppose now that
line 5 finds $u \notin Q_m$. Since line 12 is not executed, the search of line
9 succeeds at every edge of $\mathcal{I}$ that the test of line 8 refutes, and
the entire effect of the iteration on the state is the resulting sequence of
assignments of line 10 --- one at the current entry of each refuted edge, in
the order used by line 6, and none at the edges that pass the test. These
writes do not collide: an edge into $m$ is determined by its source scale, so
the current entries of distinct edges of $\mathcal{I}$ lie in distinct tables,
and the current entry of an edge of $\mathcal{I}$ is thus written --- at most
once --- during the examination of that edge itself. Finally, $Q_m$ does not
change, so $u \notin Q_m$ holds at the end of the iteration as well, and line
16 skips the assignments of line 17; this is the first assertion of claim 2.

\emph{Chaining the reroute steps.} We claim that at every configuration of the
iteration the following hold: the four parts of the loop invariant, with the
$\mathcal{B}$ of the hypothesis, and the identity $\Delta_m = d_m$ on $Q_m$. At
the beginning of the iteration, the first is the hypothesis of the lemma and
the second is the frozen-divergence identity
\equationref{eq:p3-frozen-div}. The configuration changes only at the
assignments of line 10 and, together with $u \notin Q_m$, the two facts are
exactly the hypotheses of \lemmaref{lem:p3-reroute-step} at the first
assignment; claim 2 of that lemma re-establishes them immediately after the
assignment, hence they are met at the second assignment as well and, continuing
along the assignments in their order of execution, at every configuration of
the iteration --- in particular at its end, where the four parts of the loop
invariant hold with $\mathcal{B}$ unchanged. (If no edge of $\mathcal{I}$ is
refuted, the iteration does not modify the state, and there is nothing to
prove.)

\emph{The final entries.} For $e \in \mathcal{I}$, let $w_e$ denote the value
of the current entry of $e$ at the end of the iteration. We claim that
\[
  w_e < u \quad \text{and} \quad x[m \ell + w_e] = a.
\]
If the test of line 8 refuted $e$, then line 10 committed some state $z$ at $e$
and the assignment made $w_e = z$; claim 1 of \lemmaref{lem:p3-reroute-step}
gives $z < u$ and $x[m \ell + z] = a$. If instead $e$ passed the test, its
current entry was never written, so $w_e = q_e$, and $q_e < u$ by claim 1 of
\lemmaref{lem:p3-first-edge}; the position $m \ell + q_e < m \ell + u = n$ is
therefore observed and, by the past consistency in force when line 8 ran, the
run of Eval$(m \ell + q_e)$ that the test performed returned $x[m \ell + q_e]$,
so its passing means exactly $x[m \ell + q_e] = a$.

\emph{Enlarging $\mathcal{B}$.} As in the first branch, replacing $\mathcal{B}$
by $\mathcal{B} \cup \{m\}$ can matter only for the moving-horizon claim, and
only at the scale $m$ itself, whose prescribed state set changes from
$\mathcal{F}_n$ to $\mathcal{F}_{n+1}$ when $m > n$; at every other scale
beyond the horizon, neither the state set nor the membership in $\mathcal{B}$
has changed. Suppose then that $m > n$, so that $u = n$ and $\ell = 0$. The
state set $Q_m = \mathcal{F}_n$ and the outputs $\tau_m(q) = x[q]$ are
unchanged, and the sets $\mathcal{F}_n \subseteq \mathcal{F}_{n+1}$ differ at
most in a first occurrence at the position $n$; it therefore suffices to
exhibit an occurrence of the symbol $a$ in $x[:n]$. The scale $m$ is the target
of at least one edge, as \sectionref{sec:p3-model} requires: the requirement is
restored by post-processing at every round's end
(\sectionref{app:p3-post}), established by preprocessing for the scales it
adjoins, and undisturbed by the cascade, which deletes nothing. If some edge
into $m$ is inactive, claim 3 of \lemmaref{lem:p3-first-edge} furnishes the
occurrence. Otherwise every edge into $m$ is active, so $\mathcal{I} \neq
\emptyset$, and any $e \in \mathcal{I}$ places the occurrence at the observed
position $m \ell + w_e < n$, where $x[m \ell + w_e] = a$. In either case
$\mathcal{F}_{n+1} = \mathcal{F}_n$, the moving-horizon claim holds with $m$
adjoined to $\mathcal{B}$, and the loop invariant holds after the iteration
with $\mathcal{B} \cup \{m\}$.

\emph{Runs of the position $n$ through $\mathcal{I}$.} Finally, let $\zeta \in
\mathcal{C}_1$ contain an edge of $\mathcal{I}$ --- necessarily a unique one:
the edges of $\mathcal{I}$ all target the scale $m$, and scales increase
strictly along a path --- say $\zeta = \alpha e \gamma$, where $e = (p, r) \in
\mathcal{I}$, $\alpha$ is an entry path to $p$ and $\gamma \in \mathcal{C}_m$.
Consider the configuration at the end of the iteration: the routing part of the
loop invariant holds in it, and $v_e \in Q_p$ because the state sets did not
change, so claim 2 of \lemmaref{lem:p3-current-entry} shows that the run of
$\Eval_\zeta(n)$ crosses the current entry of $e$, whose value is now $w_e$.
Hence, by \equationref{eq:p3-entry-factor} and pinning
(\corollaryref{cor:p3-pinning}, with $N = n - 1$) at the observed position $m
\ell + w_e < n$,
\[
  \Eval_\zeta(n) = \PEval_\gamma(w_e, \ell) = x[m \ell + w_e] = a.
\]
\end{proof}

\subsubsection{Preservation under State Creation}
\label{app:p3-cascade-create}

It remains to treat the iterations of line 3 that execute line 12. Fix such an
iteration, at the scale $m$, with $u = n \bmod m$ and $\ell = \floor{n / m}$,
and recall the notation $v_e$, $i_e$, $q_e$ of
\sectionref{app:p3-cascade-entry} for the edges of $\mathcal{I}$; line 12
presupposes the branch $u \notin Q_m$ of line 5. By claim 4 of
\lemmaref{lem:p3-first-edge}, the creation fires while the first edge of
$\mathcal{I}$ is examined, before any assignment of line 10: at that moment the
configuration is still the one from the beginning of the iteration, every edge
into $m$ belongs to $\mathcal{I}$, and every target is refuted
(\equationref{eq:p3-all-refuted}). Line 15 then abandons the loop of line 6,
line 16 finds $n \bmod m \in Q_m$, and line 17 commits the current entry of
every edge of $\mathcal{I}$, none of which has been modified in this iteration.
In the interval between line 12 and these commits the invariant is genuinely
false --- the residue $u$ is a state, while the current entries still route it
to smaller targets, in violation of the equality clause of routing --- so,
unlike in \sectionref{app:p3-cascade-reroute}, we do not verify the invariant
assignment by assignment: the unit of verification is the iteration as a whole.

\begin{lemma}
\label{lem:p3-create-iter}
Suppose that the loop invariant holds immediately before the iteration of line
3 at the scale $m$, and that this iteration executes line 12. Then the loop
invariant holds immediately after the iteration, with $\mathcal{B}$ replaced by
$\mathcal{B} \cup \{m\}$.
\end{lemma}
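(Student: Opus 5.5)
The plan is to compare the configuration at the beginning of the iteration (the \emph{old} one) with the configuration at its end (the \emph{new} one), and to verify the four parts of the loop invariant for the new configuration as a whole. Write $e_0$ for the first edge of $\mathcal{I}$ and $v \coloneqq q_{e_0}$ for its refuted target. By claim 4 of \lemmaref{lem:p3-first-edge}, the two configurations differ in only four ways: $u$ is adjoined to $Q_m$; $\tau_m(u) = a$ if $m > n$; the outgoing tables of $u$ are copied from those of $v$ (line 14); and the current entry of every $e \in \mathcal{I}$ is set to $u$ (line 17). Moreover, every edge into $m$ lies in $\mathcal{I}$, and $x[m \ell + q_e] \neq a$ for all $e \in \mathcal{I}$.

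\textbf{Routing.} At each $e \in \mathcal{I}$, \equationref{eq:p3-entry-residue} shows that the new entry $u$ equals the refined residue $v_e + p i_e$, so the equality clause holds there.

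For the copied entries out of $u$, the upper bound $\delta_m^{r'}(v, j) \leq v + jm < u + jm$ follows from $v < u$. The equality clause at these entries is vacuous: by old heredity, a state $u + jm \in Q_{m r'}$ would give $u \in Q_m$ in the old configuration, which is false.

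Heredity for the new state $u$ needs $u \bmod p = v_e \in Q_p$ for every edge $(p, r)$ into $m$. This holds precisely because every such edge is active, which is the content of claim 4 of \lemmaref{lem:p3-first-edge}.

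\textbf{Past consistency.} Take $t < n$ and $\zeta \in \mathcal{C}_1$. A run that neither crosses a current entry of $\mathcal{I}$ nor visits $u$ at scale $m$ is unchanged. Only crossing runs reach $u$ at scale $m$, since before the iteration nothing routed to $u$.

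For a crossing run, claim 1 of \lemmaref{lem:p3-current-entry} writes the old and new values as $\PEval_\gamma(q_e, s)$ and $\PEval_\gamma(u, s)$, with $s < \ell$. The copied tables give $\PEval_\gamma(u, s) = \PEval_\gamma(v, s)$; when $m$ is a sink we have $m > n$, hence $\ell = 0$, and there is nothing to check. By \equationref{eq:p3-sibling-data}, both $\PEval_\gamma(v, s)$ and $\PEval_\gamma(q_e, s)$ equal $x[m s + u]$, so the new run still returns $x[t]$.

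\textbf{Moving horizon.} Sinks are unchanged. If $m > n$, then $u = n$ and $\ell = 0$, so the divergence test of line 9 is vacuous. Its failure therefore means that no $q \in Q_m = \mathcal{F}_n$ has $x[q] = a$. Hence $a$ is fresh, $\mathcal{F}_{n+1} = \mathcal{F}_n \cup \{n\}$, and line 13 sets $\tau_m(n) = a$, exactly as required for $m \in \mathcal{B} \cup \{m\}$. If $m \leq n$, enlarging $\mathcal{B}$ is irrelevant to this claim.

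\textbf{Relaxed distinguishability.} This is the main obstacle. Old saturated pairs persist verbatim, since \equationref{eq:p3-saturated} mentions only data. For old settled pairs I will use stability (\corollaryref{cor:p3-div-stability}) with $N = n - 1$. Its hypotheses hold in both configurations: routing and past consistency were shown above, and injectivity at the sinks follows from the moving-horizon claim. At scale $m$ itself, and above it, the recursion for $\Delta_m$ on old pairs never reads the new rows, so these values do not change.

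The new pairs $\{u, w\}$ with $w \in Q_m$ need a case analysis.
\begin{itemize}
  \item If $w = v$, the pair is saturated, by $x[m \ell + v] \neq a$.
  \item If $w \neq v$, the copied tables give $\Delta_m(u, w) = \Delta_m(v, w) =: d$, and $\{v, w\}$ was settled (it cannot have been saturated, since $u \notin Q_m$ in the old configuration). By pinning together with \equationref{eq:p3-sibling-data}, $x[m s + u] = x[m s + v]$ for all $s < \ell$.
  \begin{itemize}
    \item If $d < \ell$, then $x[m d + u] = x[m d + v] \neq x[m d + w]$, and $m d + \max(u, w) \leq n - 1$, so $\{u, w\}$ is settled.
    \item If $d \geq \ell$ and $w > u$, then $m d + w > n$, contradicting the settledness of $\{v, w\}$; so this case cannot occur.
    \item If $d \geq \ell$ and $w < u$, then $w$ passed the divergence test of line 9, so it must have failed the symbol test. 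By pinning, $x[m \ell + w] \neq a$, and the pair is saturated.
  \end{itemize}
\end{itemize}
This completes the verification of all four parts.
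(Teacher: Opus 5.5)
Your proposal is correct and follows the paper's proof essentially step for step: the same old/new comparison of the four modifications, routing via claim 4 of \lemmaref{lem:p3-first-edge} and old heredity, past consistency through crossing runs and \equationref{eq:p3-sibling-data}, the freshness of $a$ at a sink, and the same split of the new pairs $\{u, w\}$ according to whether the old divergence $d_m(v, w)$ falls below $\ell$. The only slip is cosmetic: at a sink, line 14 is vacuous, so $\Delta_m(u, w) = \Delta_m(v, w)$ holds because Diverge assigns $0$ to every pair of distinct states there, not because of copied tables; with that fix, your unified case analysis recovers the paper's separate observation that every new pair at a sink is saturated.
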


\begin{proof}
We compare the \emph{old} configuration, from the beginning of the iteration,
with the \emph{new} one, from its end; as recorded above, lines 12--14 and the
assignments of line 17 are the iteration's entire effect on the state.
Throughout the proof, $q' \coloneqq q_{e^*}$ denotes the target of the first
edge $e^* \in \mathcal{I}$ examined by line 6 --- the state whose refuted test
at line 8 and failed search at line 9 fired line 12, and whose outgoing
transitions line 14 copies --- and $q' < u$ by claim 1 of
\lemmaref{lem:p3-first-edge}.

First, $m \neq n$: otherwise $u = n \bmod n = 0$ and, for any edge $e = (p, r)
\in \mathcal{I}$ --- the set is nonempty, line 12 having fired within the loop
over it --- the upper bound of routing and \equationref{eq:p3-entry-residue}
give $q_e \leq v_e + p i_e = u = 0$, so that $q_e = u$; but $q_e$, a value of a
transition table, is a state of $Q_m$ (\sectionref{sec:p3-model}), while $u
\notin Q_m$. Second, a scale beyond the horizon has no outgoing edges --- an
edge out of a scale is created only in the round in which the time index
reaches it (\sectionref{app:p3-preprocessing}) --- while, by the moving
horizon, every sink lies beyond it: hence $m > n$ if and only if $m$ is a sink.
The creation is thus either \emph{interior} ($m < n$, at a non-sink) or
\emph{at a sink} ($m > n$, with $u = n$ and $\ell = 0$), and the new
configuration differs from the old one exactly as follows:
\begin{enumerate}
  \item $Q_m$ acquires the state $u$ (line 12);
  \item at a sink, the output function is extended by $\tau_m(u) \coloneqq a$
    (line 13), and line 14 is vacuous;
  \item at an interior creation, line 13 is skipped, and the tables acquire the
    rows of the new state: $\delta_m^{r'}(u, j) \coloneqq \delta_m^{r'}(q', j)$
    for every $(m, r') \in \mathcal{E}$ and $j \in [r']$ (line 14);
  \item for every edge $e = (p, r)$ into $m$, the current entry is overwritten
    with the new state: $\delta_p^r(v_e, i_e) \gets u$ (line 17).
\end{enumerate}
The fragment, and hence the continuation sets and the sinks, are unchanged.

\emph{Locality.} At an interior creation, the recursion of $\PEval_\gamma(w,
s)$, for an old state $w \in Q_m^{\mathrm{old}}$ and $\gamma \in
\mathcal{C}_m$, starts at a row of an old state and moves to scales above $m$,
so it reads no modified entry: the values of unmodified entries are unchanged,
and the modified entries --- the rows of $u$, at the scale $m$ itself, and the
current entries, in the tables of the edges into $m$, which only a passage from
a source scale $p < m$ to $m$ reads --- are never encountered. Hence
$\PEval_\gamma(w, \cdot)$ is the same function in both configurations, for
every $w \in Q_m^{\mathrm{old}}$ and $\gamma \in \mathcal{C}_m$. Moreover, for
every $\gamma \in \mathcal{C}_m$ and $s \in \mathbb{N}$, the recursion of
$\PEval_\gamma^{\mathrm{new}}(u, s)$ reads, at its first step, the copied entry
$\delta_m^{r'}(u, s \bmod r') = \delta_m^{r'}(q', s \bmod r')$ and continues
from its value exactly as the recursion of $\PEval_\gamma(q', s)$ does after
its own first step, which reads the same entry; hence
\[
  \PEval_\gamma^{\mathrm{new}}(u, s) = \PEval_\gamma(q', s) \quad (\gamma \in
  \mathcal{C}_m,\ s \in \mathbb{N}),
\]
the right-hand side being configuration-independent. At a sink, instead,
$\mathcal{C}_m$ consists of the empty continuation, $\PEval_\gamma(w, s) =
\tau_m(w)$ is unchanged for old states, and $\PEval_\gamma^{\mathrm{new}}(u, s)
= \tau_m(u) = a$.

\emph{Routing.} At the current entry of an edge $e = (p, r) \in \mathcal{I}$,
the refined residue is $v_e + p i_e = u$ by
\equationref{eq:p3-entry-residue}, and the new value is $u$ itself: the upper
bound holds with equality, as the equality clause demands now that $u \in Q_m$.
Any other entry $\delta_p^r(q, i)$ of such a table is unchanged, and so is its
obligation: the residue it refines satisfies $q + i p \neq u$ --- the residue
determines the entry, since $q = (q + i p) \bmod p$ and $i = \floor{(q + i p) /
p}$ --- and hence lies in $Q_m^{\mathrm{new}}$ if and only if it lies in
$Q_m^{\mathrm{old}}$. Heredity at the edges into $m$ extends to the new state:
$u \bmod p = v_e \in Q_p$ for every edge $(p, r)$ into $m$, because every such
edge belongs to $\mathcal{I}$ --- this is where claim 4 of
\lemmaref{lem:p3-first-edge} is decisive --- and the source residues of the
edges of $\mathcal{I}$ are states by definition. At an interior creation,
consider the edges out of $m$. The new rows obey the upper bound,
$\delta_m^{r'}(u, j) = \delta_m^{r'}(q', j) \leq q' + j m < u + j m$; their
equality clauses are vacuous, because a state $w \in Q_{m r'}$ has $w \bmod m
\in Q_m^{\mathrm{old}}$ by old heredity, whereas $u \notin Q_m^{\mathrm{old}}$,
so no residue $u + j m$ is a state of $Q_{m r'}$; and heredity at these edges
is undisturbed, as $Q_{m r'}$ is unchanged and $Q_m$ only grew. Every other
edge, together with the state sets it references, is untouched. Routing thus
holds in the new configuration.

\emph{Past consistency.} Let $t < n$ and $\zeta \in \mathcal{C}_1$. A run of
Eval reads a row of $u$, or the output $\tau_m(u)$, only after arriving at the
scale $m$ in the state $u$, hence only after reading an entry of an edge into
$m$ whose value is $u$; the values of these tables are states of $Q_m$, so in
the old configuration there is no such entry, and in the new one the entries
with value $u$ are exactly the current entries. Suppose first that the run of
$\Eval_\zeta(t)$ does not cross the current entry of the edge of $\zeta$ into
$m$ --- the edge is unique, scales increasing strictly along a path, and this
case includes every $\zeta$ avoiding $m$; note also that the crossing
conditions are configuration-independent, as the route to a source scale $p$ is
computed from entries at scales below $p < m$, none of which is modified. Then
the run reads only entries whose values agree in the two configurations, it is
identical in both, and $\Eval_\zeta(t) = x[t]$ by old past consistency. Suppose
instead that the run crosses the current entry of $e = (p, r) \in \zeta$, $p r
= m$. At a sink this is impossible: claim 1 of
\lemmaref{lem:p3-current-entry} --- its hypothesis, the upper-bound part of
routing, holds in both configurations --- would give $\floor{t / m} < \floor{n
/ m} = 0$. At an interior creation, the same claim, applied in each
configuration, evaluates the run through \equationref{eq:p3-entry-factor} at
the common row $s \coloneqq \floor{t / m} < \ell$:
\[
  \Eval_\zeta^{\mathrm{old}}(t) = \PEval_\gamma(q_e, s), \quad
  \Eval_\zeta^{\mathrm{new}}(t) = \PEval_\gamma^{\mathrm{new}}(u, s) =
  \PEval_\gamma(q', s),
\]
where $\gamma \in \mathcal{C}_m$ is the continuation of $\zeta$ from $m$ and
the last equality is the locality identity. By
\equationref{eq:p3-sibling-data} --- applied once at the edge $e$ and once at
$e^*$ --- both partial evaluations equal $x[m s + u]$, so
$\Eval_\zeta^{\mathrm{new}}(t) = \Eval_\zeta^{\mathrm{old}}(t) = x[t]$, again
by old past consistency.

\emph{Moving horizon.} The sinks are unchanged and remain beyond $n$. At an
interior creation, no state set and no output function at a scale beyond the
horizon is touched, and enlarging $\mathcal{B}$ by the scale $m < n$ does not
affect the prescription, which consults membership in $\mathcal{B}$ at the
scales greater than $n$ alone: the claim persists verbatim. At a sink, the old
claim ($m \notin \mathcal{B}$) gives $Q_m^{\mathrm{old}} = \mathcal{F}_n$ and
$\tau_m(q) = x[q]$ on it, and the failure of the search of line 9 --- in which
the divergence condition $d_m(\tilde{q}, q') \geq \ell = 0$ holds vacuously, so
that every $\tilde{q} \in Q_m^{\mathrm{old}}$ was tested --- returned
$\Eval(\tilde{q}) \neq a$ for every $\tilde{q} \in \mathcal{F}_n$. By old past
consistency at the observed positions $\tilde{q} < n$, no first occurrence in
$x[:n]$ carries the symbol $a$; since every symbol occurring in $x[:n]$ does so
at a position of $\mathcal{F}_n$, the symbol $a$ does not occur in $x[:n]$ at
all, and $\mathcal{F}_{n+1} = \mathcal{F}_n \cup \{n\}$. Thus
$Q_m^{\mathrm{new}} = \mathcal{F}_n \cup \{u\} = \mathcal{F}_{n+1}$ --- exactly
what the claim prescribes at $m \in \mathcal{B} \cup \{m\}$ --- with $\tau_m(u)
= a = x[n]$ by line 13. Every other scale beyond the horizon, and its
membership in $\mathcal{B}$, is untouched. The moving-horizon claim thus holds
with $\mathcal{B} \cup \{m\}$; in particular, the output functions remain
injective at every sink, the symbols of $\mathcal{F}_{n+1}$ being distinct by
definition.

\emph{Relaxed distinguishability.} Routing, past consistency and injectivity at
the sinks now hold in both configurations, and the two share the scale
fragment; pinning (\corollaryref{cor:p3-pinning}) and stability
(\corollaryref{cor:p3-div-stability}) are available with $N = n - 1$, and
\lemmaref{lem:p3-div-semantics} applies in both configurations. Consider first
a pair of distinct old states $w_1, w_2 \in Q_k^{\mathrm{old}}$, at any scale
$k \in S$. If the pair is settled in the old configuration,
\corollaryref{cor:p3-div-stability} preserves its divergence value, and the two
clauses of \equationref{eq:p3-settled} --- statements about the data at
observed positions --- persist with it. If it is saturated,
\equationref{eq:p3-saturated} mentions only the data, the residue $n \bmod k$
and the symbol $a$, and persists verbatim. Since the state sets at the scales
other than $m$ are unchanged, the pairs that remain are those joining the new
state $u$ to a state $w \in Q_m^{\mathrm{old}}$.

At a sink, every such pair is saturated: $\max(w, u) = u = n$ is the residue $n
\bmod m$, and $x[m \cdot 0 + w] = x[w] \neq a$ was established with the moving
horizon. At an interior creation, consider first $w = q'$. Then $\max(u, q') =
u = n \bmod m$, and $x[m \ell + q'] \neq a$: by
\equationref{eq:p3-all-refuted} at the edge $e^*$, the run of Eval performed by
the test of line 8 did not return $a$, while by old past consistency at the
observed position $m \ell + q' < n$ it returned $x[m \ell + q']$. The pair
$\{u, q'\}$ is saturated. Finally, let $w \in Q_m^{\mathrm{old}} \setminus
\{q'\}$. The locality identities give $D_\gamma(u, w) = D_\gamma(q', w)$ for
every $\gamma \in \mathcal{C}_m$ in the new configuration, whence, minimizing
over $\gamma$ (\equationref{eq:p3-div-path}) and using the
configuration-independence of the partial evaluations of old states together
with \equationref{eq:p3-frozen-div},
\[
  \Delta_m^{\mathrm{new}}(u, w) = \Delta_m^{\mathrm{new}}(q', w) =
  \Delta_m^{\mathrm{old}}(q', w) = d_m(q', w) =: \ell'.
\]
The old pair $\{q', w\}$ is not saturated --- both of its members are states of
$Q_m^{\mathrm{old}}$, while $n \bmod m = u$ is not --- hence it is settled:
\[
  m \ell' + \max(q', w) \leq n - 1, \quad x[m \ell' + q'] \neq x[m \ell' + w].
\]
In particular $m \ell' \leq n - 1 < m (\ell + 1)$, so $\ell' \leq \ell$. If
$\ell' < \ell$, the pair $\{u, w\}$ is settled: its divergence value is
$\ell'$; the numeric clause holds, since $m \ell' + u \leq m (\ell - 1) + u = n
- m \leq n - 1$ and $m \ell' + w \leq m \ell' + \max(q', w) \leq n - 1$; and,
for the data clause, \equationref{eq:p3-sibling-data} at the edge $e^*$ and the
row $\ell'$, followed by pinning at the observed position $m \ell' + q' \leq n
- 1$, gives, for any $\gamma \in \mathcal{C}_m$,
\[
  x[m \ell' + u] = \PEval_\gamma(q', \ell') = x[m \ell' + q'] \neq x[m \ell' +
  w].
\]
If $\ell' = \ell$, the pair $\{u, w\}$ is saturated: the settled bound reads
$\max(q', w) \leq n - 1 - m \ell = u - 1$, so $w < u$ and $\max(u, w) = u = n
\bmod m$; and $w$ was examined by the search of line 9 --- its divergence
condition $d_m(w, q') = \ell' \geq \ell$ holds --- so the failure of the search
includes the failure of the test of $w$: $\Eval(m \ell + w) \neq a$ and, the
position $m \ell + w < n$ being observed, $x[m \ell + w] \neq a$.

Every pair of distinct states of the new configuration is thus settled or
saturated. All four parts of the loop invariant have been verified, the
enlargement of $\mathcal{B}$ having been accounted for in the moving-horizon
part, and the lemma is proved.
\end{proof}

\subsubsection{Completion of the Cascade}
\label{app:p3-cascade-complete}

The loop invariant has now been carried across every kind of iteration; what
remains is to chain the iterations together and to convert the invariant of the
completed loop into the four claims of \lemmaref{lem:p3-inv} --- for the
extended prefix $x[:n+1]$, in the configuration that the cascade hands to
post-processing. Two gaps separate the loop invariant from the four claims:
past consistency stops at the position $n - 1$, and relaxed distinguishability
tolerates saturated pairs. Consistency at the position $n$ itself
(\lemmaref{lem:p3-consistency-n}) closes both: it is the missing case of data
consistency, and it makes pinning available at the current row, turning the
saturation clause \equationref{eq:p3-saturated} into an observed difference of
subsequences. As in the preceding subsubsections, we consider a round $n \geq
1$ with observed symbol $a = x[n]$, assuming \lemmaref{lem:p3-inv} after the
preceding round; the \emph{post-cascade configuration} is the one reached when
the loop of line 3 of \algorithmref{alg:p3-update} has completed, immediately
before line 18. The passage from the post-cascade configuration to the
persistent state of the round is the business of
\sectionref{app:p3-pruning-proof}.

\begin{lemma}
\label{lem:p3-cascade-end}
Immediately before every iteration of the loop of line 3 of
\algorithmref{alg:p3-update}, and again when the loop has completed, the loop
invariant holds, with $\mathcal{B}$ the set of scales already processed; in
particular, it holds in the post-cascade configuration with $\mathcal{B} = S
\setminus \{1\}$.
\end{lemma}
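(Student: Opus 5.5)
The proof will be a straightforward induction on the number of completed iterations of the loop of line 3 of \algorithmref{alg:p3-update}. For the base case we take the configuration immediately after line 2. If the round is the first of an epoch (line 1 fired), the loop visits no scale and the claim is the post-initialization statement of \sectionref{app:p3-transient-inv}. Otherwise, \lemmaref{lem:p3-inv} holds after the preceding round, and the ``Preprocessing'' paragraph of \sectionref{app:p3-transient-inv} has already established the loop invariant with $\mathcal{B} = \emptyset$, together with the initial case of the frozen-divergence identity \equationref{eq:p3-frozen-div}.

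For the inductive step we fix an iteration at a scale $m$ and assume the loop invariant immediately before it, with $\mathcal{B}$ the set of scales already processed. We then split according to whether the iteration executes line 12. If it does not, \lemmaref{lem:p3-reroute-iter} yields the invariant afterwards with $\mathcal{B} \cup \{m\}$. If it does, \lemmaref{lem:p3-create-iter} yields the same conclusion. Because the scales are visited in increasing order, $\mathcal{B} \cup \{m\}$ is again the set of scales processed so far, which is exactly the set required before the next iteration.

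The main point to check is that these lemmas may legitimately be invoked: their hypotheses must be exactly the loop invariant at the start of the iteration. \lemmaref{lem:p3-reroute-iter} also uses the frozen identity $\Delta_m = d_m$ on $Q_m$. This identity has to be re-derived at every iteration rather than carried along. We obtain it from the observation preceding \equationref{eq:p3-frozen-div}: every earlier iteration modified only state sets and transitions at scales strictly smaller than $m$ (the reroutes and commits write entries at source scales below their targets, and creations add rows at their own scale), while $\Delta_m$ depends only on tables at scales $\geq m$ and on the state sets at the sinks. The sink state sets can change only through a creation at a sink $k$. For $k < m$, however, no sink reachable from $m$ equals $k$, since scales increase along edges, so these changes do not affect $\Delta_m$ either.

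We also need that the fragment itself, and hence $S$, is not altered during the loop, because the cascade deletes nothing and Preprocess has already run. Consequently the loop's enumeration of $S \setminus \{1\}$ is well defined, and at termination $\mathcal{B} = S \setminus \{1\}$, which gives the final assertion about the post-cascade configuration.
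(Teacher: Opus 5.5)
Your proposal is correct and matches the paper's proof: the cascade leaves the scale fragment unchanged, the base case $\mathcal{B} = \emptyset$ is the preprocessing analysis, and each iteration is handled by \lemmaref{lem:p3-reroute-iter} or \lemmaref{lem:p3-create-iter} according to whether line 12 fires. Two small remarks. First, the frozen identity you re-derive is already obtained inside those lemmas from \equationref{eq:p3-frozen-div}, so it needs no separate bookkeeping. Second, drop the round-$0$ case: the lemma lives in the context $n \geq 1$, and at round $0$ the moving-horizon clause would literally demand $Q_1 = \mathcal{F}_0 = \emptyset$, so Initialize gives you \lemmaref{lem:p3-inv}, not the loop invariant.
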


\begin{proof}
The loop creates and deletes no scale and no edge --- its writes are the
assignments of lines 10, 12, 13, 14 and 17, to transition tables, state sets
and output functions --- so the scale fragment is the same in every
configuration of the cascade, and the loop performs exactly one iteration per
scale of $S \setminus \{1\}$, in increasing order. Immediately after line 2,
the loop invariant holds with $\mathcal{B} = \emptyset$: this is the conclusion
of the preprocessing analysis of \sectionref{app:p3-transient-inv}. And if the
invariant holds immediately before an iteration, at a scale $m$, then it holds
immediately after it with $\mathcal{B}$ replaced by $\mathcal{B} \cup \{m\}$:
by \lemmaref{lem:p3-reroute-iter} when the iteration does not execute line 12,
and by \lemmaref{lem:p3-create-iter} when it does. The lemma follows by
induction along the iterations.
\end{proof}

Recall from \sectionref{app:p3-cascade-reroute} the notation attached to an
arbitrary edge $e = (p, r) \in \mathcal{E}$ --- the target scale $m_e = p r$,
the source residue $v_e = n \bmod p$, the digit $i_e = \floor{n / p} \bmod r$
and the current entry $\delta_p^r(v_e, i_e)$ --- all of it determined by $e$
and the position $n$ alone. The next lemma reads the current entries off the
post-cascade configuration at the edges where the transition recursion of the
position $n$ steps off its exact residues; at such an edge, the entry points at
an observed occurrence of the fresh symbol at the current row.

\begin{lemma}
\label{lem:p3-final-entry}
In the post-cascade configuration, let $e = (p, r) \in \mathcal{E}$ be an edge
with $v_e \in Q_p$ and $n \bmod m_e \notin Q_{m_e}$. Then the current entry of
$e$ holds a value $w$ satisfying
\[
  w < n \bmod m_e \quad \text{and} \quad x[m_e \floor{n / m_e} + w] = a.
\]
\end{lemma}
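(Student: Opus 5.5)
The plan is to locate the one iteration of the cascade that can write the current entry of $e$, namely the iteration of line 3 at the target scale $m_e$. I will show that $e$ belongs to the set $\mathcal{I}$ of that iteration and that the iteration creates no state. Claim 2 of \lemmaref{lem:p3-reroute-iter} then gives the required properties of the entry at the end of that iteration. The last step is to check that no later iteration touches the entry.

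\textbf{Step 1: a monotonicity fact.} By \lemmaref{lem:p3-cascade-end}, the loop never creates or deletes scales or edges, so $e \in \mathcal{E}$ throughout the cascade, and $m_e = p r > p \geq 1$ is visited by the loop. The only write to a state set is line 12, and the iteration at a scale $k$ adds a state only to $Q_k$. So state sets only grow, and $Q_k$ is frozen once the iteration at $k$ is complete.

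\textbf{Step 2: $e \in \mathcal{I}$ and no creation at $m_e$.} Since $p < m_e$, the iteration at $p$ (if $p \neq 1$) is finished before the iteration at $m_e$ begins. Hence $Q_p$ at the beginning of the iteration at $m_e$ equals its post-cascade value, which contains $v_e$; for $p = 1$, $v_e = 0 \in Q_1$ always. So $e$ is active, i.e.\ $e \in \mathcal{I}$, with $u = n \bmod m_e$ and $\ell = \floor{n / m_e}$. Since state sets only grow and $u \notin Q_{m_e}$ post-cascade, we get two consequences. First, line 5 finds $u \notin Q_{m_e}$. Second, line 12 does not fire at $m_e$, since it would insert $u$ permanently. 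Claim 2 of \lemmaref{lem:p3-reroute-iter} then yields that, at the end of the iteration at $m_e$, the current entry of $e$ holds some $w_e$ with $w_e < u$ and $x[m_e \ell + w_e] = a$.

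\textbf{Step 3: the entry is untouched afterwards.} This is the step needing care, because the fragment may have several edges into and out of each scale. Consider a later iteration at a scale $m'' > m_e$. Its writes to transition tables are of two kinds:
\begin{itemize}
\item lines 10 and 17 write only current entries of edges $(p'', r'')$ with $p'' r'' = m'' \neq m_e$, which are edges distinct from $e$;
\item line 14 writes rows of the table $\delta_{m''}^{r'}$, whose source scale $m'' > m_e > p$ differs from $p$.
\end{itemize}
Tables are indexed by edges $(p, r)$, so the table $\delta_p^r$ of $e$ is never written after the iteration at $m_e$. Its current entry, whose indices $v_e, i_e$ depend only on $e$ and $n$, therefore keeps the value $w_e$ into the post-cascade configuration. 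This proves the lemma.
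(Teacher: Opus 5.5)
Your proposal is correct and follows essentially the same route as the paper. It shows $e \in \mathcal{I}$ at the iteration at $m_e$ because state sets only grow, rules out a creation there because $n \bmod m_e$ would have survived, invokes claim 2 of \lemmaref{lem:p3-reroute-iter}, and checks that no later iteration writes the table $\delta_p^r$; the one step to state explicitly is that the hypothesis of \lemmaref{lem:p3-reroute-iter} (the loop invariant holding immediately before the iteration at $m_e$) is supplied by \lemmaref{lem:p3-cascade-end}, as the paper does.
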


\begin{proof}
Write $u' \coloneqq n \bmod m_e$ and $\ell' \coloneqq \floor{n / m_e}$. The
target of an edge belongs to $S$ and exceeds $1$, so the loop performed an
iteration at $m_e$.

\emph{The iteration at $m_e$.} A state set $Q_k$ changes during the cascade
only at the iteration at $k$: line 12 is the loop's only assignment to a state
set, and it adjoins the residue $n \bmod k$ of the scale being processed,
nothing ever being deleted. Since the scales are processed in increasing order
and $p < m_e$, the membership $v_e \in Q_p$ of the post-cascade configuration
was already in force when line 4 of the iteration at $m_e$ ran --- for $p \in S
\setminus \{1\}$ because the iteration at $p$ was complete, and for $p = 1$
because $v_e = 0$ and $Q_1 = \{0\}$ throughout --- so $e \in \mathcal{I}$ in
that iteration. Likewise, the only state that the iteration at $m_e$ could have
adjoined to $Q_{m_e}$ is $u'$ itself, which would then have survived to the
post-cascade configuration; since it did not, the iteration found $u' \notin
Q_{m_e}$ at line 5 and did not execute line 12. By
\lemmaref{lem:p3-cascade-end}, the loop invariant held immediately before that
iteration; claim 2 of \lemmaref{lem:p3-reroute-iter} therefore applies to it,
and at its end the current entry of $e$ held a value $w$ with $w < u'$ and
$x[m_e \ell' + w] = a$.

\emph{Stability.} The cascade writes transition-table entries at lines 10, 14
and 17 alone. At the iteration at a scale $k$, lines 10 and 17 write current
entries of edges of its set $\mathcal{I}$, which lie in the tables
$\delta_{p'}^{r'}$ with $p' r' = k$, while line 14 writes the rows of the state
just created in the tables of the edges out of $k$. The table $\delta_p^r$ is
therefore written only at the iteration at $p$ (line 14) and at the iteration
at $m_e$ (lines 10 and 17), and at no later point of the cascade; the current
entry of $e$ thus holds $w$ in the post-cascade configuration as well.
\end{proof}

We can now extend data consistency to the current position: in the post-cascade
configuration, every run of Eval at the position $n$ returns the symbol
observed there. Along any continuation from the root, either every residue of
$n$ en route is a state --- and then the run of $n$ tracks these residues into
a sink, where the moving horizon has recorded the fresh symbol --- or the run
steps off the residues of $n$ at some first edge, which
\lemmaref{lem:p3-final-entry} has pointed at an observed occurrence of that
symbol, and pinning takes over.

\begin{lemma}
\label{lem:p3-consistency-n}
In the post-cascade configuration, $\Eval_\zeta(n) = a$ for every $\zeta \in
\mathcal{C}_1$.
\end{lemma}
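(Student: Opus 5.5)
Fix $\zeta = (m_0, r_0, m_1, \dots, r_{L-1}, m_L) \in \mathcal{C}_1$ in the post-cascade configuration, with $m_0 = 1$ and $m_L$ a sink. The plan is to follow the transition recursion of $\Eval_\zeta(n)$ along $\zeta$ and to compare its state with the exact residues $n \bmod m_j$. Let $j^*$ be the largest index such that $n \bmod m_j \in Q_{m_j}$ for all $j \leq j^*$. The base case is $j = 0$, since $Q_1 = \{0\}$. For every $j \leq j^*$, the run arrives at $m_j$ in the state $n \bmod m_j$. This follows by induction on $j$ from the equality clause of routing, which holds by \lemmaref{lem:p3-cascade-end}. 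Concretely, \equationref{eq:p3-entry-residue} says that the current entry of the edge $(m_j, r_j)$ refines $n \bmod m_j$ into $n \bmod m_{j+1}$, and the entry equals that residue whenever it is a state. Equivalently, I can invoke claim 2 of \lemmaref{lem:p3-route-map} on the prefix of $\zeta$ ending at $m_j$, together with claim 3 of that lemma at $t = n$.

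In the first case, $j^* = L$. The run then reaches the sink $m_L$ in the state $n \bmod m_L$. By the moving-horizon part of the loop invariant, $m_L > n$, so $n \bmod m_L = n$ and the digits are trivial. Since $m_L \in \mathcal{B} = S \setminus \{1\}$, we have $Q_{m_L} = \mathcal{F}_{n+1}$ and $\tau_{m_L}(q) = x[q]$ on it. The state $n$ belongs to $Q_{m_L}$ by the choice of $j^*$, so the run returns $\tau_{m_L}(n) = x[n] = a$. The degenerate case $m_L = 1$ does not arise for $n \geq 1$, because the root is then not a sink, again by the moving horizon.

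In the second case, $j^* < L$. Let $e = (m_{j^*}, r_{j^*})$. Its source residue $v_e = n \bmod m_{j^*}$ is a state, while $n \bmod m_{j^*+1} \notin Q_{m_{j^*+1}}$. Then \lemmaref{lem:p3-final-entry} gives a current entry $w < n \bmod m_{j^*+1}$ with $x[m_{j^*+1}\ell + w] = a$, where $\ell = \floor{n / m_{j^*+1}}$. Write $\zeta = \alpha e \gamma$. By claim 2 of \lemmaref{lem:p3-current-entry}, whose hypotheses are full routing and $v_e \in Q_{m_{j^*}}$, the run of $\Eval_\zeta(n)$ crosses the current entry. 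Hence $\Eval_\zeta(n) = \PEval_\gamma(w, \ell)$. The position $m_{j^*+1}\ell + w$ is smaller than $n$. Pinning (\corollaryref{cor:p3-pinning}) applies with $N = n - 1$, using past consistency from the loop invariant, and identifies this value with $x[m_{j^*+1}\ell + w] = a$.

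The main obstacle is the bookkeeping in the induction for the first case. Each edge is treated as a separate routing step, so this needs only the equality clause at each current entry and not full heredity. Care is also needed to confirm that every hypothesis holds in the post-cascade configuration rather than in some transient one. This is supplied by \lemmaref{lem:p3-cascade-end}, which gives routing, past consistency and the moving horizon with $\mathcal{B} = S \setminus \{1\}$ at the end of the loop.
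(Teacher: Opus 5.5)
Your proof is correct and follows the paper's argument. It uses the same case split on whether every residue $n \bmod m_j$ along $\zeta$ is a state: the moving horizon at the sink handles the first case, and \lemmaref{lem:p3-final-entry}, claim 2 of \lemmaref{lem:p3-current-entry} and pinning handle the second. The differences are cosmetic: your $j^*$ is the paper's minus one, and in the first case you trace the run edge by edge via the equality clause of routing, whereas the paper applies claim 2 of \lemmaref{lem:p3-current-entry} at the last edge of $\zeta$.
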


\begin{proof}
By \lemmaref{lem:p3-cascade-end}, the four parts of the loop invariant are
available, with $\mathcal{B} = S \setminus \{1\}$. Fix $\zeta = (m_0, r_0, m_1,
\dots, r_{L-1}, m_L) \in \mathcal{C}_1$ and write $u_j \coloneqq n \bmod m_j$
for $j \leq L$. The moving horizon places every sink beyond $n \geq 1$, so the
root is not a sink and $L \geq 1$; also $u_0 = 0 \in Q_1$. We distinguish two
cases.

\emph{Every residue along $\zeta$ is a state.} Suppose first that $u_j \in
Q_{m_j}$ for every $j \leq L$, and consider the last edge $e \coloneqq
(m_{L-1}, r_{L-1})$ of $\zeta$, with target scale $m_e = m_L$. Its source
residue $v_e = u_{L-1}$ is a state of $Q_{m_{L-1}}$, and routing holds along
the prefix of $\zeta$ ending at $m_{L-1}$; by claim 2 of
\lemmaref{lem:p3-current-entry}, the run of $\Eval_\zeta(n)$ crosses the
current entry of $e$ and, the continuation of $\zeta$ from the sink $m_L$ being
empty, \equationref{eq:p3-entry-factor} reduces to
\[
  \Eval_\zeta(n) = \tau_{m_L}(\delta_{m_{L-1}}^{r_{L-1}}(v_e, i_e)).
\]
By \equationref{eq:p3-entry-residue}, the residue refined by the current entry
is $u_L$ --- a state of $Q_{m_L}$, by the case hypothesis --- so the equality
clause of routing evaluates the entry at $u_L$ exactly. The sink $m_L$ lies
beyond $n$, whence $u_L = n$, and the moving horizon gives
\[
  \Eval_\zeta(n) = \tau_{m_L}(n) = x[n] = a.
\]

\emph{The run steps off the residues.} Otherwise, let $j^*$ be minimal with
$u_{j^*} \notin Q_{m_{j^*}}$; by the above, $1 \leq j^* \leq L$. The edge $e
\coloneqq (m_{j^* - 1}, r_{j^* - 1})$ has target scale $m_e = m_{j^*}$; its
source residue $v_e = u_{j^* - 1}$ is a state of $Q_{m_{j^* - 1}}$ by the
minimality of $j^*$, while $n \bmod m_e = u_{j^*} \notin Q_{m_e}$. Thus
\lemmaref{lem:p3-final-entry} evaluates the current entry of $e$ at a value $w$
with $w < u_{j^*}$ and $x[m_e \ell' + w] = a$, where $\ell' \coloneqq \floor{n
/ m_e}$; and, as in the previous case, claim 2 of
\lemmaref{lem:p3-current-entry} shows that the run of $\Eval_\zeta(n)$ crosses
that entry, \equationref{eq:p3-entry-factor} now reading $\Eval_\zeta(n) =
\PEval_\gamma(w, \ell')$, where $\gamma \in \mathcal{C}_{m_e}$ is the
continuation of $\zeta$ from $m_e$. The position $m_e \ell' + w$ is smaller
than $m_e \ell' + u_{j^*} = n$, hence observed; routing and past consistency
make pinning available with $N = n - 1$, and \corollaryref{cor:p3-pinning}
finishes:
\[
  \Eval_\zeta(n) = \PEval_\gamma(w, \ell') = x[m_e \ell' + w] = a.
\]
\end{proof}

The claims of \lemmaref{lem:p3-inv}, in the form they take before
post-processing, follow. Below, $\Delta_m$ denotes, as in
\sectionref{app:p3-transient-inv}, the divergence tables that a fresh call to
Diverge computes from the configuration under consideration --- here, the
post-cascade one.

\begin{lemma}
\label{lem:p3-pre-pruning}
In the post-cascade configuration, the four claims of \lemmaref{lem:p3-inv}
hold for the prefix $x[:n+1]$:
\begin{enumerate}
  \item for every $t \leq n$, every run of Eval$(t)$ returns $x[t]$, regardless
    of how $r$ is selected in line 3 of \algorithmref{alg:eval};
  \item both clauses of the routing claim hold at every edge of $\mathcal{E}$;
  \item for every $m \in S$ and distinct $q_1, q_2 \in Q_m$, the divergence
    value $\ell \coloneqq \Delta_m(q_1, q_2)$ satisfies $m \ell + \max(q_1,
    q_2) \leq n$ and $x[m \ell + q_1] \neq x[m \ell + q_2]$;
  \item for every $m \in S$ with $m > n$, $Q_m = \mathcal{F}_{n+1}$ and
    $\tau_m(q) = x[q]$ for every $q \in Q_m$; and every sink has scale greater
    than $n$.
\end{enumerate}
\end{lemma}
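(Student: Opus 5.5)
We will take the loop invariant of \lemmaref{lem:p3-cascade-end} in the post-cascade configuration, where $\mathcal{B} = S \setminus \{1\}$, and upgrade each of its four parts to the corresponding claim of \lemmaref{lem:p3-inv} for the prefix $x[:n+1]$. Three of the four claims are immediate.

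\emph{Claim 2} is the routing part of the loop invariant verbatim.

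\emph{Claim 4} follows from the moving-horizon part. It gives that every sink has scale greater than $n$. For every $m > n$, it gives $Q_m = \mathcal{F}_{n+1}$ with $\tau_m(q) = x[q]$: such an $m$ is not the root, since $n \geq 1$, so $m \in \mathcal{B}$. This is exactly claim 4, and it also yields injectivity of the outputs at the sinks.

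\emph{Claim 1.} For $t < n$, it is past consistency: every run of Eval fixes some $\zeta \in \mathcal{C}_1$ through its choices, and $\Eval_\zeta(t) = x[t]$. For $t = n$, it is \lemmaref{lem:p3-consistency-n}. Consequently, routing and data consistency now hold through the index $N = n$, together with injectivity at the sinks. So pinning (\corollaryref{cor:p3-pinning}) and \corollaryref{cor:p3-div-witness} are available with $N = n$.

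\emph{Claim 3} is the only step that needs thought, and we expect it to be the main obstacle. Fix $m$ and distinct $q_1, q_2 \in Q_m$. By relaxed distinguishability, the pair is settled or saturated.

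If the pair is settled, then $m\Delta_m(q_1,q_2) + \max(q_1,q_2) \leq n-1 \leq n$, and the data clause is part of \equationref{eq:p3-settled}. So claim 3 holds.

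If the pair is saturated, say with $q_2 = \max(q_1, q_2) = n \bmod m$ and $\ell \coloneqq \floor{n/m}$, we will show that $\Delta_m(q_1, q_2) \leq \ell$. Then the numeric bound $m\Delta_m + \max \leq m\ell + q_2 = n$ holds, and \corollaryref{cor:p3-div-witness} with $N = n$ delivers the data clause. To show $\Delta_m \leq \ell$, pick any $\gamma \in \mathcal{C}_m$; the set is nonempty by finiteness. The positions $m\ell + q_1 < n$ and $m\ell + q_2 = n$ are both at most $n$, so pinning with $N = n$ gives
\[
  \PEval_\gamma(q_1,\ell) = x[m\ell+q_1] \neq a = x[n] = \PEval_\gamma(q_2,\ell),
\]
using the saturation condition \equationref{eq:p3-saturated}. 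Hence $D_\gamma(q_1,q_2) \leq \ell$, and \equationref{eq:p3-div-path} gives $\Delta_m(q_1,q_2) \leq \ell$.

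The delicate point to check is that pinning at the position $n$ is legitimate. This requires consistency at $n$ along every continuation, and that is exactly what \lemmaref{lem:p3-consistency-n} provides. It also requires $q_2 \in Q_m$ in the post-cascade configuration, which holds because the pair consists of states there.
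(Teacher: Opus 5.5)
Your proposal is correct and follows the paper's proof essentially step for step. Claims 2 and 4 are read off the routing and moving-horizon parts of the loop invariant with $\mathcal{B} = S \setminus \{1\}$, and claim 1 combines past consistency with \lemmaref{lem:p3-consistency-n}. Claim 3 splits into settled and saturated pairs; in the saturated case, pinning at the position $n$ bounds $\Delta_m$ by $\floor{n/m}$ via \equationref{eq:p3-div-path}, and \corollaryref{cor:p3-div-witness} then supplies the separating symbols. The only cosmetic difference is that you exhibit the separation along one continuation rather than all of them, which suffices because $\Delta_m$ is a minimum over continuations.
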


\begin{proof}
By \lemmaref{lem:p3-cascade-end}, the loop invariant holds in the post-cascade
configuration, with $\mathcal{B} = S \setminus \{1\}$.

\emph{Claims 2 and 4.} Claim 2 is the routing part of the loop invariant. A
scale $m > n \geq 1$ is a non-root scale, hence belongs to $\mathcal{B}$, and
the moving horizon prescribes $Q_m = \mathcal{F}_{n+1}$ and $\tau_m(q) = x[q]$
on it, keeping every sink beyond $n$; this is claim 4. In particular, the
output map of every sink is injective, the symbols at the positions of
$\mathcal{F}_{n+1}$ being pairwise distinct.

\emph{Claim 1.} The selections of line 3 of \algorithmref{alg:eval} drive the
run of Eval$(t)$ along edges from the root until it reaches a scale with no
outgoing edges: the scales and edges traversed form a continuation $\zeta \in
\mathcal{C}_1$, and the run returns $\Eval_\zeta(t)$. For $t < n$, past
consistency gives $\Eval_\zeta(t) = x[t]$; for $t = n$,
\lemmaref{lem:p3-consistency-n} gives $\Eval_\zeta(n) = a = x[n]$. Data
consistency thus holds through the index $N = n$ and --- together with routing
and the injectivity just noted --- makes pinning
(\corollaryref{cor:p3-pinning}) and \corollaryref{cor:p3-div-witness} available
with $N = n$.

\emph{Claim 3.} Let $m \in S$ and let $q_1 \neq q_2$ be states of $Q_m$; by
relaxed distinguishability, the pair is settled or saturated. A settled pair
satisfies claim 3 outright: \equationref{eq:p3-settled} bounds $m
\Delta_m(q_1, q_2) + \max(q_1, q_2)$ by $n - 1$ and exhibits the separating
symbols. Suppose then that the pair is saturated, and write $u \coloneqq
\max(q_1, q_2)$, $w \coloneqq \min(q_1, q_2)$ and $\ell \coloneqq \floor{n /
m}$: by \equationref{eq:p3-saturated}, $u = n \bmod m$ --- so that $m \ell + u =
n$ --- and $x[m \ell + w] \neq a$. Both positions $m \ell + u = n$ and $m \ell
+ w < n$ are within reach of pinning, so, for every $\gamma \in \mathcal{C}_m$,
\[
  \PEval_\gamma(u, \ell) = x[n] = a \neq x[m \ell + w] = \PEval_\gamma(w,
  \ell),
\]
that is, every continuation separates the pair at the row $\ell$: $D_\gamma(q_1,
q_2) \leq \ell$. By \lemmaref{lem:p3-div-semantics}, whose hypothesis is the
sink injectivity noted with claim 4, the value $\ell^* \coloneqq \Delta_m(q_1,
q_2)$ is the least of the pathwise divergences
(\equationref{eq:p3-div-path}), hence $\ell^* \leq \ell$, and the numeric
clause of claim 3 follows: $m \ell^* + \max(q_1, q_2) \leq m \ell + u = n$. And
since $\ell^* < \infty$, \corollaryref{cor:p3-div-witness} (with $N = n$) turns
the divergence value into separating symbols: $x[m \ell^* + q_1] \neq x[m
\ell^* + q_2]$.
\end{proof}

Together with the initialization paragraph of
\sectionref{app:p3-transient-inv}, \lemmaref{lem:p3-pre-pruning} carries the
induction over the rounds of an epoch as far as line 18 of
\algorithmref{alg:p3-update}. The remaining distance to \lemmaref{lem:p3-inv}
is post-processing: \sectionref{app:p3-pruning-proof} verifies that the
deletions of Post-process preserve the four claims, treats the re-indexing
performed by a restart, and closes the induction.

\subsection{Post-Processing and Completion}
\label{app:p3-pruning-proof}

We first isolate pruning from the restart that it may trigger. In the following
lemma, the \emph{unpruned configuration} is the post-cascade configuration of
\lemmaref{lem:p3-pre-pruning}, and an \emph{induced subconfiguration} means
that some scales and all incident edges are deleted, while every state set,
output map and transition table attached to a surviving scale or edge is left
unchanged.

\begin{lemma}
\label{lem:p3-pruning-stability}
Suppose that the unpruned configuration satisfies the four claims of
\lemmaref{lem:p3-pre-pruning}. Let $(\hat{S}, \hat{\mathcal{E}})$ be a nonempty
induced subconfiguration in which every scale other than $1$ is the target of a
surviving edge and every sink has scale greater than $n$. Then the four claims
of \lemmaref{lem:p3-inv} hold in the induced subconfiguration.
\end{lemma}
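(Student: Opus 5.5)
We verify the four claims one at a time, using the fact that the induced subconfiguration changes no surviving table, state set or output map. Routing (claim 2) and horizon (claim 4) carry over directly. Every surviving edge keeps its table, so both routing clauses and heredity hold verbatim. The state sets at surviving scales beyond $n$ are unchanged, so $Q_m = \mathcal{F}_{n+1}$ and $\tau_m(q) = x[q]$ persist. The sink condition is part of the hypothesis, which also gives injectivity of $\tau_M$ at every sink. Before doing anything else, we note that every surviving scale has an entry path in $(\hat{S}, \hat{\mathcal{E}})$: each non-root scale is the target of a surviving edge whose source is strictly smaller, so by induction on scale we can walk back to $1$. This makes \lemmaref{lem:p3-peval-lift} and pinning applicable in the subconfiguration.

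For data consistency (claim 1), we use the truncation argument sketched in \sectionref{app:p3-post}. A continuation $\hat{\zeta} \in \hat{\mathcal{C}}_1$ ends at a sink $m$ of the pruned fragment with $m > n$. It uses only edges of $\mathcal{E}$, so it extends to an unpruned continuation $\zeta = \hat{\zeta}\gamma$ for some $\gamma$ from $m$ in the old fragment. Fix $t \le n$ and let $q$ be the state reached at $m$, so that $\Eval_{\hat\zeta}(t) = \tau_m(q)$. Every scale $m'$ on $\gamma$ satisfies $m' \geq m > n \geq t$, so the digit $\floor{t/m'} \bmod r$ is $0$. All scales on $\gamma$ lie beyond the horizon, so they share the state set $\mathcal{F}_{n+1} \ni q$. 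The equality clause of routing then gives $\delta_{m'}^r(q,0) = q$ at each step. Hence $\Eval_\zeta(t) = \tau_M(q) = x[q] = \tau_m(q)$, and old data consistency yields $\Eval_{\hat\zeta}(t) = x[t]$.

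Distinguishability (claim 3) is the step I expect to be the main obstacle. Diverge now minimizes over the pruned continuation sets $\hat{\mathcal{C}}_m$, which are not subsets of $\mathcal{C}_m$: their endpoints may be non-sinks of the old fragment. So the new value $\hat d_m(q_1,q_2)$ could in principle be larger than the old one, and could even be $\infty$. The plan is to show $\hat d_m(q_1,q_2) \le \ell \coloneqq \Delta_m(q_1,q_2)$. With the numeric bound $m\ell + \max(q_1,q_2) \le n$ that then follows, \corollaryref{cor:p3-div-witness}, applied in the subconfiguration with $N = n$, supplies the separating symbols. Its hypotheses there are routing, data consistency through $n$ and sink injectivity, all established above.

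To prove $\hat d_m \le \ell$, we pin the data. Fix any $\hat\gamma \in \hat{\mathcal{C}}_m$, which exists because the pruned fragment is finite and nonempty. Both positions $m\ell + q_h$ are at most $n$. By \corollaryref{cor:p3-pinning} in the subconfiguration, $\PEval_{\hat\gamma}(q_h,\ell) = x[m\ell+q_h]$ for $h = 1,2$. By \lemmaref{lem:p3-pre-pruning}, the old divergence at row $\ell$ is witnessed in the data: $x[m\ell+q_1] \neq x[m\ell+q_2]$. Hence $D_{\hat\gamma}(q_1,q_2) \le \ell$, and \lemmaref{lem:p3-div-semantics} gives $\hat d_m \le \ell$. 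In fact, the pinned agreement below row $\ell$ shows equality, as in \corollaryref{cor:p3-div-stability}, though only the inequality is needed.
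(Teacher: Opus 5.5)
Your proposal is correct and follows the paper's proof essentially step for step. The steps match: routing and the horizon restrict directly, the entry path comes from walking back along surviving incoming edges to the root, data consistency follows by extending the truncated run into the unpruned fragment (zero digits beyond the horizon plus the equality clause of routing), and pinning the old witness row $\ell$ along every surviving continuation gives $\hat{d}_m(q_1,q_2) \le \ell$ before \corollaryref{cor:p3-div-witness} is invoked. Your aside that pinning the agreement below row $\ell$ actually yields equality is also right, though, as you say, only the inequality is needed.
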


\begin{proof}
First note that every surviving scale is reachable from the root. Indeed, edges
strictly increase the scale. The minimal element of the nonempty set $\hat{S}$
must therefore be $1$: otherwise it would be a non-root scale with no incoming
edge. Repeatedly following a surviving incoming edge from any $m \in \hat{S}$
strictly decreases the scale and must end at $1$, producing an entry path.
Continuations exist as well, since the fragment is finite and acyclic.

\emph{Routing and the horizon.} Every surviving transition table and state set
is unchanged, so both clauses of routing, including heredity, restrict
immediately from the unpruned configuration to the surviving edges. Likewise,
if $m \in \hat{S}$ and $m > n$, \lemmaref{lem:p3-pre-pruning} gives $Q_m =
\mathcal{F}_{n+1}$ and $\tau_m(q) = x[q]$ for every $q \in Q_m$. The assumed
sink condition supplies the remaining part of the horizon claim. In particular,
the output map at every surviving sink is injective.

\emph{Data consistency.} Fix $t \leq n$ and a run of Eval$(t)$ in the induced
subconfiguration. Let $M$ be the sink at which it halts and $q \in Q_M$ the
state it reaches. By assumption $M > n$. The path followed up to $M$ also
exists in the unpruned fragment; extend it there by an arbitrary continuation
from $M$ to an old sink (using the empty continuation when $M$ was already a
sink). Every scale $k$ on the added tail satisfies $k \geq M > n$, and the old
horizon claim gives
\[
  Q_k = Q_{k r} = \mathcal{F}_{n+1}
\]
at every edge $(k, r)$ of that tail. Since $q \in Q_M$, induction along the
tail and the equality part of routing give $\delta_k^r(q, 0) = q$. Moreover,
the digit of $t$ at every such edge is $\floor{t / k} \bmod r = 0$. The
extended old run therefore keeps the state $q$ and returns $x[q]$ at its old
sink, while the truncated new run returns $\tau_M(q) = x[q]$. The two runs have
the same value, and the old data-consistency claim identifies that value with
$x[t]$. This proves data consistency after pruning, for every choice of path.

\emph{Distinguishability.} Write $d_m^{\mathrm{pre}}$ and
$d_m^{\mathrm{post}}$ for the divergence tables before and after the deletion.
Fix a surviving scale $m$ and distinct states $q_1, q_2 \in Q_m$, and put
\[
  \ell \coloneqq d_m^{\mathrm{pre}}(q_1, q_2).
\]
The unpruned distinguishability claim gives
\begin{equation}
  \label{eq:p3-prune-witness}
  m \ell + \max(q_1, q_2) \leq n \quad \text{and} \quad x[m \ell + q_1] \neq
  x[m \ell + q_2].
\end{equation}
We have already established routing and data consistency in the induced
subconfiguration, and every surviving scale has an entry path. Pinning
therefore identifies the partial evaluations along every surviving continuation
at row $\ell$ with the two data values in
\equationref{eq:p3-prune-witness}, which are different. By
\equationref{eq:p3-div-path},
\[
  d_m^{\mathrm{post}}(q_1, q_2) \leq \ell,
\]
and hence
\[
  m d_m^{\mathrm{post}}(q_1, q_2) + \max(q_1, q_2) \leq m \ell + \max(q_1, q_2)
  \leq n.
\]
The post-pruning divergence value is finite, so
\corollaryref{cor:p3-div-witness}, applied in the induced subconfiguration,
gives
\[
  x[m d_m^{\mathrm{post}}(q_1, q_2) + q_1] \neq x[m d_m^{\mathrm{post}}(q_1,
  q_2) + q_2].
\]
This is the distinguishability claim and completes the proof.
\end{proof}

We apply the lemma to Post-process. Each pass through line 3 of
\algorithmref{alg:p3-post} deletes at least one scale and adds nothing, so the
loop terminates. It leaves an induced subconfiguration of the post-cascade one.
If no capacity violation is found in line 1, no deletion occurs: before
post-processing, every non-root scale still has an incoming edge, because
preprocessing creates scales only together with such an edge and the cascade
deletes no edge, while every sink is beyond $n$ by
\lemmaref{lem:p3-pre-pruning}. Otherwise, when the loop stops with a nonempty
scale set, the emptiness of $\mathcal{D}$ says exactly that every surviving
non-root scale is the target of a surviving edge and that every surviving sink
is greater than $n$. Thus \lemmaref{lem:p3-pruning-stability} applies in either
case.

It remains to treat the case in which the loop deletes every scale. Let $N$ be
the old epoch's just-observed index --- the value of the counter upon entry to
Post-process. Line 5 of \algorithmref{alg:p3-post} doubles $c_{\max}$ and calls
Initialize$(a)$. Let $y$ be the suffix of the old epoch's input from that
position --- the input of the new epoch --- so $y[j] \coloneqq x[N+j]$ and
$y[0] = a$. Relative to this new epoch, Initialize produces
\[
  S = \{1\}, \quad \mathcal{E} = \emptyset, \quad Q_1 = \{0\}, \quad \tau_1(0)
  = y[0].
\]
The unique run of Eval$(0)$ returns $y[0]$; routing and distinguishability are
vacuous; and the only scale is a sink with $1 > 0$, whose state set is the
first-occurrence set $\mathcal{F}_1 = \{0\}$. Hence all four claims of
\lemmaref{lem:p3-inv} hold for the one-symbol prefix $y[:1]$. The increment in
line 19 of \algorithmref{alg:p3-update} merely sets the counter for the next
round to $1$, exactly as after an ordinary initialization. The doubled capacity
does not enter any of the four claims.

\begin{proofof}{\lemmaref{lem:p3-inv}}
We induct over the rounds, restarting the induction whenever a new epoch
begins. On the first round of an epoch, Initialize creates the one-scale
configuration just checked above (for the first epoch it is invoked by line 1
of \algorithmref{alg:p3-update}; for every later epoch it is invoked by line 5
of \algorithmref{alg:p3-post}). Now suppose the claims hold after the prefix
$x[:n]$ of the current epoch, where $n \geq 1$, and process the symbol $a =
x[n]$. The preprocessing and cascade arguments from
\sectionref{app:p3-transient-inv} through
\sectionref{app:p3-cascade-complete} culminate in
\lemmaref{lem:p3-pre-pruning}, so the post-cascade configuration satisfies all
four claims for $x[:n+1]$. If Post-process leaves a nonempty fragment,
\lemmaref{lem:p3-pruning-stability} preserves those claims. If it empties the
fragment, the restart verification above establishes them as the base case for
the new epoch. Line 19 then advances only the next-round counter. The induction
therefore covers every round of every epoch and proves \lemmaref{lem:p3-inv}.
\end{proofof}

\section{Mistake and Compression Bounds}
\label{app:p3-proof}

This appendix completes the proof of \theoremref{thm:p3}, taking the invariants
of \lemmaref{lem:p3-inv} (\appendixref{app:p3-correctness}) as given. We begin
with auxiliary results of three kinds: certifications of the bisection and of
the rounded weights by which the implementation of Predict computes its vote
(\sectionref{app:p3-exponent}), a lemma on plurality voting under a charging
budget (\sectionref{app:budgeted-plurality}), and the branch prediction game
with its mistake bound, recalled from \citep{Kosoy2026b}
(\sectionref{app:branch-prediction}). \sectionref{app:p3-state-bound} then
bounds the number of states at every scale, and \sectionref{app:p3-stat} and
\sectionref{app:p3-compression} prove the mistake bound
\equationref{eq:p3-mistakes} and the compression bound
\equationref{eq:p3-state}.

\subsection{Computing the Exponent}
\label{app:p3-exponent}

Recall the exponent $\eta \geq 0$ of the prediction weights
(\sectionref{sec:p3-predict}), the unique solution of $f(\eta) = 1$ for $f(t)
\coloneqq \sum_{r \in R} r^{-t}$ --- a continuous, strictly decreasing function
with $f(0) = \abs{R} \geq 1$ and limit $0$ --- and the subroutine Bisect
(\algorithmref{alg:p3-bisect}, \sectionref{app:p3-predict}), by which the
implementation of Predict approximates it (line 2 of
\algorithmref{alg:p3-predict-fp}). The following lemmas certify the subroutine,
and the calibration that \algorithmref{alg:p3-predict-fp} builds on it; as in
\sectionref{app:p3-predict}, $r_{\max} \coloneqq \max R$.

\begin{lemma}[Certified bisection]
\label{lem:p3-bisect}
For every rational $\epsilon \in (0, 1 / 2]$, every run of Bisect$(\epsilon)$
halts, within $O(\log(\abs{R} \log(r_{\max}) / \epsilon))$ iterations, and
returns a rational $\tilde{\eta} \in [0, \ceil{\log \abs{R}}]$ with
\[
  \abs{f(\tilde{\eta}) - 1} \leq \epsilon \quad \text{and} \quad
  \abs{\tilde{\eta} - \eta} \leq 2 \epsilon / \ln 2 \leq 3 \epsilon.
\]
\end{lemma}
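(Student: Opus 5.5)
The plan is to maintain a loop invariant for Bisect$(\epsilon)$ and then convert the final short bracket into the two accuracy bounds. The invariant is that the bracket $[l, h]$ is contained in $[0, \ceil{\log \abs{R}}]$, has dyadic rational endpoints, and satisfies the relaxed straddle
\[
  f(l) \geq 1 - \epsilon / 2 \quad \text{and} \quad f(h) \leq 1 + \epsilon / 2 .
\]
Initially $f(0) = \abs{R} \geq 1$ and $f(\ceil{\log \abs{R}}) \leq \abs{R} \cdot 2^{-\ceil{\log \abs{R}}} \leq 1$, as already noted in \sectionref{app:p3-predict}. At an iteration, if $\tilde f \geq 1$ then $f(t) \geq \tilde f - \epsilon/2 \geq 1 - \epsilon/2$, so setting $l \gets t$ preserves the invariant. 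Otherwise $f(t) < 1 + \epsilon/2$, and setting $h \gets t$ preserves it. The bracket halves each step, starting from length at most $\ceil{\log\abs{R}} \leq \abs{R}$. So the loop stops after at most $\log_2\bigl(2\abs{R}^2\ceil{\log r_{\max}}/\epsilon\bigr) + 1 = O(\log(\abs{R}\log(r_{\max})/\epsilon))$ iterations. The returned $l$ is a dyadic rational in the initial interval.

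Next I bound the variation of $f$ on the bracket. The derivative is
\[
  f'(t) = -\sum_{r \in R} (\ln r)\, r^{-t},
\]
and on $t \geq 0$ it satisfies $\abs{f'(t)} \leq \abs{R} \ln r_{\max} \leq \abs{R} \ceil{\log r_{\max}}$. At termination $h - l \leq \epsilon / (2\abs{R}\ceil{\log r_{\max}})$, so $\abs{f(l) - f(h)} \leq \epsilon/2$. Combining with the invariant:
\[
  f(l) \leq f(h) + \epsilon/2 \leq 1 + \epsilon , \qquad f(l) \geq 1 - \epsilon/2 .
\]
Hence $\abs{f(\tilde\eta) - 1} \leq \epsilon$.

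The main obstacle is turning the value bound into the parameter bound $\abs{\tilde\eta - \eta} \leq 2\epsilon/\ln 2$, which needs a lower bound on $\abs{f'}$ between $\tilde\eta$ and $\eta$. Every arity is at least $2$, so $\abs{f'(t)} \geq \ln 2 \cdot f(t)$. Between $\tilde\eta$ and $\eta$, monotonicity puts $f$ between $f(\tilde\eta)$ and $f(\eta) = 1$, so $f \geq 1 - \epsilon \geq 1/2$ there, using $\epsilon \leq 1/2$. Thus $\abs{f'} \geq (\ln 2)/2$ on that segment. The mean value theorem then gives
\[
  \abs{\tilde\eta - \eta} \leq \frac{\abs{f(\tilde\eta) - 1}}{(\ln 2)/2} \leq \frac{2\epsilon}{\ln 2} \leq 3\epsilon .
\]
I would double-check the case $\tilde\eta > \eta$, where $f(\tilde\eta) < 1$ and the lower bound $f \geq 1-\epsilon$ is exactly what is used. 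Finally, every comparison is between rationals, as remarked after \algorithmref{alg:p3-bisect}, so each iteration is well defined.
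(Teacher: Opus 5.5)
Your proposal is correct and follows essentially the same route as the paper's proof. Both use the relaxed-straddle invariant $f(l) \geq 1 - \epsilon/2$, $f(h) \leq 1 + \epsilon/2$, the two-sided derivative bound $f(t)\ln 2 \leq -f'(t) \leq \abs{R}\ceil{\log r_{\max}}$, the exit-width argument for $\abs{f(\tilde\eta) - 1} \leq \epsilon$, and the lower bound $-f' \geq (\ln 2)/2$ between $\tilde\eta$ and $\eta$ (from $\epsilon \leq 1/2$) to obtain the root accuracy.
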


\begin{proof}
Line 1 sets the endpoints to integers, and each iteration of line 5 replaces
one of them by their midpoint; so $l$ and $h$ stay dyadic rationals with $0
\leq l \leq h \leq \ceil{\log \abs{R}}$ --- in particular, the returned
$\tilde{\eta} = l$ is such --- and every iteration halves the width $h - l$.
From the initial width $\ceil{\log \abs{R}}$, the guard of line 2 therefore
fails within $O(\log(\abs{R} \log(r_{\max}) / \epsilon))$ iterations, and upon
exit
\[
  h - l \leq \epsilon / (2 \abs{R} \ceil{\log r_{\max}}).
\]
We use two bounds on the derivative $-f'(t) = \sum_{r \in R} r^{-t} \ln r$:
every arity lying between $2$ and $r_{\max}$, while $0 < r^{-t} \leq 1$ for $t
\geq 0$ and $\ln x \leq \log x$ for $x \geq 1$,
\[
  f(t) \ln 2 \leq -f'(t) \leq \abs{R} \ln r_{\max} \leq \abs{R} \ceil{\log
  r_{\max}}.
\]

\emph{Invariant.} Whenever line 2 is evaluated, $f(l) \geq 1 - \epsilon / 2$
and $f(h) \leq 1 + \epsilon / 2$. Initially, $f(l) = f(0) = \abs{R} \geq 1$,
and $f(h) \leq \abs{R} \, 2^{-\ceil{\log \abs{R}}} \leq 1$, since $f(t) \leq
\abs{R} 2^{-t}$, every arity being at least $2$. Line 5 preserves the property:
when it moves $l$ to $t$, line 4 returned $\tilde{f} \geq 1$, so $f(t) \geq
\tilde{f} - \epsilon / 2 \geq 1 - \epsilon / 2$; when it moves $h$, line 4
returned $\tilde{f} < 1$, so $f(t) \leq \tilde{f} + \epsilon / 2 < 1 + \epsilon
/ 2$.

\emph{Value accuracy.} Upon exit, the exit width and the upper derivative bound
give $f(l) - f(h) \leq (h - l) \cdot \abs{R} \ceil{\log r_{\max}} \leq \epsilon
/ 2$; combined with the invariant,
\[
  1 - \epsilon / 2 \leq f(l) \leq f(h) + \epsilon / 2 \leq 1 + \epsilon,
\]
so $\abs{f(\tilde{\eta}) - 1} \leq \epsilon$.

\emph{Root accuracy.} As $\epsilon \leq 1 / 2$, the monotone $f$ is at least
$\min(f(\tilde{\eta}), f(\eta)) \geq 1 - \epsilon \geq 1 / 2$ between
$\tilde{\eta}$ and $\eta$, so $-f' \geq \ln(2) / 2$ there, by the lower
derivative bound; hence $\epsilon \geq \abs{f(\tilde{\eta}) - f(\eta)} \geq
(\ln(2) / 2) \, \abs{\tilde{\eta} - \eta}$, i.e.\ $\abs{\tilde{\eta} - \eta}
\leq 2 \epsilon / \ln 2 \leq 3 \epsilon$.
\end{proof}

Lines 1--3 of \algorithmref{alg:p3-predict-fp} build on Bisect to calibrate the
fixed-point arithmetic of the prediction vote; the next lemma certifies the
parameters they produce.

\begin{lemma}[Implementation accuracy]
\label{lem:p3-fp-accuracy}
Consider a run of lines 1--3 of \algorithmref{alg:p3-predict-fp}, at a round
whose time index is $n \geq 1$. Then:
\begin{enumerate}
  \item $F = O(\log r_{\max} \cdot \log(r_{\max} n))$, and
    \begin{equation}
      \label{eq:p3-precision}
      8 \abs{R} \log(2 r_{\max} n) \cdot u \leq (r_{\max} n)^{-\log r_{\max}};
    \end{equation}
  \item $\hat{\eta}$ is rational, with $0 \leq \hat{\eta} - \eta \leq 6
    \epsilon$;
  \item for every $r \in R$: $\rho_r \in [0, r^{-\eta}]$ and $r^{-\eta} -
    \rho_r \leq u$.
\end{enumerate}
\end{lemma}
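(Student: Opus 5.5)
The three claims are independent, and I would verify them in order. Claim 1 is direct arithmetic with the definition $F = (\ceil{\log r_{\max}} + 1)\ceil{\log(2 r_{\max} n)} + 2$ from line 1.

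\emph{Claim 1.} The bound $F = O(\log r_{\max} \cdot \log(r_{\max} n))$ is immediate, since $r_{\max} \geq 2$ and $n \geq 1$ make every factor at least a constant multiple of the corresponding logarithm. For \equationref{eq:p3-precision}, set $L \coloneqq \ceil{\log(2 r_{\max} n)}$. Then $u = 2^{-2} \cdot 2^{-L} \cdot 2^{-\ceil{\log r_{\max}} L}$. The middle factor satisfies $2^{-L} \leq 1/(2 r_{\max} n)$, and the last satisfies $2^{-\ceil{\log r_{\max}} L} \leq (2 r_{\max} n)^{-\log r_{\max}} \leq (r_{\max} n)^{-\log r_{\max}}$. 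It then remains to show $8\abs{R}\log(2 r_{\max} n) \leq 4 \cdot 2 r_{\max} n$, i.e.\ $\abs{R}\log(2 r_{\max} n) \leq r_{\max} n$. Since $R \subseteq \{2,\dots,r_{\max}\}$, we have $\abs{R} \leq r_{\max} - 1$. I expect this inequality to be the only slightly delicate point. If the crude bound $\abs{R} \leq r_{\max}-1$ is too loose at small $n$, I would absorb the slack from the rounding $2^{-L}$ versus $1/(2r_{\max}n)$ and from $2^{-\ceil{\log r_{\max}}L}$ versus $(2r_{\max}n)^{-\log r_{\max}}$. The latter alone gives an extra factor $2^{-\log r_{\max}} = 1/r_{\max}$ from the $2$ inside $(2r_{\max}n)$. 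So the real requirement becomes $2\abs{R}\log(2r_{\max}n) \leq 2 r_{\max}^2 n$, which holds comfortably using $\log(2 r_{\max} n) \leq 2 r_{\max} n / 2$ and $\abs{R} < r_{\max}$.

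\emph{Claim 2.} By \lemmaref{lem:p3-bisect}, applicable since $\epsilon = u/(16\ceil{\log r_{\max}}) \leq 1/2$ is rational, Bisect returns a rational $\tilde\eta$ with $\abs{\tilde\eta - \eta} \leq 3\epsilon$. Hence $\hat\eta = \tilde\eta + 3\epsilon$ is rational and $0 \leq \hat\eta - \eta \leq 6\epsilon$.

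\emph{Claim 3.} Line 3 gives $0 \leq \rho_r \leq r^{-\hat\eta} \leq r^{-\eta}$, the last step by $\hat\eta \geq \eta$. For the error, I would split
\[
r^{-\eta} - \rho_r = (r^{-\eta} - r^{-\hat\eta}) + (r^{-\hat\eta} - \rho_r).
\]
The second term is at most $u/2$ by line 3. For the first, the mean value theorem gives $r^{-\eta} - r^{-\hat\eta} \leq (\hat\eta-\eta)\ln r \cdot r^{-\eta} \leq 6\epsilon \ln r_{\max} \leq 6\epsilon\ceil{\log r_{\max}} = 6u/16 < u/2$. Summing the two terms yields $r^{-\eta} - \rho_r \leq u$.
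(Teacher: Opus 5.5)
Your proof is correct and follows the paper's. Claims 2 and 3 are argued exactly as there: the $3\epsilon$ guarantee of Bisect shifted by $3\epsilon$, then $r^{-\eta}-r^{-\hat\eta}\le 6\epsilon\ceil{\log r_{\max}} = 3u/8$ plus the $u/2$ conceded by line 3. For claim 1 you do the same bit-accounting against $F=(\ceil{\log r_{\max}}+1)\ceil{\log(2r_{\max}n)}+2$, only multiplicatively rather than after taking logarithms.

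Your first reduction, to $\abs{R}\log(2r_{\max}n)\le r_{\max}n$, is false in general (e.g.\ $R=\{2,3\}$, $n=1$). So the extra factor $1/r_{\max}$ hidden in $(2r_{\max}n)^{-\log r_{\max}}$ is required, not a fallback. With it, the reduced inequality $\abs{R}\log(2r_{\max}n)\le r_{\max}^2 n$ holds for every $n\ge 1$, and the argument is complete.
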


\begin{proof}
\emph{Claim 1.} Abbreviate $a \coloneqq \ceil{\log r_{\max}}$ and $b \coloneqq
\ceil{\log(2 r_{\max} n)}$, so that line 1 sets $F = (a + 1) b + 2$; since $a <
\log(r_{\max}) + 1$ and $b < \log(r_{\max} n) + 2$, the order of growth
follows. As for \equationref{eq:p3-precision}, both of its sides are positive,
$\log u = -F$, and $\log((r_{\max} n)^{-\log r_{\max}}) = -\log r_{\max} \cdot
\log(r_{\max} n)$; taking logarithms and rearranging, the claimed inequality
becomes
\[
  \log(8 \abs{R} \log(2 r_{\max} n)) + \log r_{\max} \cdot \log(r_{\max} n)
  \leq F.
\]
We bound the two summands in turn. For the first, $\log 8 = 3$; $\log \abs{R}
\leq \log r_{\max} \leq a$, since $R \subseteq \{2, \dots, r_{\max}\}$; and
$\log(2 r_{\max} n) \leq b \leq 2^{b - 1}$ --- the latter because $b$ is a
positive integer --- whence $\log \log(2 r_{\max} n) \leq b - 1$: altogether,
$\log(8 \abs{R} \log(2 r_{\max} n)) \leq 3 + a + (b - 1)$. For the second, $0
\leq \log r_{\max} \leq a$ and $0 \leq \log(r_{\max} n) = \log(2 r_{\max} n) -
1 \leq b - 1$, so $\log r_{\max} \cdot \log(r_{\max} n) \leq a (b - 1)$.
Summing, the left-hand side of the displayed inequality is at most
\[
  (3 + a + (b - 1)) + a (b - 1) = a b + b + 2 = (a + 1) b + 2 = F.
\]

\emph{Claim 2.} The accuracy $\epsilon = u / (16 \ceil{\log r_{\max}})$ set by
line 1 is rational and at most $1 / 2$, so \lemmaref{lem:p3-bisect} applies:
Bisect$(\epsilon)$ returns a rational within $3 \epsilon$ of $\eta$, and the
$\hat{\eta}$ of line 2, rational like $\epsilon$, satisfies $0 \leq \hat{\eta}
- \eta \leq 6 \epsilon$.

\emph{Claim 3.} By claim 2, $0 \leq r^{-\eta} - r^{-\hat{\eta}} \leq
(\hat{\eta} - \eta) \ln r \leq 6 \epsilon \ceil{\log r_{\max}} \leq u / 2$, and
line 3 concedes at most $2^{-F - 1} = u / 2$ more; moreover $\rho_r \leq
r^{-\hat{\eta}} \leq r^{-\eta}$ and $\rho_r \geq 0$, by the specification of
that line.
\end{proof}

\subsection{Budgeted Plurality}
\label{app:budgeted-plurality}

The following lemma abstracts the argument by which the mistakes of $P^3$ are
bounded in \sectionref{app:p3-stat}. Its setting is that of prediction with
expert advice \citep{Cesa-Bianchi2006}, with two features dictated by our
application, in which the experts are the infinite paths of
\sectionref{sec:p3-predict}: the learner predicts by a plurality vote among the
experts currently admitted to the vote, under a fixed prior and without
reweighting --- there are too many experts to carry individual weights --- and
the set of admitted experts shrinks over time by a rule that is arbitrary as
far as the lemma is concerned. What the lemma asks is that no expert be charged
with a wrong vote or an abstention more than a bounded number of times while
admitted, and that the admitted experts retain some prior mass. The prediction
is required to attain the plurality only up to a tolerance, because $P^3$
evaluates its vote in finite-precision arithmetic
(\sectionref{app:p3-predict}).

\textbf{Setting.} Let $(H, w)$ be a probability space, whose points we call
\emph{experts} and whose measure $w$ we call the \emph{prior}. A \emph{play} of
$T$ rounds consists, for every round $t = 1, \dots, T$, of a measurable set
$H_t \subseteq H$ of \emph{live} experts, a measurable \emph{vote} $v_t : H_t
\to \Sigma \cup \{\bot\}$ (the value $\bot$ representing abstention), an
\emph{outcome} $y_t \in \Sigma$, a \emph{tolerance} $\epsilon_t \geq 0$, and a
\emph{prediction} $\hat{y}_t \in \Sigma$ that maximizes the voting mass up to
the tolerance: $w\{h \in H_t \mid v_t(h) = \hat{y}_t\} \geq w\{h \in H_t \mid
v_t(h) = a\} - \epsilon_t$ for every $a \in \Sigma$. We say that the expert $h$
is \emph{charged} at round $t$ if $h \in H_t$ and $v_t(h) \neq y_t$ --- whether
it voted wrongly or abstained --- and that round $t$ is a \emph{mistake} if
$\hat{y}_t \neq y_t$.

\begin{lemma}[Budgeted plurality]
\label{lem:budgeted-plurality}
Consider a play of $T$ rounds and a real number $K \geq 1$ s.t.:
\begin{enumerate}
  \item the live sets decrease: $H_{t+1} \subseteq H_t$ for every $t < T$;
  \item the tolerances are commensurate with the live mass: $2 \epsilon_t \leq
    w(H_t)$ for every $t \leq T$;
  \item every expert is charged at most $K$ times;
  \item $w(H_T) > 0$.
\end{enumerate}
Then the number of mistakes is at most $8 K \ln(2 / w(H_T))$.
\end{lemma}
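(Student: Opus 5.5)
We bound the mistakes by a potential argument on the mass of live experts weighted by their remaining charge budget. The natural first try, halving-style tracking of $w(H_t)$ alone, fails: experts are never removed for erring, so $w(H_t)$ need not drop at mistakes. Instead we track a budget-weighted potential. For an expert $h$ and a round $t$, let $c_t(h)$ be the number of rounds $s < t$ at which $h$ was charged. Fix $\beta \coloneqq 1 - 1/(2K)$ and set
\[
  \Phi_t \coloneqq \int_{H_t} \beta^{c_t(h)} \, dw(h).
\]
Then $\Phi_1 \leq 1$. Since $c_t \leq K$ on $H_t$ by hypothesis 3, $\Phi_{T+1}$ (computed over $H_T$) is at least $\beta^K w(H_T) \geq w(H_T)/2$, using $(1 - 1/(2K))^K \geq 1/2$ for $K \geq 1$. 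Hypothesis 1 ensures that shrinking the live set only decreases $\Phi$, which is the direction we need.

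The key step is a per-mistake decrease of $\Phi$. Fix a mistake round $t$ and let $A \coloneqq \{h \in H_t \mid v_t(h) = y_t\}$. Every live expert outside $A$ is charged at round $t$.

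If $w(A)$ is not too large relative to the live mass, the charged set is heavy. But $\Phi$ weights experts by $\beta^{c}$, not by $w$. So we must compare the $\Phi$-mass of the charged set, not its $w$-mass, and this is the main obstacle.

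The plurality condition is stated in raw prior mass. At a mistake, $w\{v_t = \hat y_t\} \geq w(A) - \epsilon_t$, and both $\{v_t = \hat y_t\}$ and the abstainers lie outside $A$ and are charged. Hence the charged $w$-mass is at least $\max(w(A) - \epsilon_t,\ w(H_t) - w(A)) \geq (w(H_t) - \epsilon_t)/2 \geq w(H_t)/4$, using hypothesis 2. So a constant fraction of the live prior mass is charged at each mistake.

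To convert this into a $\Phi$ decrease despite the mismatch in weighting, I would bound the number of mistakes differently. Let $M$ be the number of mistakes. Each mistake charges $w$-mass at least $w(H_t)/4 \geq w(H_T)/4$, so the total charge mass is $\int_{H_T} (\text{charges}) \, dw$ plus contributions from experts that have since died.

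This linear count only gives $M \leq 4K / w(H_T)$, which is too weak; we need a logarithmic dependence. So I would instead use a layered (doubling) argument. Partition the mistakes into phases according to the value of $w(H_t)$, which is nonincreasing, in dyadic ranges $(2^{-j-1}, 2^{-j}]$ for $j = 0, \dots, \log(2/w(H_T))$.

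Within phase $j$, every mistake charges mass at least $2^{-j-3}$. Meanwhile, the total charge mass available among experts live during the phase is at most $K \cdot w(H_{\text{start of phase}}) \leq K 2^{-j}$. Hence phase $j$ contains at most $8K$ mistakes.

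Summing over the roughly $\log_2(2/w(H_T))$ phases gives $M \leq 8K \log_2(2/w(H_T))$. This is of the required order but has $\log_2$ where the statement has $\ln$.

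To obtain the stated constant $8K\ln(\cdot)$, I would sharpen the phase count to a continuous version. Let $W_t \coloneqq w(H_t)$. Each mistake consumes at least $W_t/4$ units of budget out of the remaining budget. The remaining budget is at most $K \cdot W_t$ minus the budget already spent by currently live experts, so I would run an integral comparison $\sum_t 1/W_t \cdot dW$-style, yielding $\ln$ in place of $\log_2$. If the constant does not close exactly, I would adjust the tolerance bookkeeping (via $\epsilon_t \leq W_t/2$) or charge half the mass more carefully.

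This refinement is where I expect the most fiddling.
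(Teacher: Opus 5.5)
Your setup is sound, and your key per-mistake estimate is the paper's: at a mistake the charged set $C_t = H_t \setminus A$ has $w(C_t) \geq (w(H_t) - \epsilon_t)/2 \geq w(H_t)/4$. Your dyadic-phase argument is also valid, but it proves only $M \leq 8K(\lfloor \log_2(1/w(H_T)) \rfloor + 1) \leq 8K \log_2(2/w(H_T))$. This exceeds the claimed $8K \ln(2/w(H_T))$ by a factor $1/\ln 2 \approx 1.44$, so the statement is not established.

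The deferred refinement cannot come from retuning the phases. The per-phase budget $K\,w(H_{\text{start}})$ ignores budget already spent in earlier phases. With phases of ratio $\lambda$, it gives a leading constant $4\lambda/\ln\lambda \geq 4e > 8$ in front of $K \ln(1/w(H_T))$. Your continuous sketch tracks the remaining budget $\int_{H_t}(K - c_t(h))\,dw(h)$. That quantity does shrink by a factor $1 - 1/(4K)$ at every mistake. However, it has no positive floor, since every live expert may have exhausted its budget, so it yields no bound at all.

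Incidentally, the obstacle you raised for your multiplicative potential is not real. Since $\beta^{c_t(h)} \in [1/2, 1]$, each charged expert loses at least $1/(4K)$. So $\Phi_t \leq w(H_t)$ drops by at least $\Phi_t/(16K)$ per mistake. But this only gives $16K \ln(2/w(H_T))$.

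The missing idea is to shift your remaining-budget potential by $K$ per unit of live mass: $\Phi_t \coloneqq \int_{H_t} (2K - c_t(h))\,dw(h)$, with $H_{T+1} \coloneqq H_T$. The argument then runs as follows.
\begin{itemize}
\item By hypothesis 3 the integrand lies in $[K, 2K]$. Hence $\Phi_1 \leq 2K$ and $\Phi_{T+1} \geq K\,w(H_T)$, which is exactly the floor your version lacks.
\item By hypothesis 1 and positivity of the integrand, $\Phi_{t+1} \leq \Phi_t - w(C_t)$ at every round.
\item At a mistake, $w(C_t) \geq w(H_t)/4 \geq \Phi_t/(8K)$, because $\Phi_t \leq 2K\,w(H_t)$.
\end{itemize}
Each mistake therefore multiplies $\Phi$ by at most $1 - 1/(8K) \leq e^{-1/(8K)}$. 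The inequality $K\,w(H_T) \leq 2K e^{-M/(8K)}$ then rearranges to exactly $M \leq 8K \ln(2/w(H_T))$. This is the paper's proof.
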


\begin{proof}
For $t = 1, \dots, T + 1$, let $\kappa_t(h)$ be the number of rounds $t' < t$
at which $h$ is charged, and consider the potential
\[
  \Phi_t \coloneqq \int_{H_t} (2K - \kappa_t(h)) \,\mathrm{d}w(h),
\]
where $\Phi_{T+1}$, the potential after the last round, is taken over the last
live set: $H_{T+1} \coloneqq H_T$. By hypothesis 3, $\kappa_t \leq K$, so the
integrand lies between $K$ and $2K$; hence $\Phi_1 = 2K w(H_1) \leq 2K$, while
$\Phi_{T+1} \geq K w(H_{T+1}) = K w(H_T)$. Fix $t \leq T$, and let $C_t$ be the
set of experts charged at round $t$, so that $\kappa_{t+1} - \kappa_t$ is the
indicator of $C_t$. Since $H_{t+1} \subseteq H_t$ (for $t = T$ by the
convention above) and the integrand is positive,
\[
  \Phi_{t+1} \leq \int_{H_t} (2K - \kappa_{t+1}(h)) \,\mathrm{d}w(h) = \Phi_t -
  w(C_t).
\]
Suppose that round $t$ is a mistake. The experts of $H_t$ left uncharged at $t$
are exactly those voting for $y_t$; the experts voting for $\hat{y}_t \neq y_t$
are all charged, and their mass falls short of the mass of the experts voting
for $y_t$ by at most $\epsilon_t$, by the tolerance property of the prediction.
Hence $w(H_t) - w(C_t) \leq w(C_t) + \epsilon_t$, so $w(C_t) \geq (w(H_t) -
\epsilon_t) / 2 \geq w(H_t) / 4 \geq \Phi_t / (8K)$, the middle inequality by
hypothesis 2 and the last because the integrand is at most $2K$. Thus every
mistake shrinks the potential by the factor $1 - 1 / (8K) \leq e^{-1 / (8K)}$,
and a play with $M$ mistakes satisfies
\[
  K w(H_T) \leq \Phi_{T+1} \leq e^{-M / (8K)} \Phi_1 \leq 2K e^{-M / (8K)},
\]
which rearranges to the claim.
\end{proof}

The lemma is a budgeted variant of the Halving algorithm
\citep{Cesa-Bianchi2006}, which is the case $K = 1$, with exact predictions
($\epsilon_t = 0$) and experts discarded at their first wrong vote: the live
set is then the version space, and an expert consistent with all outcomes keeps
$w(H_T) \geq 1 / \abs{H}$ under the uniform prior on a finite $H$, whence
$O(\log \abs{H})$ mistakes. More generally, discarding experts at their $K$-th
wrong vote yields $O(K \log \abs{H})$ mistakes whenever some expert votes
wrongly fewer than $K$ times. This is weaker than the $O(K + \log \abs{H})$
mistakes of the weighted majority algorithm \citep{Littlestone1994}, which
reweights the experts multiplicatively; but the plurality rule keeps no
per-expert state, which is what allows $P^3$ to aggregate, in polylogarithmic
space, over the exponentially many paths of its scale fragment.

\subsection{Branch Prediction}
\label{app:branch-prediction}

The last auxiliary ingredient is the branch prediction game of
\citep{Kosoy2026b}, an online prediction game on a growing rooted tree: in
\sectionref{app:p3-stat-edge}, the wrong forecasts of a single edge of the
scale fragment are bounded by coupling them to the mistakes of a fixed strategy
in one instance of this game. For the reader's convenience, we reproduce the
definitions and the mistake bound; the proof of the latter is that of
Proposition A.3 of \citep{Kosoy2026b}.

\begin{definition}[Branch prediction game]
\label{def:branch-prediction}
The \emph{branch prediction game} is a sequential game between a learner and
nature. Its state consists of a finite set $E$ of \emph{elements}, a rooted
tree, and an assignment of each element to a vertex of the tree; initially, $E$
is empty and the tree consists of a single root vertex. On each round, nature
makes one of the following moves.
\begin{itemize}
  \item \textbf{Introduction.} Nature extends the tree by a --- possibly empty
    --- chain of new vertices hanging below an existing vertex, adjoins a new
    element to $E$, and places it at the end of the chain (at the existing
    vertex itself, when the chain is empty).
  \item \textbf{Activation, of an element located at a leaf.} Nature attaches a
    new child to the leaf and moves the element to it; no prediction is made.
  \item \textbf{Activation, of an element located at an internal vertex $v$.}
    The learner predicts a child of $v$. Nature then moves the element to a
    child of $v$ --- an existing one, or a new child attached now --- and a
    \emph{mistake} occurs when the child taken differs from the child
    predicted.
\end{itemize}
\end{definition}

An activation moves an element to a child of its current vertex, so an element
that has entered the subtree of a vertex never leaves it; in particular, the
number $N(v)$ of elements located at a vertex $v$ or below it is nondecreasing
over the play.

\begin{definition}[Plurality Branch]
\label{def:pb}
The \emph{Plurality Branch} (PB) strategy predicts, at every activation at an
internal vertex $v$, a child $c$ of $v$ maximizing $N(c)$, breaking ties
arbitrarily.
\end{definition}

Note that the element being activated is located at $v$ itself, below no child
of $v$, so it contributes to no $N(c)$. The following mistake bound is
Proposition A.3 of \citep{Kosoy2026b}.

\begin{proposition}[Branch prediction bound]
\label{prop:branch-prediction}
In every play of the branch prediction game, PB makes at most $3 \abs{E}
\floor{\log \abs{E}}$ mistakes, where $\abs{E}$ is the final number of
elements.
\end{proposition}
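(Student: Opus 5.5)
The plan is to fix the \emph{final} tree of the play and its final counts $N_\infty(v)$, and to classify each mistake by a heavy/light decomposition of this final tree. Since an element that has entered the subtree of a vertex never leaves it, $N(v) \leq N_\infty(v)$ at every time. For a vertex $v$, call a child $c$ \emph{heavy} if $N_\infty(c) > N_\infty(v)/2$ and \emph{light} otherwise. At most one child is heavy. Along the final root-to-location path of any element, each light edge at least halves the final count. So every element crosses at most $\floor{\log \abs{E}}$ light edges, and
\[
  \sum_v \sum_{c \text{ light child of } v} N_\infty(c) \leq \abs{E} \floor{\log \abs{E}}.
\]

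Now consider a mistake: an element is activated at an internal vertex $v$, PB predicts $c$, and the element moves to $c' \neq c$. By \definitionref{def:pb}, just before the move $N(c') \leq N(c)$; this also holds if $c'$ is a child created at that moment, with $N(c') = 0$. There are two cases.
\begin{enumerate}
  \item \textbf{$c'$ is light.} Charge the mistake to the pair (element, light edge $v \to c'$). The element stays below $c'$ forever, so this edge lies on its final path. Each element therefore absorbs at most $\floor{\log\abs{E}}$ such charges, and this case contributes at most $\abs{E}\floor{\log\abs{E}}$ mistakes.
  \item \textbf{$c'$ is the heavy child of $v$.} Then $c$ is light, and before the move $N(c') \leq N(c) \leq N_\infty(c) \leq L(v)$, where $L(v) \coloneqq \max$ of $N_\infty$ over the light children of $v$. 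Each such mistake raises $N(c')$ by one, and $N(c')$ never decreases. So the case-2 mistakes at $v$ occur at distinct values $N(c') \in \{0, \dots, L(v)\}$, giving at most $L(v) + 1$ of them. Whenever a case-2 mistake occurs at all, some light child $c$ satisfies $N_\infty(c) \geq N(c)$ and $N(c) \geq N(c') \geq 0$. To remove the $+1$, I will charge more carefully: a mistake with $N(c') = k$ requires $N(c) \geq k$. If $k = 0$ and $N(c) = 0$, PB's tie would have to be resolved against a new child, which needs extra care (see below). Otherwise the count is bounded by $L(v) \leq \sum_{c \text{ light}} N_\infty(c)$. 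Summing over $v$ and using the displayed inequality, case 2 contributes at most $\abs{E}\floor{\log\abs{E}}$ mistakes, plus the boundary terms.
\end{enumerate}

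The main obstacle is the boundary bookkeeping: the case $N(c) = N(c') = 0$, the floors in the light-edge bound, and the fact that the activated element itself sits at $v$ and contributes to no child. For the degenerate zero-count mistakes, I would charge each to the activated element at the edge $v \to c'$. Each element is activated at $v$ at most once, and an element enters a subtree with count zero only when it becomes the first occupant there; I would show each element is such a ``first occupant'' at most $\floor{\log\abs{E}}$ times along its path, or else absorb these mistakes into case 1. This third budget of $\abs{E}\floor{\log\abs{E}}$ yields the stated constant $3$. If the accounting of the degenerate case fails in this form, I would fall back to citing the argument of Proposition A.3 of \citep{Kosoy2026b} for that part.
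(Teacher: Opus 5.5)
Your heavy/light decomposition of the final tree and the Case~1 charging are correct. So is the core of Case~2: there are at most $L(v)+1$ case-2 mistakes at $v$, because they occur at strictly increasing values of $N(c')$, each at most $N(c)\le N_\infty(c)\le L(v)$. The gap is in how you remove the $+1$.

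\emph{The claimed $L(v)$ bound is false.} You assert that, apart from the sub-case $N(c)=N(c')=0$, there are at most $L(v)$ case-2 mistakes at $v$. Counterexample: introduce $P$ and $R$ at the root $v$, and $Q$ at a new child $c$. Activate $P$: PB predicts $c$, and $P$ moves to a new child $c'$. Then activate $R$: the tie $N(c)=N(c')=1$ is broken toward $c$, and $R$ moves to $c'$. At the end $c'$ is heavy and $L(v)=N_\infty(c)=1$, yet there are two case-2 mistakes at $v$, neither with $N(c)=0$. In fact the sub-case $N(c)=0$ never occurs. The mistake producing the $+1$ is the one with $N(c')=0$, i.e.\ the one that creates the heavy child.

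\emph{The per-element budget for first-occupant mistakes also fails.} Put $A$ at the root $v_0$. For $j<k$, introduce $B_j$ at a new child $c_j$ of $v_j$, then activate $A$ at $v_j$: PB predicts $c_j$ and nature moves $A$ to a new child $v_{j+1}$. Finally introduce one element below $v_k$. Every $v_{j+1}$ ends up as the heavy child of $v_j$, and $A$ is the first occupant in all $k=\abs{E}-2$ of these mistakes, which is not $O(\log\abs{E})$. These mistakes cannot be absorbed into Case~1 either, since that budget counts only light edges.

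\emph{The fix.} Every non-root vertex has an element at or below it from the moment it is created. An introduction places the new element at the end of the new chain, and an activation moves the element into the child it attaches. Hence the predicted light child $c$ of any case-2 mistake at $v$ has $N_\infty(c)\ge 1$, so $L(v)\ge 1$ and $L(v)+1\le 2L(v)$. Summing over $v$, Case~2 contributes at most $2\sum_v L(v)\le 2\sum_v\sum_{c\text{ light}}N_\infty(c)\le 2\abs{E}\floor{\log\abs{E}}$. Together with Case~1 this gives exactly $3\abs{E}\floor{\log\abs{E}}$, with no third budget and no fallback to \citep{Kosoy2026b}. The paper itself gives no proof here and only cites Proposition~A.3 of \citep{Kosoy2026b}, so with this correction your argument is a self-contained proof with the stated constant.
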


\subsection{State Bound}
\label{app:p3-state-bound}

Fix a finite nonempty $R \subseteq \{2, 3, 4, \dots\}$, put $r_{\max} \coloneqq
\max R$ and $\sigma \coloneqq \abs{\Sigma}$, and let $x \in \Sigma^N$, $N \geq
2$, with $s \coloneqq \LZLC_R(x)$. We prove that every retained scale has $O(s
+ \sigma)$ states. After every complete update, Post-process ensures $\abs{Q_m}
\leq c_{\max}$ at each retained scale, so it remains to bound the capacity.
During an epoch, let $y$ be the suffix processed in that epoch, let $G$ be a
small exact layered ZLP with $y \sqsubseteq \val(G)$, and put $p_j \coloneqq
m_G(j)$. For $j < d(G)$, when the local time reaches $p_j$, Preprocess creates
the edge $(p_j, \ar_G(j))$ to $p_{j+1}$; we show that the resulting initial
segment of this scale path survives every subsequent update. At each $p_j$ with
$j < d(G)$, distinguishability injects $Q_{p_j}$ into layer $j$ of $G$. At
$p_{d(G)}$, it makes $q \mapsto y[q]$ injective on the stored states. These
injections bound the protected state sets by $\max(w(G), \sigma)$. Once
$c_{\max}$ is at least this bound, the capacity sweep cannot delete a protected
scale. The protected incoming and outgoing edges prevent the structural sweep
from deleting an interior scale, while the current endpoint is greater than the
local time and therefore cannot be deleted as a sink. In particular, the root
survives post-processing, so the epoch does not restart and the capacity is not
doubled again.

We first construct a uniformly small program for every possible epoch suffix.
Two details require care. A width-$s$ program for $x$ need not be exact;
\lemmaref{lem:reg} provides exactness at the cost of $\sigma$ extra width.
Moreover, the original program represents position $0$ of $x$ at its root,
whereas the program for the suffix $x[b:N]$ must represent position $b$ there.
The following lemma implements this shift by one mixed-radix carry bit.

\begin{lemma}
\label{lem:p3-factor-program}
For every $b < N$, there is an exact $R$-adic layered ZLP $G_b$ such that, for
$y \coloneqq x[b:N]$,
\[
  y \sqsubseteq \val(G_b) \quad \text{and} \quad w(G_b) \leq 2(s + \sigma).
\]
\end{lemma}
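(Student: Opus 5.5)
Let $H$ be an $R$-adic layered ZLP of width $s$ with $x \sqsubseteq \val(H)$. If $N = 1$ there is nothing to do; otherwise apply \lemmaref{lem:reg} to get an exact $R$-adic layered $H'$ with $x \sqsubseteq \val(H')$, $w(H') \leq s + \sigma$, and $M \coloneqq m_{H'}(d(H')) \geq N$. Put $d \coloneqq d(H')$ and write $\ar_j \coloneqq \ar_{H'}(j)$ and $m_j \coloneqq m_{H'}(j)$. The program $G_b$ reads the digits of a local position $t$ least significant first and simulates the descent of the global position $b + t$ in $H'$. It keeps one carry bit, which records the carry produced by adding the digits of $b$ to those of $t$. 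The same arity profile is used, extended by one extra layer at the end to absorb the final carry. That last arity is any fixed element of $R$, so $G_b$ stays $R$-adic.

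\textbf{Construction.} Let $b_j \coloneqq \floor{b / m_j} \bmod \ar_j$ be the mixed-radix digits of $b$ (recall $b < N \leq M$). For $j < d$, the vertices of layer $j$ of $G_b$ are pairs $(q, c)$ with $q \in Q'_j$ and $c \in \{0,1\}$. The root is $(q_0, 0)$. For the child indexed by the local digit $i < \ar_j$, put $s' \coloneqq i + b_j + c$. The child is $(\delta'(q)[s' \bmod \ar_j], \floor{s' / \ar_j})$ at layer $j+1$, provided $\delta'(q)[\cdot]$ is non-terminal. Exactness of $H'$ guarantees this for $j < d-1$ by \lemmaref{lem:exact}. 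At layer $d-1$ the child is a terminal, so I route instead to a padding vertex $\gamma_a^{(d)}$ at layer $d$ whose value is constant $a$: all of its children are the terminal $a$. Here $a = \delta'(q)[s' \bmod \ar_{d-1}]$ when the final carry is $0$, and an arbitrary symbol otherwise, because then $b + t \geq M$ and positions $b + t \geq N$ are unconstrained.

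This gives width at most $2 w(H') \leq 2(s+\sigma)$ in the layers $j < d$, and at most $\sigma \leq 2(s + \sigma)$ in layer $d$. The program is layered, since all vertices of a layer share an arity. It is exact by \lemmaref{lem:exact}, since only the last layer has terminal children. Unreachable vertices are discarded, so the root is the unique source.

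\textbf{Correctness.} Fix $t < N - b$. Its digits in $G_b$ coincide with its digits in the profile $(\ar_j)$ for $j < d$; note that $t < M$, and that the digit of $t$ at the extra layer is $0$. I prove by induction on $j$, via \lemmaref{lem:descent}, that the descent of $t$ in $G_b$ reaches $(q^j, c_j)$ at layer $j$. Here $q^j$ is the vertex reached by the descent of $b + t$ in $H'$, and $c_j$ is the carry into digit $j$ of $b + t$. The inductive step is the schoolbook addition identity: the $j$-th digit of $b+t$ equals $(t_j + b_j + c_j) \bmod \ar_j$, and the next carry is $c_{j+1} = \floor{(t_j + b_j + c_j) / \ar_j}$. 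Since $b + t < N \leq M$, the final carry is $0$. So at layer $d-1$ the descent picks the terminal $\val(H')[b+t] = x[b+t] = y[t]$, by claim 2 of \lemmaref{lem:descent}. Passing through the padding vertex preserves this symbol. Thus $y \sqsubseteq \val(G_b)$.

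The main obstacle is bookkeeping rather than ideas. The key point is that a layered ZLP cannot terminate early with one carry value while continuing with the other. This is why I regularize to exactness first: every descent in $H'$ then has full length $d$, so the descent of $b + t$ has a well-defined vertex at every layer. The extra padding layer is needed only to host the terminal choice at the end and to keep the arity profile uniform across layers.
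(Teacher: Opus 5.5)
Your proposal is correct and follows the paper's proof: regularize with \lemmaref{lem:reg}, then pair each vertex of $H'$ with a carry bit so that the descent of $t$ in $G_b$ simulates the descent of $b+t$ in $H'$ by schoolbook mixed-radix addition. The only difference is your extra padding layer, which is harmless but unnecessary. The paper places the terminal child directly at layer $d-1$: it is $\delta'(q)[e]$ when the final carry is $0$ and an arbitrary symbol (e.g.\ $y[0]$) when it is $1$. This already keeps every terminal in the last layer, so $G_b$ is exact by \lemmaref{lem:exact} without changing the arity profile.
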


\begin{proof}
Let $H$ be a width-$s$ $R$-adic layered ZLP with $x \sqsubseteq \val(H)$.
Applying \lemmaref{lem:reg} with $n \coloneqq N$ yields an exact $R$-adic
layered ZLP $H'$ with $x \sqsubseteq \val(H')$ and $w(H') \leq s + \sigma$.
Write $Q'_j$ for its layers, $q'_0$ for its root, $\delta'$ for its transition
function, $d^* \coloneqq d(H')$, $r_j \coloneqq \ar_{H'}(j)$ for $j < d^*$, and
$M \coloneqq \abs{\val(H')} = m_{H'}(d^*)$; then $M \geq N$, since $x
\sqsubseteq \val(H')$.

We now shift the origin from $0$ to $b$. For $j < d^*$, let
\[
  b_j \coloneqq \floor{b / (m_{H'}(j))} \bmod r_j
\]
be the mixed-radix digits of $b < M$ in the numeration system of $H'$. Choose
$a_0 \coloneqq y[0]$. Construct $G_b$ by pairing the vertices of $H'$ with a
carry bit and retaining only pairs reachable from the root $(q'_0, 0)$. Thus
the layer of $G_b$ at depth $j$ is a subset of $Q'_j \times \{0, 1\}$. Write
$\delta_b$ for the transition function of $G_b$. For every $j < d^*$, reachable
pair $(q, c)$ at depth $j$, and digit $i < r_j$, put
\[
  u \coloneqq i + b_j + c, \quad e \coloneqq u \bmod r_j, \quad c' \coloneqq
  \floor{u / r_j},
\]
and define
\[
  \delta_b((q, c))[i] \coloneqq \begin{cases}
    (\delta'(q)[e], c') & \text{if } j < d^* - 1, \\
    \delta'(q)[e] & \text{if } j = d^* - 1 \text{ and } c' = 0, \\
    a_0 & \text{if } j = d^* - 1 \text{ and } c' = 1.
  \end{cases}
\]
Exactness of $H'$ ensures $\delta'(q)[e] \in Q'_{j+1}$ when $j < d^* - 1$ and
$\delta'(q)[e] \in \Sigma$ when $j = d^* - 1$. Also, $u \leq 2 r_j - 1$ gives
$c' \in \{0, 1\}$. Hence every production before the last layer has only vertex
children, and every production at the last layer has only terminal children.
Thus $G_b$ is a layered ZLP with the arity profile of $H'$, hence $R$-adic, and
it is exact by \lemmaref{lem:exact}; moreover,
\[
  w(G_b) \leq 2 w(H') \leq 2(s + \sigma).
\]
For $t < \abs{y}$, we have $b + t < N \leq M$. The recurrence defining $e$ and
$c'$ therefore computes the mixed-radix digits of $b + t$ without an overflow
beyond the last digit. Consequently, the descent of $t$ in $G_b$ follows the
descent of $b + t$ in $H'$, and
\[
  \val(G_b)[t] = \val(H')[b + t] = x[b + t] = y[t].
\]
Thus $y \sqsubseteq \val(G_b)$, as required.
\end{proof}

The following lemma supplies the state-set bound required by the capacity
argument. For $j < d(G)$, descent maps each stored residue at scale $m_G(j)$ to
a vertex in layer $j$ of the exact program, and distinguishability makes this
map injective. At the terminal scale $m_G(d(G))$, distinguishability instead
makes the map $q \mapsto y[q]$ injective on the stored states. We include this
terminal scale because it is the beyond-horizon endpoint that keeps the
protected path from being pruned as a sink at or below the local time.

\begin{lemma}
\label{lem:p3-path-state-bound}
Let $y \in \Sigma^L$, let $G$ be an exact layered ZLP with $y \sqsubseteq
\val(G)$, and put $p_j \coloneqq m_G(j)$ for $0 \leq j \leq d(G)$. Consider a
post-cascade configuration of $P^3$ after observing $y[:n+1]$, where $n < L$,
that satisfies the distinguishability claim of
\lemmaref{lem:p3-pre-pruning}. Then, whenever $p_j \in S$,
\[
  \abs{Q_{p_j}} \leq \begin{cases}
    w(G) & \text{if } j < d(G), \\
    \sigma & \text{if } j = d(G).
  \end{cases}
\]
\end{lemma}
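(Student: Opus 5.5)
We construct an injection from $Q_{p_j}$ into a set of the claimed size. The only ingredients are the distinguishability claim (claim 3 of \lemmaref{lem:p3-pre-pruning}) and the descent lemma (\lemmaref{lem:descent}) applied to the exact program $G$.

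First take $j < d(G)$ and $p_j \in S$. For a state $q \in Q_{p_j}$ we have $q < p_j$, and $q \leq n < L \leq \abs{\val(G)}$, since states are residues that have been observed (every state is a residue $n' \bmod m$ created at some round $n' \leq n$, and $q \leq n'$). So the descent of $q$ in $G$ is defined. Its digits at layers $0, \dots, j-1$ are the digits of $q$, and exactness means no terminal occurs before the last layer. The descent therefore reaches a vertex $\phi(q) \in Q_j$ of $G$.

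Now let $t \equiv q \pmod{p_j}$ with $t < \abs{\val(G)}$. The first $j$ digits of $t$ agree with those of $q$, so the descent of $t$ also passes through $\phi(q)$. Equation \equationref{eq:descent} then gives $\val(G)[t] = \val_G(\phi(q))[\floor{t/p_j}]$. Hence, for every row $\ell$ with $p_j \ell + q \leq n$,
\[
  y[p_j \ell + q] = \val_G(\phi(q))[\ell].
\]

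Suppose $\phi(q_1) = \phi(q_2)$ for distinct $q_1, q_2 \in Q_{p_j}$. Distinguishability supplies a row $\ell$ with $p_j \ell + \max(q_1, q_2) \leq n$ and $y[p_j\ell + q_1] \neq y[p_j\ell+q_2]$. By the display, both sides equal $\val_G(\phi(q_1))[\ell]$, a contradiction. So $\phi$ is injective, and $\abs{Q_{p_j}} \leq \abs{Q_j} \leq w(G)$.

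Next take $j = d(G)$. Here $p_{d(G)} = \abs{\val(G)} \geq L > n$, so every observed position $t \leq n$ satisfies $t < p_{d(G)}$. Distinguishability at a pair $q_1 \neq q_2$ gives $\ell$ with $p_{d(G)}\ell + \max(q_1,q_2) \leq n$, which forces $\ell = 0$, and then $y[q_1] \neq y[q_2]$. Thus $q \mapsto y[q]$ is injective on $Q_{p_{d(G)}}$, so $\abs{Q_{p_{d(G)}}} \leq \sigma$.

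The only delicate point is the claim that stored states satisfy $q \leq n$, which makes the descent of $q$ well defined. This holds because states enter only as first occurrences, through Initialize, the copies made by Preprocess, or line 12 with $q = n' \bmod m \leq n'$. The remaining steps are bookkeeping with the mixed-radix digits.
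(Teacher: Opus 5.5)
Your proof is correct and follows the paper's argument. For $j < d(G)$, the descent map into layer $j$ of $G$ is injective on $Q_{p_j}$ by distinguishability together with the descent identity of \lemmaref{lem:descent}; at $j = d(G)$, the bound $p_{d(G)} > n$ forces the separating row to be $0$, so $q \mapsto y[q]$ is injective. Your ``delicate point'' $q \leq n$ is true but unnecessary: exactness gives $\abs{\val(G)} = p_{d(G)} \geq p_j > q$ for every residue $q \in [p_j]$, which already makes the descent well defined, and this is how the paper argues.
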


\begin{proof}
Fix $j < d(G)$. Exactness implies that the descent of every $q \in [p_j]$
reaches a well-defined vertex $\psi_j(q)$ of layer $j$. We show that $\psi_j$
is injective on $Q_{p_j}$. Suppose otherwise that distinct $q_1, q_2 \in
Q_{p_j}$ reach the same vertex. By distinguishability, there is $\ell \in
\mathbb{N}$ such that
\begin{equation}
  \label{eq:p3-path-separation}
  p_j \ell + \max(q_1, q_2) \leq n \quad \text{and} \quad y[p_j \ell + q_1]
  \neq y[p_j \ell + q_2]
\end{equation}
On the other hand, applying \equationref{eq:descent} at layer $j$ to both
positions in \equationref{eq:p3-path-separation} gives
\[
  y[p_j \ell + q_1] = \val_G(\psi_j(q_1))[\ell] = \val_G(\psi_j(q_2))[\ell] =
  y[p_j \ell + q_2],
\]
a contradiction. Hence $\psi_j$ injects $Q_{p_j}$ into layer $j$ of $G$, so
$\abs{Q_{p_j}} \leq w(G)$.

It remains to consider $j = d(G)$. Since $p_j = \abs{\val(G)} \geq L > n$, the
inequality in the distinguishability claim rules out every row $\ell \geq 1$.
Therefore, for any distinct $q_1, q_2 \in Q_{p_j}$, distinguishability at row
$0$ gives $q_1, q_2 \leq n$ and $y[q_1] \neq y[q_2]$. Thus $q \mapsto y[q]$ is
injective on $Q_{p_j}$, so $\abs{Q_{p_j}} \leq \sigma$.
\end{proof}

Set
\[
  c^* \coloneqq 2(s + \sigma).
\]

\begin{lemma}
\label{lem:p3-no-large-restart}
If an epoch begins with $c_{\max} \geq c^*$, it processes the entire remaining
suffix of $x$ without restarting.
\end{lemma}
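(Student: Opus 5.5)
Let the epoch begin at position $b$ of $x$, with capacity $c_{\max} \geq c^*$. Put $y \coloneqq x[b:N]$, let $G \coloneqq G_b$ be the program of \lemmaref{lem:p3-factor-program}, so that $G$ is exact with $w(G) \leq 2(s+\sigma) = c^*$, and set $p_j \coloneqq m_G(j)$ for $0 \leq j \leq d(G)$, with $p_0 = 1$. The \emph{protected path} at local time $n$ consists of the scales $p_0, \dots, p_{J(n)}$ and the edges $(p_j, \ar_G(j))$ for $j < J(n)$, where $J(n) \coloneqq \min\{j \leq d(G) \mid p_j > n\}$; such a $J(n)$ exists because $p_{d(G)} = \abs{\val(G)} \geq \abs{y} > n$. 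I argue by induction over the rounds of the epoch that, after the update on $y[:n+1]$, the whole protected path is present. Since the root $p_0$ belongs to it, post-processing never empties $S$, and therefore no restart occurs while the suffix is being processed.

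For the inductive step at round $n$, I first handle growth. If $n = p_j$ for some $j < d(G)$, then $p_j$ is the current endpoint. Its being in $S$ follows from the induction hypothesis, and in round $0$ it is the root created by Initialize. Preprocess then adds the edge $(p_j, \ar_G(j))$ and the scale $p_{j+1}$ if either is absent. Otherwise the protected path is the same as in the previous round. During the cascade no scale or edge is deleted. Hence the post-cascade configuration contains the whole protected path.

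Next comes the capacity sweep (line 1 of \algorithmref{alg:p3-post}). By \lemmaref{lem:p3-pre-pruning}, the post-cascade configuration satisfies distinguishability, so \lemmaref{lem:p3-path-state-bound} applies. It gives $\abs{Q_{p_j}} \leq w(G) \leq c^*$ for $j < d(G)$ and $\abs{Q_{p_{d(G)}}} \leq \sigma \leq c^*$. Thus no protected scale exceeds $c_{\max}$, and the initial set $\mathcal{D}$ contains no protected scale.

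Then I show that the structural loop (line 4) never deletes a protected scale. I use the invariant that, as long as no protected scale has been deleted, no protected edge has been deleted either, since line 3 removes only edges incident to deleted scales. Under this invariant, every protected non-root scale $p_j$ with $j \geq 1$ is the target of the surviving edge $(p_{j-1}, \ar_G(j-1))$. Every protected scale $p_j$ with $j < J(n)$ has the outgoing edge $(p_j, \ar_G(j))$. The endpoint $p_{J(n)}$ satisfies $p_{J(n)} > n$, so it is exempt from the sink criterion. Hence no protected scale is ever placed in $\mathcal{D}$, and the invariant persists through every pass. This completes the induction.

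The step I expect to be the main obstacle is ensuring that the protected endpoint and its edge really exist at the time they are needed. One point is the timing: when $n$ first equals $p_j$, the scale $p_j$ must already be present, which holds because it was adjoined in round $p_{j-1}$. Another point is that $J(n)$ never runs past $d(G)$, which holds because $n < \abs{y} \leq p_{d(G)}$. A third point is that the edge $(p_j, \ar_G(j))$ is created even when the scale $p_{j+1}$ predates it; the edge-level guard of line 2 of \algorithmref{alg:p3-preprocess} handles this. Finally, I must check that the bound of \lemmaref{lem:p3-path-state-bound} applies in local, epoch-relative indexing, which matches the indexing of $G_b$.
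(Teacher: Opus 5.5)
Your proposal is correct and follows essentially the same route as the paper: you maintain the initial segment of the scale path of $G_b$ whose endpoint lies beyond the local time, extend it at the rounds $n = p_j$ via Preprocess, and then use \lemmaref{lem:p3-path-state-bound} to rule out the capacity sweep and the protected edges to rule out the structural sweep, so the root always survives. Your explicit index $J(n)$ and the pass-by-pass invariant for the structural loop only spell out more carefully what the paper's proof does with its segment $p_0 \to \dots \to p_k$.
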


\begin{proof}
Let $b$ be the global index of the epoch's first symbol, put $y \coloneqq
x[b:N]$, and apply \lemmaref{lem:p3-factor-program} to obtain $G_b$. Write $d
\coloneqq d(G_b)$ and $p_j \coloneqq m_{G_b}(j)$ for $0 \leq j \leq d$.

Consider the round with local index $n < \abs{y}$, in which the algorithm
observes $y[n]$. We maintain a protected initial segment
\[
  p_0 \to p_1 \to \dots \to p_k
\]
of the scale path of $G_b$ such that its endpoint satisfies $p_k > n$
immediately after Preprocess. In the first round, this segment consists only of
$p_0 = 1 > 0$. Inductively, suppose that the segment survived the preceding
round. Its endpoint then satisfies $p_k \geq n$ in the current round. If $n <
p_k$, Preprocess preserves the segment because it only adds scales and edges.
If $n = p_k$, then $k < d$, because $n < \abs{y} \leq p_d$; Preprocess creates
the edge $(p_k, \ar_{G_b}(k))$ and its target $p_{k+1} > n$. Thus the claimed
segment exists after Preprocess in either case. The update cascade changes no
scales or edges, so the segment is still present when post-processing begins.

After the cascade, the configuration satisfies the distinguishability claim of
\lemmaref{lem:p3-pre-pruning}, so \lemmaref{lem:p3-path-state-bound}, with the
width bound of \lemmaref{lem:p3-factor-program}, gives $\abs{Q_{p_j}} \leq
\max(w(G_b), \sigma) \leq c^* \leq c_{\max}$ at every scale of the protected
segment. Hence the capacity sweep in line 1 of \algorithmref{alg:p3-post}
selects none of them. The structural sweep cannot select one either: every
$p_j$ with $1 \leq j \leq k$ has the protected incoming edge from $p_{j-1}$;
every $p_j$ with $j < k$ has the protected outgoing edge to $p_{j+1}$; and the
endpoint $p_k$ is greater than the local index $n$, so it is not a sink at or
below the local time. Therefore the entire protected segment survives
post-processing. This proves the induction, and in particular the root scale
$p_0$ survives every round of the epoch. Consequently line 5 of
\algorithmref{alg:p3-post} never fires during the epoch.
\end{proof}

\begin{lemma}
\label{lem:p3-state-bound}
Throughout the processing of $x$, after every complete update,
\[
  c_{\max} < 4(s + \sigma) \quad \text{and} \quad \abs{Q_m} < 4(s + \sigma)\
  \text{for every } m \in S.
\]
\end{lemma}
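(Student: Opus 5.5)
The plan is to use the doubling structure of the capacity together with \lemmaref{lem:p3-no-large-restart}. We track how $c_{\max}$ evolves over the epochs. The $i$-th epoch runs with $c_{\max} = 2^i$, and $c_{\max}$ only changes at a restart, where it is doubled. First we show that $c_{\max}$ never exceeds the threshold by more than a factor of two. Let $i^*$ be the first epoch index with $2^{i^*} \geq c^* = 2(s+\sigma)$, if such an epoch is ever reached. By \lemmaref{lem:p3-no-large-restart}, epoch $i^*$ processes the entire remaining suffix of $x$ without restarting. Hence no epoch with index larger than $i^*$ ever occurs, and $c_{\max} \leq 2^{i^*}$ throughout.

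Next we bound $2^{i^*}$. If $i^* = 0$, then $2^{i^*} = 1 < 4(s+\sigma)$, since $s \geq 1$ by \propositionref{prop:lzlc-wcm}. If $i^* \geq 1$, minimality gives $2^{i^*-1} < c^*$, so $2^{i^*} < 2c^* = 4(s+\sigma)$. If no epoch ever reaches the threshold, then every $c_{\max}$ that occurs is below $c^* < 4(s+\sigma)$. In all cases $c_{\max} < 4(s+\sigma)$ after every complete update.

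The restart at the end of a round doubles $c_{\max}$. If that round ends epoch $i^*-1$, the doubled value is $2^{i^*}$, which is still covered by the bound. If instead the round ended epoch $i^*$, that would contradict \lemmaref{lem:p3-no-large-restart}.

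For the state bound we use the description of Post-process. After every complete update, every retained scale satisfies $\abs{Q_m} \leq c_{\max}$. Line 1 deletes every scale with $\abs{Q_m} > c_{\max}$, and structural deletions only remove scales. After a restart the only scale is $1$, with $\abs{Q_1} = 1 \leq c_{\max}$. Combining this with the capacity bound gives $\abs{Q_m} \leq c_{\max} < 4(s+\sigma)$.

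The main subtlety is the bookkeeping of indices. Epoch suffixes are re-indexed, and \lemmaref{lem:p3-no-large-restart} is stated for the whole word $x$ with $s = \LZLC_R(x)$. We need that the prefix of $x$ processed so far is covered as well. Since the statement concerns the processing of $x$, we should either apply everything to prefixes $x[:n']$, using monotonicity (claim 2 of \propositionref{prop:lzlc-wcm}) so that $\LZLC_R$ of the prefix is at most $s$, or simply note that the lemma already applies to every suffix of $x$ via \lemmaref{lem:p3-factor-program}. A second point is the degenerate case where $x$ is too short for the epoch to matter. Here $N \geq 2$ is assumed, and the prefix argument handles any remaining edge cases.
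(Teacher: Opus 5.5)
Your proof is correct and follows the paper's argument. The capacity changes only by doubling at restarts, \lemmaref{lem:p3-no-large-restart} rules out any restart once it reaches $c^* = 2(s+\sigma)$, so it stays below $2c^* = 4(s+\sigma)$, and post-processing then gives $\abs{Q_m} \leq c_{\max}$ at every retained scale. The index-bookkeeping concern in your last paragraph is unnecessary: \lemmaref{lem:p3-no-large-restart} is already stated per epoch, through the suffix programs of \lemmaref{lem:p3-factor-program}, with $s = \LZLC_R(x)$ fixed throughout.
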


\begin{proof}
The capacity begins at $1$ and changes only when a restart doubles it. By
\lemmaref{lem:p3-no-large-restart}, a restart cannot occur in an epoch whose
initial capacity is at least $c^*$. If the capacity ever first reaches or
exceeds $c^*$, its value immediately before that doubling was smaller than
$c^*$; its new value is therefore smaller than $2 c^* = 4(s + \sigma)$, and it
remains fixed thereafter. If the capacity never reaches $c^*$, it is smaller
than $c^* < 4(s + \sigma)$ throughout. Finally, after every complete update,
Post-process has removed every scale whose state set exceeded the current
capacity. Hence $\abs{Q_m} \leq c_{\max} < 4(s + \sigma)$ for every retained
scale $m$.
\end{proof}

In particular, the number of states at every retained scale is $O(s + \sigma)$,
with an absolute implied constant --- that is, $O(s)$ for a fixed alphabet.
This is the form of the state bound used in \sectionref{app:p3-stat} and
\sectionref{app:p3-compression}.

\subsection{Statistical Efficiency}
\label{app:p3-stat}

In this section we prove the mistake bound \equationref{eq:p3-mistakes} by
applying \lemmaref{lem:budgeted-plurality} within each epoch, with the infinite
paths of \sectionref{sec:p3-predict} as the experts and the mass defined there
as the prior. Three things are to be verified. The vote of
\algorithmref{alg:p3-predict} is the plurality vote of the unblocked paths ---
attained, by the finite-precision implementation, up to a tolerance the lemma
accommodates --- whose set only shrinks within an epoch while its mass stays
bounded below (\sectionref{app:p3-stat-vote}). Every vote is the forecast of a
leaf edge, and the forecasts of a single edge couple, for the duration of its
lifetime, to one instance of the branch prediction game
(\sectionref{app:branch-prediction}), which bounds the number of wrong
forecasts of the edge in terms of the capacity
(\sectionref{app:p3-stat-edge}). Since a path votes only through the edges of
its own prefix, every path is charged a bounded number of times, and the lemma
bounds the mistakes of an epoch; summing over the epochs, whose capacities form
a geometric progression bounded via \lemmaref{lem:p3-state-bound}, yields the
claim (\sectionref{app:p3-stat-mistakes}).

\textbf{Setting and conventions.} We retain the notation of
\sectionref{app:p3-state-bound} --- $R$, $r_{\max} \coloneqq \max R$, $\sigma$,
$x \in \Sigma^N$ and $s \coloneqq \LZLC_R(x)$ --- and $\eta \geq 0$ is the
exponent of \sectionref{sec:p3-predict}, the unique solution of $\sum_{r \in R}
r^{-\eta} = 1$; note that $\eta \leq \log \abs{R}$, since $\abs{R} 2^{-\eta}
\geq \sum_{r \in R} r^{-\eta} = 1$. Fix an epoch, beginning at the global index
$b$; as in \appendixref{app:p3-correctness}, we re-index its suffix $y
\coloneqq x[b:N]$ from $0$. The capacity $c \coloneqq c_{\max}$ is constant
during the epoch (\sectionref{app:p3-post}). Let $T$ be the local index of the
last round of the epoch --- the round whose post-processing restarts the
predictor, or the last round of $x$ --- so that $T \leq \abs{y} - 1$; we assume
$T \geq 1$, an epoch with $T = 0$ making no predictions. The \emph{prediction
rounds} of the epoch are $n = 1, \dots, T$: the prediction at its round $0$ is
made by the previous epoch or, for the first epoch, by the empty state
(\sectionref{app:p3-init}), at the cost of at most one mistake. At the
prediction round $n$, Predict runs on the state obtained after updating on
$y[:n]$; when needed, we index the algorithmic quantities by the round, writing
$S_n$, $\mathcal{E}_n$, $Q_m^{(n)}$, $\mathcal{A}_n$, $\mathcal{L}_n$,
$\phi_n$, $\theta_n$ and $\pi_n$, the last two being the ideal quantities that
lines 10--13 of \algorithmref{alg:p3-predict} define
(\sectionref{app:p3-predict}). The last observed index is then $n - 1$, so
\lemmaref{lem:p3-inv} applies with its $n$ replaced by $n - 1$. In particular,
every state of every scale is a position below $n$: at a scale $m \leq n - 1$
because states are residues modulo $m$, and at a scale $m \geq n$ by the
horizon claim. We write $\hat{y}_n$ for the prediction at round $n$, as in
\sectionref{sec:protocol}.

\subsubsection{Paths, Mass and the Vote}
\label{app:p3-stat-vote}

An \emph{infinite path} is a sequence $\xi = (\xi_0, \xi_1, \dots) \in
R^\omega$ of arities; its \emph{scales} are $m_j(\xi) \coloneqq \prod_{i < j}
\xi_i$ for $j \in \mathbb{N}$, so that $m_0(\xi) = 1$, and its \emph{edges} are
the pairs $(m_j(\xi), \xi_j)$. Its \emph{walk} from a scale $m$ is the path $m
\to m m_1(\xi) \to m m_2(\xi) \to \dots$ through the lattice $R^!$. The
\emph{mass} $\mu$ is the probability measure on $R^\omega$ under which the
coordinates are independent with $\mu(\xi_j = r) = r^{-\eta}$: the law of the
random walk of \sectionref{sec:p3-predict}. For a finite sequence $\alpha \in
R^J$, the \emph{cylinder set} $Z(\alpha) \coloneqq \{\xi \in R^\omega \mid
\xi_j = \alpha_j \text{ for all } j < J\}$ of the paths beginning with $\alpha$
has mass
\begin{equation}
  \label{eq:p3-cylinder-mass}
  \mu(Z(\alpha)) = \prod_{j < J} \alpha_j^{-\eta} = (\alpha_0 \alpha_1 \cdots
  \alpha_{J-1})^{-\eta}.
\end{equation}
Thus the paths whose walk from $m$ begins with a given walk from $m$ to $m'$
have mass $(m' / m)^{-\eta}$, whichever factorization of $m' / m$ that walk
traverses. We use the continuations of
\sectionref{app:p3-path-semantics} relative to the fragment of a given round:
for a prediction round $n$ and $m \in S_n$, $\mathcal{C}_m^{(n)}$ denotes the
set of continuations from $m$ in $(S_n, \mathcal{E}_n)$ --- the directed paths
of the fragment from $m$ to a sink --- and, for $\gamma = (m_0, r_0, \dots,
r_{L-1}, m_L) \in \mathcal{C}_m^{(n)}$, we put $Z(\gamma) \coloneqq Z((r_0,
\dots, r_{L-1}))$: the set of paths whose walk from $m$ follows $\gamma$, of
mass $(m_L / m)^{-\eta}$ by \equationref{eq:p3-cylinder-mass}.

The following facts about the evolution of the scale fragment within an epoch
are read off \algorithmref{alg:p3-init}, \algorithmref{alg:p3-preprocess},
\algorithmref{alg:p3-update} and \algorithmref{alg:p3-post}.

\begin{lemma}[Fragment dynamics]
\label{lem:p3-fragment-dynamics}
Let $1 \leq n \leq T$ be a prediction round of the epoch.
\begin{enumerate}
  \item Every edge $(m, r) \in \mathcal{E}_n$ satisfies $m \leq n - 1$;
    consequently, a scale $m \in S_n$ is a sink of the fragment if and only if
    $m \geq n$. Moreover, every scale of $S_n$ is at most $r_{\max} n$.
  \item $1 \in S_n$ and every other scale of $S_n$ is the target of an edge of
    $\mathcal{E}_n$; consequently, every scale of $S_n$ has an entry path.
  \item \textbf{(Single lifetime.)} For every $m \in R^!$ and $r \in R$, the
    set of prediction rounds $n'$ with $(m, r) \in \mathcal{E}_{n'}$ is an
    interval $\{n' \mid m + 1 \leq n' \leq n_e\}$ for some $n_e \leq T$,
    possibly empty.
  \item \textbf{(Monotone growth.)} If $m \in S_{n'}$ for every prediction
    round $n'$ with $n \leq n' \leq n''$, then $Q_m^{(n)} \subseteq
    Q_m^{(n'')}$. Moreover, $\abs{Q_m^{(n)}} \leq c$ for every $m \in S_n$.
\end{enumerate}
\end{lemma}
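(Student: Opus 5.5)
The plan is to prove the four claims by direct inspection of the algorithms, by induction over the rounds of the epoch. The induction carries one invariant for each claim: the edge bound of claim 1, the reachability of claim 2, a record of when each edge is created and deleted (claim 3), and the growth of state sets (claim 4).

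\textbf{Claim 1.} Edges are added only by Preprocess, and only as $(n', r)$ in the round with time index $n' \in S$. The state used to predict at round $n$ results from updates on local indices $0, \dots, n-1$, so every edge has source $m \leq n-1$. Conversely, every scale $m \leq n - 1$ that survives post-processing has an outgoing edge. This is the second set of line 4 of \algorithmref{alg:p3-post}, applied at the end of round $n-1$: at that time the time index is $n - 1$. Hence a scale is a sink iff $m \geq n$. For the size bound, every non-root scale is a target $m' r$ with $m' \leq n-1$ and $r \leq r_{\max}$, so it is at most $r_{\max}(n-1) < r_{\max} n$.

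\textbf{Claim 2.} The root is present, since otherwise the epoch would have ended in a restart and $n > T$. The condition that every non-root scale is the target of an edge is kept by three facts. Preprocess adds scales only together with an incoming edge. The cascade deletes nothing. Post-processing restores the condition, since the first set of line 4 removes scales without incoming edges. Entry paths then exist by following incoming edges downward: scales strictly decrease, and the minimal reachable scale must be $1$, as in the proof of \lemmaref{lem:p3-pruning-stability}.

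\textbf{Claim 3.} The edge $(m, r)$ can be created only in the round with time index $m$, since Preprocess fires only when $n' = m$. It is therefore first present at prediction round $m+1$, and cannot be present earlier by claim 1. After creation it persists until deleted. Once deleted, it can never reappear within the epoch, because round $m$ has passed. The set of rounds is thus an interval starting at $m+1$, or empty if the edge is not created at round $m$ or is deleted in that very round.

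\textbf{Claim 4.} The cascade only adds states (line 12), and Preprocess leaves $Q_m$ unchanged for an existing scale. The case $n = m$ needs care. If $m r$ already exists, line 3 of \algorithmref{alg:p3-preprocess} is skipped. If the scale is re-created, the assignment $Q_{mr} \gets Q_m$ is not growth, but re-creation can happen only after a deletion, which the continuous-membership hypothesis excludes. Capacity-based deletion of states never occurs, because deletion acts on whole scales. This gives monotonicity while $m$ stays in $S$. The bound $\abs{Q_m^{(n)}} \leq c$ is the effect of the capacity sweep of line 1 of \algorithmref{alg:p3-post} at the end of round $n-1$, with $c$ constant throughout the epoch.

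The main obstacle is claim 3, specifically that a deleted edge cannot be recreated. This needs Preprocess to create edges out of $m$ only at time index exactly $m$, together with a check that a scale equal to $m$ deleted and re-added later does not regain edges. It cannot, since round $m$ occurs only once in the epoch.
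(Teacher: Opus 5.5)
Your route is the paper's: each claim is read off Preprocess, the cascade and Post-process, round by round. Claims 2--4 go through essentially as in the paper. There is, however, a gap in the sink characterization of claim 1.

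You justify that every scale $m \leq n-1$ has an outgoing edge by the second set of line 4 of \algorithmref{alg:p3-post}. But that line lies inside the \textbf{while} loop, which is entered only when line 1 finds a scale with $\abs{Q_m} > c_{\max}$. In a round without a capacity violation, Post-process deletes nothing and never evaluates line 4, so it certifies nothing about the sinks.

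The paper gets this direction from the horizon claim of \lemmaref{lem:p3-inv}: every sink exceeds the last observed index $n-1$. Alternatively, carry ``every scale at most the time index has an outgoing edge'' as an inductive invariant:
\begin{itemize}
  \item in round $n-1$, Preprocess equips the scale $n-1$, if present, with all $\abs{R} \geq 1$ of its outgoing edges;
  \item the cascade removes no edge;
  \item Post-process either deletes nothing or leaves its loop with the second set of line 4 empty.
\end{itemize}
This preserve-or-restore pattern is what actually makes your claim 2 argument valid. There too, line 4 is not what enforces the condition in a round without capacity deletions; the condition already holds before post-processing.

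A smaller point in claim 4 concerns an epoch begun by a restart. There Initialize runs at line 5 of \algorithmref{alg:p3-post}, after the sweep of line 1, so that sweep does not govern the configuration at prediction round $1$. The bound at that round must be checked directly, $\abs{Q_1^{(1)}} = \abs{\{0\}} = 1 \leq c$, as the paper does.
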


\begin{proof}
The configuration at the prediction round $1$ is the one produced by Initialize
--- $S_1 = \{1\}$, $\mathcal{E}_1 = \emptyset$ and $Q_1^{(1)} = \{0\}$ --- and
the configuration at the prediction round $n + 1 \leq T$ is obtained from the
one at $n$ by the round $n$ of the epoch, whose time index is $n$ and which
does not restart, the restart that ends the epoch, if any, belonging to round
$T$. Within a round, the scale fragment is modified only by Preprocess, which
adjoins scales (line 3 of \algorithmref{alg:p3-preprocess}) and creates edges
(line 4), and by Post-process, which deletes scales together with their
incident edges (line 3 of \algorithmref{alg:p3-post}); the cascade of
\algorithmref{alg:p3-update} alters neither $S$ nor $\mathcal{E}$.

\emph{Claim 1.} Line 4 of \algorithmref{alg:p3-preprocess} creates the edge
$(m, r)$ only in the round whose time index is $m$ (line 1), so every edge of
$\mathcal{E}_n$ has source $m \leq n - 1$. Consequently no scale $m \geq n$ has
an outgoing edge, while every sink is greater than $n - 1$ by the horizon claim
of \lemmaref{lem:p3-inv}. Finally, a non-root scale is adjoined by line 3 of
\algorithmref{alg:p3-preprocess} as a multiple $n' r'$ of the time index $n'
\leq n - 1$ of its round, with $r' \leq r_{\max}$; so every scale of $S_n$ is
at most $\max(1, r_{\max}(n - 1)) \leq r_{\max} n$.

\emph{Claim 2.} We show by induction on $n$ that every non-root scale of $S_n$
is the target of an edge of $\mathcal{E}_n$. This holds for $n = 1$. In the
round $n < T$, Preprocess, which deletes nothing, adjoins a scale only together
with an edge into it (lines 3--4 of \algorithmref{alg:p3-preprocess}); the
cascade changes nothing; and Post-process either deletes nothing, or leaves its
loop when the set $\mathcal{D}$ of line 4 of \algorithmref{alg:p3-post} is
empty, whose first constituent is empty exactly when every surviving non-root
scale is the target of a surviving edge; the property thus passes from $S_n$ to
$S_{n+1}$. Moreover $S_n \neq \emptyset$, since line 5 of
\algorithmref{alg:p3-post} fires at no round before $T$. Starting from any
scale of $S_n$ and following incoming edges, which strictly decrease the scale,
one reaches a scale of $S_n$ with no incoming edge, which must be the root:
thus $1 \in S_n$, and the edges traversed form an entry path.

\emph{Claim 3.} The edge $(m, r)$ is created only in the round $m$, by line 4
of \algorithmref{alg:p3-preprocess}, and deleted only by line 3 of
\algorithmref{alg:p3-post}; since the round $m$ does not recur within the
epoch, a deleted edge is never re-created. Thus, if $(m, r) \in
\mathcal{E}_{n'}$, the edge was created in the round $m \leq n' - 1$ and
survived the post-processing of the rounds $m, \dots, n' - 1$, so that $(m, r)
\in \mathcal{E}_{n''}$ for every prediction round $n''$ with $m + 1 \leq n''
\leq n'$. The set of prediction rounds at which $(m, r)$ is present is
therefore empty or of the form $\{m + 1, \dots, n_e\}$, with $n_e$ its maximum.

\emph{Claim 4.} For a scale $m \in S$, the state set $Q_m$ is modified within a
round only by line 12 of \algorithmref{alg:p3-update}, which adds a state, and
by the deletion of $m$; line 3 of \algorithmref{alg:p3-preprocess} assigns
$Q_{n' r'}$ only for a scale $n' r' \notin S$. If $m \in S_{n'} \cap
S_{n'+1}$, then $m$ survives the post-processing of round $n'$, the only step
of that round at which a scale is deleted; hence $Q_m^{(n')} \subseteq
Q_m^{(n'+1)}$, and the first assertion follows by induction on $n'' - n$. For
the second, $Q_1^{(1)} = \{0\}$, while for $n \geq 2$ line 1 of
\algorithmref{alg:p3-post} at round $n - 1$ collects every scale with
$\abs{Q_m} > c_{\max} = c$, the loop deletes them all, and no state is added
afterwards.
\end{proof}

Fix a prediction round $n$. The walk of $\xi \in R^\omega$ from a scale $m \in
S_n$ is \emph{unblocked} (at round $n$) if $\xi \in Z(\gamma)$ for some $\gamma
\in \mathcal{C}_m^{(n)}$: the walk follows edges of the fragment until it
reaches a sink. Such a $\gamma$ is unique, since no continuation is a proper
prefix of another --- the sink at which a continuation ends has no outgoing
edge --- and we denote it by $\gamma_n(m, \xi)$. The set $\mathcal{U}_n(m)$ of
the paths whose walk from $m$ is unblocked is thus the union of the cylinder
sets $Z(\gamma)$, $\gamma \in \mathcal{C}_m^{(n)}$, which are finite in number
and pairwise disjoint; the same holds of every set of paths considered below,
so the masses involved are finite sums, and no measure theory is involved. The
unblocked paths of \sectionref{sec:p3-predict} are the paths whose walk from
the root is unblocked, and we write $\mathcal{U}_n \coloneqq
\mathcal{U}_n(1)$; for $\xi \in \mathcal{U}_n$, $J_n(\xi)$ denotes the length
of $\gamma_n(1, \xi)$, and $M_n(\xi) \coloneqq m_{J_n(\xi)}(\xi)$ is the sink
at which the walk of $\xi$ from the root ends.

Let $(m, q) \in \mathcal{A}_n$ be an active pair, so that $q = n \bmod m$, let
$\xi \in \mathcal{U}_n(m)$, and write $\gamma_n(m, \xi) = (m_0, r_0, \dots,
r_{L-1}, m_L)$. The \emph{descent} of $n$ along $\gamma_n(m, \xi)$ visits the
pairs $(m_j, n \bmod m_j)$, $j \leq L$. The pair at $j = 0$ is $(m, q)$, which
is active; and if the pair at $j < L$ is active and its edge is not a leaf edge
--- $(m_j, n \bmod m_j, r_j) \notin \mathcal{L}_n$ --- then, by the definition
of $\mathcal{L}_n$, the transition $\delta_{m_j}^{r_j}(n \bmod m_j, \floor{n /
m_j} \bmod r_j)$, a state of $Q_{m_{j+1}}^{(n)}$, is congruent to $n$ modulo
$m_{j+1}$, hence equals $n \bmod m_{j+1}$: the pair at $j + 1$ is active as
well. Let $I_n(m, \xi)$ be the least $j < L$ with $(m_j, n \bmod m_j, r_j) \in
\mathcal{L}_n$, if there is one; by induction, the pairs at the indices up to
$I_n(m, \xi)$ are active --- all the pairs of the descent, if there is no such
$j$. The \emph{vote} of $\xi$ from $(m, q)$ is
\[
  \vote_n(m, \xi) \coloneqq \begin{cases}
    \phi_n(m_j, n \bmod m_j, r_j) & \text{if } j = I_n(m, \xi) \text{ exists},
      \\
    \bot & \text{otherwise,}
  \end{cases}
\]
and in the second case we say that $\xi$ \emph{abstains} from $(m, q)$. Being
determined by $\gamma_n(m, \xi)$, the vote is constant on each cylinder set
$Z(\gamma)$, $\gamma \in \mathcal{C}_m^{(n)}$. At the root --- the pair $(1,
0)$ is active at every round, $Q_1 = \{0\}$ being installed by
\algorithmref{alg:p3-init} and never modified --- we write $I_n(\xi) \coloneqq
I_n(1, \xi)$ and $\vote_n(\xi) \coloneqq \vote_n(1, \xi)$, and simply say that
$\xi$ abstains. Thus $\vote_n(\xi) = \phi_n(m_I(\xi), n \bmod m_I(\xi), \xi_I)$
with $I \coloneqq I_n(\xi)$: the forecast of the first leaf edge met by the
descent of $n$ along $\xi$, which is an edge of $\xi$. The unwinding of
\algorithmref{alg:p3-predict} described in \sectionref{app:p3-predict} is the
following identity.

\begin{lemma}[Vote decomposition]
\label{lem:p3-vote-decomposition}
At every prediction round $n$ of the epoch, the ideal quantities defined by
lines 10--13 of \algorithmref{alg:p3-predict} satisfy $\theta_n(m) =
\mu(\mathcal{U}_n(m))$ for every $m \in S_n$, and
\[
  \pi_n(m, q)[a] = \mu\{\xi \in \mathcal{U}_n(m) \mid \vote_n(m, \xi) = a\}
\]
for every $(m, q) \in \mathcal{A}_n$ and $a \in \Sigma$. In particular,
$\theta_n(1) = \mu(\mathcal{U}_n)$ and $\pi_n(1, 0)[a] = \mu\{\xi \in
\mathcal{U}_n \mid \vote_n(\xi) = a\}$.
\end{lemma}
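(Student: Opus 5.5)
Both identities will be proved together by downward induction over the scales of $S_n$, following the order in which line 10 of \algorithmref{alg:p3-predict} visits them. This is the same order used by Diverge and by the proof of \lemmaref{lem:p3-div-semantics}. It is well-founded because edges strictly increase the scale.

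\textbf{The mass identity.} At a sink $m$, $\mathcal{C}_m^{(n)}$ consists of the empty continuation alone. Every path's walk from $m$ is therefore unblocked, so $\mu(\mathcal{U}_n(m)) = 1 = \theta_n(m)$. At a non-sink $m$, every continuation from $m$ is a first edge $(m, r) \in \mathcal{E}_n$ followed by a continuation from $mr$. Hence $\mathcal{U}_n(m)$ is the disjoint union, over the outgoing edges, of the sets $\{\xi \mid \xi_0 = r,\ \text{shift}(\xi) \in \mathcal{U}_n(mr)\}$. Independence of the coordinates under $\mu$ gives each such set mass $r^{-\eta} \mu(\mathcal{U}_n(mr))$. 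Summing over the edges and applying the induction hypothesis yields line 11.

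\textbf{The vote identity.} Fix an active pair $(m, q)$ and split $\mathcal{U}_n(m)$ according to the first step $\xi_0 = r$ of the walk, which must be an edge $(m, r) \in \mathcal{E}_n$. This split is empty when $m$ is a sink. In that case the descent has length zero, every path abstains, and the vote mass for every $a \in \Sigma$ is $0$, matching the empty sums in line 13. Otherwise there are two cases for each edge.

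If $(m, q, r) \in \mathcal{L}_n$, then $I_n(m, \xi) = 0$, so every path in this branch votes $\phi_n(m, q, r)$. The branch has mass $r^{-\eta}\theta_n(mr)$ by the mass identity, which is the first sum of line 13.

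If $(m, q, r) \notin \mathcal{L}_n$, then, as noted before the lemma, the transition $\delta_m^r(q, \floor{n/m} \bmod r)$ equals $n \bmod mr$, so $(mr, n \bmod mr)$ is active. The descent of $n$ along $\xi$ from $m$ is the first step followed by the descent along the shifted path from $mr$. The leaf-edge index therefore shifts by one, and $\vote_n(m, \xi) = \vote_n(mr, \text{shift}(\xi))$, with abstentions matching as well. Independence again factors out $r^{-\eta}$, and the induction hypothesis at $mr$ supplies $\pi_n(mr, \cdot)[a]$. This gives the second sum.

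The two cases are disjoint and exhaust the outgoing edges, so adding them proves the claim at $(m, q)$. The root case follows, since $(1, 0)$ is always active.

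\textbf{Main obstacle.} The only delicate point is making the shift bookkeeping honest. We must check that $\gamma_n(m, \xi)$ decomposes as the first edge followed by $\gamma_n(mr, \text{shift}(\xi))$, and that the descent residues at $mr$ computed from $m$ agree with those computed directly. The second fact holds because $n \bmod m_j$ depends only on the scale $m_j$, not on the path. We must also confirm that the recursion of line 13 only references pairs at larger scales, which were computed in earlier iterations. All sets involved are finite unions of cylinders, so no measure-theoretic issue arises.
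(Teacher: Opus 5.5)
Your proposal is correct and follows essentially the same route as the paper: downward induction over $S_n$, splitting each unblocked walk from a non-sink by its first edge and factoring out $r^{-\eta}$ via the cylinder masses, with leaf and non-leaf edges handled exactly as in lines 11 and 13. The bookkeeping points you flag, namely the decomposition of $\gamma_n(m, \xi)$ into the first edge $(m, r)$ followed by $\gamma_n(m r, \xi')$ and the identity $\delta_m^r(q, \floor{n / m} \bmod r) = n \bmod m r$ at non-leaf edges, are precisely the ones the paper's proof verifies.
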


\begin{proof}
We argue by induction over $S_n$ in decreasing order, which is the order in
which lines 10--13 of \algorithmref{alg:p3-predict} compute $\theta_n$ and
$\pi_n$; at the scale $m$, both identities are thus available at every scale of
$S_n$ larger than $m$. If $m$ is a sink, $\mathcal{C}_m^{(n)}$ consists of the
length-zero continuation $(m)$ alone, with $Z((m)) = R^\omega$; so
$\mu(\mathcal{U}_n(m)) = 1 = \theta_n(m)$ by line 11, and every path abstains
from the active pair at $m$, if there is one, its descent meeting no edge ---
in agreement with line 13, both of whose sums are empty.

Let $m$ be a non-sink. A continuation from $m$ consists of an edge $(m, r) \in
\mathcal{E}_n$ followed by a continuation $\gamma' \in \mathcal{C}_{m
r}^{(n)}$; writing $(m, r) \gamma'$ for their concatenation, $Z((m, r)
\gamma')$ is the set of paths $\xi$ with $\xi_0 = r$ and $\xi' \coloneqq
(\xi_1, \xi_2, \dots) \in Z(\gamma')$, and $\mu(Z((m, r) \gamma')) = r^{-\eta}
\mu(Z(\gamma'))$ by \equationref{eq:p3-cylinder-mass}. Summing over the
continuations from $m$, grouped by their first edge,
\[
  \mu(\mathcal{U}_n(m)) = \sum_{r \in R : (m, r) \in \mathcal{E}_n} r^{-\eta}
  \sum_{\gamma' \in \mathcal{C}_{m r}^{(n)}} \mu(Z(\gamma')) = \sum_{r \in R :
  (m, r) \in \mathcal{E}_n} r^{-\eta} \mu(\mathcal{U}_n(m r)) = \theta_n(m),
\]
by the induction hypothesis at the scales $m r$ and line 11. Now let $(m, q)
\in \mathcal{A}_n$, and let $\xi \in Z((m, r) \gamma')$ as above. If $(m, q, r)
\in \mathcal{L}_n$, then $I_n(m, \xi) = 0$ and $\vote_n(m, \xi) = \phi_n(m, q,
r)$. Otherwise the pair $(m r, n \bmod m r)$ is active, the descent of $n$
along $(m, r) \gamma'$ continues from this pair with the descent along $\gamma'
= \gamma_n(m r, \xi')$, and $\vote_n(m, \xi) = \vote_n(m r, \xi')$. Writing
$V(m')[a] \coloneqq \mu\{\xi \in \mathcal{U}_n(m') \mid \vote_n(m', \xi) = a\}$
for the mass voting $a$ from the active pair of a scale $m'$, and summing the
masses $\mu(Z((m, r) \gamma')) = r^{-\eta} \mu(Z(\gamma'))$ over the
continuations on whose cylinder sets the vote is $a$, grouped by their first
edge, we obtain for every $a \in \Sigma$
\[
  V(m)[a] = \sum_{r : (m, q, r) \in \mathcal{L}_n,\ \phi_n(m, q, r) = a}
  r^{-\eta} \mu(\mathcal{U}_n(m r)) + \sum_{r : (m, r) \in \mathcal{E}_n,\ (m,
  q, r) \notin \mathcal{L}_n} r^{-\eta} V(m r)[a],
\]
which the induction hypothesis at the scales $m r$ --- $\mu(\mathcal{U}_n(m r))
= \theta_n(m r)$ and $V(m r)[a] = \pi_n(m r, n \bmod m r)[a]$ --- identifies
with the right-hand side of line 13, the argument $\delta_m^r(q, \floor{n / m}
\bmod r)$ of $\pi_n$ there being $n \bmod m r$ when $(m, q, r) \notin
\mathcal{L}_n$.
\end{proof}

The implementation of \algorithmref{alg:p3-predict} replaces its lines 10--14
by \algorithmref{alg:p3-predict-fp} (\sectionref{app:p3-predict}), so the
prediction $\hat{y}_n$ maximizes the computed vector $\hat{\pi}_n(1, 0)$, and
the ideal voting masses $\pi_n(1, 0)$ only up to a tolerance, which the
following lemma quantifies.

\begin{lemma}[Precision]
\label{lem:p3-precision}
At every prediction round $n$ of the epoch, the prediction $\hat{y}_n$
satisfies
\[
  \pi_n(1, 0)[\hat{y}_n] \geq \pi_n(1, 0)[a] - (r_{\max} n)^{-\eta} / 2 \quad
  \text{for every } a \in \Sigma.
\]
\end{lemma}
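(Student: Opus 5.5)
The plan is to compare the fixed-point quantities $\hat{\theta}_n$, $\hat{\pi}_n$ of \algorithmref{alg:p3-predict-fp} with the ideal ones $\theta_n$, $\pi_n$ of lines 10--13 of \algorithmref{alg:p3-predict}. I would do this by downward induction over $S_n$, in the same decreasing order in which both are computed. The target is the one-sided bound
\[
  0 \leq \theta_n(m) - \hat{\theta}_n(m) \leq 2 \abs{R} u \, h(m), \qquad 0 \leq \pi_n(m, q)[a] - \hat{\pi}_n(m, q)[a] \leq 2 \abs{R} u \, h(m),
\]
where $h(m)$ is the length of the longest directed path from $m$ to a sink of the fragment.

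At a sink, $\hat{\theta} = \theta = 1$ and both $\pi$ vectors vanish, so there is nothing to prove. At a non-sink, fix a single term $\floor{\rho_r \hat{\theta}(m r)}_u$. By claim 3 of \lemmaref{lem:p3-fp-accuracy}, $0 \leq \rho_r \leq r^{-\eta}$ and $r^{-\eta} - \rho_r \leq u$. The rounding costs at most $u$, and $0 \leq \hat{\theta}(m r) \leq \theta(m r) \leq 1$ by the induction hypothesis together with \lemmaref{lem:p3-vote-decomposition}. Hence
\[
  0 \leq r^{-\eta} \theta(m r) - \floor{\rho_r \hat{\theta}(m r)}_u \leq r^{-\eta} \bigl(\theta(m r) - \hat{\theta}(m r)\bigr) + 2u,
\]
and the same bound holds for the $\pi$-terms, whose values also lie in $[0, 1]$. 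Summing over the at most $\abs{R}$ outgoing edges, and using $\sum_{r \in R} r^{-\eta} = 1$ together with $h(m r) \leq h(m) - 1$, I get an error of at most $2 \abs{R} u (h(m) - 1) + 2 \abs{R} u$. This closes the induction. For $\pi$, the two sums of line 7 range over disjoint sets of edges, so the same accounting applies.

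Next I bound the height. By claim 1 of \lemmaref{lem:p3-fragment-dynamics}, every scale is at most $r_{\max} n$, and each edge at least doubles the scale. Therefore $h(1) \leq \log(r_{\max} n) \leq \log(2 r_{\max} n)$, and the coordinatewise error at the root is
\[
  E \coloneqq 2 \abs{R} u \log(2 r_{\max} n) \leq (r_{\max} n)^{-\log r_{\max}} / 4
\]
by \equationref{eq:p3-precision}.

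Finally I convert this into the statement. Since $\hat{y}_n$ maximizes $\hat{\pi}_n(1, 0)$ and $\hat{\pi} \leq \pi$ coordinatewise, for every $a \in \Sigma$
\[
  \pi_n(1, 0)[\hat{y}_n] \geq \hat{\pi}_n(1, 0)[\hat{y}_n] \geq \hat{\pi}_n(1, 0)[a] \geq \pi_n(1, 0)[a] - E.
\]
It remains to note that $\eta \leq \log \abs{R} \leq \log r_{\max}$, because $R \subseteq \{2, \dots, r_{\max}\}$. This gives $(r_{\max} n)^{-\log r_{\max}} \leq (r_{\max} n)^{-\eta}$, so $E \leq (r_{\max} n)^{-\eta} / 4$, which is within the claimed $(r_{\max} n)^{-\eta} / 2$.

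I expect the main obstacle to be the induction step. The errors must not multiply along paths, and each level's additive error must stay $O(\abs{R} u)$. This relies on two things. First, the ideal weights sum to exactly $1$, so the inherited errors are averaged rather than added. Second, the one-sidedness $\rho_r \leq r^{-\eta}$ keeps every computed quantity below its ideal counterpart and within $[0, 1]$.
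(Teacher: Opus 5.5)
Your proof is correct and follows essentially the paper's route: the same downward induction over $S_n$, the same per-edge accounting of $2u$ plus $r^{-\eta}$ times the inherited error, the same use of $\sum_{r \in R} r^{-\eta} = 1$ and of the height bound $\log(r_{\max} n)$, and the same calibration \equationref{eq:p3-precision}. The one difference is that you keep the error one-sided, using that every product is rounded down, that $\rho_r \leq r^{-\eta}$, and that $\hat{\theta}$ and $\hat{\pi}$ are nonnegative (immediate from the construction, though not part of your stated induction hypothesis); this lets you pay the root error once instead of twice, so you get tolerance $(r_{\max} n)^{-\eta}/4$, whereas the paper's two-sided bounds give exactly $/2$.
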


\begin{proof}
Let $u = 2^{-F}$, $\rho_r$, $\hat{\theta}_n$ and $\hat{\pi}_n$ be the unit, the
rounded weights and the approximations computed by
\algorithmref{alg:p3-predict-fp} at round $n$, so that $8 \abs{R} \log(2
r_{\max} n) u \leq (r_{\max} n)^{-\log r_{\max}}$ by
\equationref{eq:p3-precision}, while $\rho_r \in [0, r^{-\eta}]$ with
$r^{-\eta} - \rho_r \leq u$ for every $r \in R$ (claim 3 of
\lemmaref{lem:p3-fp-accuracy}); each product of lines 5 and 7 of
\algorithmref{alg:p3-predict-fp} is rounded down with an error smaller than
$u$, and the sums are exact. For $m \in S_n$, let $L(m)$ be the largest length
of a continuation in $\mathcal{C}_m^{(n)}$; the scales at least double along
the edges of a continuation and stay at most $r_{\max} n$ (claim 1 of
\lemmaref{lem:p3-fragment-dynamics}), so $L(m) \leq \log(r_{\max} n)$, and
$L(m) \geq L(m r) + 1$ for every edge $(m, r) \in \mathcal{E}_n$. We show, by
induction over $S_n$ in decreasing order as in
\lemmaref{lem:p3-vote-decomposition}, that
\[
  \abs{\hat{\theta}_n(m) - \theta_n(m)} \leq 2 \abs{R} L(m) u \quad \text{for
  every } m \in S_n,
\]
and that $\hat{\pi}_n(m, q)[a]$ deviates from $\pi_n(m, q)[a]$ by at most the
same bound, for the active pair $(m, q)$ at $m$, if any, and every $a \in
\Sigma$.

At a sink, line 5 of \algorithmref{alg:p3-predict-fp} and line 11 of
\algorithmref{alg:p3-predict} both assign $1$, and the sums of lines 7 and 13
are empty: all the deviations are zero. Let $m$ be a non-sink. The sums of
lines 5 and 7 of \algorithmref{alg:p3-predict-fp} run over the same index sets
as those of lines 11 and 13 of \algorithmref{alg:p3-predict}, matching each
term $\floor{\rho_r \cdot \hat{z}}_u$ of the former with the term $r^{-\eta} z$
of the latter at the same edge $(m, r) \in \mathcal{E}_n$, where the computed
value $\hat{z}$ approximates the ideal value $z \in [0, 1]$: for line 5, and at
the leaf edges of line 7, $\hat{z} = \hat{\theta}_n(m r)$ and $z = \theta_n(m
r) = \mu(\mathcal{U}_n(m r))$; at the remaining edges of line 7, $\hat{z} =
\hat{\pi}_n(m r, n \bmod m r)[a]$ and $z = \pi_n(m r, n \bmod m r)[a]$, a mass
as well, by \lemmaref{lem:p3-vote-decomposition}. By the induction hypothesis
at $m r$, $\abs{\hat{z} - z} \leq 2 \abs{R} L(m r) u$ in every case, so
\[
  \abs{\floor{\rho_r \cdot \hat{z}}_u - r^{-\eta} z} \leq u + \rho_r
  \abs{\hat{z} - z} + (r^{-\eta} - \rho_r) z \leq 2 u + r^{-\eta} \cdot 2
  \abs{R} L(m r) u,
\]
the three summands of the middle expression accounting for the rounding of the
product, the deviation of $\hat{z}$ and the deviation of the weight. The second
inequality bounds the latter two summands separately: $\rho_r \leq r^{-\eta}$
and the induction hypothesis give $\rho_r \abs{\hat{z} - z} \leq r^{-\eta}
\cdot 2 \abs{R} L(m r) u$, while $r^{-\eta} - \rho_r \leq u$ and $z \leq 1$
give $(r^{-\eta} - \rho_r) z \leq u$, which joins the rounding cost to make the
$2 u$. Each of $\hat{\theta}_n(m)$ and $\hat{\pi}_n(m, q)[a]$ accumulates at
most $\abs{R}$ such terms --- the two sums of line 7 range over disjoint sets
of edges --- so, summing the bound over them, with $\sum_{r \in R} r^{-\eta} =
1$ and $L(m r) \leq L(m) - 1$, the total deviation is at most $2 \abs{R} u + 2
\abs{R} (L(m) - 1) u = 2 \abs{R} L(m) u$, completing the induction.

At the root, whose active pair is $(1, 0)$, the deviation bound and $L(1) \leq
\log(r_{\max} n)$ give $\abs{\hat{\pi}_n(1, 0)[a] - \pi_n(1, 0)[a]} \leq 2
\abs{R} \log(r_{\max} n) u$ for every $a \in \Sigma$. Hence, for every $a \in
\Sigma$,
\[
  \begin{aligned}
    \pi_n(1, 0)[\hat{y}_n] &\geq \hat{\pi}_n(1, 0)[\hat{y}_n] - 2 \abs{R}
      \log(r_{\max} n) u \\
    &\geq \hat{\pi}_n(1, 0)[a] - 2 \abs{R} \log(r_{\max} n) u \\
    &\geq \pi_n(1, 0)[a] - 4 \abs{R} \log(r_{\max} n) u,
  \end{aligned}
\]
the first and third inequalities applying the root deviation bound at
$\hat{y}_n$ and at $a$ respectively, and the second holding because $\hat{y}_n$
maximizes $\hat{\pi}_n(1, 0)$ (line 8 of
\algorithmref{alg:p3-predict-fp}). Finally, $4 \abs{R} \log(r_{\max} n) u \leq
4 \abs{R} \log(2 r_{\max} n) u \leq (r_{\max} n)^{-\log r_{\max}} / 2 \leq
(r_{\max} n)^{-\eta} / 2$: the middle inequality is
\equationref{eq:p3-precision} halved, and the last holds since $\eta \leq \log
\abs{R} \leq \log r_{\max}$.
\end{proof}

\begin{lemma}[Abstention]
\label{lem:p3-abstention}
Let $\xi \in \mathcal{U}_n$. Then $\vote_n(\xi) = \bot$ if and only if
$M_n(\xi) = n$. In particular, $\xi$ abstains only when $n$ is a scale of
$\xi$.
\end{lemma}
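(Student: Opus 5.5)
We argue directly from the definition of $\vote_n$ and the descent of $n$ along $\gamma_n(1, \xi) = (m_0, r_0, \dots, r_{L-1}, m_L)$, where $L = J_n(\xi)$ and $m_L = M_n(\xi)$. Recall that $\xi$ abstains exactly when no index $j < L$ has $(m_j, n \bmod m_j, r_j) \in \mathcal{L}_n$. In that case, by the induction recorded before the definition of the vote, every pair $(m_j, n \bmod m_j)$ with $j \leq L$ is active. So the lemma reduces to two claims: the final pair $(m_L, n \bmod m_L)$ can be active only when $m_L = n$, and conversely, when $m_L = n$ no leaf edge occurs along the descent.

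\emph{Abstention implies $M_n(\xi) = n$.} By claim 1 of \lemmaref{lem:p3-fragment-dynamics}, the sink $m_L$ satisfies $m_L \geq n$. Suppose $m_L > n$. Then $n \bmod m_L = n$, and activity means $n \in Q_{m_L}^{(n)}$. However, the horizon claim of \lemmaref{lem:p3-inv}, applied after the update on $y[:n]$ (its $n$ replaced by $n - 1$), gives $Q_{m_L} = \mathcal{F}_n \subseteq [n]$. Hence $n \notin Q_{m_L}$, a contradiction, and so $m_L = n$.

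\emph{$M_n(\xi) = n$ implies abstention.} Every scale $m_j$ on the path divides $m_L = n$, so $n \bmod m_j = 0$ and the digit $\floor{n / m_j} \bmod r_j = (n / m_j) \bmod r_j$ vanishes for $j < L$, because $m_{j+1} = m_j r_j$ divides $n$. The residue $0$ is a state at every scale, as noted in \sectionref{app:p3-preprocessing}, so each pair $(m_j, 0)$ is active. By the equality clause of routing (claim 2 of \lemmaref{lem:p3-inv}), $\delta_{m_j}^{r_j}(0, 0) = 0 = n \bmod m_{j+1}$, so no edge is a leaf edge and $\xi$ abstains. The final sentence of the lemma follows because $M_n(\xi) = m_{J_n(\xi)}(\xi)$ is a scale of $\xi$.

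The only delicate point is invoking the horizon claim at the right time index. The prediction at round $n$ uses the state after updating on $y[:n]$, whose last observed index is $n - 1$. The claim therefore gives $Q_m = \mathcal{F}_n$ for $m > n - 1$, which still excludes $n$ itself. At $m = n$ the residue is $0$ rather than $n$, which is why exactly this scale permits abstention.
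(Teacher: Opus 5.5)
Your proof is correct and follows the paper's argument essentially step for step. In the forward direction you combine $M_n(\xi) \geq n$ with the fact that every state at a scale beyond $n - 1$ lies below $n$. In the converse you use that every scale on the path divides $n$, so the residue and digit are $0$ throughout and routing keeps the descent at the state $0$. The only cosmetic difference is that you cite $0 \in Q_m$ at every scale together with the equality clause of routing, whereas the paper runs a short induction from the upper-bound clause $\delta_{m_j}^{r_j}(0, 0) \leq 0$.
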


\begin{proof}
Write $\gamma_n(1, \xi) = (m_0, r_0, \dots, r_{J-1}, m_J)$, so that $m_j =
m_j(\xi)$, $r_j = \xi_j$, and $m_J = M_n(\xi)$ is a sink, whence $M_n(\xi) \geq
n$ by claim 1 of \lemmaref{lem:p3-fragment-dynamics}. If $\xi$ abstains, every
pair of the descent of $n$ along $\gamma_n(1, \xi)$ is active; in particular $n
\bmod M_n(\xi) \in Q_{M_n(\xi)}^{(n)}$, and, every state being a position below
$n$, $n \bmod M_n(\xi) < n$, which forces $M_n(\xi) = n$. Conversely, suppose
that $M_n(\xi) = n$. For $j < J$, the scale $m_{j+1} = m_j r_j$ divides $n$, so
$n \bmod m_j = 0$ and the digit $\floor{n / m_j} \bmod r_j$ is $0$. We show by
induction on $j \leq J$ that the pair $(m_j, 0)$ is active and, if $j < J$,
that its edge $(m_j, 0, r_j)$ is not a leaf edge. The pair $(1, 0)$ is active.
If $(m_j, 0)$ is active and $j < J$, routing (claim 2 of
\lemmaref{lem:p3-inv}) gives $\delta_{m_j}^{r_j}(0, 0) \leq 0$, so the
transition is the state $0 = n \bmod m_{j+1}$ of $Q_{m_{j+1}}^{(n)}$: the edge
is not a leaf edge, and the pair $(m_{j+1}, 0)$ is active. Hence $I_n(\xi)$
does not exist, and $\xi$ abstains.
\end{proof}

\begin{lemma}[Monotonicity and mass of the unblocked set]
\label{lem:p3-unblocked-monotone}
For every prediction round $n$ of the epoch:
\begin{enumerate}
  \item if $n < T$ then $\mathcal{U}_{n+1} \subseteq \mathcal{U}_n$;
  \item $\mu(\mathcal{U}_n) \geq (r_{\max} n)^{-\eta}$.
\end{enumerate}
\end{lemma}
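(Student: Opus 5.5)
For claim 1, the plan is to show that every continuation from the root in the fragment of round $n+1$ is an extension of a continuation from the root in the fragment of round $n$. Take $\xi \in \mathcal{U}_{n+1}$, with $\gamma \coloneqq \gamma_{n+1}(1, \xi)$. Every edge of $\gamma$ lies in $\mathcal{E}_{n+1}$ and, by claim 1 of \lemmaref{lem:p3-fragment-dynamics}, has source at most $n$.

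\emph{Edges with source at most $n - 1$.} By the single-lifetime claim (claim 3 of \lemmaref{lem:p3-fragment-dynamics}), presence at round $n+1$ together with source $m \leq n-1$ (so $m+1 \leq n$) forces presence at round $n$. So every such edge of $\gamma$ is already in $\mathcal{E}_n$.

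\emph{An edge with source exactly $n$.} Scales strictly increase along $\gamma$, so there is at most one such edge. Its source $n$ belongs to $S_n$: the source of an edge of round $n+1$ survived the round, and Preprocess only adjoins scales of the form $n r' > n$. Hence $n$ is a sink of round $n$ by claim 1 of \lemmaref{lem:p3-fragment-dynamics}.

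\emph{Assembling the prefix.} Let $\gamma'$ be the prefix of $\gamma$ that stops at the first scale $\geq n$. It uses only edges with source $\leq n-1$, hence only edges of $\mathcal{E}_n$, and each of its scales lies in $S_n$ as an endpoint of such an edge. The walk of $\xi$ follows $\gamma'$ in the round-$n$ fragment. The endpoint of $\gamma'$ is a scale $\geq n$ of $S_n$, hence a sink there, so $\gamma' \in \mathcal{C}_1^{(n)}$. If $\gamma$ never reaches a scale $\geq n$ before its sink, this contradicts that its sink exceeds $n$, so the prefix is well defined. Therefore $\xi \in Z(\gamma') \subseteq \mathcal{U}_n$.

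\emph{One detail.} The endpoint of $\gamma'$ must not be a non-sink $\geq n$ in round $n$. This is ruled out because, in round $n$, edges have source $\leq n-1$.

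For claim 2, the plan is to exhibit a single continuation from the root in the round-$n$ fragment and use \equationref{eq:p3-cylinder-mass}. Since $1 \in S_n$ and the fragment is finite and acyclic, $\mathcal{C}_1^{(n)}$ is nonempty. Any $\gamma \in \mathcal{C}_1^{(n)}$ ends at a sink $m_L \leq r_{\max} n$, again by claim 1 of \lemmaref{lem:p3-fragment-dynamics}. Then
\[
  \mu(\mathcal{U}_n) \geq \mu(Z(\gamma)) = m_L^{-\eta} \geq (r_{\max} n)^{-\eta},
\]
using $\eta \geq 0$.

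The main obstacle is in claim 1: making sure that the edges created at round $n$ by Preprocess never block a path but only extend it. This is exactly the truncation at the first scale $\geq n$, together with the fact that round-$n$ sinks are precisely the scales $\geq n$. The single-lifetime claim does the remaining work.
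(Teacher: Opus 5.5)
Your proof is correct and uses the same ingredients as the paper's. For claim 1, these are the single-lifetime property and the fact that the round-$n$ sinks are exactly the scales $\geq n$. The paper argues by contraposition: the first edge missing from a walk at round $n$ has source $\leq n-1$, so it stays missing at round $n+1$. You instead truncate the round-$(n+1)$ continuation at its first scale $\geq n$, which makes your paragraph on an edge with source exactly $n$ unnecessary. For claim 2, the paper uses the entry path to the largest scale of $S_n$, while your arbitrary root continuation with sink $\leq r_{\max} n$ works equally well.
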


\begin{proof}
For claim 1, let $\xi \notin \mathcal{U}_n$; we show that $\xi \notin
\mathcal{U}_{n+1}$. Let $j$ be the largest index s.t.\ the edges $(m_i(\xi),
\xi_i)$, $i < j$, all belong to $\mathcal{E}_n$; it exists, as $S_n$ is finite
while the scales of $\xi$ are unbounded. Then $m_j(\xi) \in S_n$ --- it is the
root (claim 2 of \lemmaref{lem:p3-fragment-dynamics}) or the target of the edge
at $j - 1$ --- and it is not a sink, since otherwise the walk of $\xi$ from the
root would follow a continuation of $(S_n, \mathcal{E}_n)$ up to it. By claim 1
of \lemmaref{lem:p3-fragment-dynamics}, $m_j(\xi) \leq n - 1$, and by the
maximality of $j$, $(m_j(\xi), \xi_j) \notin \mathcal{E}_n$. By claim 3 of the
same lemma, the prediction rounds at which this edge is present form an
interval beginning at $m_j(\xi) + 1 \leq n$; since the edge is absent at round
$n$, it is absent at round $n + 1$. Now, if $\xi \in \mathcal{U}_{n+1}$, its
walk from the root reaches the sink $M_{n+1}(\xi) \geq n + 1$ of $(S_{n+1},
\mathcal{E}_{n+1})$ (claim 1 at round $n + 1$) through edges of
$\mathcal{E}_{n+1}$; as $m_j(\xi) \leq n - 1 < M_{n+1}(\xi)$, the edge
$(m_j(\xi), \xi_j)$ is one of them --- a contradiction.

For claim 2, the largest scale $M$ of the finite nonempty set $S_n$ is a sink
(an edge out of it would have a larger target in $S_n$), with $M \leq r_{\max}
n$ by claim 1 of \lemmaref{lem:p3-fragment-dynamics}, and it has an entry path
$\alpha$ by claim 2; then $\alpha \in \mathcal{C}_1^{(n)}$, so $Z(\alpha)
\subseteq \mathcal{U}_n$ and $\mu(\mathcal{U}_n) \geq \mu(Z(\alpha)) =
M^{-\eta} \geq (r_{\max} n)^{-\eta}$ by
\equationref{eq:p3-cylinder-mass}.
\end{proof}

\subsubsection{Edge Budgets via Branch Prediction}
\label{app:p3-stat-edge}

In this subsubsection, the forecasts of a single edge are coupled to an
instance of the branch prediction game of
\sectionref{app:branch-prediction}. The coupling goes through the refinement
hierarchy of the subsequences at the target scale of the edge, which is
determined by the data alone.

\begin{definition}
\label{def:p3-refinement-tree}
Let $M \geq 1$ and $\ell \in \mathbb{N}$. The \emph{live residues} at row $\ell$
are $\Lambda^M_\ell \coloneqq [M]$ if $\ell = 0$ and $\Lambda^M_\ell \coloneqq
\{u \in [M] \mid M(\ell - 1) + u < \abs{y}\}$ if $\ell \geq 1$: the residues
whose subsequences modulo $M$ have at least $\ell$ entries within $y$. For $u,
u' \in \Lambda^M_\ell$, write $u \eqv^M_\ell u'$ when $y[M j + u] = y[M j +
u']$ for every $j < \ell$; this is an equivalence relation on
$\Lambda^M_\ell$, and $[u]^M_\ell$ denotes the equivalence class of $u$. The
\emph{refinement tree} $\mathcal{T}_M$ has as vertices the pairs $(C, \ell)$
with $C$ an $\eqv^M_\ell$-equivalence class, as root $([M], 0)$, and an edge
$(C, \ell) \to (C', \ell + 1)$ whenever $C' \subseteq C$.
\end{definition}

Since $\Lambda^M_{\ell+1} \subseteq \Lambda^M_\ell$ and $\eqv^M_{\ell+1}$
refines $\eqv^M_\ell$ there, the children of $(C, \ell)$ partition $C \cap
\Lambda^M_{\ell+1}$, and every residue $u \in [M]$ determines a path
$([u]^M_0, 0), ([u]^M_1, 1), \dots$ in $\mathcal{T}_M$, of length equal to the
number of entries of its subsequence within $y$. For $M = k^{m+1}$,
$\mathcal{T}_M$ is the level-$m$ tree of \citep{Kosoy2026b} (Definition A.14
there). The following lemma is the counterpart of Lemma A.13 of
\citep{Kosoy2026b}.

\begin{lemma}[Leaf-edge semantics]
\label{lem:p3-leaf-semantics}
Let $n$ be a prediction round and $(m, q, r) \in \mathcal{L}_n$ a leaf edge,
and put $M \coloneqq m r$, $\ell \coloneqq \floor{n / M}$, $u \coloneqq n \bmod
M$ and $i \coloneqq \floor{n / m} \bmod r$, so that $u = q + i m$. Let $q'
\coloneqq \delta_m^r(q, i)$, and let $D$, $\nu$, $q^*$ and $\phi_n(m, q, r)$ be
the quantities computed for this leaf edge by lines 5--9 of
\algorithmref{alg:p3-predict}. Then:
\begin{enumerate}
  \item $q' < u$ and $u \eqv^M_\ell q'$.
  \item Every $\tilde{q} \in D$ satisfies $\tilde{q} < u$.
  \item $D = \{\tilde{q} \in Q_M^{(n)} \mid \tilde{q} \eqv^M_\ell u\}$.
  \item Distinct $\tilde{q}_1, \tilde{q}_2 \in D$ satisfy $\tilde{q}_1
    \neqv^M_{\ell+1} \tilde{q}_2$.
  \item For every $(\hat{q}, \hat{i}) \in Q_m^{(n)} \times [r]$ with $\hat{u}
    \coloneqq \hat{i} m + \hat{q} < u$: $\delta_m^r(\hat{q}, \hat{i}) \leq
    \hat{u}$ and $\hat{u} \eqv^M_{\ell+1} \delta_m^r(\hat{q}, \hat{i})$.
  \item $\nu(q') \geq 1$; consequently, $\nu(q^*) \geq 1$.
  \item $\phi_n(m, q, r) = y[M \ell + q^*]$; consequently, the forecast is
    wrong, $\phi_n(m, q, r) \neq y[n]$, if and only if $q^* \neqv^M_{\ell+1}
    u$.
\end{enumerate}
\end{lemma}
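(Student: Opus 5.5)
We work in the configuration at prediction round $n$, to which \lemmaref{lem:p3-inv} applies with last observed index $n-1$. The main tools are pinning (\corollaryref{cor:p3-pinning}, with $N = n-1$), the divergence semantics of \lemmaref{lem:p3-div-semantics} together with distinguishability, and the route factorization of \lemmaref{lem:p3-route-map}. Fix an entry path $\alpha$ to $m$ and extend it by the edge $(m,r)$. Since $(m,q)$ is active, $\rho_\alpha(q) = q$, so the route of any residue $u'' = q + i m + M j$ passes through the entry $\delta_m^r(q,i) = q'$.

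Claims 1 and 5 come from routing and data consistency. For claim 1, routing gives $q' \le q + im = u$, and the inequality is strict because the edge is a leaf edge. For every row $j < \ell$ we have $Mj + u < n$, and \lemmaref{lem:p3-route-map}(3) with pinning gives $y[Mj+u] = \PEval_\gamma(q',j) = y[Mj+q']$. Hence $u \eqv^M_\ell q'$; both residues are live at row $\ell$ because $M(\ell-1)+u < n$. Claim 5 is identical, with $\hat q \in Q_m$ and $\hat u < u$. Rows $j \le \ell$ of $\hat u$ are observed, since $M\ell + \hat u < n$, and the same factorization through the entry $\delta_m^r(\hat q,\hat i)$ gives agreement at rows $0,\dots,\ell$. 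The bound $\delta \le \hat u$ is the routing upper bound.

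Claim 2 uses distinguishability, splitting on whether $\tilde q = q'$. If $\tilde q = q'$, we are done by claim 1. Otherwise $d_M(q',\tilde q) \ge \ell$, and distinguishability gives $M d + \max \le n-1$. Hence $M\ell + \tilde q \le n-1 < M\ell + u$, so $\tilde q < u$.

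Claim 3 follows from claims 1 and 2 and is the step I expect to need the most care. For the forward inclusion, \lemmaref{lem:p3-div-semantics}(1) gives agreement of the partial evaluations of $\tilde q$ and $q'$ at rows below $\ell$. Pinning turns this into data equality, so $\tilde q \eqv^M_\ell q' \eqv^M_\ell u$. For the reverse inclusion, take $\tilde q \eqv^M_\ell u$ with $\tilde q \ne q'$, and suppose $d = d_M(q',\tilde q) < \ell$. Distinguishability gives $x[Md+q'] \ne x[Md+\tilde q]$ at an observed row $d < \ell$. This contradicts $\tilde q \eqv q'$ at rows below $\ell$, which follows by transitivity through $u$.

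Claim 4 follows from claims 2 and 3. For distinct $\tilde q_1, \tilde q_2 \in D$, claim 3 puts them in the same $\eqv^M_\ell$ class, so their divergence is at least $\ell$. Distinguishability bounds it by $M d + \max < n$, which together with $\max < u$ forces $d = \ell$. The witness then gives differing symbols at row $\ell$, so $\tilde q_1 \neqv^M_{\ell+1} \tilde q_2$.

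Claim 6 requires a traversed pair routed to $q'$. In the leaf case, $q'$ is a value of the table, and we show some pair $(\hat q,\hat i)$ with $\hat i m + \hat q < u$ is routed to $q'$. A natural candidate is $\hat i m + \hat q = q'$ itself: by heredity $q' \bmod m \in Q_m$, and the equality clause of routing gives $\delta_m^r(q' \bmod m, \floor{q'/m}) = q'$. Since $q' < u$, this pair is counted in $\nu(q')$, so $\nu(q') \ge 1$. This forces $q' \in D$ and $\nu(q^*) \ge \nu(q') \ge 1$.

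Claim 7 evaluates the forecast by pinning. Since $q^* < u$ by claim 2, the position $M\ell + q^*$ is observed, and data consistency gives $\phi = y[M\ell+q^*]$. By claim 3, $q^* \eqv^M_\ell u$, so $q^*$ and $u$ agree at rows below $\ell$. The forecast is therefore wrong exactly when they differ at row $\ell$, that is, exactly when $q^* \neqv^M_{\ell+1} u$.
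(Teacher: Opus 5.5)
Your proof is correct and follows the paper's argument essentially step by step: the route factorization through the entry $\delta_m^r(\hat q,\hat i)$ (the paper's ``transfer identity'') for claims 1 and 5, distinguishability for claims 2--4, heredity for claim 6, and data consistency for claim 7. There are two cosmetic differences. In claim 4 the paper skips your detour through $d=\ell$, because $Md \le n-1 < M(\ell+1)$ already places the witness row at most at $\ell$. In claim 6, membership $q' \in D$ comes from $d_M(q',q')=\infty$, not from $\nu(q')\ge 1$.
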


\begin{proof}
Throughout, \lemmaref{lem:p3-inv} is applied with its $n$ replaced by $n - 1$,
per the conventions of \sectionref{app:p3-stat}: on the configuration on which
\algorithmref{alg:p3-predict} runs, every run of Eval at a position $t \leq n -
1$ returns $y[t]$, both routing clauses hold, distinguishability (claim 3)
governs the divergence tables computed by line 1 of
\algorithmref{alg:p3-predict}, and the output maps are injective at the sinks
--- so \lemmaref{lem:p3-div-semantics} applies, and pinning
(\corollaryref{cor:p3-pinning}) is available with $N = n - 1$. Since $M \ell
\leq n$, every residue of $[M]$ belongs to $\Lambda^M_\ell$, and every residue
$w \leq u$ belongs to $\Lambda^M_{\ell+1}$, its row $\ell$ sitting at the
position $M \ell + w \leq n < \abs{y}$; this covers every membership in a live
residue set required below. Finally, the pair $(m, q)$ being active, $q = n
\bmod m \in Q_m^{(n)}$, and the identity $u = q + i m$ is
\equationref{eq:p3-entry-residue} at the edge $(m, r)$.

\emph{A transfer identity.} Let $(\hat{q}, \hat{i}) \in Q_m^{(n)} \times [r]$,
and put $\hat{u} \coloneqq \hat{i} m + \hat{q}$ and $\tilde{q} \coloneqq
\delta_m^r(\hat{q}, \hat{i})$. Then $\tilde{q} \leq \hat{u}$ --- the upper
bound of routing --- and
\[
  y[M j + \hat{u}] = y[M j + \tilde{q}] \quad \text{for every } j \in \mathbb{N}
  \text{ with } M j + \hat{u} \leq n - 1.
\]
Indeed, fix an entry path $\alpha$ to $m$ --- one exists by claim 2 of
\lemmaref{lem:p3-fragment-dynamics} --- and let $\alpha'$ be the entry path to
$M$ obtained by appending the edge $(m, r)$. At an edge $(a, s)$ of $\alpha$,
the scale $a s$ divides $m$, so the digits of $\hat{u} = \hat{q} + \hat{i} m$
and of $\hat{q}$ agree there: $\floor{\hat{u} / a} \equiv \floor{\hat{q} / a}
\pmod{s}$. The route recursion \equationref{eq:p3-route} along $\alpha'$
therefore follows the route of $\hat{q}$ along $\alpha$ --- whose endpoint is
$\rho_\alpha(\hat{q}) = \hat{q}$, by claim 2 of \lemmaref{lem:p3-route-map} ---
and finishes by consuming the digit $\floor{\hat{u} / m} = \hat{i}$ at the
appended edge: $\rho_{\alpha'}(\hat{u}) = \delta_m^r(\hat{q}, \hat{i}) =
\tilde{q}$. Now fix $j$ with $t \coloneqq M j + \hat{u} \leq n - 1$ and any
$\gamma \in \mathcal{C}_M^{(n)}$. The route factorization
\equationref{eq:p3-route-factor} evaluates the run of Eval$(t)$ along $\alpha'
\gamma$ as $\Eval_{\alpha' \gamma}(t) = \PEval_\gamma(\tilde{q}, j)$, while
data consistency at the observed position $t$ evaluates it as $y[t]$; and
pinning at the position $M j + \tilde{q} \leq t$ gives $\PEval_\gamma(\tilde{q},
j) = y[M j + \tilde{q}]$. Combining the three identities proves the claim.

\emph{Claim 1.} Apply the transfer identity to the pair $(q, i)$, for which
$\hat{u} = u$ and $\tilde{q} = q'$. It gives $q' \leq u$, and equality is
impossible: the leaf-edge condition of line 3 of
\algorithmref{alg:p3-predict} reads $q' \not\equiv n \pmod{M}$, while $u = n
\bmod M$. Hence $q' < u$. For $j < \ell$, the position $M j + u \leq M (\ell -
1) + u = n - M \leq n - 1$, so the identity gives $y[M j + u] = y[M j + q']$:
that is, $u \eqv^M_\ell q'$.

\emph{Claim 2.} Let $\tilde{q} \in D$; for $\tilde{q} = q'$ the claim is part
of claim 1, so suppose $\tilde{q} \neq q'$. Distinguishability at the distinct
states $q', \tilde{q} \in Q_M^{(n)}$ gives $M d_M(q', \tilde{q}) + \tilde{q}
\leq n - 1$, and $d_M(q', \tilde{q}) \geq \ell$ by the divergence condition of
line 6 of \algorithmref{alg:p3-predict}; hence
\[
  M \ell + \tilde{q} \leq M d_M(q', \tilde{q}) + \tilde{q} \leq n - 1 < n = M
  \ell + u,
\]
whence $\tilde{q} < u$.

\emph{Claim 3.} Let $\tilde{q} \in D$. By claim 1 and transitivity, the
required $\tilde{q} \eqv^M_\ell u$ reduces to $\tilde{q} \eqv^M_\ell q'$, which
is trivial for $\tilde{q} = q'$. Otherwise, $d_M(q', \tilde{q}) \geq \ell$ and
claim 1 of \lemmaref{lem:p3-div-semantics} give $\PEval_\gamma(q', j) =
\PEval_\gamma(\tilde{q}, j)$ for every $\gamma \in \mathcal{C}_M^{(n)}$ and $j
< \ell$; both states being smaller than $u$ (claims 1 and 2), the positions $M
j + q'$ and $M j + \tilde{q}$ are observed, and pinning turns the agreement
into $y[M j + q'] = y[M j + \tilde{q}]$ for every $j < \ell$: $\tilde{q}
\eqv^M_\ell q'$. Conversely, let $\tilde{q} \in Q_M^{(n)}$ satisfy $\tilde{q}
\eqv^M_\ell u$; we may assume $\tilde{q} \neq q'$, the case $\tilde{q} = q'$
being trivial. By claim 1 and transitivity, $\tilde{q} \eqv^M_\ell q'$.
Distinguishability provides the value $\ell^* \coloneqq d_M(q', \tilde{q})$
together with $M \ell^* + \max(q', \tilde{q}) \leq n - 1$ and $y[M \ell^* + q']
\neq y[M \ell^* + \tilde{q}]$; were $\ell^* < \ell$, this disagreement would
contradict $\tilde{q} \eqv^M_\ell q'$ at the row $\ell^*$. Hence $d_M(q',
\tilde{q}) \geq \ell$, i.e.\ $\tilde{q} \in D$.

\emph{Claim 4.} Let $\tilde{q}_1 \neq \tilde{q}_2 \in D$. Distinguishability
provides $\ell^* \coloneqq d_M(\tilde{q}_1, \tilde{q}_2)$ with $M \ell^* \leq n
- 1 < M (\ell + 1)$ --- so $\ell^* \leq \ell$ --- and $y[M \ell^* +
\tilde{q}_1] \neq y[M \ell^* + \tilde{q}_2]$: a disagreement at a row below
$\ell + 1$ between residues of $\Lambda^M_{\ell+1}$ (claim 2). Hence
$\tilde{q}_1 \neqv^M_{\ell+1} \tilde{q}_2$.

\emph{Claim 5.} The inequality $\delta_m^r(\hat{q}, \hat{i}) \leq \hat{u}$ is
the first part of the transfer identity. For $j \leq \ell$, the position $M j +
\hat{u} \leq M \ell + \hat{u} < M \ell + u = n$, so the identity applies at
every row $j \leq \ell$: $\hat{u} \eqv^M_{\ell+1} \delta_m^r(\hat{q},
\hat{i})$.

\emph{Claim 6.} By heredity (claim 2 of \lemmaref{lem:p3-inv}) applied to the
state $q' \in Q_M^{(n)}$, the residue $q' \bmod m$ belongs to $Q_m^{(n)}$ and
$\delta_m^r(q' \bmod m, \floor{q' / m}) = q'$. Since $q' < u = n \bmod M$
(claim 1), the pair $(q' \bmod m, \floor{q' / m})$ is counted by line 7 of
\algorithmref{alg:p3-predict}: $\nu(q') \geq 1$. And $d_M(q', q') = \infty$
puts $q' \in D$, so the maximization of line 8 gives $\nu(q^*) \geq \nu(q')
\geq 1$.

\emph{Claim 7.} The winner $q^*$ of line 8 belongs to $D$, so $q^* < u$ by
claim 2: the position $M \ell + q^* < M \ell + u = n$ is observed, and data
consistency evaluates the run of Eval performed by line 9 at $\phi_n(m, q, r) =
y[M \ell + q^*]$. The forecast is therefore wrong exactly when $y[M \ell + q^*]
\neq y[n] = y[M \ell + u]$. By claim 3, $q^* \eqv^M_\ell u$, and both residues
lie in $\Lambda^M_{\ell+1}$; the relation $q^* \eqv^M_{\ell+1} u$ is thus
equivalent to the single further equality $y[M \ell + q^*] = y[M \ell + u]$,
whose negation is the wrong forecast.
\end{proof}

Fix now an edge $e = (m, r)$ with a nonempty lifetime $\{m + 1, \dots, n_e\}$
(claim 3 of \lemmaref{lem:p3-fragment-dynamics}), put $M \coloneqq m r$, and
let $\hat{Q}_m \coloneqq Q_m^{(n_e)}$, which by claim 4 of
\lemmaref{lem:p3-fragment-dynamics} contains $Q_m^{(n)}$ for every round $n$ of
the lifetime. We call $n$ a \emph{forecasting round} of $e$ if $(m, n \bmod m,
r) \in \mathcal{L}_n$ --- such rounds lie in the lifetime --- and we say that
$e$ \emph{forecasts wrongly} at such a round when $\phi_n(m, n \bmod m, r) \neq
y[n]$.

\begin{definition}[Edge game]
\label{def:p3-edge-game}
The \emph{edge game} $G_e$ is the following instance of the branch prediction
game (\definitionref{def:branch-prediction}), whose tree is at all times a
subtree of $\mathcal{T}_M$ containing its root.
\begin{itemize}
  \item The elements are the residues $E_e \coloneqq \{\hat{i} m + \hat{q} \mid
    \hat{q} \in \hat{Q}_m, \hat{i} < r\} \subseteq [M]$.
  \item The element $\hat{u} \in E_e$ is introduced just before the round
    $t(\hat{u}) \coloneqq \min\{n \mid m + 1 \leq n \leq n_e, \hat{u} \bmod m
    \in Q_m^{(n)}\}$, at the vertex $([\hat{u}]^M_{\ell_0}, \ell_0)$ with
    $\ell_0 \coloneqq \abs{\{j \in \mathbb{N} \mid M j + \hat{u} <
    t(\hat{u})\}}$, nature attaching the vertices of the path from the root to
    it that are still missing.
  \item At every round $n$ of the lifetime with $\hat{u} \coloneqq n \bmod M
    \in E_e$ and $t(\hat{u}) \leq n$, the element $\hat{u}$ is activated, after
    the introductions of the round; nature moves it to $([\hat{u}]^M_{\ell +
    1}, \ell + 1)$, where $\ell \coloneqq \floor{n / M}$.
  \item The learner plays PB (\definitionref{def:pb}), breaking ties so as to
    agree with line 8 of \algorithmref{alg:p3-predict} under the correspondence
    of \lemmaref{lem:p3-edge-coupling}.
\end{itemize}
\end{definition}

For a vertex $v$ of $\mathcal{T}_M$ and a round $n$ of the lifetime, let $N_n(v)$
denote the number of elements $\hat{u}$ with $t(\hat{u}) \leq n$ located, just
before the activation of round $n$, at $v$ or below it in $\mathcal{T}_M$.

\begin{lemma}[Coupling]
\label{lem:p3-edge-coupling}
Let $n$ be a forecasting round of $e$, and let $\ell$, $u$, $D$, $\nu$ and
$q^*$ be as in \lemmaref{lem:p3-leaf-semantics} for the leaf edge $(m, n \bmod
m, r)$. Put $v_n \coloneqq ([u]^M_\ell, \ell)$ and $c^{\tilde{q}} \coloneqq
([\tilde{q}]^M_{\ell+1}, \ell + 1)$ for $\tilde{q} \in D \cup \{u\}$. Just
before the activation of round $n$ in $G_e$:
\begin{enumerate}
  \item \textbf{(Element vertex.)} Every element $\hat{u} \in E_e$ with
    $t(\hat{u}) \leq n$ is located at $([\hat{u}]^M_{D_n(\hat{u})},
    D_n(\hat{u}))$, where $D_n(\hat{u}) \coloneqq \abs{\{j \in \mathbb{N} \mid
    M j + \hat{u} < n\}}$ equals $\ell + 1$ if $\hat{u} < u$ and $\ell$
    otherwise. In particular, $u$ is an element with $t(u) \leq n$, located at
    $v_n$.
  \item \textbf{(Population.)} $\nu(\tilde{q}) = N_n(c^{\tilde{q}})$ for every
    $\tilde{q} \in D$.
  \item \textbf{(Children.)} The map $\tilde{q} \mapsto c^{\tilde{q}}$ is a
    bijection from $\{\tilde{q} \in D \mid \nu(\tilde{q}) \geq 1\}$ onto the
    set of children of $v_n$ in the game tree. In particular, $v_n$ is an
    internal vertex.
  \item \textbf{(Mistakes.)} The activation of round $n$ is an activation of
    $u$ at $v_n$, at which PB predicts $c^{q^*}$ and nature moves $u$ to $c^u$;
    PB makes a mistake at this activation if and only if $e$ forecasts wrongly
    at round $n$.
\end{enumerate}
\end{lemma}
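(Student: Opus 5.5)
I would prove the four claims in order, each building on the previous, and use \lemmaref{lem:p3-leaf-semantics} throughout to translate the algorithmic quantities $D$, $\nu$, $q^*$ into statements about the equivalences $\eqv^M_\ell$.

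\textbf{Claim 1 (element vertex).} I would establish by induction over the rounds of the lifetime that each introduced element $\hat{u}$ sits at $([\hat{u}]^M_{D_n(\hat{u})}, D_n(\hat{u}))$.
\begin{itemize}
\item At introduction, $D_{t(\hat{u})}(\hat{u}) = \ell_0$ by definition.
\item Each activation of $\hat{u}$ at a round $n' = M\ell' + \hat{u}$ raises the depth by one, and $D$ increments exactly at such rounds.
\end{itemize}
The formula $D_n(\hat{u}) = \ell+1$ or $\ell$ according as $\hat{u}<u$ is direct arithmetic from $n = M\ell+u$.

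For $u \in E_e$ with $t(u)\le n$, write $u = q + im$ with $q = n \bmod m \in Q_m^{(n)} \subseteq \hat{Q}_m$ (the pair is active). Then monotone growth (claim 4 of \lemmaref{lem:p3-fragment-dynamics}) together with the interval lifetime gives $t(u)\le n$.

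\textbf{Claim 2 (population).} Count the elements below $c^{\tilde q}$ at depth $\ell+1$. By claim 1, these are exactly the elements $\hat u<u$ with $t(\hat u)\le n$ and $\hat u \eqv^M_{\ell+1} \tilde q$. Elements with $\hat u\ge u$ sit at depth $\ell$ and so cannot lie strictly below.

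Next I would show that these elements are in bijection with the pairs $(\hat q,\hat i)$ counted by $\nu(\tilde q)$, via $\hat u = \hat i m + \hat q$.
\begin{itemize}
\item The condition $t(\hat u)\le n$ is equivalent to $\hat q \in Q_m^{(n)}$, again by monotone growth.
\item Claim 5 of \lemmaref{lem:p3-leaf-semantics} gives $\hat u \eqv^M_{\ell+1}\delta_m^r(\hat q,\hat i)$.
\item Claim 4 of that lemma says distinct members of $D$ are $\neqv^M_{\ell+1}$. Together with claim 3 (every $\delta$-target that is $\eqv_\ell u$ lies in $D$), this yields $\delta_m^r(\hat q,\hat i)=\tilde q \iff \hat u\eqv^M_{\ell+1}\tilde q$.
\end{itemize}
The delicate case is a target that is not $\eqv^M_\ell u$. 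Such a target lands outside $v_n$'s subtree and is excluded on both sides.

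\textbf{Claim 3 (children).} The children of $v_n$ in the game tree are vertices at depth $\ell+1$ that contain an element; only elements $<u$ reach that depth. Each such child is $c^{\delta(\hat q,\hat i)}$ for some counted pair, whose target lies in $D$ by claim 3 of \lemmaref{lem:p3-leaf-semantics}.
\begin{itemize}
\item Injectivity follows from claim 4 of that lemma.
\item Nonemptiness, and hence that $v_n$ is internal, comes from $\nu(q')\ge1$ (claim 6).
\end{itemize}
One subtlety is that nature attaches vertices only along paths to introduced or activated elements. I therefore need that every vertex of depth $\ell+1$ under $v_n$ carries an element, which I would verify from how the tree grows.

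\textbf{Claim 4 (mistakes).} By claim 1, $u$ is at $v_n$, and nature moves it to $c^u$. PB picks a child maximizing $N_n$, which by claims 2 and 3 maximizes $\nu$ over $D$; with matching tie-breaking it predicts $c^{q^*}$. A mistake means $c^{q^*}\ne c^u$, i.e.\ $q^*\neqv^M_{\ell+1}u$, which by claim 7 of \lemmaref{lem:p3-leaf-semantics} is a wrong forecast.

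I expect the main obstacle to be claim 2's bijection. Its crux is ruling out counted pairs whose targets are outside $D$, and confirming that each introduced element's position matches its routing.
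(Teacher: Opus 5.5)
Your proposal is correct and follows essentially the same route as the paper: the same induction on element locations, the same set of elements $\hat u<u$ with $\hat u \bmod m\in Q_m^{(n)}$ and $\hat u\eqv^M_{\ell+1}\tilde q$ counted both as $\nu(\tilde q)$ and as $N_n(c^{\tilde q})$ via claims 3--5 of \lemmaref{lem:p3-leaf-semantics}, and the same surjectivity argument for the children of $v_n$. You leave two points as sketches, and the paper verifies both inside its claim-1 induction and its surjectivity step: that the game tree stays an ancestor-closed subtree of $\mathcal{T}_M$, so game parents and subtree populations agree with those in $\mathcal{T}_M$, and that every attached vertex keeps an element at or below it.
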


\begin{proof}
The moments of the play of $G_e$ are ordered round by round, the introductions
of a round preceding its activation. We use the following consequence of
\lemmaref{lem:p3-fragment-dynamics}: the prediction rounds of the lifetime are
consecutive (claim 3), and $m$ belongs to the fragment at each of them, so the
state sets $Q_m^{(n')}$ grow monotonically along the lifetime (claim 4); hence,
for a round $n'$ of the lifetime and $\hat{u} \in E_e$,
\[
  t(\hat{u}) \leq n' \quad \text{if and only if} \quad \hat{u} \bmod m \in
  Q_m^{(n')}.
\]

\emph{Claim 1.} The case form of the depth is immediate: $M j + \hat{u} < n = M
\ell + u$ holds for $j \leq \ell$ when $\hat{u} < u$ and for $j < \ell$ when
$\hat{u} \geq u$. Extending the notation to every round $n'$ of the lifetime,
$D_{n'}(\hat{u}) \coloneqq \abs{\{j \in \mathbb{N} \mid M j + \hat{u} < n'\}}$,
we prove by induction along the play: at every moment, the game tree is a
subtree of $\mathcal{T}_M$ containing, with each of its vertices, the entire
path of $\mathcal{T}_M$ from the root to it --- so the parent of a game vertex
is its parent in $\mathcal{T}_M$, and the elements below a game vertex in
$\mathcal{T}_M$ are exactly the elements below it in the game tree --- and,
just before the activation of every round $n'$, each element $\hat{u}$ with
$t(\hat{u}) \leq n'$ is located at $([\hat{u}]^M_{D_{n'}(\hat{u})},
D_{n'}(\hat{u}))$. This vertex exists: the rows of $\hat{u}$ counted by
$D_{n'}(\hat{u})$ sit at positions smaller than $n' \leq T < \abs{y}$, so
$\hat{u} \in \Lambda^M_{D_{n'}(\hat{u})}$.

Initially the game tree is the root of $\mathcal{T}_M$, and there are no
elements. Consider an introduction, of an element $\hat{u}$ just before its
round $t(\hat{u})$. By the containment hypothesis, the vertices of the path of
$\mathcal{T}_M$ from the root to $([\hat{u}]^M_{\ell_0}, \ell_0)$ that are
already present form an initial segment of that path, so the missing ones form
a chain hanging below the deepest present vertex --- exactly what the
introduction rule of \definitionref{def:branch-prediction} attaches --- and the
element is placed at its end, at the depth $\ell_0 = D_{t(\hat{u})}(\hat{u})$,
as the location assertion prescribes. Consider next an activation, of the
element $\hat{u} = n' \bmod M$ at a round $n'$ with $t(\hat{u}) \leq n'$. The
rows of $\hat{u}$ before $n'$ are $0, \dots, \floor{n' / M} - 1$, so by the
location hypothesis the element sits at $([\hat{u}]^M_{\ell'}, \ell')$ with
$\ell' \coloneqq \floor{n' / M}$, and the vertex $([\hat{u}]^M_{\ell' + 1},
\ell' + 1)$ to which nature moves it --- it exists, its defining row sitting at
the position $M \ell' + \hat{u} = n' < \abs{y}$ --- is its child in
$\mathcal{T}_M$: whether that child is present or fresh, the move is one the
game permits, and the containment persists. Finally, the location assertion
propagates across the play: an element $\hat{u}$ moves only at its own
activations, which occur exactly at the rounds $n'$ with $n' \bmod M = \hat{u}$
and $t(\hat{u}) \leq n'$ and increase its depth by one, while $D_{n' +
1}(\hat{u}) - D_{n'}(\hat{u}) = \abs{\{j \in \mathbb{N} \mid M j + \hat{u} =
n'\}}$ equals $1$ at exactly these rounds and $0$ otherwise.

In particular, at the forecasting round $n$ the pair $(m, n \bmod m)$ is
active, so $u \bmod m = n \bmod m \in Q_m^{(n)}$: hence $u \in E_e$ with $t(u)
\leq n$, located at $([u]^M_{D_n(u)}, D_n(u)) = ([u]^M_\ell, \ell) = v_n$.

\emph{Claim 2.} Fix $\tilde{q} \in D$. The vertex $c^{\tilde{q}}$ exists:
$\tilde{q} < u$, by claim 2 of \lemmaref{lem:p3-leaf-semantics}, places its row
$\ell$ at the position $M \ell + \tilde{q} < n < \abs{y}$, so $\tilde{q} \in
\Lambda^M_{\ell+1}$. Consider
\[
  \mathcal{S} \coloneqq \{\hat{u} \in E_e \mid \hat{u} \bmod m \in Q_m^{(n)},\
  \hat{u} < u,\ \hat{u} \eqv^M_{\ell+1} \tilde{q}\};
\]
we show that both sides equal $\abs{\mathcal{S}}$.

For $\nu(\tilde{q})$: the map $(\hat{q}, \hat{i}) \mapsto \hat{i} m + \hat{q}$
is a bijection from $Q_m^{(n)} \times [r]$ onto $\{\hat{u} \in E_e \mid \hat{u}
\bmod m \in Q_m^{(n)}\}$, under which the condition $\hat{i} m + \hat{q} < n
\bmod M$ of line 7 of \algorithmref{alg:p3-predict} is $\hat{u} < u$; it
therefore suffices to show, for a pair $(\hat{q}, \hat{i})$ with $\hat{u}
\coloneqq \hat{i} m + \hat{q} < u$, that $\delta_m^r(\hat{q}, \hat{i}) =
\tilde{q}$ if and only if $\hat{u} \eqv^M_{\ell+1} \tilde{q}$. The forward
direction is claim 5 of \lemmaref{lem:p3-leaf-semantics}. Conversely, suppose
$\hat{u} \eqv^M_{\ell+1} \tilde{q}$, and put $q_0 \coloneqq \delta_m^r(\hat{q},
\hat{i}) \in Q_M^{(n)}$. Claim 5 of \lemmaref{lem:p3-leaf-semantics} gives
$\hat{u} \eqv^M_{\ell+1} q_0$, whence $q_0 \eqv^M_{\ell+1} \tilde{q}$; in
particular $q_0 \eqv^M_\ell \tilde{q} \eqv^M_\ell u$ --- the latter by claim 3
of \lemmaref{lem:p3-leaf-semantics} at $\tilde{q} \in D$ --- and the same claim
puts $q_0 \in D$. Distinct members of $D$ are inequivalent at the level $\ell +
1$ (claim 4 of \lemmaref{lem:p3-leaf-semantics}), so $q_0 = \tilde{q}$. Hence
$\nu(\tilde{q}) = \abs{\mathcal{S}}$.

For $N_n(c^{\tilde{q}})$: an element $\hat{u} \in E_e$ enters this count
exactly when $t(\hat{u}) \leq n$ --- equivalently, $\hat{u} \bmod m \in
Q_m^{(n)}$ --- and its vertex lies at $c^{\tilde{q}}$ or below it in
$\mathcal{T}_M$. By claim 1, the vertex is $([\hat{u}]^M_{D_n(\hat{u})},
D_n(\hat{u}))$ with $D_n(\hat{u}) \in \{\ell, \ell + 1\}$; since
$c^{\tilde{q}}$ has depth $\ell + 1$, lying at or below it means being it:
$D_n(\hat{u}) = \ell + 1$ --- that is, $\hat{u} < u$ --- and $\hat{u}
\eqv^M_{\ell+1} \tilde{q}$. These are the conditions defining $\mathcal{S}$, so
$N_n(c^{\tilde{q}}) = \abs{\mathcal{S}} = \nu(\tilde{q})$.

\emph{Claim 3.} For $\tilde{q} \in D$ with $\nu(\tilde{q}) \geq 1$, claim 2
places an element at $c^{\tilde{q}}$ just before the activation of round $n$,
so $c^{\tilde{q}}$ belongs to the game tree; its parent in $\mathcal{T}_M$ is
$([\tilde{q}]^M_\ell, \ell) = v_n$ --- claim 3 of
\lemmaref{lem:p3-leaf-semantics} gives $\tilde{q} \eqv^M_\ell u$ --- and the
parent of a game vertex is its parent in $\mathcal{T}_M$ (claim 1), so
$c^{\tilde{q}}$ is a child of $v_n$ in the game tree. The map is injective,
distinct members of $D$ being inequivalent at the level $\ell + 1$ (claim 4 of
\lemmaref{lem:p3-leaf-semantics}).

For surjectivity, let $c$ be a child of $v_n$ in the game tree. The play
attached $c$ at some earlier moment --- the game tree began as the bare root
--- and every attachment deposits an element at the attached vertex or below
it: an introduction places the introduced element at the end of the attached
chain, and an activation moves the activated element to the attached child.
Since an element only ever moves to a child of its current vertex, it never
leaves the subtree of a vertex; hence some element $\hat{u}$ --- introduced at
a round $t(\hat{u}) \leq n$ --- lies at $c$ or below it just before the
activation of round $n$. By claim 1, its vertex has depth $D_n(\hat{u}) \in
\{\ell, \ell + 1\}$, and lying at or below the depth-$(\ell + 1)$ vertex $c$
forces $D_n(\hat{u}) = \ell + 1$ --- that is, $\hat{u} < u$ --- with $c =
([\hat{u}]^M_{\ell+1}, \ell + 1)$; and $t(\hat{u}) \leq n$ gives $\hat{q}
\coloneqq \hat{u} \bmod m \in Q_m^{(n)}$. Put $\hat{i} \coloneqq \floor{\hat{u}
/ m}$ and $\tilde{q} \coloneqq \delta_m^r(\hat{q}, \hat{i})$. By claim 5 of
\lemmaref{lem:p3-leaf-semantics}, $\hat{u} \eqv^M_{\ell+1} \tilde{q}$; moreover
$\tilde{q} \in D$, by claim 3 of \lemmaref{lem:p3-leaf-semantics}: $\tilde{q}
\eqv^M_\ell \hat{u} \eqv^M_\ell u$, the latter because $c$ is a child of $v_n$
in $\mathcal{T}_M$, i.e.\ $[\hat{u}]^M_{\ell+1} \subseteq [u]^M_\ell$. Hence
$c^{\tilde{q}} = ([\hat{u}]^M_{\ell+1}, \ell + 1) = c$, and the pair $(\hat{q},
\hat{i})$ is counted by line 7 of \algorithmref{alg:p3-predict}:
$\nu(\tilde{q}) \geq 1$. Thus $c$ has a preimage.

In particular, the domain of the bijection is nonempty --- $q' \in D$ has
$\nu(q') \geq 1$ by claim 6 of \lemmaref{lem:p3-leaf-semantics} --- so $v_n$
has a child in the game tree: it is internal.

\emph{Claim 4.} By claim 1, $u$ is an element with $t(u) \leq n$, so round $n$
activates it, at its vertex $v_n$; and $v_n$ is internal (claim 3), so this is
an activation at which the learner predicts, and nature moves $u$ to
$([u]^M_{\ell+1}, \ell + 1) = c^u$. PB predicts the child of $v_n$ whose
subtree of the game tree holds the most elements --- by claim 1, the child $c$
maximizing $N_n(c)$. By claims 2 and 3, the populations of the children are
precisely the values $\nu(\tilde{q})$ over $\tilde{q} \in D$ with
$\nu(\tilde{q}) \geq 1$; the maximum $\nu(q^*)$ of $\nu$ over all of $D$ is at
least $1$ (claim 6 of \lemmaref{lem:p3-leaf-semantics}), so it is not affected
by discarding the candidates with $\nu = 0$: the maximal child population is
$N_n(c^{q^*}) = \nu(q^*)$, and the tie-breaking stipulated in
\definitionref{def:p3-edge-game} makes PB predict $c^{q^*}$. Finally, PB errs
at this activation exactly when $c^{q^*} \neq c^u$, i.e.\ --- both $q^*$ and
$u$ lying in $\Lambda^M_{\ell+1}$ --- when $q^* \neqv^M_{\ell+1} u$, which by
claim 7 of \lemmaref{lem:p3-leaf-semantics} is exactly a wrong forecast of $e$
at round $n$.
\end{proof}

Claim 4 matches every wrong forecast of $e$ with a mistake of PB in $G_e$. Note
that, unlike in the mistake accounting of TP (Lemma A.20 of
\citep{Kosoy2026b}), no activation at a leaf of the game tree needs separate
treatment: at a forecasting round, the child $c^{q'}$ of the routed target is
populated (claims 2 and 3, with claim 6 of
\lemmaref{lem:p3-leaf-semantics}), so the activated vertex is internal.

For $c \geq 1$, put
\begin{equation}
  \label{eq:p3-edge-budget}
  B(c) \coloneqq 3 r_{\max} c \log(r_{\max} c).
\end{equation}

\begin{lemma}[Edge budget]
\label{lem:p3-edge-budget}
For every edge $e = (m, r)$, the number $W_e$ of prediction rounds of the epoch
at which $e$ forecasts wrongly satisfies $W_e \leq B(c)$; more precisely, if
the lifetime of $e$ is nonempty then
\[
  W_e \leq 3 \abs{E_e} \floor{\log \abs{E_e}} \leq B(c).
\]
\end{lemma}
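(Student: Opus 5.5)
The plan is to derive the bound directly from the coupling of \lemmaref{lem:p3-edge-coupling} and the mistake bound of \propositionref{prop:branch-prediction}. If the lifetime of $e$ is empty, $e$ belongs to no fragment $\mathcal{E}_n$ at any prediction round, so it is never a leaf edge, never forecasts, and $W_e = 0 \leq B(c)$. Assume now that the lifetime $\{m+1, \dots, n_e\}$ is nonempty and consider the play of the edge game $G_e$ over it. A forecasting round lies in the lifetime. By claim 4 of \lemmaref{lem:p3-edge-coupling}, every such round at which $e$ forecasts wrongly is an activation at an internal vertex where PB errs, and distinct rounds give distinct activations. Hence $W_e$ is at most the number of PB mistakes in $G_e$.

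Before applying \propositionref{prop:branch-prediction}, we check that $G_e$ is a legitimate play of the branch prediction game and that its final element set is $E_e$. Legality was already established inside the proof of claim 1 of \lemmaref{lem:p3-edge-coupling}: introductions attach chains hanging below present vertices, and each activation moves an element to a child in $\mathcal{T}_M$. The PB tie-breaking stipulated in \definitionref{def:p3-edge-game} is admissible, since it selects a maximizer of $N(c)$. Every element of $E_e$ is introduced by the end of the lifetime: for $\hat{u} \in E_e$, the definition of $\hat{Q}_m = Q_m^{(n_e)}$ gives $\hat{u} \bmod m \in Q_m^{(n_e)}$, so $t(\hat{u})$ is well defined. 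The final number of elements is therefore exactly $\abs{E_e}$, and \propositionref{prop:branch-prediction} yields $W_e \leq 3 \abs{E_e} \floor{\log \abs{E_e}}$.

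It remains to bound $\abs{E_e}$. The map $(\hat{q}, \hat{i}) \mapsto \hat{i} m + \hat{q}$ sends $\hat{Q}_m \times [r]$ onto $E_e$, so $\abs{E_e} \leq r \abs{\hat{Q}_m}$. By claim 4 of \lemmaref{lem:p3-fragment-dynamics}, $m \in S_{n_e}$ gives $\abs{\hat{Q}_m} \leq c$, and $r \leq r_{\max}$, so $\abs{E_e} \leq r_{\max} c$. Since $x \mapsto x \floor{\log x}$ is nondecreasing on positive integers, we get $3 \abs{E_e} \floor{\log \abs{E_e}} \leq 3 r_{\max} c \log(r_{\max} c) = B(c)$.

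The only delicate point is ensuring that the mistake count of \propositionref{prop:branch-prediction} applies to this particular play. The play is finite, its elements are introduced only at rounds within the lifetime, and the proposition counts only activations at internal vertices, which is exactly what claim 4 matches. I expect this to be bookkeeping rather than a real obstacle.
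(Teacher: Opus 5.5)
Your proposal is correct and follows the paper's proof essentially step for step. In both, an empty lifetime gives $W_e = 0$; otherwise claim 4 of \lemmaref{lem:p3-edge-coupling} injects the wrong forecasts into PB mistakes of $G_e$, every element of $E_e$ is introduced because $n_e$ witnesses $t(\hat{u})$, and \propositionref{prop:branch-prediction} together with $\abs{E_e} \leq r \abs{\hat{Q}_m} \leq r_{\max} c$ yields the bound. The differences are only cosmetic: you use a surjection where the paper notes a bijection, and the monotonicity of $x \floor{\log x}$ where the paper bounds the two factors separately.
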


\begin{proof}
An edge whose lifetime is empty has no forecasting round, so $W_e = 0 \leq
B(c)$. Suppose then that the lifetime is nonempty, and consider the play of
$G_e$. Every $\hat{u} \in E_e$ is introduced during it --- the minimum defining
$t(\hat{u})$ is over a set containing $n_e$, since $\hat{u} \bmod m \in
\hat{Q}_m = Q_m^{(n_e)}$ --- so the final number of elements is $\abs{E_e}$. By
claim 4 of \lemmaref{lem:p3-edge-coupling}, every prediction round at which $e$
forecasts wrongly contributes a mistake of PB in $G_e$, distinct rounds
contributing distinct activations; hence $W_e \leq 3 \abs{E_e} \floor{\log
\abs{E_e}}$, by \propositionref{prop:branch-prediction}. Finally, the map
$(\hat{q}, \hat{i}) \mapsto \hat{i} m + \hat{q}$ puts $\hat{Q}_m \times [r]$ in
bijection with $E_e$, so $\abs{E_e} = r \abs{\hat{Q}_m} \leq r c \leq r_{\max}
c$, by the capacity bound of claim 4 of
\lemmaref{lem:p3-fragment-dynamics} at the round $n_e$; since also $\floor{\log
\abs{E_e}} \leq \log(r_{\max} c)$, we conclude $3 \abs{E_e} \floor{\log
\abs{E_e}} \leq 3 r_{\max} c \log(r_{\max} c) = B(c)$.
\end{proof}

\subsubsection{Path Budgets and the Mistake Bound}
\label{app:p3-stat-mistakes}

We say that $\xi \in R^\omega$ is \emph{charged} at the prediction round $n$ if
$\xi \in \mathcal{U}_n$ and $\vote_n(\xi) \neq y[n]$ --- that is, if $\xi$ is
unblocked and either votes wrongly or abstains. This is the notion of
\sectionref{app:budgeted-plurality} for the play with the experts $R^\omega$
under the prior $\mu$, the live sets $\mathcal{U}_n$, the votes $\vote_n$, the
outcomes $y[n]$, the predictions $\hat{y}_n$ and the tolerances $\epsilon_n
\coloneqq (r_{\max} n)^{-\eta} / 2$, $n = 1, \dots, T$. Put
\begin{equation}
  \label{eq:p3-path-budget}
  K \coloneqq K(c, T) \coloneqq (\floor{\log T} + 1)(B(c) + 1).
\end{equation}

\begin{lemma}[Path budget]
\label{lem:p3-path-budget}
Every $\xi \in R^\omega$ is charged at most $K$ times during the epoch.
\end{lemma}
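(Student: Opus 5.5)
The plan is to split the rounds at which $\xi$ is charged into abstentions and wrong votes, and to bound each group separately. The abstentions give the additive term $\floor{\log T} + 1$ in $K$. The wrong votes are distributed among at most $\floor{\log T} + 1$ edges of $\xi$, each of which errs at most $B(c)$ times by \lemmaref{lem:p3-edge-budget}. Together these give $(\floor{\log T}+1)(B(c)+1)$.

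\emph{Abstentions.} Suppose $\xi$ is charged at round $n$ by abstaining. Then \lemmaref{lem:p3-abstention} gives $M_n(\xi) = n$, so $n$ is one of the scales $m_j(\xi)$ of $\xi$. These scales at least double at each step, so at most $\floor{\log T} + 1$ of them lie in $[1, T]$. Hence $\xi$ abstains at most $\floor{\log T} + 1$ times during the epoch.

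\emph{Wrong votes.} Suppose $\xi \in \mathcal{U}_n$ votes wrongly at round $n$, and put $I \coloneqq I_n(\xi)$. Then $\vote_n(\xi) = \phi_n(m_I(\xi), n \bmod m_I(\xi), \xi_I)$ is the forecast of the leaf edge $e = (m_I(\xi), \xi_I)$ of $\xi$, and $n$ is a forecasting round of $e$ at which $e$ forecasts wrongly. I would charge this round to $e$. Distinct rounds charged to the same edge are distinct wrong-forecast rounds of that edge, so \lemmaref{lem:p3-edge-budget} bounds the number charged to each edge by $W_e \leq B(c)$.

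It remains to count the edges of $\xi$ that can receive a charge. An edge $(m_j(\xi), \xi_j)$ appears in $\mathcal{E}_n$ only when $m_j(\xi) \leq n - 1 \leq T - 1$, by claim 1 of \lemmaref{lem:p3-fragment-dynamics}. Since $m_j(\xi) \geq 2^j$, this forces $j \leq \floor{\log(T-1)}$ when $T \geq 2$. So at most $\floor{\log T} + 1$ edges of $\xi$ ever forecast for $\xi$, and each is charged at most $B(c)$ times.

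Summing the two contributions gives at most $(\floor{\log T}+1) + (\floor{\log T}+1)B(c) = K$ charges. The step that needs care is checking that each wrong vote is a genuine wrong forecast of a single edge belonging to $\xi$, and that the edge budget applies across the edge's whole lifetime, independently of which paths vote through it. This comes directly from the definition of $\vote_n$ and from the single-lifetime claim 3 of \lemmaref{lem:p3-fragment-dynamics}, which underlies $W_e$. No double counting occurs, because each charged round is assigned to exactly one category and to exactly one edge.
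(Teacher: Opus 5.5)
Your proof is correct and follows essentially the paper's argument: abstentions fall on scales of $\xi$ by the abstention lemma, and wrong votes are wrong forecasts of a leaf edge of $\xi$ lying in $\mathcal{E}_n$; the doubling of the scales confines the relevant indices to $0, \dots, \floor{\log T}$, and the edge budget caps each edge at $B(c)$ wrong forecasts. The only difference is bookkeeping: the paper assigns each charge an index $j$ and bounds each index by $B(c)+1$ charges, while you count the abstentions and the wrong votes separately, which gives the same $K$.
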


\begin{proof}
Let $n$ be a prediction round at which $\xi$ is charged: $\xi \in
\mathcal{U}_n$ and $\vote_n(\xi) \neq y[n]$. We assign to $n$ an index $j \in
\mathbb{N}$ with $m_j(\xi) \leq T$. If $\xi$ abstains at $n$, we take $j
\coloneqq J_n(\xi)$, so that $m_j(\xi) = M_n(\xi) = n \leq T$ by
\lemmaref{lem:p3-abstention}. Otherwise $j \coloneqq I_n(\xi)$ exists: the
triple $(m_j(\xi), n \bmod m_j(\xi), \xi_j)$ belongs to $\mathcal{L}_n$, so $n$
is a forecasting round of the edge $(m_j(\xi), \xi_j)$, and $\vote_n(\xi) =
\phi_n(m_j(\xi), n \bmod m_j(\xi), \xi_j) \neq y[n]$: the edge forecasts
wrongly at $n$. A leaf edge lies over an edge of the fragment (line 3 of
\algorithmref{alg:p3-predict}), so $(m_j(\xi), \xi_j) \in \mathcal{E}_n$, and
$m_j(\xi) \leq n - 1 < T$ by claim 1 of
\lemmaref{lem:p3-fragment-dynamics}.

The arities being at least $2$, $m_j(\xi) \geq 2^j$, so every assigned index
satisfies $2^j \leq T$: the assigned indices lie among $0, \dots, \floor{\log
T}$, at most $\floor{\log T} + 1$ values. Fix an index $j$. At most one
prediction round equals $m_j(\xi)$, so at most one abstention is assigned to
$j$; and the wrong votes assigned to $j$ occur at distinct prediction rounds at
which the single edge $(m_j(\xi), \xi_j)$ forecasts wrongly, of which there are
at most $B(c)$ by \lemmaref{lem:p3-edge-budget}. Every index thus accounts for
at most $B(c) + 1$ charges, and $\xi$ is charged at most $(\floor{\log T} +
1)(B(c) + 1) = K$ times.
\end{proof}

\begin{lemma}[Epoch mistake bound]
\label{lem:p3-epoch-mistakes}
The number of mistakes at the prediction rounds $1, \dots, T$ of the epoch is
at most
\[
  8 K(c, T)(\eta \ln(r_{\max} T) + \ln 2).
\]
\end{lemma}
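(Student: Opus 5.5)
We apply \lemmaref{lem:budgeted-plurality} to the play described just before \lemmaref{lem:p3-path-budget}: experts $R^\omega$ under the prior $\mu$, live sets $H_n \coloneqq \mathcal{U}_n$ for the prediction rounds $n = 1, \dots, T$ (re-indexed from $1$ as in the lemma), votes $\vote_n$ (with $\bot$ for abstention), outcomes $y[n]$, predictions $\hat{y}_n$ and tolerances $\epsilon_n \coloneqq (r_{\max} n)^{-\eta} / 2$. The plan is to verify the lemma's setting and four hypotheses in turn, with $K \coloneqq K(c, T)$ of \equationref{eq:p3-path-budget}, and then simplify the resulting bound $8 K \ln(2 / w(H_T))$.

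First the setting. The live sets are finite unions of cylinder sets, hence measurable, and each vote is constant on the cylinders $Z(\gamma)$, $\gamma \in \mathcal{C}_1^{(n)}$, hence measurable. By \lemmaref{lem:p3-vote-decomposition}, the voting mass $\mu\{\xi \in \mathcal{U}_n \mid \vote_n(\xi) = a\}$ equals $\pi_n(1, 0)[a]$. \lemmaref{lem:p3-precision} then states exactly the tolerance property: $\hat{y}_n$ attains this mass up to $\epsilon_n$.

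Next the hypotheses. Hypothesis 1, $\mathcal{U}_{n+1} \subseteq \mathcal{U}_n$ for $n < T$, is claim 1 of \lemmaref{lem:p3-unblocked-monotone}. For hypothesis 2, claim 2 of the same lemma gives $w(H_n) = \mu(\mathcal{U}_n) \geq (r_{\max} n)^{-\eta} = 2 \epsilon_n$. Hypothesis 3 is \lemmaref{lem:p3-path-budget}; we also note $K \geq 1$. Hypothesis 4 again follows from claim 2: $\mu(\mathcal{U}_T) \geq (r_{\max} T)^{-\eta} > 0$.

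Finally, the lemma bounds the number of mistakes by
\[
  8 K \ln(2 / \mu(\mathcal{U}_T)) \leq 8 K \ln(2 (r_{\max} T)^{\eta}) = 8 K (\eta \ln(r_{\max} T) + \ln 2).
\]
I expect no genuine obstacle, since every hypothesis has a dedicated earlier lemma. The only point needing care is bookkeeping: the rounds $1, \dots, T$ of the epoch must match the rounds of the abstract play, and the mistake count of the lemma must be the count over exactly these prediction rounds, with round $0$ excluded as stipulated.
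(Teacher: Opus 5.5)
Your proposal is correct and matches the paper's proof essentially step for step. Like the paper, you show the data form a play, using the cylinder sets $Z(\gamma)$ for measurability and the vote decomposition and precision lemmas for the tolerance property; you then check the four hypotheses of the budgeted plurality lemma from the unblocked-set monotonicity and mass lemma and the path budget, and simplify $8K\ln(2/\mu(\mathcal{U}_T))$ via $\mu(\mathcal{U}_T) \geq (r_{\max} T)^{-\eta}$ (the prediction rounds $1, \dots, T$ already coincide with the rounds of the abstract play, so no re-indexing is needed).
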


\begin{proof}
We first check that the data above indeed constitute a play in the sense of
\sectionref{app:budgeted-plurality}. The set $\mathcal{U}_n$ is the union of
the finitely many pairwise disjoint cylinder sets $Z(\gamma)$, $\gamma \in
\mathcal{C}_1^{(n)}$, on each of which $\vote_n$ is constant, so the live sets
and the sets $\{\xi \in \mathcal{U}_n \mid \vote_n(\xi) = a\}$, $a \in \Sigma$,
are measurable; and the prediction $\hat{y}_n$ maximizes the voting mass up to
the tolerance $\epsilon_n$: the voting mass of $a$ is $\pi_n(1, 0)[a]$, by
\lemmaref{lem:p3-vote-decomposition}, and $\hat{y}_n$ maximizes $\pi_n(1, 0)$
up to $(r_{\max} n)^{-\eta} / 2 = \epsilon_n$, by
\lemmaref{lem:p3-precision}. The hypotheses of
\lemmaref{lem:budgeted-plurality} hold with the budget $K = K(c, T) \geq 1$ of
\equationref{eq:p3-path-budget}: the live sets decrease and
$\mu(\mathcal{U}_T) \geq (r_{\max} T)^{-\eta} > 0$, by claims 1 and 2 of
\lemmaref{lem:p3-unblocked-monotone}; the tolerances satisfy $2 \epsilon_n =
(r_{\max} n)^{-\eta} \leq \mu(\mathcal{U}_n)$, by claim 2 of the same lemma;
and every expert is charged at most $K$ times, by
\lemmaref{lem:p3-path-budget}. The lemma therefore bounds the number of
mistakes at the rounds $1, \dots, T$ by
\[
  8 K \ln(2 / \mu(\mathcal{U}_T)) \leq 8 K (\eta \ln(r_{\max} T) + \ln 2).
\]
\end{proof}

It remains to sum over the epochs.

\begin{lemma}
\label{lem:p3-mistake-bound}
\[
  M_{P^3}(x) = O(r_{\max}(s + \sigma) \log(r_{\max}(s + \sigma)) \cdot \log N
  \cdot (1 + \eta \log(r_{\max} N))).
\]
\end{lemma}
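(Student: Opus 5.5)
The plan is to sum the epoch bound of \lemmaref{lem:p3-epoch-mistakes} over all epochs, using \lemmaref{lem:p3-state-bound} to control the number of epochs and their capacities. Index the epochs by $i = 0, 1, \dots, I$; the $i$-th epoch runs with capacity $c_i = 2^i$. By \lemmaref{lem:p3-state-bound}, every capacity in force satisfies $c_{\max} < 4(s + \sigma)$. Hence $2^I < 4(s + \sigma)$, so the number of epochs is $I + 1 = O(\log(s + \sigma))$.

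Each epoch $i$ costs at most one mistake at its round $0$ (the prediction there belongs to the previous epoch or to the empty state), plus the bound of \lemmaref{lem:p3-epoch-mistakes} for its rounds $1, \dots, T_i$. We use $T_i \leq N$ to bound $\floor{\log T_i} + 1 = O(\log N)$ (recall $N \geq 2$) and $\ln(r_{\max} T_i) \leq \ln(r_{\max} N)$. This gives a per-epoch bound of
\[
  O\bigl(\log N \cdot (B(c_i) + 1) \cdot (1 + \eta \log(r_{\max} N))\bigr).
\]
Summing over $i \leq I$ is a geometric sum. We have $B(c_i) = 3 r_{\max} 2^i \log(r_{\max} 2^i) \leq 3 r_{\max} 2^i \log(r_{\max} 2^I)$, so
\[
  \sum_{i \leq I} B(c_i) \leq 6 r_{\max} 2^I \log(r_{\max} 2^I) = O(r_{\max}(s + \sigma) \log(r_{\max}(s + \sigma))),
\]
using $2^I < 4(s + \sigma)$.

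The additive terms remain to be absorbed. These are the $+1$ inside $B(c) + 1$, summed over epochs, and the one mistake per epoch. Both contribute $O(\log(s + \sigma) \cdot \log N \cdot (1 + \eta \log(r_{\max} N)))$, which is dominated by the main term since $r_{\max} \geq 2$ and $s + \sigma \geq 2$. Epochs with $T = 0$ are trivially covered.

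The only delicate point is the bookkeeping of the epoch boundaries. The last round of an epoch, whose post-processing restarts, is counted within that epoch by \lemmaref{lem:p3-epoch-mistakes}. The next epoch's round $0$ is its restart round's successor, and its prediction is made from the new state. We must check that this round is not double-counted and not omitted; this is the one-extra-mistake-per-epoch term. We also need the lemma's requirement $T \geq 1$ for applying $\log T$, and the convention $\log 0 = 0$ for degenerate epochs. Beyond that, the argument is routine arithmetic.
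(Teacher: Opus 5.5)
Your proposal is correct and follows essentially the same route as the paper: sum the per-epoch bound of \lemmaref{lem:p3-epoch-mistakes} using $T_i < N$, bound the geometric sum $\sum_i B(2^i) \leq 2B(2^E)$ via $2^E < 4(s+\sigma)$ from \lemmaref{lem:p3-state-bound}, and absorb the additive terms. Your boundary remark is wrong in one detail, though it does not affect the bound: a new epoch's local round $0$ is the restart round itself, not its successor, and it is predicted from the \emph{old} state. That round is already counted as round $T$ of the previous epoch, so only global round $0$ needs an extra mistake, and your one extra mistake per epoch is a harmless overcount.
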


\begin{proof}
By \sectionref{app:p3-post}, the restarts divide the processing of $x$ into
epochs; number them $i = 0, \dots, E$ in order, let $b_i$ be the global index
at which the $i$-th epoch begins, and let $T_i$ be the local index of its last
round, so that $b_0 = 0$, $b_{i+1} = b_i + T_i$ for $i < E$, and $b_E + T_E = N
- 1$; the $i$-th epoch runs with the capacity $2^i$. The prediction rounds of
the epochs --- the global rounds $b_i + 1, \dots, b_i + T_i$ --- together with
the round $0$, thus partition $[N]$. At round $0$, the empty state predicts, at
the cost of at most one mistake (\sectionref{app:p3-init}). If $T_i \geq 1$,
\lemmaref{lem:p3-epoch-mistakes} bounds the mistakes at the prediction rounds
of the $i$-th epoch by $8 K(2^i, T_i)(\eta \ln(r_{\max} T_i) + \ln 2)$; if $T_i
= 0$, the epoch makes no predictions. Since $T_i < N$, while $\eta \geq 0$ and
$K(c, T)$ is nondecreasing in $T$, each bound is at most $8 K(2^i, N)(\eta
\ln(r_{\max} N) + \ln 2)$, whence
\[
  M_{P^3}(x) \leq 1 + 8 (\eta \ln(r_{\max} N) + \ln 2)(\floor{\log N} + 1)
  \sum_{i = 0}^{E} (B(2^i) + 1).
\]
The capacity $2^E$ is in force after the complete update of the round at which
the last epoch begins, so $2^E < 4(s + \sigma)$ by
\lemmaref{lem:p3-state-bound}. Since $B$ is nondecreasing,
\[
  \sum_{i = 0}^{E} B(2^i) = 3 r_{\max} \sum_{i = 0}^{E} 2^i \log(r_{\max} 2^i)
  \leq 3 r_{\max} \log(r_{\max} 2^E) \cdot 2^{E + 1} = 2 B(2^E) \leq 2 B(4(s +
  \sigma)),
\]
where $B(4(s + \sigma)) = 12 r_{\max}(s + \sigma) \log(4 r_{\max}(s +
\sigma))$; as the number of epochs is $E + 1 \leq \log(8(s + \sigma))$, it
follows that $\sum_{i = 0}^{E} (B(2^i) + 1) = O(r_{\max}(s + \sigma)
\log(r_{\max}(s + \sigma)))$. Combined with $\floor{\log N} + 1 \leq 2 \log N$
and $\eta \ln(r_{\max} N) + \ln 2 = \ln 2 \cdot (\eta \log(r_{\max} N) + 1)$,
this yields the claim.
\end{proof}

Since $\eta \leq \log \abs{R} \leq \log r_{\max}$,
\lemmaref{lem:p3-mistake-bound} gives in particular
\[
  M_{P^3}(x) = O(r_{\max} \log r_{\max} \cdot (s + \sigma) \log(r_{\max}(s +
  \sigma)) \cdot \log N \cdot \log(r_{\max} N)),
\]
with an absolute implied constant: this is the mistake bound
\equationref{eq:p3-mistakes} of \theoremref{thm:p3}, in whose statement the
length $N$ is denoted $n$ and $r_{\max}$ is denoted $r$. For $\abs{R} = 1$,
where $\eta = 0$, the last factor of \lemmaref{lem:p3-mistake-bound} is $1$,
and the bound there is $O(r_{\max}(s + \sigma) \log(r_{\max}(s + \sigma)) \cdot
\log N)$.

\subsection{Compression Bound}
\label{app:p3-compression}

In this subsection we prove the compression bound
\equationref{eq:p3-state}. The notation of \sectionref{app:p3-state-bound}
remains in force --- $R$, $r_{\max} \coloneqq \max R$, $\sigma$, $x \in
\Sigma^N$ with $N \geq 2$, and $s \coloneqq \LZLC_R(x)$ --- and so do the
conventions of \sectionref{app:p3-stat} concerning epochs and their prediction
rounds. Recall from \sectionref{sec:p3-model} what the persistent state of
$P^3$ comprises: the time index and the capacity $c_{\max}$; the scales $m \in
S$, each carrying its state set $Q_m$ and its partial output function $\tau_m$,
defined on a subset of $Q_m$; and the edges $(m, r) \in \mathcal{E}$, each
carrying its transition table $\delta_m^r$. Nothing else persists: the
divergence tables and the quantities of \algorithmref{alg:p3-predict} are
recomputed at every call (\sectionref{sec:p3-model},
\sectionref{app:p3-predict}). Two of the counts needed to bound the size of all
this are already available: after every complete update, every retained scale
has fewer than $4(s + \sigma)$ states (\lemmaref{lem:p3-state-bound}), and
every scale is at most $r_{\max}$ times the time index (claim 1 of
\lemmaref{lem:p3-fragment-dynamics}). The count still missing is that of the
scales themselves.

\begin{lemma}[Fragment count]
\label{lem:p3-fragment-count}
At every prediction round $n$ of every epoch, $\abs{S_n} \leq (\log(2 r_{\max}
n))^{\abs{R}}$.
\end{lemma}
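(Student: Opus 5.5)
Every scale of $S_n$ lies in $R^!$, so the count reduces to counting products of elements of $R$ that are at most $r_{\max} n$. By claim 1 of \lemmaref{lem:p3-fragment-dynamics}, every $m \in S_n$ satisfies $m \leq r_{\max} n$. By claim 2 of the same lemma, every $m \in S_n$ has an entry path, so $m = \prod_{r \in R} r^{e_r}$ for some exponent vector $(e_r)_{r \in R} \in \mathbb{N}^R$, namely the vector recording how many times each arity is traversed along that path. The map from $S_n$ to exponent vectors is not injective in general, since distinct vectors can give the same product. This is harmless: we only need an upper bound, so it suffices to bound the number of exponent vectors whose product is at most $r_{\max} n$.

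For such a vector, each factor satisfies $r^{e_r} \leq m \leq r_{\max} n$. Since $r \geq 2$, this gives $e_r \leq \log(r_{\max} n)$. Hence each coordinate $e_r$ ranges over $\{0, 1, \dots, \floor{\log(r_{\max} n)}\}$, which has $\floor{\log(r_{\max} n)} + 1 \leq \log(r_{\max} n) + 1 = \log(2 r_{\max} n)$ values. The number of admissible vectors is therefore at most $(\log(2 r_{\max} n))^{\abs{R}}$. The surjection from admissible vectors onto $S_n$ (choosing, for each scale, one vector from an entry path) then gives $\abs{S_n} \leq (\log(2 r_{\max} n))^{\abs{R}}$.

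The only point needing care is the uniform bound $m \leq r_{\max} n$ at a prediction round, together with membership of every scale in $R^!$. The first is exactly claim 1 of \lemmaref{lem:p3-fragment-dynamics}. The second follows from entry paths (claim 2), or directly from the fact that Preprocess only adjoins scales of the form $n' r'$ with $n' \in S$ and $r' \in R$. The case $n = 1$, where $S_1 = \{1\}$, is consistent with the bound since $\log(2 r_{\max}) \geq 2$. The argument is otherwise routine, and I do not expect any step to be a serious obstacle.
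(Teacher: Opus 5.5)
Your proof is correct and is essentially the paper's argument: you bound every scale by $r_{\max} n$ via claim 1 of the fragment-dynamics lemma, write it as a product of elements of $R$ with each multiplicity at most $\floor{\log(r_{\max} n)}$, and count the $(1+\floor{\log(r_{\max} n)})^{\abs{R}} \leq (\log(2 r_{\max} n))^{\abs{R}}$ possible exponent vectors. One wording slip: choosing one vector per scale gives an \emph{injection} from $S_n$ into the admissible vectors, because the product recovers the scale; it is the reverse map, from vectors to products, that fails to be injective, and that failure is indeed harmless.
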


\begin{proof}
Every scale of $S_n$ belongs to $R^!$ (\sectionref{sec:p3-model}) and is at
most $r_{\max} n$ (claim 1 of \lemmaref{lem:p3-fragment-dynamics}); it is
therefore the product of a multiset of elements of $R$ in which every
multiplicity is at most $\log(r_{\max} n)$ --- each element of $R$ is at least
$2$, so its power dividing a product bounded by $r_{\max} n$ has exponent at
most $\log(r_{\max} n)$. Assigning to every scale one such multiset is
injective, the product of the multiset recovering the scale; and the multisets
in which each of the $\abs{R}$ elements of $R$ has multiplicity at most
$\floor{\log(r_{\max} n)}$ number $(1 + \floor{\log(r_{\max}
n)})^{\abs{R}} \leq (\log(2 r_{\max} n))^{\abs{R}}$.
\end{proof}

We next fix the binary encoding of the state, up to which
\sectionref{app:p3-update} identified Update and Predict with the protocol's
functions $\mathcal{U}$ and $\mathcal{P}$. Consider the configuration at a
prediction round $n$ of an epoch, and put $W \coloneqq \ceil{\log(r_{\max} n +
1)}$: every scale is at most $r_{\max} n$ (claim 1 of
\lemmaref{lem:p3-fragment-dynamics}) and every state is a residue smaller than
its scale, so $W$ bits accommodate any of them. Integers designated below as
\emph{self-delimited} are written in a prefix-free code of $O(\log(v + 2))$
bits for the value $v$ (say, Elias gamma applied to $v + 1$); every other field
has a fixed width, computable from $R$, $\Sigma$ and the fields already read,
so the state is recovered unambiguously from the concatenation of the following
blocks.
\begin{enumerate}
  \item The time index $n$ and the capacity $c_{\max}$, self-delimited.
  \item The number of scales, self-delimited, followed by the scales in
    increasing order, $W$ bits each.
  \item For every $m \in S$ in increasing order: $\abs{Q_m}$, self-delimited,
    followed by the residues of $Q_m$ in increasing order, $W$ bits each.
  \item For every $m \in S$ in increasing order: the domain of $\tau_m$, as one
    bit per state of $Q_m$ in increasing order, followed by the values of
    $\tau_m$ on it, in the same order, each as an index into a fixed
    enumeration of $\Sigma$, $\ceil{\log \max(\sigma, 2)}$ bits wide.
  \item For every $m \in S$ in increasing order: one bit per $r \in R$ in
    increasing order, flagging whether $(m, r) \in \mathcal{E}$; then, for
    every flagged $r$, the table $\delta_m^r$ in row-major order --- the rows
    indexed by $Q_m$ and the columns by $[r]$, both in increasing order ---
    each entry being the rank of the target state in the increasing order of
    $Q_{m r}$, $\ceil{\log \max(\abs{Q_{m r}}, 2)}$ bits wide. This width is
    indeed available to the decoder: $(m, r) \in \mathcal{E}$ requires $m r \in
    S$ (\sectionref{sec:p3-model}), so $\abs{Q_{m r}}$ was read in block 3.
\end{enumerate}
The empty state is encoded in the same format, with a zero scale count and, the
time index being immaterial there (\sectionref{app:p3-init}), $n = 0$: a string
of $O(1)$ bits.

\begin{lemma}
\label{lem:p3-compression}
\[
  S_{P^3}(x) = O((s + \sigma) \cdot (\abs{R} r_{\max} \log(s + \sigma) +
  \log(r_{\max} N)) \cdot (\log(2 r_{\max} N))^{\abs{R}}).
\]
\end{lemma}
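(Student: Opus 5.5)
The plan is to bound, block by block, the length of the encoding just fixed, at an arbitrary state $s_t$ with $t < N$. We then take the maximum over $t$. The empty state costs $O(1)$ bits, so we only treat a state occurring at a prediction round $n$ of some epoch. Then $n \leq N - 1$, because local indices never exceed global ones. The three counts we feed in are the following:
\begin{itemize}
\item \lemmaref{lem:p3-state-bound} gives $\abs{Q_m} < 4(s+\sigma)$ at every retained scale and $c_{\max} < 4(s+\sigma)$. The configuration at a prediction round is the one left after a complete update, so this lemma applies.
\item \lemmaref{lem:p3-fragment-count} gives $\abs{S} \leq (\log(2 r_{\max} n))^{\abs{R}} \leq (\log(2 r_{\max} N))^{\abs{R}}$.
\item Claim 1 of \lemmaref{lem:p3-fragment-dynamics} gives $W = \ceil{\log(r_{\max} n + 1)} = O(\log(r_{\max} N))$.
\end{itemize}

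The per-block accounting goes as follows. Write $\Lambda \coloneqq (\log(2 r_{\max} N))^{\abs{R}}$.
\begin{itemize}
\item Block 1 costs $O(\log N + \log(s+\sigma))$.
\item Block 2 costs $O(\log \abs{S}) + \abs{S} W = O(\Lambda \log(r_{\max} N))$.
\item Block 3 costs, per scale, $O(\log(s+\sigma)) + 4(s+\sigma) W$. In total this is $O((s+\sigma)\log(r_{\max} N) \cdot \Lambda)$.
\item Block 4 costs, per scale, $\abs{Q_m}(1 + \ceil{\log\max(\sigma,2)})$. Since $\log \sigma \leq \log(s+\sigma)$, this is absorbed into the first term.
\item Block 5 costs, per scale, $\abs{R}$ flag bits plus, for each of at most $\abs{R}$ edges, $\abs{Q_m} \cdot r \cdot \ceil{\log\max(\abs{Q_{mr}},2)} = O(r_{\max}(s+\sigma)\log(s+\sigma))$ bits. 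Here $\abs{Q_{mr}} < 4(s+\sigma)$ is used again, and $\log(4(s+\sigma)) = O(\log(s+\sigma))$ because $s + \sigma \geq 2$. In total this is $O(\abs{R} r_{\max} (s+\sigma)\log(s+\sigma) \cdot \Lambda)$.
\end{itemize}

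Summing, every term is dominated by $(s+\sigma)(\abs{R} r_{\max}\log(s+\sigma) + \log(r_{\max} N))\Lambda$. For the lower-order terms (the self-delimited counts, the flag bits $\abs{R}\abs{S}$, and block 1) we use $s+\sigma \geq 2$ and $\Lambda \geq 1$. The flag bits are covered by the $\abs{R} r_{\max}\log(s+\sigma)$ term.

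The main obstacle is making sure the bounds are applied at the right moments. The state-count bound is stated after complete updates, whereas $S_{P^3}$ maximizes over $s_0,\dots,s_{N-1}$. So we must check that each $s_t$ with $t \geq 1$ is a post-update configuration, including in a round that restarts, where the fresh one-scale state is trivially small. We must also check that claim 1 of \lemmaref{lem:p3-fragment-dynamics} and \lemmaref{lem:p3-fragment-count} are applied with the local index $n \leq N$. Beyond that the argument is routine arithmetic.
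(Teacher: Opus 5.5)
Your proposal is correct and follows the paper's proof essentially verbatim. Like the paper, you identify each $s_t$ with $t \geq 1$ as the post-update configuration at prediction round $n = t - b < N$ of its epoch. You then bound the five encoding blocks using \lemmaref{lem:p3-state-bound}, \lemmaref{lem:p3-fragment-count} and claim 1 of \lemmaref{lem:p3-fragment-dynamics}, absorbing the lower-order terms via $s + \sigma \geq 2$ and $(\log(2 r_{\max} N))^{\abs{R}} \geq 1$.
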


\begin{proof}
The states assumed by the protocol on $x$ are $s_t$, $0 \leq t < N$, where $s_0
= s_{\mathrm{init}}$ is the empty state, of $O(1)$ bits, and $s_t$ for $t \geq
1$ is the persistent state after updating on $x[:t]$. By
\sectionref{app:p3-post}, the rounds at which Initialize runs divide the
processing of $x$ into epochs, the epoch beginning at such a round $b$
processing the suffix $x[b:N]$ as a fresh sequence; hence, for $t \geq 1$, the
state $s_t$ is the configuration, at its prediction round $n \coloneqq t - b$,
of the epoch beginning at the last round $b < t$ at which Initialize ran, and
$1 \leq n \leq t < N$.

Fix $t \geq 1$ and consider the blocks of $s_t$. By
\lemmaref{lem:p3-state-bound}, $c_{\max} < 4(s + \sigma)$, and $\abs{Q_m} <
4(s + \sigma)$ at every scale; hence every symbol field of block 4 and every
rank field of block 5 is $O(\log(s + \sigma))$ bits wide --- for the former,
$\ceil{\log \max(\sigma, 2)} \leq \ceil{\log(s + \sigma)}$, since $\sigma \leq
s + \sigma$ and $s + \sigma \geq 2$. Moreover, $\abs{S} \leq (\log(2 r_{\max}
n))^{\abs{R}} \leq (\log(2 r_{\max} N))^{\abs{R}}$ by
\lemmaref{lem:p3-fragment-count}, and $W = \ceil{\log(r_{\max} n + 1)} =
O(\log(r_{\max} N))$. Block 1 therefore occupies $O(\log N + \log(s + \sigma))$
bits; block 2, $O(\abs{S} W)$ bits; block 3, at most $\abs{S} \cdot (O(\log(s +
\sigma)) + 4(s + \sigma) W)$ bits; block 4, at most $\abs{S} \cdot 4(s +
\sigma) \cdot O(\log(s + \sigma))$ bits; and block 5, at most $\abs{S} \cdot
\abs{R} \cdot (1 + 4(s + \sigma) r_{\max} \cdot O(\log(s + \sigma)))$ bits, a
scale having at most $\abs{R}$ tables of at most $4(s + \sigma) r_{\max}$
entries each. Summing,
\[
  \begin{aligned}
    \abs{s_t} &= O(\log N) + \abs{S} \cdot O((s + \sigma) \cdot (\abs{R}
      r_{\max} \log(s + \sigma) + \log(r_{\max} N))) \\
    &= O((s + \sigma) \cdot (\abs{R} r_{\max} \log(s + \sigma) + \log(r_{\max}
      N)) \cdot (\log(2 r_{\max} N))^{\abs{R}}),
  \end{aligned}
\]
and maximizing over $t$ proves the claim.
\end{proof}

The implied constant is absolute: this is the compression bound
\equationref{eq:p3-state} of \theoremref{thm:p3}, in whose statement the length
$N$ is denoted $n$ and $r_{\max}$ is denoted $r$. In particular, since $s \leq
\max(N - 1, 1)$ (\propositionref{prop:lzlc-wcm}) while $\sigma$, $r_{\max}$ and
$\abs{R}$ do not depend on $x$, the size of the state is polynomial in the
length of the observed word, as computational feasibility
(\definitionref{def_comp_eff}) requires of it.

\end{document}